\theoremstyle{plain}
\newtheorem{theorem}{Theorem}
\newtheorem{proposition}{Proposition}
\newtheorem{assumption}{Assumption}
\begin{document}

\title{A Cheap Bootstrap Method for Fast Inference}
\author{Henry Lam\thanks{Email: \texttt{henry.lam@columbia.edu}}}
\affil{Department of Industrial Engineering and Operations Research, Columbia University}
\date{}
\maketitle

\begin{abstract}


The bootstrap is a versatile inference method that has proven powerful in many statistical problems. However, when applied to modern large-scale models, it could face substantial computation demand from repeated data resampling and model fitting. We present a bootstrap methodology that uses minimal computation, namely with a resample effort as low as \emph{one} Monte Carlo replication, while maintaining desirable statistical guarantees. We present the theory of this method that uses a twisted perspective from the standard bootstrap principle. We also present generalizations of this method to nested sampling problems and to a range of subsampling variants, and illustrate how it can be used for fast inference across different estimation problems.
\end{abstract}

\section{Introduction}
The bootstrap is a versatile method for statistical inference that has proven powerful in many problems. It is advantageously data-driven and automatic: Instead of using analytic calculation based on detailed model knowledge such as the delta method, the bootstrap uses resampling directly from the data as a mechanism to approximate the sampling distribution. The flexibility and ease of the bootstrap have made it widely popular; 
see, e.g., the monographs \cite{efron1994introduction,davison1997bootstrap,shao2012jackknife,hall1988bootstrap} for comprehensive reviews on the subject.


Despite its popularity, an arguable challenge in using the bootstrap is its computational demand. To execute the bootstrap, one needs to run enough Monte Carlo replications to generate resamples and refit models. While this is a non-issue in classical problems, for modern large-scale models and data size, even one model refitting could incur substantial costs. This is the case if the estimation or model fitting involves big optimization or numerical root-finding problems arising from, for instance, empirical risk minimization in machine learning, where the computing routine can require running time-consuming stochastic gradient descent or advanced mathematical programming procedures. Moreover, in these problems, what constitutes an adequate Monte Carlo replication size can be case-by-case and necessitate trial-and-error, which adds to the sophistication. Such computation issues in applying resampling techniques to massive learning models have been discussed and motivate some recent works, e.g., \cite{10.5555/3042573.3042801,lu2020uncertainty,giordano2019swiss,schulam2019can,alaa2020discriminative}.

Moreover, an additional challenge faced by the bootstrap is that its implementation can sometimes involve a \emph{nested} amount of Monte Carlo runs. This issue arises for predictive models that are corrupted by noises from both the data and the computation procedure. An example is predictive inference using high-fidelity stochastic simulation modeling (\cite{nelson2013foundations,law1991simulation}). These models are used to estimate output performance measures in sophisticated systems, and comprises a popular approach for causal decision-making in operations management and scientific disciplines. In this context, obtaining a point estimate of the performance measure itself requires a large number of Monte Carlo runs on the model to average out the aleatory uncertainty. When the model contains parameters that are calibrated from exogenous data, conducting bootstrap inference on the prediction would involve first resampling the exogenous data, then given each resampled input parameter value, running and averaging a large number of model runs. This consequently leads to a multiplicative total computation effort between the resample size and the model run size per resample, an issue known as the input uncertainty problem in simulation (see, e.g., \cite{henderson2003input,song2014advanced,barton2012tutorial,lam2016advanced}). Here, to guarantee consistency, the sizes in the two sampling layers not only need to be sufficiently large, but also depend on the input data size which adds more complication to their choices (\cite{lam2021subsampling}). Other than simulation modeling, similar phenomenon arises in some machine learning models. For example, a bagging predictor (\cite{breiman1996bagging,wager2014confidence,mentch2016quantifying}) averages a large number of base learning models each trained from a resample, and deep ensemble (\cite{lakshminarayanan2016simple,lee2015m}) averages several neural networks each trained with a different randomization seed (e.g., in initializing a stochastic gradient descent). Running bootstraps for the prediction values of these models again requires a multiplicative effort of data resampling and, given each resample, building a new predictor which requires multiple training runs.

Motivated by these challenges, our goal in this paper is to study a statistically valid bootstrap method that allows using very few resamples. More precisely, our method reduces the number of bootstrap resamples to the minimum possible, namely as low as \emph{one} Monte Carlo replication, while still delivering asymptotically exact coverage as the data size grows. This latter guarantee holds under essentially the same regularity conditions for conventional bootstrap methods. As a result, our method can give valid inference under constrained budget when existing bootstrap approaches fail. Moreover, the validity of our inference under any number of resamples endows more robustness, and hence alleviates the tuning effort, in choosing the number of resamples, which as mentioned before could be a trial-and-error task in practice. For convenience, we call our method the \emph{Cheap Bootstrap}, where ``Cheap'' refers to our low computation cost, and also to the correspondingly lower-quality half-width performance when using the low cost as we will see.

In terms of theory, the underpinning mechanism of our Cheap Bootstrap relies on a simple twist to the basic principle used in conventional bootstraps. The latter dictates that the sampling distribution of a statistic can be approximated by the resample counterpart, conditional on the data. To utilize this principle for inference, one then typically generates many resample estimates to obtain their approximate distribution. Our main idea takes this principle a step further by leveraging the following: The resemblance of the resampling distribution to the original sampling distribution, universally regardless of the data realization, implies the asymptotic independence between the original estimate and \emph{any} resample estimate. Thus, under asymptotic normality, we can construct pivotal statistics using the original estimate and any number of resample estimates that cancel out the standard error as a nuisance parameter, which subsequently allows us to conduct inference using as low as one resample.


In addition to the basic asymptotic exact coverage guarantee, we study Cheap Bootstrap in several other aspects. First regards the half-width of the Cheap Bootstrap confidence interval, a statistical efficiency measure in addition to coverage performance. We show how the Cheap Bootstrap interval follows the behavior of a $t$-interval, which is wide when we use only one resample but shrinks rapidly as the number of resamples increases. Second, we investigate the higher-order coverage of Cheap Bootstrap via Edgeworth expansions. Different from conventional bootstrap procedures, the Cheap Bootstrap pivotal statistic has a limiting $t$-distribution, not a normal distribution. Edgeworth expansion on $t$-distribution is, to our best knowledge, open in the literature (except the recent working paper \cite{he2021higher}). Here, we build the Edgeworth expansions for Cheap Bootstrap coverage probabilities and show their errors match the order $O(n^{-1})$ (where $n$ is the sample size) incurred in conventional bootstraps in the two-sided case. Moreover, we explicitly identify the coefficient in this error term, which is expressible as a high-dimensional integral that, although cannot be evaluated in closed-form, is amenable to Monte Carlo integration.

We also generalize the Cheap Bootstrap in several directions. First, we devise the Cheap Bootstrap confidence intervals for models that encounter both data and computation noises and hence the aforementioned nested sampling issue. When applying the Cheap Bootstrap to these problems, the number of outer resamples can be driven down to one or a very small number, and thus the total model run size is no longer multiplicatively big. On the other hand, because of the presence of multiple sources of noises, the joint distribution between the original estimate and the resample estimate is no longer asymptotically independent, but exhibits a dependent structure that nonetheless could be tractably exploited. Second, our Cheap Bootstrap can also be used in conjunction with ``subsampling'' variants. By subsampling here we broadly refer to schemes that resample data sets of size smaller than the original size, such as the $m$-out-of-$n$ Bootstrap (\cite{politis1999subsampling,bickel2012resampling}), and more recently the Bags of Little Bootstraps (\cite{kleiner2014scalable}) and Subsampled Double Bootstrap (\cite{sengupta2016subsampled}). We show how these procedures can all be implemented ``cheaply". Finally, we also extend the Cheap Bootstrap to multivariate problems.




We close this introduction by discussing other related works. Regarding motivation, our study appears orthogonal to most of the classical bootstrap literature, as its focus is on higher-order coverage accuracy, using techniques such as studentization (\cite{davison1997bootstrap} \S2.4), calibration or iterated bootstrap (\cite{hall2013bootstrap} \S3.11; \cite{hall1988bootstrap,hall1986bootstrap,beran1987prepivoting}), and bias-correction and acceleration (\cite{efron1982jackknife} \S10.4; \cite{efron1987better,diciccio1996bootstrap}). The literature also studies the relaxation of assumptions to handle non-smooth functions and dependent data (\cite{politis1999subsampling,bickel2012resampling}). In contrast to these studies, we focus on the capability to output confidence intervals under extremely few resamples, and the produced intervals are admittedly less accurate than the refined intervals in the literature constructed under idealized infinite resamples.


Our study is related to \cite{hall1986number}, which shows a uniform coverage error of $O(n^{-1})$ for one-sided intervals regardless of the number of resamples $B$, as long as the nominal coverage probability is a multiple of $(B+1)^{-1}$. From this, \cite{hall1986number} suggests, for $95\%$ interval for instance, that it suffices to use $B=19$ to obtain a reasonable coverage accuracy. Our suggestion in this paper drives this number down to $B=1$, which is made possible via our use of asymptotic independence and normality instead of order-statistic analyses on the quantile-based bootstrap in \cite{hall1986number}. Our result on a small $B$ resulting in a large interval width matches the insight in \cite{hall1986number}, though we make this tradeoff more explicit by looking at the width's moments. Other related works on bootstrap computation cost include \cite{booth1994monte} and \cite{lee1995asymptotic} that analyze or reduce Monte Carlo sizes by replacing sampling with analytical calculation when applying the iterated bootstrap, and \cite{booth1993simple} and \cite{efron1994introduction} \S23 on variance reduction. These works, however, do not consider the exceedingly low number of resamples that we advocate in this paper.

The remainder of this paper is as follows. Section \ref{sec:basic} introduces the Cheap Bootstrap. Section \ref{sec:theory} presents the main theoretical results on asymptotic exactness and higher-order coverage error, and discusses half-width behaviors. Section \ref{sec:double} generalizes the Cheap Bootstrap to nested sampling problems, and Section \ref{sec:subsampling} to subsampling. Section \ref{sec:numerics} shows numerical results to demonstrate our method and compare with benchmarks. The Appendix details proofs that are not included in the main body, presents additional theoretical and numerical results, and documents some technical background materials.

\section{Cheap Bootstrap Confidence Intervals}\label{sec:basic}
We aim to construct a confidence interval for a target parameter $\psi:=\psi(P)$, where $P$ is the data distribution, and $\psi:\mathcal P\to\mathbb R$ is a functional with $\mathcal P$ as the set of all distributions on the data domain. This $\psi(P)$ can range from simple statistical summaries such as correlation coefficient, quantile, conditional value-at-risk, to model parameters such as regression coefficient and prediction error measurement.

Suppose we are given independent and identically distributed (i.i.d.) data of size $n$, say $X_1,\ldots,X_n$. A natural point estimate of $\psi(P)$ is $\hat\psi_n:=\psi(\hat P_n)$, where $\hat P_n(\cdot):=(1/n)\sum_{i=1}^nI(X_i\in\cdot)$ is the empirical distribution constructed from the data, and $I(\cdot)$ denotes the indicator function.

Our approach to construct a confidence interval for $\psi$ proceeds as follows. For each replication $b=1,\ldots,B$, we resample the data set, namely independently and uniformly sample with replacement from $\{X_1,\ldots,X_n\}$ $n$ times, to obtain $\{X_1^{*b},\ldots,X_n^{*b}\}$, and evaluate the resample estimate $\psi_n^{*b}:=\psi(P_n^{*b})$, where $P_n^{*b}(\cdot)=(1/n)\sum_{i=1}^nI(X_i^{*b}\in\cdot)$ is the resample empirical distribution. Our confidence interval is
\begin{equation}
\mathcal I=\left[\hat\psi_n-t_{B,1-\alpha/2}S,\ \hat\psi_n+t_{B,1-\alpha/2}S\right]\label{CI}
\end{equation}
where
\begin{equation}
S^2=\frac{1}{B}\sum_{b=1}^B\left(\psi_n^{*b}-\hat\psi_n\right)^2\label{var}
\end{equation}
Here, $S^2$ resembles the sample variance of the resample estimates, but ``centered'' at the original point estimate $\hat\psi_n$ instead of the resample mean, and using $B$ in the denominator instead of $B-1$ as in ``textbook'' sample variance. The critical value $t_{B,1-\alpha/2}$ is the $(1-\alpha/2)$-quantile of $t_B$, the student $t$-distribution with degree of freedom $B$. That is, the degree of freedom of this $t$-distribution is precisely the resampling computation effort.

The interval $\mathcal I$ in \eqref{CI} is defined for any positive integer $B\geq1$. In particular, when $B=1$, it becomes
\begin{equation}
\left[\hat\psi_n-t_{1,1-\alpha/2}\left|\psi_n^{*}-\hat\psi_n\right|,\ \hat\psi_n+t_{1,1-\alpha/2}\left|\psi_n^{*}-\hat\psi_n\right|\right]\label{CI single}
\end{equation}
where $\psi_n^*$ is the single resample estimate.


The form of \eqref{CI} looks similar to the so-called standard error bootstrap as we explain momentarily, especially when $B$ is large. However, the point here is that $B$ does not need to be large. In fact, we have the following basic coverage guarantee for \eqref{CI} and \eqref{CI single}. First, consider the following condition that is standard in the bootstrap literature:
\begin{assumption}
[Standard condition for bootstrap validity]
We have $\sqrt n(\hat\psi_n-\psi)\Rightarrow N(0,\sigma^2)$ where $\sigma^2>0$. Moreover, a resample estimate $\psi_n^*$ satisfies $\sqrt n(\psi_n^*-\hat\psi_n)\Rightarrow N(0,\sigma^2)$ conditional on the data $X_1,X_2,\ldots$ in probability as $n\to\infty$.\label{bp}
\end{assumption}
In Assumption \ref{bp},  ``$\Rightarrow$'' denotes convergence in distribution, and the conditional ``$\Rightarrow$''-convergence in probability means $P(\sqrt n(\psi_n^*-\hat\psi_n)\leq x|\hat P_n)\stackrel{p}{\to}P(N(0,\sigma^2)\leq x)$ for any $x\in\mathbb R$, where ``$\stackrel{p}{\to}$'' denotes convergence in probability. Assumption \ref{bp} is a standard condition to justify bootstrap validity, and is ensured when $\psi(\cdot)$ is Hadamard differentiable (see Proposition \ref{HD} in the sequel which follows from \cite{van2000asymptotic} \S23). This assumption implies that, conditional on the data, the asymptotic distributions of the centered resample estimate $\sqrt n(\psi_n^*-\hat\psi_n)$ and the centered original estimate $\sqrt n(\hat\psi_n-\psi)$ are the same. Thus, one can use the former distribution, which is computable via Monte Carlo, to approximate the latter unknown distribution. Simply put, we can use a ``plug-in" of $\hat P_n$ in place of $P$, namely $\xi(\hat\psi_n,\psi_n^*)$, to approximate $\xi(\psi,\hat\psi_n)$ for any suitable data-dependent quantity $\xi(\cdot,\cdot)$.

Under Assumption \ref{bp}, \eqref{CI} is an asymptotically exact $(1-\alpha)$-level confidence interval:
\begin{theorem}[Asymptotic exactness of Cheap Bootstrap]
Under Assumption \ref{bp}, for any $B\geq1$, the interval $\mathcal I$ in \eqref{CI} is an asymptotically exact $(1-\alpha)$-level confidence interval for $\psi$, i.e.,
$$\mathbb P_n(\psi\in\mathcal I)\to1-\alpha$$ as $n\to\infty$,  where $\mathbb P_n$ denotes the probability with respect to the data $X_1,\ldots,X_n$ and all randomness from the resampling.\label{main}
\end{theorem}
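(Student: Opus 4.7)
}

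The overall strategy is to identify the limiting joint distribution of the $(B+1)$-tuple
\[
\bigl(\sqrt n(\hat\psi_n-\psi),\;\sqrt n(\psi_n^{*1}-\hat\psi_n),\ldots,\sqrt n(\psi_n^{*B}-\hat\psi_n)\bigr)
\]
and then apply the continuous mapping theorem to the obvious pivot. The central observation, which drives the entire Cheap Bootstrap principle, is that because the marginal conditional convergence in Assumption \ref{bp} holds in probability for essentially every realization of the data, the resample-based fluctuations $\sqrt n(\psi_n^{*b}-\hat\psi_n)$ become asymptotically independent of the data-based fluctuation $\sqrt n(\hat\psi_n-\psi)$, and since the resamples are conditionally independent given the data, they are also asymptotically independent of each other. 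The limit should therefore be $B+1$ i.i.d.\ copies of $N(0,\sigma^2)$.

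First I would formalize this joint convergence. Fix any $(x_0,x_1,\ldots,x_B)\in\mathbb R^{B+1}$. Writing $\Phi_\sigma$ for the $N(0,\sigma^2)$ cdf, conditional independence of the resamples given $\hat P_n$ gives
\[
\mathbb P_n\!\left(\sqrt n(\hat\psi_n-\psi)\le x_0,\;\sqrt n(\psi_n^{*b}-\hat\psi_n)\le x_b,\;b=1,\ldots,B\right)
= \mathbb E\!\left[I\{\sqrt n(\hat\psi_n-\psi)\le x_0\}\prod_{b=1}^B G_n(x_b)\right],
\]
where $G_n(x_b):=\mathbb P(\sqrt n(\psi_n^{*b}-\hat\psi_n)\le x_b\mid \hat P_n)$. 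By Assumption \ref{bp}, $G_n(x_b)\stackrel{p}{\to}\Phi_\sigma(x_b)$ for each $b$, while $I\{\sqrt n(\hat\psi_n-\psi)\le x_0\}\Rightarrow I\{Z_0\le x_0\}$ with $Z_0\sim N(0,\sigma^2)$, with appropriate handling of continuity points. Using boundedness of the indicator and the $G_n$'s (they lie in $[0,1]$) together with Slutsky, I would deduce
\[
\mathbb P_n(\cdot)\longrightarrow \Phi_\sigma(x_0)\prod_{b=1}^B\Phi_\sigma(x_b),
\]
at all continuity points, giving the joint convergence to i.i.d.\ $N(0,\sigma^2)$.

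Next I would apply the continuous mapping theorem to the function $h(z_0,z_1,\ldots,z_B)=z_0/\sqrt{(1/B)\sum_{b=1}^B z_b^2}$, which is continuous off the set $\{z_1=\cdots=z_B=0\}$ of Lebesgue measure zero, hence of measure zero under the independent Gaussian limit. With $nS^2=(1/B)\sum_b n(\psi_n^{*b}-\hat\psi_n)^2$, this yields
\[
T_n:=\frac{\sqrt n(\hat\psi_n-\psi)}{\sqrt n\,S}\;\Longrightarrow\;\frac{Z_0}{\sqrt{(1/B)\sum_{b=1}^B Z_b^2}}\;\stackrel{d}{=}\;\frac{Z_0/\sigma}{\sqrt{(1/B)\sum_{b=1}^B (Z_b/\sigma)^2}},
\]
which by the textbook construction of Student's $t$ (ratio of a standard normal to an independent $\sqrt{\chi^2_B/B}$) has the $t_B$ distribution. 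Since the $t_B$ cdf is continuous at $\pm t_{B,1-\alpha/2}$, it follows that $\mathbb P_n(|T_n|\le t_{B,1-\alpha/2})\to 1-\alpha$, which is exactly the coverage statement $\mathbb P_n(\psi\in\mathcal I)\to 1-\alpha$.

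The main technical delicacy, and the step I would spell out most carefully, is the transfer from the conditional-in-probability convergence of each $G_n(x_b)$ to the unconditional joint convergence of the $(B+1)$-tuple. One has to check that the product $\prod_b G_n(x_b)$ still converges in probability to $\prod_b\Phi_\sigma(x_b)$ and that multiplying by the indicator of the data-based event respects the joint limit; this is a routine Slutsky/dominated-convergence argument given the uniform boundedness of all factors, but it is the point where the ``twisted'' bootstrap principle does its work, and it is exactly what guarantees independence (rather than some unknown dependence) between the original estimate and the resample estimates in the limit. Everything else is a standard continuous-mapping and $t$-distribution computation.
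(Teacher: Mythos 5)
Your proposal is correct and follows essentially the same route as the paper: establish the joint convergence of $\sqrt n(\hat\psi_n-\psi,\psi_n^{*1}-\hat\psi_n,\ldots,\psi_n^{*B}-\hat\psi_n)$ to $B+1$ i.i.d.\ $N(0,\sigma^2)$ variables by conditioning on the data, exploiting conditional independence of the resamples and the bounded-convergence/Slutsky argument you describe, then apply the continuous mapping theorem to the pivot $(\hat\psi_n-\psi)/S$ to obtain the $t_B$ limit. The paper's proof of Proposition \ref{joint prop} carries out precisely the add-and-subtract/triangle-inequality decomposition that you flag as the main technical delicacy, so no gap remains.
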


Theorem \ref{main} states that, under the same condition to justify the  validity of conventional bootstraps, the Cheap Bootstrap interval $\mathcal I$ has asymptotically exact coverage, for \emph{resample effort as low as 1}.
Before we explain how Theorem \ref{main} is derived, we first compare the Cheap Bootstrap to conventional bootstraps.

\subsection{Comparisons with Established Bootstrap Methods}\label{sec:comparisons}

The commonest bootstrap approach for interval construction uses the quantiles of resample estimates to calibrate the interval limits. This includes the basic bootstrap and (Efron's) percentile bootstrap (\cite{efron1994introduction,davison1997bootstrap}). In the basic bootstrap, we first run Monte Carlo to approximate the $\alpha/2$ and $(1-\alpha/2)$-quantiles of $\psi_n^{*}-\hat\psi_n$, say $q_{\alpha/2}$ and $q_{1-\alpha/2}$. Assumption \ref{bp} guarantees that $q_{\alpha/2}$ and $q_{1-\alpha/2}$ approximate the corresponding quantiles of $\hat\psi_n-\psi$, thus giving $[\hat\psi_n-q_{1-\alpha/2},\hat\psi_n-q_{\alpha/2}]$ as a $(1-\alpha)$ confidence interval for $\psi$ (or equivalently using the $\alpha/2$ and $(1-\alpha/2)$ quantiles of $2\hat\psi_n-\psi_n^{*}$). The percentile bootstrap directly uses the $\alpha/2$ and $(1-\alpha/2)$-quantiles of $\psi_n^{*}$ as the interval limits, i.e., $[q_{\alpha/2},q_{1-\alpha/2}]$, justified by additionally the symmetry of the asymptotic distribution. Alternately, one can bootstrap the standard error and plug into a normality confidence interval (\cite{efron1981nonparametric}). That is, we compute $Var_*(\psi^*_n)$ to approximate $Var(\hat\psi_n)$ (where $Var$ and $Var_*$ denote respectively the variance with respect to the data and the conditional variance with respect to a resample drawn from $\hat P_n$), and output $[\hat\psi_n\pm z_{1-\alpha/2}\sqrt{Var_*(\psi^*)}]$ as the confidence interval, with $z_{1-\alpha/2}$ being the $(1-\alpha/2)$-quantile of standard normal.



All approaches above require generating enough resamples. When $B$ is large, the $S^2$ defined in \eqref{var} satisfies $S^2\approx Var_*(\psi^*_n)$, and moreover $t_{B,1-\alpha/2}\approx z_{1-\alpha/2}$. Thus in this case the Cheap Bootstrap interval $\mathcal I$ becomes $\hat\psi\pm z_{1-\alpha/2}\sqrt{Var_*(\psi^*_n)}$, which is nothing but the standard error bootstrap. In other words, the Cheap Bootstrap can be viewed as a generalization of the standard error bootstrap to using any $B$.


We also contrast our approach with the studentized bootstrap (e.g., \cite{davison1997bootstrap} \S2.4), which resamples pivotal quantities in the form $(\hat\psi_n-\psi)/\hat\sigma_n$ where $\hat\sigma_n$ is a sample standard error (computed from $\hat P_n$). As widely known, while this approach bears the term ``studentized'', it does not use the $t$-distribution, and is motivated from better higher-order coverage accuracy. Moreover, attaining such an interval could require additional computation to resample the standard error. Our approach is orthogonal to this studentization in that we aim to minimize computation instead of expanding it for the sake of attaining higher-order accuracy.

Lastly, \eqref{CI} and \eqref{CI single} have natural analogs for one-sided intervals, where we use
\begin{equation}
\mathcal I_{lower}=\left(-\infty,\ \hat\psi_n+t_{B,1-\alpha}S\right]\text{\ \ or\ \ }
\mathcal I_{upper}=\left[\hat\psi_n-t_{B,1-\alpha}S,\ \infty\right).\label{CI upper}
\end{equation}
Theorem \ref{main} applies to \eqref{CI upper} under the same assumption. In the one-sided case, \cite{hall1986number} has shown an error of $O(n^{-1})$, uniformly in $B$, for the studentized bootstrap when the nominal coverage probability $(1-\alpha)$ is a multiple of $(B+1)^{-1}$. This translates, in the case of $95\%$ interval, a minimum of $B=19$. Our Theorem \ref{main} drives down this suggestion of \cite{hall1986number} for the studentized bootstrap to $B=1$ for the Cheap Bootstrap.


\subsection{A Basic Numerical Demonstration}\label{sec:basic numerics}
We numerically compare our Cheap Bootstrap with the conventional approaches, namely the basic bootstrap, percentile bootstrap and standard error bootstrap described above. We use a basic example on estimating a $95\%$ confidence interval of the $0.6$-quantile of, say, an exponential distribution with unit rate from i.i.d. data. This example can be handled readily by other means, but it demonstrates how Cheap Bootstrap can outperform baselines under limited replication budget.




We use a data size $n=100$, and vary $B=1,2,5,10,50$. For each $B$, we generate synthetic data $1000$ times, each time running all the competing methods, and outputting the empirical coverage and interval width statistics from the $1000$ experimental repetitions. Table \ref{table:comparison} shows that when $B=1$, Cheap Bootstrap already gives confidence intervals with a reasonable coverage of $92\%$, while all other bootstrap methods fail because they simply cannot operate with only one resample (basic and percentile bootstraps cannot output two different finite numbers as the upper and lower interval limits, and standard error bootstrap has a zero denominator in the formula). When $B=2$, all baseline methods still have poor performance, as $B=2$ is clearly still too small a size to possess any statistical guarantees. In contrast, Cheap Bootstrap gives a similar coverage of $93\%$ as in $B=1$, and continues to have stable coverages when $B$ increases. As $B$ increases through 5 to 50, the baseline methods gradually catch up on the coverage level.

\begin{table}[ht]
\caption{Interval performances with different bootstrap methods, at nominal confidence level $95\%$ and with sample size $n=100$.}
\centering
\begin{tabular}{cccc}
 & Replication & Empirical coverage & Width mean \\
 & size $B$ & (margin of error) & (standard deviation)\\
   \hline
  Cheap Bootstrap & 1 & 0.92\ \ (0.02) & 2.42\ \ (2.06) \\
  Basic Bootstrap & 1 & NA & NA \\
  Percentile Bootstrap & 1 & NA & NA \\
  Standard Error Bootstrap & 1 & NA & NA \\
   \hline
  Cheap Bootstrap & 2 & 0.93\ \ (0.02) & 0.95\ \ (0.60) \\
  Basic Bootstrap & 2 & 0.32\ \ (0.03) & 0.14\ \ (0.12) \\
  Percentile Bootstrap & 2 & 0.32\ \ (0.03) & 0.14\ \ (0.12) \\
  Standard Error Bootstrap & 2 & 0.67\ \ (0.03) & 0.38\ \ (0.33) \\
  \hline
  Cheap Bootstrap & 5 & 0.92\ \ (0.02) & 0.63\ \ (0.28) \\
  Basic Bootstrap & 5 & 0.62\ \ (0.03) & 0.29\ \ (0.14) \\
  Percentile Bootstrap & 5 & 0.69\ \ (0.03) & 0.29\ \ (0.14) \\
  Standard Error Bootstrap & 5 & 0.86\ \ (0.02) & 0.47\ \ (0.22) \\
  \hline
  Cheap Bootstrap & 10 & 0.92\ \ (0.02) & 0.53\ \ (0.20) \\
  Basic Bootstrap & 10 & 0.73\ \ (0.03) & 0.37\ \ (0.14) \\
  Percentile Bootstrap & 10 & 0.80\ \ (0.02) & 0.37\ \ (0.14) \\
  Standard Error Bootstrap & 10 & 0.88\ \ (0.02) & 0.46\ \ (0.17) \\
  \hline
  Cheap Bootstrap & 50 & 0.94\ \ (0.02) & 0.50\ \ (0.13) \\
  Basic Bootstrap & 50 & 0.86\ \ (0.02) & 0.44\ \ (0.12) \\
  Percentile Bootstrap & 50 & 0.92\ \ (0.02) & 0.44\ \ (0.12) \\
  Standard Error Bootstrap & 50 & 0.93\ \ (0.02) & 0.48\ \ (0.13) \\
\end{tabular}
\label{table:comparison}
\end{table}

Regarding interval length, we see that the Cheap Bootstrap interval shrinks as $B$ increases, sharply for small $B$ and then stabilizes. In particular, the length decreases by $1.47$ when $B$ increases from 1 to 2, compared to a much smaller $0.03$ when $B$ increases from 10 to 50. Though the length of Cheap Bootstrap is always larger than the baselines, it closes in at $B=10$ and even more so at $B=50$. Both the good coverage starting from $B=1$ for Cheap Bootstrap, and the sharp decrease of its interval length at very small $B$ and then level-off will be explained by our theory next.

\section{Theory of Cheap Bootstrap}\label{sec:theory}
We describe the theory of Cheap Bootstrap, including its asymptotic exactness (Section \ref{sec:exact}), higher-order coverage error (Section \ref{sec:higher}) and interval width behavior (Section \ref{sec:tightness}).
\subsection{Asymptotic Exactness}\label{sec:exact}
We present the proof of Theorem \ref{main} on the asymptotic exactness of Cheap Bootstrap for any $B\geq1$.
The first key ingredient is the following joint asymptotic characterization among the original estimate and the resample estimates:
\begin{proposition}[Asymptotic independence and normality among original and resample estimates]
Under Assumption \ref{bp}, we have, for the original estimate $\hat\psi_n$ and resample estimates $\psi_n^{*b},b=1,\ldots,B$,
\begin{equation}
\sqrt n(\hat\psi_n-\psi,\psi^{*1}_n-\hat\psi_n,\ldots,\psi^{*B}_n-\hat\psi_n)\Rightarrow(\sigma Z_0,\sigma Z_1,\ldots,\sigma Z_B)\label{joint}
\end{equation}
where $Z_b,b=0,\ldots,B$ are i.i.d. $N(0,1)$ variables.\label{joint prop}
\end{proposition}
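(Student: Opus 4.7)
The plan is to prove Proposition \ref{joint prop} via the Lévy continuity theorem, by showing that the joint characteristic function of $W_0:=\sqrt n(\hat\psi_n-\psi)$ and $W_b:=\sqrt n(\psi_n^{*b}-\hat\psi_n)$, $b=1,\ldots,B$, converges to that of $(\sigma Z_0,\sigma Z_1,\ldots,\sigma Z_B)$ with i.i.d.\ standard normals. The natural route is to condition on the data $\hat P_n$, which fixes $W_0$ and decouples the resample estimates.

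First, since the $B$ resamples are generated independently given the data, the variables $W_1,\ldots,W_B$ are conditionally independent given $\hat P_n$. Therefore, for any $(t_0,t_1,\ldots,t_B)\in\mathbb R^{B+1}$,
\begin{equation*}
\phi_n(t_0,\ldots,t_B):=\mathbb E\bigl[e^{it_0W_0}\prod_{b=1}^Be^{it_bW_b}\bigr]=\mathbb E\Bigl[e^{it_0W_0}\prod_{b=1}^B\psi_n(t_b)\Bigr],
\end{equation*}
where $\psi_n(t):=\mathbb E[e^{itW_1}\mid\hat P_n]$ is the common conditional characteristic function of each $W_b$. Assumption \ref{bp} states that, conditional on the data, $W_b\Rightarrow N(0,\sigma^2)$ in probability, which by standard arguments (evaluating characteristic functions at a single point) gives $\psi_n(t)\stackrel{p}{\to}\phi(t):=e^{-\sigma^2t^2/2}$ for every fixed $t$. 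Hence $\prod_{b=1}^B\psi_n(t_b)\stackrel{p}{\to}\prod_{b=1}^B\phi(t_b)$.

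The next step is to combine this with the unconditional convergence $W_0\Rightarrow\sigma Z_0$. Write
\begin{equation*}
\phi_n(t_0,\ldots,t_B)=\mathbb E\Bigl[e^{it_0W_0}\bigl(\textstyle\prod_{b=1}^B\psi_n(t_b)-\prod_{b=1}^B\phi(t_b)\bigr)\Bigr]+\mathbb E[e^{it_0W_0}]\prod_{b=1}^B\phi(t_b).
\end{equation*}
In the first term, the integrand is bounded by $2^B$ and converges to $0$ in probability, so bounded convergence drives it to $0$. In the second term, $\mathbb E[e^{it_0W_0}]\to\phi(t_0)$ by Assumption \ref{bp}. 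Combining, $\phi_n(t_0,\ldots,t_B)\to\prod_{b=0}^B\phi(t_b)$, which is the characteristic function of $(\sigma Z_0,\sigma Z_1,\ldots,\sigma Z_B)$ with $Z_b$ i.i.d.\ $N(0,1)$. Lévy's continuity theorem then yields \eqref{joint}.

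The main technical obstacle is the transition from the convergence-in-probability statement in Assumption \ref{bp} (which is intrinsically about the conditional distribution as a random object) to a statement about unconditional joint convergence. The bounded-convergence argument in the display above is the crux of this step; it crucially uses that characteristic functions are bounded by $1$, which both makes the integrand in the error term bounded and lets us sidestep having to strengthen the convergence mode for $\psi_n(t)$. Everything else is a matter of bookkeeping.
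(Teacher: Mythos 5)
Your argument is correct, and its skeleton is the same as the paper's: condition on the data so that the $B$ resample contributions factor, add and subtract the limiting product, kill the error term by bounded convergence (using that the relevant quantities are bounded by $1$), and handle the remaining term with the unconditional CLT for $\sqrt n(\hat\psi_n-\psi)$. The only real difference is the bookkeeping device: the paper verifies convergence of the joint distribution function directly to $\prod_b\Phi_\sigma(x_b)$, which lets it use Assumption \ref{bp} verbatim (the assumption is stated as convergence in probability of the conditional CDF), whereas you route through characteristic functions and L\'evy continuity. That choice costs you one extra step that you wave at as ``standard'': passing from $P(\sqrt n(\psi_n^*-\hat\psi_n)\leq x\mid\hat P_n)\stackrel{p}{\to}\Phi_\sigma(x)$ for every $x$ to $\mathbb E[e^{itW_1}\mid\hat P_n]\stackrel{p}{\to}e^{-\sigma^2t^2/2}$ for every $t$. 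This implication is true but not free: the clean way is the subsequence trick (from any subsequence extract a further one along which the conditional CDF converges almost surely at all rationals simultaneously, conclude a.s.\ weak convergence of the conditional laws since the limit is continuous, hence a.s.\ convergence of the conditional characteristic function along that sub-subsequence, hence convergence in probability). If you want your write-up self-contained you should include that sentence; otherwise the two proofs are interchangeable, and the CDF version is marginally leaner because it never leaves the form in which the hypothesis is given.
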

The convergence \eqref{joint} entails that, under $\sqrt n$-scaling, the centered resample estimates and the original point estimate are all asymptotically independent, and moreover are distributed according to the same normal, with the only unknown being $\sigma$ that captures the asymptotic standard error. The asymptotic independence is thanks to the universal limit of a resample estimate $\sqrt n(\psi_n^{*b}-\hat\psi_n)$ conditional on any data sequence, as detailed below.

\begin{proof}[Proof of Proposition \ref{joint prop}]
Denote $\Phi_\sigma(\cdot)$ as the distribution function of $N(0,\sigma^2)$. To show the joint weak convergence of $\sqrt n(\hat\psi_n-\psi)$ and $\sqrt n(\psi_n^{*b}-\hat\psi_n)$, $b=1,\ldots,B$, to i.i.d. $N(0,\sigma^2)$ variables, we will show, for any $x_b\in\mathbb R,b=0,\ldots,B$,
$$P\left(\sqrt n(\hat\psi_n-\psi)\leq x_0,\sqrt n(\psi_n^{*1}-\hat\psi_n)\leq x_1,\ldots,\sqrt n(\psi_n^{*B}-\hat\psi_n)\leq x_B\right)\to\prod_{b=0}^B\Phi_\sigma(x_b)$$
as $n\to\infty$. To this end, we have
\begin{eqnarray}
&&\left|P\left(\sqrt n(\hat\psi_n-\psi)\leq x_0,\sqrt n(\psi_n^{*1}-\hat\psi_n)\leq x_1,\ldots,\sqrt n(\psi_n^{*B}-\hat\psi_n)\leq x_B\right)-\prod_{b=0}^B\Phi_\sigma(x_b)\right|\notag\\
&=&\left|E\left[I(\sqrt n(\hat\psi_n-\psi)\leq x_0)P(\sqrt n(\psi_n^{*1}-\hat\psi_n)\leq x_1|\hat P_n)\cdots P(\sqrt n(\psi_n^{*B}-\hat\psi_n)\leq x_B|\hat P_n)\right]-\prod_{b=0}^B\Phi_\sigma(x_b)\right|{}\notag\\
&&\text{\ \ \ \ \ \ \  \ \ \ \ \ \ since $\sqrt n(\psi_n^{*b}-\hat\psi_n)$ for $b=1,\ldots,B$ are independent conditional on the data $X_1,\ldots,X_n$}\notag\\
&=&\Bigg|E\left[I(\sqrt n(\hat\psi_n-\psi)\leq x_0)P(\sqrt n(\psi_n^{*1}-\hat\psi_n)\leq x_1|\hat P_n)\cdots P(\sqrt n(\psi_n^{*B}-\hat\psi_n)\leq x_B|\hat P_n)\right]{}\notag\\
&&{}-P(\sqrt n(\hat\psi_n-\psi)\leq x_0)\prod_{b=1}^B\Phi_\sigma(x_b)+P(\sqrt n(\hat\psi_n-\psi)\leq x_0)\prod_{b=1}^B\Phi_\sigma(x_b)-\prod_{b=0}^B\Phi_\sigma(x_b)\Bigg|\notag\\
&\leq&E\Bigg[I(\sqrt n(\hat\psi_n-\psi)\leq x_0)\Bigg|P(\sqrt n(\psi_n^{*1}-\hat\psi_n)\leq x_1|\hat P_n)\cdots P(\sqrt n(\psi_n^{*B}-\hat\psi_n)\leq x_B|\hat P_n){}\notag\\
&&{}-\prod_{b=1}^B\Phi_\sigma(x_b)\Bigg|\Bigg]+\left|\left(P(\sqrt n(\hat\psi_n-\psi)\leq x_0)-\Phi_\sigma(x_0)\right)\prod_{b=1}^B\Phi_\sigma(x_b)\right|{}\notag\\
&&{}\text{\ \ \ \ \ \ \  \ \ \ \ \ \ by the triangle inequality}\notag\\
&\leq&E\left|P(\sqrt n(\psi_n^{*1}-\hat\psi_n)\leq x_1|\hat P_n)\cdots P(\sqrt n(\psi_n^{*B}-\hat\psi_n)\leq x_B|\hat P_n)-\prod_{b=1}^B\Phi_\sigma(x_b)\right|{}\notag\\
&&{}+\left|P(\sqrt n(\hat\psi_n-\psi)\leq x_0)-\Phi_\sigma(x_0)\right|\prod_{b=1}^B\Phi_\sigma(x_b)\label{new detail}
\end{eqnarray}
Now, by Assumption \ref{bp}, we have $P(\sqrt n(\psi_n^{*b}-\hat\psi_n)\leq x_b|\hat P_n)\stackrel{p}{\to}\Phi_\sigma(x_b)$ for $b=1,\ldots,B$, and thus $\prod_{b=1}^BP(\sqrt n(\psi_n^{*b}-\hat\psi_n)\leq x_b|\hat P_n)\stackrel{p}{\to}\prod_{b=1}^B\Phi_\sigma(x_b)$. Hence, since $P(\sqrt n(\psi_n^{*b}-\hat\psi_n)\leq x_b|\hat P_n)$ and $\Phi_\sigma(x_b)$ are bounded by 1, the first term in \eqref{new detail} converges to 0 by the bounded convergence theorem. The second term in \eqref{new detail} also goes to 0 because $P(\sqrt n(\hat\psi_n-\psi)\leq x_0)\to\Phi_\sigma(x_0)$ by Assumption \ref{bp} again. Therefore \eqref{new detail} converges to 0 as $n\to\infty$.
\end{proof}

Given Proposition \ref{joint prop}, to infer $\psi$ we can now leverage classical normality inference tools to ``cancel out'' the nuisance parameter $\sigma$. In particular, we can use the pivotal statistic
\begin{equation}
T:=\frac{\hat\psi_n-\psi}{S}\label{pivotal basic}
\end{equation}
where $S^2$ is as defined in \eqref{var}, which converges to a student $t$-distribution with degree of freedom $B$. More concretely:


\begin{proof}[Proof of Theorem \ref{main}]
The pivotal statistic \eqref{pivotal basic} satisfies
\begin{equation}
T=\frac{\hat\psi_n-\psi}{S}=\frac{\sqrt n(\hat\psi_n-\psi)}{\sqrt{\frac{1}{B}\sum_{b=1}^B(\sqrt n(\psi_n^{*b}-\hat\psi_n))^2}}\Rightarrow\frac{Z_0}{\sqrt{\frac{1}{B}\sum_{b=1}^BZ_{b}^2}}\label{proof weak convergence}
\end{equation}
for i.i.d. $N(0,1)$ variables $Z_0,Z_{1},\ldots,Z_{B}$, where we use Proposition \ref{joint prop} and the continuous mapping theorem to deduce the weak convergence. Note that
$$\frac{Z_0}{\sqrt{\frac{1}{B}\sum_{b=1}^BZ_{b}^2}}\stackrel{d}{=}\frac{N(0,1)}{\sqrt{\chi^2_B/B}}\stackrel{d}{=}t_B$$
where $N(0,1)$ and $\chi^2_B$ here represent standard normal and $\chi^2_B$ random variables. Here the two equalities in distribution (denoted ``$\stackrel{d}{=}$") follow from the elementary constructions of $\chi^2$- and $t$-distributions respectively. Thus  $$T=\frac{\hat\psi_n-\psi}{S}\Rightarrow t_B$$
Hence we have
$$\mathbb P_n\left(-t_{B,1-\alpha/2}\leq\frac{\hat\psi_n-\psi}{S}\leq t_{B,1-\alpha/2}\right)\to1-\alpha$$
from which we conclude
$$\mathbb P_n\left(\hat\psi_n-t_{B,1-\alpha/2}S\leq\psi\leq\hat\psi_n+t_{B,1-\alpha/2}S\right)\to1-\alpha$$
and the theorem.
\end{proof}



Note that instead of using the $t$-statistic approach, it is also possible to produce intervals from the normal variables in other ways (e.g., \cite{wall2001effective}) which could have potential benefits for very small $B$. However, the $t$-interval form of \eqref{CI} is intuitive, matches the standard error bootstrap as $B$ grows, and its width is easy to quantify.

In Theorem \ref{main} we have used Assumption \ref{bp}. This assumption is ensured when the functional $\psi(\cdot)$ is Hadamard differentiable with a non-degenerate derivative. More precisely, for a class of functions $\mathcal F$ from $\mathcal X\to\mathbb R$, define $\ell^\infty(\mathcal F):=\left\{z:\|z\|_{\mathcal F}:=\sup_{f\in\mathcal F}|z(f)|<\infty\right\}$
where $z$ is a map from $\mathcal F$ to $\mathbb R$. We have the following:
\begin{proposition}[Sufficient conditions for bootstrap validity]
Consider $\hat P_n$ and $P_n^*$ as random elements that take values in $\ell^\infty(\mathcal F)$, where $\mathcal F$ is a Donsker class with finite envelope. Suppose $\psi:\ell^\infty(\mathcal F)\to\mathbb R$ is Hadamard differentiable at $P$ (tangential to $\ell^\infty(\mathcal F)$) where the derivative $\psi_P'$ satisfies that $\psi_P'(\mathbb G_P)$ is a non-degenerate random variable (i.e., with positive variance), for a tight Gaussian process $\mathbb G_P$ on $\ell^\infty(\mathcal F)$ with mean 0 and covariance $Cov(\mathbb G_P(f_1),\mathbb G_P(f_2))=Cov_P(f_1(X),f_2(X))$ (where $Cov_P$ denotes the covariance taken with respect to $P$). Then Assumption \ref{bp} holds under i.i.d. data.\label{HD}
\end{proposition}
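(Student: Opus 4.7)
The plan is to combine two standard pieces of empirical process machinery: Donsker's theorem together with its bootstrap counterpart, and the functional delta method together with its bootstrap counterpart (both as developed in \cite{van2000asymptotic} \S23). Assumption \ref{bp} contains two claims---the unconditional CLT for $\hat\psi_n$ and the conditional bootstrap CLT for $\psi_n^*$---so the argument naturally splits into two parallel pieces.

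For the first claim, I would start with Donsker's theorem, which under the assumption that $\mathcal F$ is a $P$-Donsker class with finite envelope gives $\sqrt n(\hat P_n-P)\Rightarrow \mathbb G_P$ as random elements of $\ell^\infty(\mathcal F)$. Since $\psi$ is Hadamard differentiable at $P$ tangential to $\ell^\infty(\mathcal F)$, the functional delta method then yields $\sqrt n(\hat\psi_n-\psi)\Rightarrow \psi_P'(\mathbb G_P)$. The derivative $\psi_P'$ is linear and continuous, and $\mathbb G_P$ is a tight Gaussian process, so $\psi_P'(\mathbb G_P)$ is a centered real Gaussian; the non-degeneracy hypothesis identifies it with $N(0,\sigma^2)$ for $\sigma^2:=\mathrm{Var}(\psi_P'(\mathbb G_P))>0$. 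This verifies the first sentence of Assumption \ref{bp}.

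For the conditional claim, I would invoke the bootstrap central limit theorem for empirical processes (\cite{van2000asymptotic} Theorem 23.7), which, under the same Donsker hypothesis with finite envelope, asserts that the conditional law of $\sqrt n(P_n^*-\hat P_n)$ given the data converges to $\mathbb G_P$ in outer probability, measured in the bounded-Lipschitz metric on $\ell^\infty(\mathcal F)$. The delta method for the bootstrap (\cite{van2000asymptotic} Theorem 23.9) then transports this conditional convergence through the Hadamard differentiable map $\psi$, giving that the conditional law of $\sqrt n(\psi_n^*-\hat\psi_n)$ converges in probability to the law of $\psi_P'(\mathbb G_P)=N(0,\sigma^2)$. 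Evaluating this weak convergence at an arbitrary continuity point $x$ of $\Phi_\sigma$ (which is every $x\in\mathbb R$) yields the pointwise statement $P(\sqrt n(\psi_n^*-\hat\psi_n)\leq x\mid\hat P_n)\stackrel{p}{\to}\Phi_\sigma(x)$ used in Assumption \ref{bp}.

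The only delicate point---and the main obstacle to a completely rigorous write-up---is checking the technical hypotheses of the bootstrap delta method, in particular confirming that the Hadamard differentiability is of the ``tangential to $\ell^\infty(\mathcal F)$'' form required to apply Theorem 23.9, and handling the outer-measurability issues endemic to empirical process theory with general Donsker classes. These points are, however, exactly the assumptions already baked into the statement, so once they are cited correctly the two CLTs follow by plugging the Donsker conclusions into the (unconditional and bootstrap versions of the) delta method.
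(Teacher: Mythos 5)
Your proposal is correct and matches the paper's argument: the paper proves this by citing its Theorem \ref{thm:delta empirical} (the delta method for the empirical bootstrap, itself assembled from \cite{van2000asymptotic} Theorems 23.7, 20.8 and 23.9) with $d=1$ and noting that $\psi_P'(\mathbb G_P)$ is normal with positive variance. The conversion from conditional bounded-Lipschitz convergence to pointwise convergence of the conditional distribution function, which you flag as the delicate point, is handled in the paper via \cite{kosorok2007introduction} Lemma 10.11 exactly as you describe.
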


Proposition \ref{HD} follows immediately from the functional delta method (see \cite{van2000asymptotic} \S23 or Appendix \ref{sec:sufficient proof}). Note that we may as well use the conditions in Proposition \ref{HD} as the assumption for Theorem \ref{main}, but Assumption \ref{bp} helps highlight our point that Cheap Bootstrap is valid whenever conventional bootstraps are.

\subsection{Higher-Order Coverage Errors}\label{sec:higher}
We analyze the higher-order coverage errors of Cheap Bootstrap. A common approach to analyze coverage errors in conventional bootstraps is to use Edgeworth expansions, which we will also utilize. However, unlike these existing methods, the pivotal statistic used in Cheap Bootstrap has a limiting $t$-distribution, not a normal distribution. Edgeworth expansions on limiting $t$-distributions appear open to our best knowledge (except a recent working paper \cite{he2021higher}). Here, we derive our expansions for Cheap Bootstrap by integrating the expansions of the original estimate and resample estimates that follow (conditional) asymptotic normal distributions. The resulting coefficients can be explicitly identified which, even though cannot be evaluated in closed-form, are amenable to Monte Carlo integration.

As is customary in the bootstrap literature, we consider the function-of-mean model, namely $\psi=g(\bm\mu)$, where $\bm\mu=E\mathbf X$ for a $d$-dimensional random vector $\mathbf X$ and $g:\mathbb R^d\to\mathbb R$ is a function. Denote $\overline{\mathbf X}=(1/n)\sum_{i=1}^n\mathbf X_i$ as the sample mean of i.i.d. data $\{\mathbf X_1,\ldots,\mathbf X_n\}$. To facilitate our discussion, define
\begin{equation}
A(\mathbf x)=\frac{g(\mathbf x)-g(\bm\mu)}{h(\bm\mu)}\label{A def}
\end{equation}
for function $h:\mathbb R^d\to\mathbb R$, where $h(\bm\mu)^2$ is the asymptotic variance of $\sqrt ng(\overline{\mathbf X})$ that can be written in terms of $\bm\mu$ by the delta method (under regularity conditions that will be listed explicitly in our following theorem). To be more precise on the latter point, note that given any $\tilde{\bm\mu}=E\tilde{\mathbf X}$, we can augment it to $\bm\mu=E\left[\begin{array}{c}\tilde{\mathbf X}\\\tilde{\mathbf X}^2\end{array}\right]$, with the operation $\cdot^2$ defined component-wise (viewing $\tilde{\mathbf X}$ as column vector). Thus, for a given $\tilde g(\tilde{\bm\mu})$, we can define $g(\bm\mu)=\tilde g(\tilde{\bm\mu})$ and $h(\bm\mu)^2=\nabla{\tilde g}(\tilde{\bm\mu})^\top\Sigma\nabla{\tilde g}(\tilde{\bm\mu})$, where $\Sigma$ is the covariance matrix $Cov(\tilde{\mathbf X})$ which is a function of $\bm\mu$, and $\top$ denotes transpose.

We also define the ``studentized" version of $A$ given by
\begin{equation}
A_s(\mathbf x)=\frac{g(\mathbf x)-g(\bm\mu)}{h(\mathbf x)}\label{A def2}
\end{equation}

We have the following:
\begin{theorem}[Higher-order coverage errors for Cheap Bootstrap]
Consider the function-of-mean model where $\psi=g(\bm\mu)$ for some function $g:\mathbb R^d\to\mathbb R$ and $\bm\mu=E\mathbf X$ for a $d$-dimensional random vector $\mathbf X$. Consider also the function $h:\mathbb R^d\to\mathbb R$ that appears in \eqref{A def}. Assume that $g$ and $h$ each has $\nu+3$ bounded derivatives in a neighborhood of $\bm\mu$, that $E\|\mathbf X\|^l<\infty$ for a sufficiently large positive number $l$, and that the characteristic function $\chi$ of $\mathbf X$ satisfies Cramer's condition $\limsup_{\|\mathbf t\|\to\infty}|\chi(\mathbf t)|<1$. Then
\begin{enumerate}
\item When $\nu\geq2$, the two-sided Cheap Bootstrap confidence interval $\mathcal I$ satisfies
$$\mathbb P_n(g(\bm\mu)\in\mathcal I)=(1-\alpha)+\frac{\zeta}{n}+o\left(\frac{1}{n}\right)$$
where the coefficient
\begin{eqnarray*}
\zeta&=&B\int\cdots\int_{\left|\frac{z_0}{\sqrt{\frac{1}{B}\sum_{b=1}^Bz_b^2}}\right|\leq t_{B,1-\alpha/2}}d(p_2(z_B)\phi(z_B))d\Phi(z_{B-1})\cdots d\Phi(z_1)d\Phi(z_0){}\\
&&{}+\int\cdots\int_{\left|\frac{z_0}{\sqrt{\frac{1}{B}\sum_{b=1}^Bz_b^2}}\right|\leq t_{B,1-\alpha/2}}d\Phi(z_B)d\Phi(z_{B-1})\cdots d\Phi(z_1)d(q_2(z_0)\phi(z_0))
\end{eqnarray*}
\item When $\nu\geq1$, the one-sided upper Cheap Bootstrap confidence interval $\mathcal I_{upper}$ satisfies
$$\mathbb P_n(g(\bm\mu)\in\mathcal I_{upper})=(1-\alpha)+\frac{\zeta_{upper}}{\sqrt n}+o\left(\frac{1}{\sqrt n}\right)$$
where the coefficient
\begin{eqnarray*}
\zeta_{upper}&=&B\int\cdots\int_{\frac{z_0}{\sqrt{\frac{1}{B}\sum_{b=1}^Bz_b^2}}\leq t_{B,1-\alpha}}d(p_1(z_B)\phi(z_B))d\Phi(z_{B-1})\cdots d\Phi(z_1)d\Phi(z_0){}\\
&&{}+\int\cdots\int_{\frac{z_0}{\sqrt{\frac{1}{B}\sum_{b=1}^Bz_b^2}}\leq t_{B,1-\alpha}}d\Phi(z_B)d\Phi(z_{B-1})\cdots d\Phi(z_1)d(q_1(z_0)\phi(z_0))
\end{eqnarray*}
and the one-sided lower Cheap Bootstrap confidence interval $\mathcal I_{lower}$ satisfies
$$\mathbb P_n(g(\bm\mu)\in\mathcal I_{lower})=(1-\alpha)+\frac{\zeta_{lower}}{\sqrt n}+o\left(\frac{1}{\sqrt n}\right)$$
where the coefficient
\begin{eqnarray*}
\zeta_{lower}&=&B\int\cdots\int_{\frac{z_0}{\sqrt{\frac{1}{B}\sum_{b=1}^Bz_b^2}}\geq t_{B,1-\alpha}}d(p_1(z_B)\phi(z_B))d\Phi(z_{B-1})\cdots d\Phi(z_1)d\Phi(z_0){}\\
&&{}+\int\cdots\int_{\frac{z_0}{\sqrt{\frac{1}{B}\sum_{b=1}^Bz_b^2}}\geq t_{B,1-\alpha}}d\Phi(z_B)d\Phi(z_{B-1})\cdots d\Phi(z_1)d(q_1(z_0)\phi(z_0))
\end{eqnarray*}
In the above, $\Phi$ and $\phi$ are the standard normal distribution and density functions respectively, $p_j$ and $q_j$ are polynomials of degree $3j-1$, odd for even $j$ and even for odd $j$, with coefficients depending on moments of $\mathbf X$ up to order $j+2$ polynomially and also $g(\cdot)$ and $h(\cdot)$.
\end{enumerate}\label{main Edgeworth}
\end{theorem}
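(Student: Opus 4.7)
The plan is to reduce the coverage probability to a $(B+1)$-fold integral over the scaled quantities $Y_0 = \sqrt n\,(g(\overline{\mathbf X})-g(\bm\mu))/h(\bm\mu)$ and $Y_b = \sqrt n\,(g(\overline{\mathbf X}^{*b})-g(\overline{\mathbf X}))/h(\bm\mu)$ for $b=1,\ldots,B$, substitute an Edgeworth expansion into each factor, and collect terms to the target order. Because $h(\bm\mu)$ cancels from both numerator and denominator of the pivot $T$, the event $\{g(\bm\mu)\in\mathcal I\}$ coincides with $\{|Y_0/\sqrt{B^{-1}\sum_b Y_b^2}|\le t_{B,1-\alpha/2}\}$, with analogous reductions for the one-sided intervals. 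Conditional on the data, $Y_1,\ldots,Y_B$ are i.i.d.\ with common distribution $G_n(\cdot\mid\hat P_n)$, while $Y_0$ is data-measurable with marginal distribution $H_n$, so
\begin{equation*}
\mathbb P_n(g(\bm\mu)\in\mathcal I) = E\!\left[\int\!\cdots\!\int I\!\left(\left|\tfrac{Y_0}{\sqrt{B^{-1}\sum_b y_b^2}}\right|\le t_{B,1-\alpha/2}\right)\prod_{b=1}^B dG_n(y_b\mid\hat P_n)\right].
\end{equation*}

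Standard smooth-function-of-means Edgeworth theory (cf.\ Hall's monograph, Chapters 2--3) gives, under the stated moment, smoothness, and Cramer conditions, the marginal expansion $H_n(y) = \Phi(y) + n^{-1/2}q_1(y)\phi(y) + n^{-1}q_2(y)\phi(y) + o(n^{-1})$ and the conditional expansion $G_n(y\mid\hat P_n) = \Phi(y) + n^{-1/2}\hat p_1(y)\phi(y) + n^{-1}\hat p_2(y)\phi(y) + o_p(n^{-1})$, with $\hat p_j$ the sample-moment analogue of $p_j$. A Taylor expansion around the true moments gives $\hat p_j(y) = p_j(y) + (\overline{\mathbf X}-\bm\mu)^\top\nabla_{\bm\mu}p_j(y) + O_p(n^{-1})$; after multiplying by $n^{-j/2}$, integrating against the $y_b$-symmetric region, and applying the outer expectation, this discrepancy falls below the target order and $\hat p_j$ may be replaced by $p_j$. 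Substituting these expansions into the iterated integral and multiplying out yields a finite sum of integrals involving the differentials $d\Phi$, $d(p_j\phi)$, and $d(q_j\phi)$, organized by the order $n^{-k/2}$.

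The surviving terms at each order are then identified by parity. Because $p_1,q_1$ are even polynomials of degree $2$, the density corrections $(p_1\phi)'$ and $(q_1\phi)'$ are odd functions, while $(p_2\phi)'$ and $(q_2\phi)'$ are even since $p_2,q_2$ are odd polynomials of degree $5$. For the two-sided interval the integration region depends on $y_0$ only through $|y_0|$ and on each $y_b$ only through $y_b^2$ and is thus symmetric under sign flips of every coordinate; every integrand containing an odd-in-some-coordinate factor then vanishes, which kills all $n^{-1/2}$ contributions as well as every cross product at order $n^{-1}$ of two first-order corrections from distinct factors. The surviving $n^{-1}$ contributions are the direct $q_2\phi$ correction on $H_n$ and the $p_2\phi$ correction placed on any single $G_n$; the latter produces $B$ identical integrals by permutation symmetry in $b=1,\ldots,B$, and collecting these gives the stated $\zeta$. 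For the one-sided intervals the region is no longer symmetric in $y_0$, so the odd-in-$y_0$ correction from $H_n$ survives at order $n^{-1/2}$ and a parallel bookkeeping on the $G_n$ contributions yields the remaining parts of $\zeta_{upper}$ and $\zeta_{lower}$.

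The main technical obstacle is uniform remainder control in the conditional bootstrap Edgeworth expansion: the $o_p(n^{-1})$ remainder in $G_n(y\mid\hat P_n)$ must be uniform in $y$ on an event of data realizations of probability $1-o(n^{-1/2})$, so that the outer expectation and the inner expansion can be exchanged without inflating the error. This rests on the standard bootstrap Edgeworth machinery once the Cramer and moment hypotheses are transferred to the empirical distribution via a sample-analogue Cramer condition, which holds on a high-probability event under the stated assumptions. A secondary point is to verify that the linear-in-$(\overline{\mathbf X}-\bm\mu)$ Taylor discrepancies of $\hat p_j$ really wash out at the claimed order after integration against the indicator and the outer expectation; this reduces to the parity observations above.
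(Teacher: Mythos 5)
Your roadmap is the same as the paper's (write the coverage as a $(B+1)$-fold iterated integral, substitute Edgeworth expansions for the marginal of the original estimate and the conditional law of each resample estimate, kill terms by parity over the symmetric region, and replace sample-moment polynomials by population ones), but there is a genuine error in the normalization step that makes your claimed component expansions inconsistent with each other. You cancel $h(\bm\mu)$ from numerator and denominator, so $Y_0=\sqrt n\,A(\overline{\mathbf X})$ is the \emph{non-studentized} statistic, whose Edgeworth expansion involves the polynomials $p_j$ (derived from $A$), not the $q_j$ you assign to $H_n$; the $q_j$ belong to the studentized statistic $\sqrt n\,A_s(\overline{\mathbf X})=\sqrt n\,(g(\overline{\mathbf X})-g(\bm\mu))/h(\overline{\mathbf X})$. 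Symmetrically, with your normalization $Y_b=\sqrt n\,(g(\overline{\mathbf X}^{*b})-g(\overline{\mathbf X}))/h(\bm\mu)$ is \emph{not} the canonical bootstrap statistic: conditional on the data it equals $\sqrt n\,\hat A(\overline{\mathbf X}^{*b})$ rescaled by the random factor $h(\overline{\mathbf X})/h(\bm\mu)=1+O_p(n^{-1/2})$, so its conditional distribution function is $\Phi\bigl(y\,h(\bm\mu)/h(\overline{\mathbf X})\bigr)+\cdots$ rather than $\Phi(y)+n^{-1/2}\hat p_1(y)\phi(y)+\cdots$. The extra scaling contributes terms at orders $n^{-1/2}$ and $n^{-1}$ (correlated with $Y_0$ through $\overline{\mathbf X}$) that do not vanish by the sign-flip symmetry of the region and that your bookkeeping never collects. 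The fix, which is what the paper does, is to divide numerator and denominator by $h(\overline{\mathbf X})$ instead: then the numerator is exactly $\sqrt n\,A_s(\overline{\mathbf X})$ (expansion with $q_j$) and each denominator term is exactly $\sqrt n\,\hat A(\overline{\mathbf X}^{*b})$, the bootstrap plug-in analogue of $A$ (conditional expansion with $\hat p_j\to p_j$), which is precisely why $q_2$ sits on the $z_0$ coordinate and $p_2$ on the $z_b$ coordinates in $\zeta$. As written, your derivation reaches the stated formula only because the misassignment of $p_j$ and $q_j$ masks the misnormalization.

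A secondary, fixable imprecision: for the two-sided $o(n^{-1})$ remainder, the event on which the conditional bootstrap expansion and the polynomial bounds hold must have probability $1-o(n^{-1})$ (the paper takes $1-O(n^{-\lambda})$ with $\lambda\geq 3/2$), since on its complement the integrand is only bounded by $1$; your stated $1-o(n^{-1/2})$ would leave an error term that swamps the $n^{-1}$ coefficient. Your parity analysis of the surviving terms and your argument for replacing $\hat p_2$ by $p_2$ (the discrepancy enters at $n^{-1}\cdot O(n^{-1/2})$) are otherwise in line with the paper's proof.
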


In Theorem \ref{main Edgeworth}, the polynomials $p_j$ and $q_j$ are related to $A$ and $A_s$ defined in \eqref{A def} and \eqref{A def2} as follows. Under the assumptions in the theorem, the $j$-th cumulant of $\sqrt nA(\overline{\mathbf X})$ and $\sqrt nA_s(\overline{\mathbf X})$ can be expanded as
$$\kappa_{j,n}=n^{-(j-2)/2}\left(k_{j,1}+\frac{k_{j,2}}{n}+\frac{k_{j,3}}{n^2}+\cdots\right)$$
for coefficients $k_{j,l}$'s depending on whether we are considering $A$ or $A_s$.
Then $p_1$ or $q_1$ is equal to
\begin{equation*}
-\left(k_{1,2}+\frac{1}{6}k_{3,1}H_2(x)\right)=-\left(k_{1,2}+\frac{1}{6}k_{3,1}(x^2-1)\right)
\end{equation*}
while $p_2$ or $q_2$ is equal to
\begin{eqnarray*}
&&-\left(\frac{1}{2}(k_{2,2}+k_{1,2}^2)H_1(x)+\frac{1}{24}(k_{4,1}+4k_{1,2}k_{3,1})H_3(x)+\frac{1}{72}k_{3,1}^2H_5(x)\right)\\
&=&-x\left(\frac{1}{2}(k_{2,2}+k_{1,2}^2)+\frac{1}{24}(k_{4,1}+4k_{1,2}k_{3,1})(x^2-3)+\frac{1}{72}k_{3,1}^2(x^4-10x^2+15)\right)
\end{eqnarray*}
Here $H_j(\cdot)$ is the $j$-th order Hermite polynomial, and $k_{j,l}$'s are determined from $A$ for $p_1,p_2$ and determined from $A_s$ for $q_1,q_2$.

The coverage error $O(n^{-1})$ of the Cheap Bootstrap in the two-sided case and $O(n^{-1/2})$ in the one-sided case in Theorem \ref{main Edgeworth} match the conventional basic and percentile bootstraps described in Section \ref{sec:comparisons}. Nonetheless, these errors are inferior to more refined approaches, including the studentized bootstrap also mentioned earlier which attains $O(n^{-1})$ in the one-sided case (\cite{davison1997bootstrap}).

Theorem \ref{main Edgeworth} is proved by expressing the distribution of the pivotal statistic in \eqref{pivotal basic} as a multi-dimensional integral, with respect to measures that are approximated by the Edgeworth expansions of the original estimate $\sqrt n(\hat\psi_n-\psi)$ and the conditional expansions of the resample estimates $\sqrt n(\psi_n^{*b}-\hat\psi)$ which have limiting normal distributions. From these expansions, we could also identify the polynomials $p_j$ and $q_j$ in our discussion above by using equations (2.20), (2.24) and (2.25) in \cite{hall2013bootstrap}. Lastly, we note that the remainder term in the coverage of the two-sided confidence interval in Theorem \ref{main Edgeworth} can be refined to $O(1/n^{3/2})$ when $\nu\geq3$, and the one-sided intervals can be refined to $O(1/n)$ when $\nu\geq2$. These can be seen by tracing our proof (in Appendix \ref{sec:main proof}).

The coefficients $\zeta$, $\zeta_{upper}$ and $\zeta_{lower}$ are computable via Monte Carlo integration, because integral in the form
$$\int\cdots\int_{(z_0,\cdots,z_B)\in\mathcal S}d(\pi(z_B)\phi(z_B))d\Phi(z_{B-1})\cdots d\Phi(z_1)d\Phi(z_0)$$
for some set $\mathcal S$ and polynomial $\pi$ can be written as
\begin{eqnarray*}
&&\int\cdots\int_{(z_0,\cdots,z_B)\in\mathcal S}(\pi'(z_B)\phi(z_B)-z_B\pi(z_B)\phi(z_B)) dz_Bd\Phi(z_{B-1})\cdots d\Phi(z_1)d\Phi(z_0)\\
&=&\int\cdots\int_{(z_0,\cdots,z_B)\in\mathcal S}(\pi'(z_B)-z_B\pi(z_B))d\Phi(z_B)d\Phi(z_{B-1})\cdots d\Phi(z_1)d\Phi(z_0)
\end{eqnarray*}
which is expressible as an expectation
$$E[\pi'(Z_B)-Z_B\pi(Z_B);(Z_0,\cdots,Z_B)\in\mathcal S]$$
taken with respect to independent standard normal variables $Z_0,\ldots,Z_B$.

\subsection{Interval Tightness}\label{sec:tightness}
Besides coverage, another important efficiency criterion is the interval width. From Section \ref{sec:exact}, we know that the Cheap Bootstrap interval \eqref{CI} arises from a $t$-interval construction, using which we can readily extract its width behavior. More specifically, for a fixed number of resamples $B$, $S^2$ satisfies
$$\sqrt nS\Rightarrow\sqrt{\frac{1}{B}\sum_{b=1}^B Z_b^2}\stackrel{d}{=}\sigma\sqrt{\frac{\chi^2_B}{B}}$$
as $n\to\infty$ so that the half-width of $\mathcal I$ is $\text{HW}\approx t_{B,1-\alpha/2}\sigma\sqrt{\chi_B^2/(nB)}$. Plugging in the moments of $\chi_B^2$, we see that the half-width for large sample size $n$ has mean and variance given by
\begin{align}
E[\text{HW}]&\approx t_{B,1-\alpha/2}\sqrt{\frac{2}{B}}\frac{\Gamma((B+1)/2)}{\Gamma(B/2)}\frac{\sigma}{\sqrt n}\label{mean HW}\\
Var(\text{HW})&\approx t_{B,1-\alpha/2}^2\left(B-\frac{2\Gamma((B+1)/2)^2}{\Gamma(B/2)^2}\right)\frac{\sigma^2}{nB}\label{var HW}
\end{align}
respectively, where $\Gamma(\cdot)$ is the Gamma function. As $B$ increases, both the mean and variance decrease, which signifies a natural gain in statistical efficiency, until in the limit $B=\infty$ we get a mean  $z_{1-\alpha/2}\sigma/\sqrt n$ and a variance 0, which correspond to the normality interval with a known $\sigma$.





The expressions \eqref{mean HW} and \eqref{var HW} reveal that the half-width of Cheap Bootstrap is large when $B=1$, but falls and stabilizes quickly as $B$ increases. Table \ref{table:HW} shows the approximate half-width mean and standard deviation shown in \eqref{mean HW} and \eqref{var HW} at $\alpha=5\%$ (ignoring the $\sigma/\sqrt n$ factor), and the relative inflation in mean half-width compared to the case $B=\infty$ (i.e., $z_{1-\alpha/2}=1.96)$. We see that, as $B$ increases from 1 to 2, the half-width mean drops drastically from $10.14$ ($417.3\%$ inflation relative to $B=\infty$) to $3.81$ ($94.6\%$ inflation), and half-width standard deviation from $7.66$ to $1.99$. As $B$ increases from 2 to 3, the mean continues to drop notably to $2.93$ ($49.6\%$ inflation) and standard deviation to $1.24$. The drop rate slows down as $B$ increases further. For instance, at $B=10$ the mean is $2.17$ ($10.9\%$ inflation) and standard deviation is $0.49$, while at $B=20$ the mean is $2.06$ ($5.1\%$ inflation) and standard deviation is $0.33$. Though what constitutes an acceptable inflation level compared to $B=\infty$ is context-dependent, generally the inflation appears reasonably low even when $B$ is a small number, except perhaps when $B$ is $1$ or $2$.

\begin{table}[ht]
\caption{Approximate half-width performance of Cheap Bootstrap against $B$ at $95\%$ confidence level.}
\centering
\begin{tabular}{cccc}
 Replication&Mean (ignoring&Mean inflation relative&Standard deviation (ignoring\\
 size $B$&the $\sigma/\sqrt n$ factor)&to $B=\infty$ ($z_{0.975}$)&the $\sigma/\sqrt n$ factor)\\
   \hline
  1&10.14&417.3\%&7.66\\
  2&3.81&94.6\%&1.99\\
  3&2.93&49.6\%&1.24\\
  4&2.61&33.2\%&0.95\\
  5&2.45&24.8\%&0.79\\
  6&2.35&19.8\%&0.69\\
  7&2.28&16.4\%&0.62\\
  8&2.24&14.0\%&0.57\\
  9&2.20&12.3\%&0.53\\
  10&2.17&10.9\%&0.49\\
  11&2.15&9.8\%&0.46\\
  12&2.13&8.9\%&0.44\\
  13&2.12&8.1\%&0.42\\
  14&2.11&7.5\%&0.40\\
  15&2.10&7.0\%&0.39\\
  16&2.09&6.5\%&0.37\\
  17&2.08&6.1\%&0.36\\
  18&2.07&5.7\%&0.35\\
  19&2.07&5.4\%&0.34\\
  20&2.06&5.1\%&0.33\\

\end{tabular}
\label{table:HW}
\end{table}

In Appendices \ref{sec:SE} and \ref{sec:multi}, we discuss inference on standard error and a multivariate version of the Cheap Bootstrap, which are immediate extensions of our developments in this section. In the next two sections, we discuss two generalizations of our approach to models corrupted by both data and computation noises (Section \ref{sec:double}) and to subsampling (Section \ref{sec:subsampling}).

\section{Applying Cheap Bootstrap to Nested Sampling Problems}\label{sec:double}
Our Cheap Bootstrap can reduce computational cost for problems which, when applying the conventional bootstraps, require nested sampling. This phenomenon arises when the estimate involves noises coming from both the data and computation procedure. To facilitate discussion, as in Section \ref{sec:basic}, denote $\psi:=\psi(P)$ as a target parameter or quantity of interest. However, here the computation of $\psi(Q)$, for any given $Q$, could be noisy. More precisely, suppose that, given any $Q$, we could only generate $\hat\psi_r(Q)$ where $\hat\psi_r(Q)$ is an unbiased output for $\psi(Q)$. Then, to compute a point estimate of $\psi$, we use the data $X_1,\ldots,X_n$ to construct the empirical distribution $\hat P_n$, and with this $\hat P_n$, we output
\begin{equation}
\hat{\hat\psi}_{n,R}=\frac{1}{R}\sum_{r=1}^R\hat\psi_r(\hat P_n)\label{pt estimate nested}
\end{equation}
where $\hat\psi_r(\hat P_n)$, $r=1,\ldots,R$ is a sequence of unbiased runs for $\psi(\hat P_n)$, independent given $\hat P_n$. The ``double hat" above $\psi$ in the left hand side of \eqref{pt estimate nested} signifies the two sources of noises, one from the estimation of $P$ and one from the estimation of $\psi(\cdot)$, in the resulting overall point estimate.

The above construction that requires generating and averaging multiple noisy outputs of $\hat\psi_r$ arises in the following examples:
\begin{enumerate}
\item\emph{Input uncertainty quantification in simulation modeling:} Here $\psi(P)$ denotes the output performance measure of a stochastic simulation model, where $P$ is the distribution of input random variates fed into the simulation logic. For instance, $\psi(P)$ could be the expected workload of a queueing network, and $P$ denotes the inter-arrival time distribution. Estimating the performance measure $\psi(P)$, even with a known input distribution $P$, would require running the simulation model many times (i.e., $R$ times) and taking their average. When $P$ is observed from exogenous data, we use a plug-in estimate $\hat P_n$ to drive the input variates in the simulation model (\cite{henderson2003input,song2014advanced,barton2012tutorial,lam2016advanced}). This amounts to the point estimate \eqref{pt estimate nested}.
\item\emph{Bagging:} Bagging predictors are constructed by averaging a large number of base predictors, where each base predictor is obtained by re-training the model on a new resample of the data (\cite{breiman1996bagging,buhlmann2002analyzing}). Here, $\psi(P)$ denotes the target predicted value (at some tested point) of the bagging predictor, and $P$ is the data distribution. The quantity $\hat\psi_r(\hat P_n)$ denotes the predicted value of a model trained on a resample from $\hat P_n$.
\item\emph{Deep ensemble:} Deep ensembles are predictors constructed by averaging several neural networks, each trained from the same data but using a different randomization seed (\cite{lakshminarayanan2016simple,lee2015m}). The seed controls, for instance, the initialization of a stochastic gradient descent. Like bagging, $\psi(P)$ here denotes the target predicted value (at some tested point) where $P$ is the data distribution. However, the quantity $\hat\psi_r(\hat P_n)$ refers to the output of a trained individual neural network. 
\end{enumerate}

Applying the bootstrap to assess the uncertainty of \eqref{pt estimate nested} requires a nested sampling that involves two layers: First, we resample the data $B$ times to obtain $P_n^{*b},b=1,\ldots,B$. Then, given each resample $P_n^{*b}$, we run our computation procedure $R$ times to obtain $\hat\psi_r(P_n^{*b}),r=1,\ldots,R$. The quantities
\begin{equation}
\psi_{n,R}^{**b}=\frac{1}{R}\sum_{r=1}^R\hat\psi_r(P_n^{*b}),\ \ b=1,\ldots,B\label{re estimate nested}
\end{equation}
then serve as the resample estimates whose statistic (e.g., quantiles or standard deviation) can be used to obtain a confidence interval for the target quantity $\psi$. The computation effort in generating these intervals is the product between the outer number of resamples and the inner number of computation runs, namely $BR$. The number $B$ needs to be sufficiently large as required by conventional bootstraps. The number $R$, depending on the context, may also need to be large. For instance, the simulation modeling and bagging examples above require a large enough $R$. Furthermore, not only does $R$ need to be large, but it could also depend (linearly) on the data size $n$ in order to achieve consistency, as shown in the context of simulation modeling (\cite{lam2021subsampling,lam2018subsampling}). In other words, the difficulty with nested sampling when running the bootstrap not only lies in the multiplicative computation effort between $B$ and $R$, but also that their choices can depend intricately on the data size which adds to the complication. 

Cheap Bootstrap reduces the outer number of resamples $B$ to a low number, a strength that we investigate in this section. To build the theoretical framework, we first derive an asymptotic on problems involving noises from both data and computation.
Denote $\tau^2(Q)=Var(\hat\psi_r(Q))$ and $\kappa_3(Q)=E|\hat\psi_r(Q)-\psi(Q)|^3$ as the variance and third-order absolute central moment of a computation run driven by a given input distribution $Q$. We assume the following.
\begin{assumption}[Moment consistency of computation runs]
We have $\tau^2(\hat P_n)\stackrel{p}{\to}\tau^2(P)=:\tau^2$ and $\kappa_3(\hat P_n)\stackrel{p}{\to}\kappa_3(P)=:\kappa_3$ as $n\to\infty$, where $0<\tau^2<\infty$ and $\kappa_3<\infty$. Similarly, $\tau^2(P_n^*)\stackrel{p}{\to}\tau^2(P)=\tau^2$ and $\kappa_3(P_n^*)\stackrel{p}{\to}\kappa_3(P)=\kappa_3$. \label{assumption:sim}
\end{assumption}

Assumption \ref{assumption:sim} can be justified if $\tau^2(\cdot)$ and $\kappa_3(\cdot)$ are Hadamard differentiable with non-degenerate derivatives like in Assumption \ref{bp}. Specifically, we have the following:
\begin{proposition}
If $\tau^2(\cdot)$ and $\kappa_3(\cdot)$ satisfy the assumptions on $\psi(\cdot)$ in Proposition \ref{HD}, then Assumption \ref{assumption:sim} holds.\label{technical moment}
\end{proposition}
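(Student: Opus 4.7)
The plan is simply to invoke Proposition \ref{HD} separately for the two functionals $\tau^2(\cdot)$ and $\kappa_3(\cdot)$ and then extract the desired consistency statements from the weak-convergence conclusions. By hypothesis, both functionals satisfy the same Donsker-class and Hadamard-differentiability conditions imposed on $\psi(\cdot)$ in Proposition \ref{HD}. Hence Proposition \ref{HD} applies verbatim, yielding, first, that $\sqrt n(\tau^2(\hat P_n)-\tau^2(P))\Rightarrow (\tau^2)'_P(\mathbb G_P)$, and second, the conditional bootstrap counterpart $\sqrt n(\tau^2(P_n^*)-\tau^2(\hat P_n))\Rightarrow(\tau^2)'_P(\mathbb G_P)$ in probability, conditional on the data. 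The analogous pair of weak-convergence statements hold for $\kappa_3$.

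From the first unconditional weak-convergence statement, the target-to-be-consistent difference is of order $O_p(n^{-1/2})$, so in particular $\tau^2(\hat P_n)\stackrel{p}{\to}\tau^2(P)=:\tau^2$; the same reasoning gives $\kappa_3(\hat P_n)\stackrel{p}{\to}\kappa_3(P)=:\kappa_3$. The positivity and finiteness $0<\tau^2<\infty$ and $\kappa_3<\infty$ come for free from the fact that $\tau^2(P)$ and $\kappa_3(P)$ are well-defined real numbers (positivity of $\tau^2(P)$ is the non-degeneracy condition inherited from Proposition \ref{HD}).

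For the resample statements, combine the conditional weak convergence above with the consistency already obtained. Explicitly, the conditional weak convergence in probability implies, for any $\epsilon>0$, that $P(|\tau^2(P_n^*)-\tau^2(\hat P_n)|>\epsilon\,|\,\hat P_n)\stackrel{p}{\to}0$; since these conditional probabilities are uniformly bounded by $1$, the bounded convergence theorem yields the unconditional convergence $P(|\tau^2(P_n^*)-\tau^2(\hat P_n)|>\epsilon)\to 0$. Chaining this with $\tau^2(\hat P_n)\stackrel{p}{\to}\tau^2$ via the triangle inequality gives $\tau^2(P_n^*)\stackrel{p}{\to}\tau^2$. The $\kappa_3(P_n^*)$ statement is identical. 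The only mildly delicate step is the passage from conditional-in-probability weak convergence to unconditional convergence in probability for the resample difference, but this is exactly the bounded convergence argument used already in the proof of Proposition \ref{joint prop}, so no new machinery is needed and the result follows.
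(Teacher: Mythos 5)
Your proof is correct and follows essentially the same route as the paper: apply Proposition \ref{HD} to the functionals $\tau^2(\cdot)$ and $\kappa_3(\cdot)$, and use the resulting $\sqrt n$-rate weak convergence (unconditional for $\tau^2(\hat P_n)-\tau^2(P)$, conditional for $\tau^2(P_n^*)-\tau^2(\hat P_n)$) to deduce the four consistency statements via Slutsky-type arguments and the triangle inequality. The only cosmetic differences are that the paper obtains the unconditional tightness of $\sqrt n(\tau^2(P_n^*)-\tau^2(\hat P_n))$ by citing Proposition \ref{joint prop} rather than redoing the bounded-convergence step inline as you do, and that your parenthetical attributing the positivity $\tau^2(P)>0$ to the non-degeneracy condition of Proposition \ref{HD} is slightly off (that condition concerns the variance of the derivative $(\tau^2)'_P(\mathbb G_P)$, not the value $\tau^2(P)$ itself), though neither point affects the validity of the argument.
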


In stating our asymptotic result below, we consider a slightly more general version of the bootstrap where the computation sizes in the original estimate and the resample estimates are allowed to be different, which gives some extra flexibility to our procedures. That is, we use $\hat{\hat\psi}_{n,R_0}$ as defined in \eqref{pt estimate nested} where $R_0$ denotes the computation size in constructing the point estimate. Then we resample $B$ times and obtain $\psi_{n,R}^{**b}$ as defined in \eqref{re estimate nested} where $R$ denotes the computation size in constructing each resample estimate. Moreover, we denote $\hat\psi_n=\psi(\hat P_n)$ as the hypothetical original estimate constructed from $\hat P_n$ without any computation noise, and $\psi_n^{*b}=\psi(P_n^{*b})$ as the hypothetical $b$-th resample estimate without computation noise.
We have the following:


\begin{theorem}[Joint asymptotic of original and resample estimates under both data and computation noises]
Suppose Assumptions \ref{bp} and \ref{assumption:sim} hold and $n,R_0,R\to\infty$ such that $R_0/n\to p_0$ and $R/n\to p$ for some constants $0<p_0,p<1$. We have
\begin{eqnarray}
&&\sqrt n\left(\hat{\hat\psi}_{n,R_0}-\psi,\ \psi_{n,R}^{**1}-\hat{\hat\psi}_{n,R_0},\ldots,\ \psi_{n,R}^{**B}-\hat{\hat\psi}_{n,R_0}\right)\notag\\
&\Rightarrow&\left(\sigma Z_0+\frac{\tau}{\sqrt{p_0}}W_0,\ \sigma Z_1+\frac{\tau}{\sqrt p}W_1-\frac{\tau}{\sqrt{p_0}}W_0,\ldots,\ \sigma Z_B+\frac{\tau}{\sqrt p}W_B-\frac{\tau}{\sqrt{p_0}}W_0\right)\label{CLT nested}
\end{eqnarray}
where $Z_0,Z_1,\ldots,Z_B,W_0,W_1,\ldots,W_B\sim N(0,1)$ and are all independent.\label{joint1}
\end{theorem}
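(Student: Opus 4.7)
The plan is to decompose each quantity into a \emph{data-noise} piece and a \emph{computation-noise} piece, handle the two sources separately, and stitch them together via a characteristic-function argument. Define $A_0^{(n)}=\sqrt n(\hat\psi_n-\psi)$, $A_b^{(n)}=\sqrt n(\psi_n^{*b}-\hat\psi_n)$ for $b=1,\ldots,B$, and the ``pure'' computation-noise pieces $B_0^{(n)}=\sqrt n(\hat{\hat\psi}_{n,R_0}-\hat\psi_n)$ and $B_b^{(n)}=\sqrt n(\psi_{n,R}^{**b}-\psi_n^{*b})$. The vector on the left of \eqref{CLT nested} is then $(A_0^{(n)}+B_0^{(n)},\,A_1^{(n)}+B_1^{(n)}-B_0^{(n)},\,\ldots,\,A_B^{(n)}+B_B^{(n)}-B_0^{(n)})$, and Proposition~\ref{joint prop} immediately yields $(A_0^{(n)},\ldots,A_B^{(n)})\Rightarrow(\sigma Z_0,\ldots,\sigma Z_B)$ with iid standard normal $Z_b$.

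Next I would establish a conditional CLT for the computation noise. Given $\hat P_n$, the term $B_0^{(n)}$ equals $\sqrt{n/R_0}$ times a normalized sum of $R_0$ iid zero-mean variables with variance $\tau^2(\hat P_n)$ and third absolute moment $\kappa_3(\hat P_n)$. The Berry--Esseen bound gives that the Kolmogorov distance between the conditional law of $B_0^{(n)}$ and $N(0,n\tau^2(\hat P_n)/R_0)$ is of order $\kappa_3(\hat P_n)/(\sqrt{R_0}\,\tau(\hat P_n)^3)$, which vanishes in probability by Assumption~\ref{assumption:sim}; combined with $R_0/n\to p_0$ and $\tau^2(\hat P_n)\stackrel{p}{\to}\tau^2$, this gives conditional convergence $B_0^{(n)}\Rightarrow N(0,\tau^2/p_0)$ in probability. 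An identical argument conditional on each $P_n^{*b}$, using $\tau^2(P_n^{*b})\stackrel{p}{\to}\tau^2$, yields conditional convergence $B_b^{(n)}\Rightarrow N(0,\tau^2/p)$ in probability.

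Finally I would assemble joint convergence via characteristic functions. Let $\mathcal G_n$ denote the $\sigma$-field generated by $\hat P_n$ and $P_n^{*1},\ldots,P_n^{*B}$. Since the computation runs $\hat\psi_r$ are drawn independently of the resampling and of each other, $B_0^{(n)},\ldots,B_B^{(n)}$ are mutually independent given $\mathcal G_n$, while the $A_b^{(n)}$ are $\mathcal G_n$-measurable. Hence the joint CF factorizes as
\begin{equation*}
E\left[\exp\Bigl(i\sum_{b=0}^B s_b A_b^{(n)}+i\sum_{b=0}^B t_b B_b^{(n)}\Bigr)\right]
=E\left[\exp\Bigl(i\sum_{b=0}^B s_b A_b^{(n)}\Bigr)\prod_{b=0}^B E[e^{it_b B_b^{(n)}}|\mathcal G_n]\right].
\end{equation*}
Each conditional CF converges in probability to the corresponding deterministic Gaussian CF by the previous step, and because conditional CFs are bounded by $1$, bounded convergence together with Proposition~\ref{joint prop} lets me pass to the limit, obtaining $\prod_{b=0}^B e^{-s_b^2\sigma^2/2}\cdot e^{-t_0^2\tau^2/(2p_0)}\prod_{b=1}^B e^{-t_b^2\tau^2/(2p)}$. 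This is the joint CF of independent normals $\sigma Z_b$ together with $(\tau/\sqrt{p_0})W_0$ and $(\tau/\sqrt p)W_b$, and the claim follows by linear recombination and the continuous mapping theorem.

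The main obstacle is the layered conditioning: the bootstrap already delivers a ``conditional weak convergence in probability'' statement (Proposition~\ref{joint prop}), and on top of that the computation noise requires another conditional CLT given a random input distribution. Keeping the two layers clean relies on the uniformity in the Berry--Esseen step -- specifically, that $\tau(\hat P_n)$ and $\tau(P_n^{*b})$ stay bounded away from $0$ in probability, which is exactly what is secured by $\tau^2>0$ in Assumption~\ref{assumption:sim}.
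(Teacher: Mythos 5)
Your proposal is correct and reaches the theorem by the same two structural moves as the paper: the identical telescoping decomposition $\hat{\hat\psi}_{n,R_0}-\psi=(\hat{\hat\psi}_{n,R_0}-\hat\psi_n)+(\hat\psi_n-\psi)$ and $\psi_{n,R}^{**b}-\hat{\hat\psi}_{n,R_0}=(\psi^{**b}_{n,R}-\psi_n^{*b})+(\psi_n^{*b}-\hat\psi_n)+(\hat\psi_n-\hat{\hat\psi}_{n,R_0})$, and the same conditional Berry--Esseen argument (with Assumption \ref{assumption:sim} securing that the conditional second and third moments converge and that $\tau(\hat P_n),\tau(P_n^{*b})$ stay away from zero) for the computation-noise pieces. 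Where you genuinely diverge is in assembling the joint convergence of the $2(B+1)$ components: the paper works directly with joint distribution functions, peeling off one conditioning layer at a time through a cascade of triangle inequalities and telescoped products of conditional probabilities (its Step 1 occupies roughly two pages), whereas you condition once on the $\sigma$-field $\mathcal G_n$ generated by $\hat P_n$ and $P_n^{*1},\ldots,P_n^{*B}$, factorize the conditional characteristic function of the computation-noise block using conditional independence, and pass to the limit by bounded convergence plus Proposition \ref{joint prop} and L\'evy continuity. Your route is appreciably shorter and makes the dependence structure (the $A_b^{(n)}$ are $\mathcal G_n$-measurable, the $B_b^{(n)}$ are conditionally independent given $\mathcal G_n$) visible in a single display; its only extra cost is the small step of upgrading the Berry--Esseen bound on conditional Kolmogorov distance to convergence in probability of the conditional characteristic functions, which is routine via a subsequence argument but should be stated, since the paper's CDF-based route consumes the Berry--Esseen output directly without this conversion.
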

Note that the scaling $R_0/n\to p_0$ and $R/n\to p$ in Theorem \ref{joint1} are introduced only for mathematical convenience. The convergence \eqref{CLT nested} stipulates that
\begin{eqnarray*}
&&\left(\hat{\hat\psi}_{n,R_0}-\psi,\ \psi_{n,R}^{**1}-\hat{\hat\psi}_{n,R_0},\ldots,\ \psi_{n,R}^{**B}-\hat{\hat\psi}_{n,R_0}\right)\\
&\stackrel{d}{\approx}&\left(\frac{\sigma}{\sqrt n}Z_0+\frac{\tau}{\sqrt{R_0}}W_0,\ \frac{\sigma}{\sqrt n}Z_1+\frac{\tau}{\sqrt R}W_1-\frac{\tau}{\sqrt{R_0}}W_0,\ldots,\ \frac{\sigma}{\sqrt n}Z_B+\frac{\tau}{\sqrt R}W_B-\frac{\tau}{\sqrt{R_0}}W_0\right)
\end{eqnarray*}
It implies, in particular, that the standard error of $\hat{\hat\psi}_{n,R_0}$ is given by $\sqrt{\sigma^2/n+\tau^2/R_0}$ which consists of two parts: $\sigma^2/n$ captures the variability from the data and $\tau^2/R_0$ from the computation.

Theorem \ref{joint1} is a generalization of Proposition \ref{joint prop}. The random variables $Z_b$'s and $W_b$'s in the limit each signifies a source of noise coming from either data or computation. More specifically, $Z_0$ comes from the original data, $Z_b,b=1,\ldots,B$ from each resample, $W_0$ from the computation of the original estimate, and $W_b,b=1,\ldots,B$ from the computation of each resample. Unlike \eqref{joint}, in \eqref{CLT nested} the (centered) resample estimate $\psi_{n,R}^{**b}-\hat{\hat\psi}_{n,R_0}$ and original estimate $\hat{\hat\psi}_{n,R_0}-\psi$ are no longer asymptotically independent, because they share the same source of noise $W_0$. This causes modifications to our Cheap Bootstrap in canceling out the nuisance parameter.



We will describe two approaches to construct Cheap Bootstrap confidence intervals for $\psi$ under the conditions in Theorem \ref{joint1}, both resulting in the form
\begin{equation}
\left[\hat{\hat\psi}_{n,R_0}-q_{\cdot,1-\alpha/2} S_\cdot,\ \hat{\hat\psi}_{n,R_0}+q_{\cdot,1-\alpha/2}S_\cdot\right]\label{CI general nested}
\end{equation}
where $S_\cdot$, like the $S$ in \eqref{CI}, is a standard error estimate of $\hat{\hat\psi}_{n,R_0}$. These two approaches differ by how we pool together the resample estimates in $S_\cdot$, where the $\cdot$ therein denotes the approach. Correspondingly, the critical value $q_{\cdot,1-\alpha/2}$ is obtained from the $(1-\alpha/2)$-quantile of a distribution pertinent to the pivotal statistic using $S_\cdot$. Both approaches we propose achieve asymptotic validity with very few resamples. 


\subsection{Centered at Original Estimate}\label{sec:cen}

In the first approach, we use in \eqref{CI general nested}
\begin{equation}
S_{O}^2=\frac{1}{B}\sum_{b=1}^B(\psi_{n,R}^{**b}-\hat{\hat\psi}_{n,R_0})^2\label{S center}
\end{equation}
That is, ${S_{O}}^2$ acts as the sample variance of the bootstrap estimates, where the center of the squares is set to be $\hat{\hat\psi}_{n,R_0}$. We choose $q_{O,1-\alpha/2}$ to be
\begin{equation}
q_{O,1-\alpha/2}=\min\left\{q:\min_{\theta\geq0}F(q;\theta,\rho)\geq1-\frac{\alpha}{2}\right\}\label{q center}
\end{equation}
where $F(q;\theta,\rho)$ is the distribution function of
\begin{equation}
\frac{\theta V_1+V_2}{\sqrt{\frac{\theta^2+\rho^2}{B}\left(Y+V_3^2\right)-2\sqrt{\frac{\theta^2+\rho^2}{B}}V_3V_2+V_2^2}}\label{limiting expression}
\end{equation}
with $V_1,V_2,V_3\stackrel{i.i.d.}{\sim}N(0,1)$ and  $Y\sim\chi^2_{B-1}$ being all independent, and $\rho=\sqrt{p_0/p}$ is the square-rooted ratio between the computation size used in the original estimate and each resample estimate. When $B=1$, we set $Y=0$ in \eqref{limiting expression} so that the expression is equivalent to
\begin{equation}
\frac{\theta V_1+V_2}{|\sqrt{\theta^2+\rho^2}V_3-V_2|}\label{limiting expression single}
\end{equation}
We call the resulting interval
\begin{equation}
\mathcal I_{O}=\left[\hat{\hat\psi}_{n,R_0}-q_{O,1-\alpha/2} S_{O},\ \hat{\hat\psi}_{n,R_0}+q_{O,1-\alpha/2}S_{O}\right]\label{cen CI}
\end{equation}
the Cheap Bootstrap interval ``centered at original estimate".

Note that the distribution function $F(q;\theta,\rho)$ has no closed-form expression or reduction to any common distribution where a table or built-in calculator is available. However, it can be easily simulated, for instance, by running many independent copies of $(V_1,V_2,V_3,Y)$, say $(V_1^j,V_2^j,V_3^j,Y^j)$ for $j=1,\ldots,N$, and computing (via, e.g., a simple grid search)
\begin{equation}
\min\left\{q:\min_{\theta\geq0}\frac{1}{N}\sum_{j=1}^NI\left(\frac{\theta V_1^j+V_2^j}{\sqrt{\frac{\theta^2+\rho^2}{B}\left(Y^j+{V_3^j}^2\right)-2\sqrt{\frac{\theta^2+\rho^2}{B}}V_3^jV_2^j+{V_2^j}^2}}\leq q\right)\geq1-\frac{\alpha}{2}\right\}\label{grid search}
\end{equation}
where $I(\cdot)$ denotes the indicator function, to approximate $q_{O,1-\alpha/2}$. Note that this Monte Carlo computation only involves standard normal and $\chi^2$ random variables, not the noisy model evaluation $\hat\psi_r$ that could be expensive.

We have the following guarantee:
\begin{theorem}[Asymptotic validity of Cheap Bootstrap ``centered at point estimate"]
Under the same assumptions and setting in Theorem \ref{joint1}, the interval $\mathcal I_{O}$ in \eqref{cen CI}, where $S_{O}^2$ and $q_{O,1-\alpha/2}$ are defined in \eqref{S center} and \eqref{q center}, is an asymptotically valid $(1-\alpha)$-level confidence interval, i.e., it satisfies
$$\liminf_{n\to\infty}\mathbb P_n(\psi\in\mathcal I_{O})\geq1-\alpha$$
as $n\to\infty$, where $\mathbb P_n$ denotes the probability with respect to the data $X_1,\ldots,X_n$ and all randomness from the resampling and computation.\label{thm1}
\end{theorem}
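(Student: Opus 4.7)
The plan is to derive, via the continuous mapping theorem applied to Theorem~\ref{joint1}, the limiting distribution of the pivotal statistic
$$T=\frac{\hat{\hat\psi}_{n,R_0}-\psi}{S_{O}},$$
and then match this limit exactly to the distribution $F(q;\theta,\rho)$ in \eqref{limiting expression} for an appropriate unknown $\theta\geq 0$ and the known $\rho=\sqrt{p_0/p}$. Once the matching is established, asymptotic validity follows because $q_{O,1-\alpha/2}$ in \eqref{q center} is defined as the worst-case $(1-\alpha/2)$-quantile over all $\theta\geq 0$, and the limiting pivot is symmetric about $0$ so that a one-sided upper quantile translates into a two-sided coverage of $1-\alpha$.

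First I would rescale by $\sqrt n$: write $T=\sqrt n(\hat{\hat\psi}_{n,R_0}-\psi)\big/\sqrt{(1/B)\sum_{b=1}^B\bigl(\sqrt n(\psi^{**b}_{n,R}-\hat{\hat\psi}_{n,R_0})\bigr)^2}$. Applying Theorem~\ref{joint1} together with the continuous mapping theorem yields
$$T\Rightarrow \frac{\sigma Z_0+(\tau/\sqrt{p_0})W_0}{\sqrt{(1/B)\sum_{b=1}^B\bigl(\sigma Z_b+(\tau/\sqrt p)W_b-(\tau/\sqrt{p_0})W_0\bigr)^2}}.$$
Dividing numerator and denominator by $\tau/\sqrt{p_0}$, setting $\theta=\sigma\sqrt{p_0}/\tau\geq0$, $\rho=\sqrt{p_0/p}$, and letting $V_1=Z_0$, $V_2=W_0$, the numerator becomes $\theta V_1+V_2$, matching \eqref{limiting expression}. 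For the denominator, I would introduce $U_b=\theta Z_b+\rho W_b$, which are i.i.d.\ $N(0,\theta^2+\rho^2)$, and decompose them via the sample mean and residual sum of squares: set $V_3=\sqrt{B}\,\bar U/\sqrt{\theta^2+\rho^2}\sim N(0,1)$ and $Y=\sum_b(U_b-\bar U)^2/(\theta^2+\rho^2)\sim \chi^2_{B-1}$, with $V_3$ independent of $Y$ and of $V_2$. A direct expansion
$$\frac{1}{B}\sum_{b=1}^B(U_b-W_0)^2=\frac{\theta^2+\rho^2}{B}(Y+V_3^2)-2\sqrt{\tfrac{\theta^2+\rho^2}{B}}\,V_3V_2+V_2^2$$
shows the denominator has exactly the form in \eqref{limiting expression}. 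For $B=1$, the residual sum $Y$ is absent, reducing to \eqref{limiting expression single}.

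Consequently, the limiting distribution of $T$ is $F(\,\cdot\,;\theta,\rho)$ with the unknown $\theta=\sigma\sqrt{p_0}/\tau$. Since this $\theta$ is a nuisance parameter belonging to $[0,\infty)$, and $F(q;\theta,\rho)$ is continuous in $q$ (the denominator in \eqref{limiting expression} is a.s.\ positive), the choice of $q_{O,1-\alpha/2}$ in \eqref{q center} guarantees
$$\lim_{n\to\infty}\mathbb P_n(T\leq q_{O,1-\alpha/2})=F(q_{O,1-\alpha/2};\theta,\rho)\geq 1-\alpha/2.$$
The limit is symmetric about $0$ because the joint law of $(Z_0,W_0,Z_1,W_1,\ldots,Z_B,W_B)$ is invariant under global sign flip, so $\mathbb P_n(T\leq -q_{O,1-\alpha/2})\to 1-F(q_{O,1-\alpha/2};\theta,\rho)\leq\alpha/2$. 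Combining, $\mathbb P_n(|T|\leq q_{O,1-\alpha/2})\to F(q_{O,1-\alpha/2};\theta,\rho)-(1-F(q_{O,1-\alpha/2};\theta,\rho))\geq 1-\alpha$, which rearranges into $\liminf_n\mathbb P_n(\psi\in\mathcal I_O)\geq 1-\alpha$.

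The main obstacle I anticipate is the algebraic identification in the second paragraph, specifically verifying that the mean/residual decomposition of the quadratic form in the denominator reproduces \eqref{limiting expression} with the correct independence structure between $V_2$, $V_3$, and $Y$, and handling the degenerate case $B=1$ cleanly. A smaller but notable point is checking symmetry of the pivot's limiting distribution to convert the one-sided worst-case quantile \eqref{q center} into valid two-sided coverage; this requires recognizing that flipping signs of all Gaussians leaves the joint distribution invariant while negating both the numerator and the cross term in the denominator, so the ratio flips sign in distribution.
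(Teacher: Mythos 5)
Your proposal is correct and follows essentially the same route as the paper's proof: apply Theorem \ref{joint1} with the continuous mapping theorem, rescale by $\sqrt{p_0}/\tau$ to reduce the limit to a function of $\theta=\sigma\sqrt{p_0}/\tau$ and $\rho$, identify the denominator with \eqref{limiting expression} via the sample-mean/residual decomposition of the i.i.d.\ normals $\theta Z_b+\rho W_b$, and then use symmetry of the limit together with the worst-case-over-$\theta$ definition of $q_{O,1-\alpha/2}$ to get two-sided coverage at least $1-\alpha$. The algebraic identification you flag as the main obstacle is exactly the step the paper carries out, in the same way.
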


Theorem \ref{thm1} is obtained by looking at the asymptotic distribution of the (approximately) pivotal statistic $(\hat{\hat\psi}_{n,R_0}-\psi)/S_O$. However, because of the dependence between the original and resample estimates in \eqref{CLT nested}, the unknown nuisance parameters $\sigma^2$ and $\tau^2$ are not directly canceled out. To tackle this, we consider a worst-case calculation which leads to some conservativeness in the coverage guarantee, i.e., inequality instead of exact equality in Theorem \ref{thm1}.

\subsection{Centered at Resample Mean}\label{sec:nc}
We consider our second approach. For $B\geq2$, we use in \eqref{CI general nested}
\begin{equation}
S_{M}^2=\frac{1}{B-1}\sum_{b=1}^B(\psi_{n,R}^{**b}-\overline{\psi_{n,R}^{**}})^2\label{S nc}
\end{equation}
where $\overline{\psi_{n,R}^{**}}$ is the sample mean of $\psi_{n,R}^{**b}$'s given by $\overline{\psi_{n,R}^{**}}=(1/B)\sum_{b=1}^B\psi^{**b}_{n,R}$. That is, we use the sample variance of $\psi^{**b}_{n,R}$'s as our standard error estimate. Correspondingly, we set
\begin{equation}
q_{M,1-\alpha/2}=\max\{\rho^{-1},1\}t_{B-1,1-\alpha/2}\label{q nc}
\end{equation}
where, like in Section \ref{sec:cen}, $\rho=\sqrt{p_0/p}$ is the square-rooted ratio between the computation sizes of the original estimate and each resample estimate. Also, recall $t_{B-1,1-\alpha/2}$ is the $(1-\alpha/2)$-quantile of $t_{B-1}$. We call the resulting interval
\begin{equation}
\mathcal I_{M}=\left[\hat{\hat\psi}_{n,R_0}-q_{M,1-\alpha/2}S_{M},\ \hat{\hat\psi}_{n,R_0}+q_{M,1-\alpha/2}S_{M}\right]\label{nc CI}
\end{equation}
the Cheap Bootstrap interval ``centered at resample mean".

We have the following guarantee:
\begin{theorem}[Asymptotic validity of Cheap Bootstrap ``centered at resample mean"]
Under the same assumptions and setting in Theorem \ref{joint1}, for $B\geq2$, the interval $\mathcal I_{M}$ in \eqref{nc CI} where $S_{M}^2$ and $q_{M,1-\alpha/2}$ are defined in \eqref{S nc} and \eqref{q nc}, is an asymptotically valid $(1-\alpha)$-level confidence interval, i.e., it satisfies
$$\liminf_{n\to\infty}\mathbb P_n(\psi\in\mathcal I_{M})\geq1-\alpha$$
as $n\to\infty$, where $\mathbb P_n$ denotes the probability with respect to the data $X_1,\ldots,X_n$ and all randomness from the resampling and computation. Moreover, if the computation sizes for each resample estimate and the original estimate are the same, i.e., $\rho=1$, then $\mathcal I_{M}$ is asymptotically exact, i.e.,
$$\lim_{n\to\infty}\mathbb P_n(\psi\in\mathcal I_{M})=1-\alpha$$
\label{thm2}
\end{theorem}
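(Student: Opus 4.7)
The plan is to exploit the joint asymptotic representation in Theorem~\ref{joint1} to show that the sample-mean centering of $S_M^2$ erases \emph{both} of the noise terms $Z_0,W_0$ that sit in the original point estimate, thereby decoupling numerator and denominator of the pivotal ratio $(\hat{\hat\psi}_{n,R_0}-\psi)/S_M$. Concretely, Theorem~\ref{joint1} gives $\sqrt n(\psi_{n,R}^{**b}-\hat{\hat\psi}_{n,R_0})\Rightarrow \sigma Z_b+(\tau/\sqrt p)W_b-(\tau/\sqrt{p_0})W_0$, so after subtracting the mean over $b$ the $W_0$ piece cancels and we are left with $\sqrt n(\psi_{n,R}^{**b}-\overline{\psi_{n,R}^{**}})\Rightarrow \sigma(Z_b-\bar Z)+(\tau/\sqrt p)(W_b-\bar W)$, which depends only on $\{Z_b,W_b\}_{b\ge 1}$ and hence is independent of the limit of $\sqrt n(\hat{\hat\psi}_{n,R_0}-\psi)$.

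Next I would identify the limiting distribution of $T_M:=(\hat{\hat\psi}_{n,R_0}-\psi)/S_M$. The variables $\sigma Z_b+(\tau/\sqrt p)W_b$, $b=1,\dots,B$, are i.i.d.\ $N(0,\sigma^2+\tau^2/p)$, so their sample variance (which is the weak limit of $n S_M^2$) equals in distribution $(\sigma^2+\tau^2/p)\chi^2_{B-1}/(B-1)$. The numerator, after $\sqrt n$-scaling, converges to $N(0,\sigma^2+\tau^2/p_0)$ and is asymptotically independent of the denominator. By the continuous mapping theorem,
\begin{equation*}
T_M \;\Rightarrow\; c\cdot t_{B-1}, \qquad c:=\sqrt{\frac{\sigma^2+\tau^2/p_0}{\sigma^2+\tau^2/p}},
\end{equation*}
with $t_{B-1}$ a standard $t$ variable independent of everything. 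Hence the coverage of $\mathcal I_M$ tends to $\mathbb P(|t_{B-1}|\le q_{M,1-\alpha/2}/c)$.

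It remains to bound $c$ uniformly in the unknown $(\sigma^2,\tau^2)$. Writing $u=\tau^2/\sigma^2\in[0,\infty)$, we have $c^2=(1+u/p_0)/(1+u/p)$, which is monotone in $u$ and interpolates between $1$ (at $u=0$) and $p/p_0=\rho^{-2}$ (as $u\to\infty$). Therefore $c\le\max\{\rho^{-1},1\}$ in all cases, so the choice $q_{M,1-\alpha/2}=\max\{\rho^{-1},1\}\,t_{B-1,1-\alpha/2}$ yields $q_{M,1-\alpha/2}/c\ge t_{B-1,1-\alpha/2}$ and consequently $\liminf_n\mathbb P_n(\psi\in\mathcal I_M)\ge 1-\alpha$. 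When $\rho=1$ we have $p_0=p$, the ratio inside $c$ is identically $1$, so $T_M\Rightarrow t_{B-1}$ on the nose and coverage is exactly $1-\alpha$.

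I do not expect any serious obstacle. The only point that requires a little care is verifying the asymptotic independence of numerator and denominator after centering, which follows directly from Theorem~\ref{joint1} together with Slutsky/continuous mapping; the $\sigma,\tau$ worst-case bound on $c$ is a one-variable monotonicity check. The slight conservativeness in the inequality $\liminf\ge 1-\alpha$ is intrinsic because $c$ depends on unknown nuisance parameters, and this is exactly what the $\max\{\rho^{-1},1\}$ inflation of the $t$-critical value is designed to absorb.
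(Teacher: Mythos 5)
Your proposal is correct and follows essentially the same route as the paper's proof: both use Theorem \ref{joint1} to observe that centering at the resample mean cancels the shared noise $W_0$, identify the limit of $T_M$ as $\sqrt{(\sigma^2+\tau^2/p_0)/(\sigma^2+\tau^2/p)}\,t_{B-1}$ with independent numerator and denominator, and bound the scale factor by $\max\{\rho^{-1},1\}$ via a one-parameter worst case over the unknown ratio $\sigma/\tau$ (the paper parametrizes by $\theta=\sigma\sqrt{p_0}/\tau$ rather than your $u=\tau^2/\sigma^2$, which is only a cosmetic difference), with exactness at $\rho=1$ following identically.
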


Compared to $\mathcal I_O$ in Section \ref{sec:cen}, $\mathcal I_{M}$ does not require Monte Carlo computation for $q_{M,1-\alpha/2}$ and is thus easier to use. Moreover, note that when the computation size used in each resample estimate is at most that used in the original estimate, i.e., $\rho^{-1}\leq1$, we have $q_{M,1-\alpha/2}=t_{B-1,1-\alpha/2}$, so the critical value in $\mathcal I_{M}$ reduces to a standard $t$-quantile, just like the interval \eqref{CI} for non-nested problems. Furthermore, when these computation sizes are the same, then we have asymptotically exact coverage in Theorem \ref{thm2}. On the other hand, $\mathcal I_{M}$ is defined for $B\geq2$ instead of $B\geq1$, and thus its requirement on $B$ is slightly more stringent than $\mathcal I_{O}$ and \eqref{CI}.


\section{Cheap Subsampling}\label{sec:subsampling}
We integrate Cheap Bootstrap into, in broad term, subsampling methods where the bootstrap resample has a smaller size than the original full data size. Subsampling is motivated by the lack of consistency when applying standard bootstraps in non-smooth problems (\cite{politis1999subsampling,bickel2012resampling}), but also can be used to address the computational challenge in  repeatedly fitting models over large data sets (\cite{10.5555/3042573.3042801}). The latter arises because standard sampling with replacement on the original data retains around $63\%$ of the observation values, thus for many problems each bootstrap resample estimate requires roughly the same computation order as the original estimate constructed from the raw data.


Here we consider three subsampling variants: $m$-out-of-$n$ Bootstrap (\cite{bickel2012resampling}), and the more recent Bag of Little Bootstraps (\cite{kleiner2014scalable}) and Subsampled Double Bootstrap (\cite{sengupta2016subsampled}). The validity of these methods relies on generalizations of Assumption \ref{bp} (though not always, as explained at the end of this section), in which the distribution of $\sqrt n(\hat\psi_n-\psi)$ can be approximated by that of $\sqrt{N}(\Psi^*-\Psi)$, where $\Psi^*$ and/or $\Psi$ are some estimators obtained using a small resample data size or number of distinct observed values, and $N$ is a suitable scaling parameter. More precisely, suppose we have $\sqrt{N}(\Psi^*-\Psi)\Rightarrow N(0,\sigma^2)$ in probability conditional on the data $X_1,\ldots,X_n$ and possibly independent randomness (the independent randomness is sometimes used to help determine $\Psi$). Then, to construct a $(1-\alpha)$-level confidence interval for $\psi$, we can simulate the $\alpha/2$ and $(1-\alpha/2)$-quantiles of $\Psi^*-\Psi$, say $ Q_{\alpha/2}$ and $Q_{1-\alpha/2}$, and then output $$\left[\hat\psi_n-\sqrt{\frac{N}{n}}Q_{1-\alpha/2},\ \hat\psi_n+\sqrt{\frac{N}{n}}Q_{\alpha/2}\right]$$
Like in the standard bootstrap, such an approach requires many Monte Carlo replications.

We describe how to use subsampling in conjunction with Cheap Bootstrap to devise bootstrap schemes that are small simultaneously in the number of resamples and the resample data size (or number of distinct observed values in use). Following the ideas in Sections \ref{sec:basic} and \ref{sec:theory}, our Cheap Subsampling unifiedly uses the following framework: Generate $\Psi^{*b}$ and $\Psi^b$ given $X_1,\ldots,X_n$ and any required additional randomness, for $b=1,\ldots,B$. Then the $(1-\alpha)$-level confidence interval is given by
\begin{equation}
\left[\hat\psi_n-t_{B,1-\alpha/2}\sqrt{\frac{N}{n}}S,\ \hat\psi_n+t_{B,1-\alpha/2}\sqrt{\frac{N}{n}}S\right]\label{subsampling main}
\end{equation}
where
$$S^2=\frac{1}{B}\sum_{b=1}^B\left(\Psi^{*b}-\Psi^b\right)^2$$
Compared to \eqref{CI}, the main difference in \eqref{subsampling main} is the adjusting factor $\sqrt{N/n}$ in the standard error.

Below we describe the application of the above framework into three subsampling variants. Like in Section \ref{sec:basic}, suppose the target parameter is $\psi=\psi(P)$ and we have obtained the point estimate $\hat\psi_n=\psi(\hat P_n)$.

\subsection{Cheap $m$-out-of-$n$ Bootstrap}
We compute $B$ subsample estimates $\psi_s^{*b}:=\psi(P_s^{*b})$, $b=1,\ldots,B$, where $P_s^{*b}$ is the empirical distribution constructed from a subsample of size $s<n$ drawn from the raw data $\{X_1,\ldots,X_n\}$ via sampling with replacement. Then output the interval
\begin{equation}
\left[\hat\psi_n-t_{B,1-\alpha/2}\sqrt{\frac{s}{n}}S,\ \hat\psi_n+t_{B,1-\alpha/2}\sqrt{\frac{s}{n}}S\right]\label{mn CI}
\end{equation}
where $S^2=\frac{1}{B}\sum_{b=1}^B\left(\psi_s^{*b}-\hat\psi_n\right)^2$.


Cast in our framework above, here we take $\Psi^b$ to be always the original estimate $\hat\psi_n$, $\Psi^{*b}$ to be a subsample estimate $\psi_s^{*b}$, and the scale parameter $N$ to be the subsample size $s$.





\subsection{Cheap Bag of Little Bootstraps}
We first construct $\psi_s^*:=\psi(P_s^*)$, where $P_s^*$ is the empirical distribution of a subsample of size $s<n$, called $\mathbf X_s^*$, drawn from the raw data $\{X_1,\ldots,X_n\}$ via sampling without replacement (i.e., $\mathbf X_s^*$ is an $s$-subset of $\{X_1,\ldots,X_n\}$). The subsample $\mathbf X_s^*$, once drawn, is  fixed throughout. Given $\mathbf X_s^*$, we generate $\psi^{**b}_n:=\psi(P_n^{**b})$, $b=1,\ldots,B$, where $P_n^{**b}$ is the empirical distribution constructed from sampling with replacement from the subsample $\mathbf X_s^*$ for $n$ times, with $n$ being the original full data size (i.e., $P_n^{**b}$ is a weighted empirical distribution over $\mathbf X_s^*$). We output the interval in \eqref{CI}, where now \begin{equation}
S^2=\frac{1}{B}\sum_{b=1}^B\left(\psi_n^{**b}-\psi_s^*\right)^2\label{BLB S}
\end{equation}

Cast in our framework, here we take $\Psi^b$ to be always $\psi_s^*$, $\Psi^{*b}$ to be the double resample estimate $\psi_n^{**b}$, and the scale parameter $N$ to be the original data size $n$. The subsampling used to obtain $\psi_s^*$ introduces an additional randomness that is conditioned upon in the conditional weak convergence $\sqrt{N}(\Psi^*-\Psi)\Rightarrow N(0,\sigma^2)$ described before. Note that when $s$ is small, the double resample estimate $\psi_n^{**b}$, though constructed with a full size data, uses only a small number of distinct data points which, in problems such as $M$-estimation, involves only a weighted estimation on the small-size subsample (\cite{kleiner2014scalable}).

\subsection{Cheap Subsampled Double Bootstrap}
For $b=1,\ldots,B$, we do the following: First, generate a subsample estimate $\psi_s^{*b}:=\psi(P_s^{*b})$ where $P_s^{*b}$ is the empirical distribution of a subsample of size $s<n$, called $\mathbf X_s^{*b}$, drawn from the raw data $\{X_1,\ldots,X_n\}$ via sampling without replacement (i.e., $\mathbf X_s^{*b}$ is an $s$-subset of $\{X_1,\ldots,X_n\}$). Then, given $\mathbf X_s^{*b}$, we construct $\psi_n^{**b}=\psi(P_n^{**b})$ where $P_n^{**b}$ is the empirical distribution of a size-$n$ resample constructed by sampling with replacement from $\mathbf X_s^{*b}$. We output the interval in \eqref{CI}, where now
\begin{equation}
S^2=\frac{1}{B}\sum_{b=1}^B\left(\psi_n^{**b}-\psi_s^{*b}\right)^2\label{SDB S}
\end{equation}


Cheap Subsampled Double Bootstrap is similar to Cheap Bag of Little Bootstraps, but the $\Psi^b$ in our framework is now taken as a subsample estimate $\psi_s^{*b}$ that is newly generated from a new subsample $\mathbf X_s^{*b}$ for each $b$, and $\Psi^{*b}$ is obtained by resampling with replacement from that particular $\mathbf X_s^{*b}$.

\subsection{Theory of Cheap Subsampling}
All of Cheap $m$-out-of-$n$ Bootstrap, Cheap Bag of Little Bootstraps, and Cheap Subsampled Double Bootstrap achieve asymptotic exactness for any $B\geq1$. To state this concretely, we denote $\mathcal F_\delta=\{f-g:f,g\in\mathcal F,\ \rho_P(f-g)<\delta\}$, where $\rho_P(f-g):=(Var_P(f(X)-g(X)))^{1/2}$ is the canonical metric.
\begin{theorem}[Asymptotic exactness of Cheap Subsampling]
Suppose the assumptions in Proposition \ref{HD} hold, and in addition $\mathcal F_\delta$ is measurable for every $\delta>0$. Then the intervals produced by Cheap $m$-out-of-$n$ Bootstrap (i.e., \eqref{mn CI}), Cheap Bag of Little Bootstraps (i.e., \eqref{CI} using \eqref{BLB S}), and Cheap Subsampled Double Bootstrap (i.e., \eqref{CI} using \eqref{SDB S}) are all asymptotically exact for $\psi$ as $n,s\to\infty$ with $s\leq n$.\label{main variant}
\end{theorem}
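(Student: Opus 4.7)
The plan is to follow the template of Theorem \ref{main}: establish, for each of the three subsampling variants, a joint asymptotic independence and normality statement analogous to Proposition \ref{joint prop}, then conclude that the corresponding $t$-type pivotal statistic converges to $t_B$ via continuous mapping, exactly as in the proof of Theorem \ref{main}.

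First I would verify the relevant conditional CLT for each scheme. In every case $\sqrt n(\hat\psi_n-\psi)\Rightarrow N(0,\sigma^2)$ is immediate from Proposition \ref{HD}. What I need in addition is, for the correct scaling $\sqrt N$, that the centered resample estimate converges conditionally to $N(0,\sigma^2)$: namely, $\sqrt s(\psi_s^{*b}-\hat\psi_n)\Rightarrow N(0,\sigma^2)$ in probability given the data for Cheap $m$-out-of-$n$ Bootstrap; $\sqrt n(\psi_n^{**b}-\psi_s^{*})\Rightarrow N(0,\sigma^2)$ in probability given the data and the subsample $\mathbf X_s^*$ for Cheap Bag of Little Bootstraps; and the analogous statement given the fresh subsample $\mathbf X_s^{*b}$ for Cheap Subsampled Double Bootstrap. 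In each case I would first upgrade to a functional CLT for the associated (sub- or multinomial-weighted) empirical process into the tight Gaussian process $\mathbb G_P$ of Proposition \ref{HD}, and then invoke Hadamard differentiability of $\psi$ through the functional delta method to transport this to the scalar limit $\psi_P'(\mathbb G_P)\sim N(0,\sigma^2)$.

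Second, once the three conditional CLTs are in hand, I would upgrade them to joint asymptotic independence using the same bounded-convergence argument that proves Proposition \ref{joint prop}: conditional on the data (and on any subsample randomness), the $B$ resample estimates are generated independently, so their conditional joint distribution function factorizes and each conditional marginal converges in probability to $\Phi_\sigma$. This yields
$$\Bigl(\sqrt n(\hat\psi_n-\psi),\ \sqrt N(\Psi^{*1}-\Psi^1),\ldots,\sqrt N(\Psi^{*B}-\Psi^B)\Bigr)\Rightarrow(\sigma Z_0,\sigma Z_1,\ldots,\sigma Z_B)$$
with $Z_0,\ldots,Z_B$ i.i.d.\ standard normal, and the pivotal ratio $\sqrt n(\hat\psi_n-\psi)/\sqrt{(1/B)\sum_{b=1}^B N(\Psi^{*b}-\Psi^b)^2}$ converges to $t_B$ by continuous mapping. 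Asymptotic exactness of each of \eqref{mn CI}, \eqref{CI} with \eqref{BLB S}, and \eqref{CI} with \eqref{SDB S} then follows.

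The main obstacle will be the conditional functional CLT for Bag of Little Bootstraps and Subsampled Double Bootstrap, where the resample draws $n$ times with replacement from a size-$s$ subsample. Under only the Donsker-plus-Hadamard hypotheses, this requires a Praestgaard--Wellner style multiplier/exchangeable-weights CLT that remains valid when the multinomial weights act on the subsample rather than on the full data, together with a verification that the resulting bootstrap empirical process converges conditionally in probability to $\mathbb G_P$ itself, not merely to a Gaussian process whose covariance is driven by the subsample empirical distribution. The measurability of $\mathcal F_\delta$ enters precisely here, to control the suprema via chaining and to license the conditional-independence factorization used in the Proposition \ref{joint prop} argument. The $m$-out-of-$n$ case is technically easiest, following from standard subsampling empirical-process theory, and the SDB case reduces to the BLB argument after additionally averaging over the fresh subsample draw attached to each $b$.
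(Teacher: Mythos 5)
Your proposal is correct and follows essentially the same route as the paper: establish the appropriate conditional CLT for each variant, feed it through the Proposition \ref{joint prop} factorization argument, and conclude via continuous mapping that the pivotal ratio converges to $t_B$. The only difference is that the paper simply invokes existing results for the conditional functional CLTs (a varying-resample-size bootstrap CLT for the $m$-out-of-$n$ and Bag of Little Bootstraps cases, noting for the latter that the without-replacement subsample is itself i.i.d.\ from $P$, and the Subsampled Double Bootstrap CLT of Sengupta et al.), whereas you sketch the multiplier-CLT machinery underlying them.
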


The proof of Theorem \ref{main variant} uses a similar roadmap as Theorem \ref{main} and the subsampling analogs of Assumption \ref{bp}, which hold under the additional technical measurability condition on $\mathcal F_\delta$ and have been used to justify the original version of these subsampling methods.
We make a couple of remarks. First, the original Bag of Little Bootstraps suggests to use several, or even a growing number of different subsample estimates, instead of a fixed subsample as we have presented earlier. Then from each subsample, an adequate number of resamples are drawn to obtain a quantile that informs an interval limit. These quantile estimates from different subsamples are averaged to obtain the final interval limit. Note that this suggestion requires a multiplicative amount of computation effort, and is motivated from better convergence rates (\cite{kleiner2014scalable} Theorems 2 and 3). Here, we simply use one subsample as we focus on the case of small Monte Carlo replication budget, but the modification to include multiple subsamples is feasible, evidenced by the validity of Cheap Subsampled Double Bootstrap which essentially uses a different subsample in each bootstrap replication. Second, in addition to allowing a smaller data size in model refitting, the $m$-out-of-$n$ Bootstrap, and also the closely related subsampling without replacement or so-called $\binom{n}{m}$ sampling, as well as sample splitting (\cite{politis1999subsampling,bickel2012resampling}), are all motivated as remedies to handle non-smooth functions where standard bootstraps could fail.  Instead of conditional weak convergence which we utilize in this work, these approaches are shown to provide consistent estimates based on symmetric statistics (\cite{politis1994large}).

\section{Numerical Results}\label{sec:numerics}
We test the numerical performances of our Cheap Bootstrap and compare it with baseline bootstrap approaches. Specifically, we consider elementary variance and correlation estimation (Section \ref{sec:elementary}), regression (Section \ref{sec:linear}), simulation input uncertainty quantification (Section \ref{sec:simulation}) and deep ensemble prediction (Section \ref{sec:deep}). Due to space limit, we show other examples in Appendix \ref{sec:add numerics}. Among these examples, Cheap Subsampling is also investigated in Section \ref{sec:linear} and nested sampling issues arise in Sections \ref{sec:simulation} and \ref{sec:deep}.



\subsection{Elementary Examples}\label{sec:elementary}

We consider four setups. In the first two setups, we estimate the variance of a distribution using the sample variance. We consider a folded standard normal (i.e., $|N(0,1)|$) and double exponential with rate 1 (i.e., $Sgn\times Exp(1)$ where $Sgn=+1$ or $-1$ with equal probability and is independent of $Exp(1)$) as the distribution. In the last two setups, we estimate the correlation using the sample correlation. We consider bivariate normal with mean zero, unit variance and correlation $0.5$, and bivariate lognormal (i.e., $(e^{Z_1},e^{Z_2})$ where $(Z_1,Z_2)$ is the bivariate normal just described). We use a sample size $n=1000$ in all cases. We set the confidence level $1-\alpha=95\%$. The ground truths of all the setups are known, namely $1-2/\pi$, $2$, $0.5$ and $(e^{3/2}-e)/(e^2-e)$ respectively (these examples are also used in, e.g., \cite{schenker1985qualms,diciccio1992analytical,lee1995asymptotic}).

We run Cheap Bootstrap using a small number of resamples  $B=1,2,3,4,5,10$, and compare with the Basic Bootstrap, Percentile Bootstrap and Standard Error Bootstrap. For each setting, we repeat the experiments $1000$ times and report the empirical coverage and interval width mean and standard deviation. Table \ref{table:var} shows the performances for variance and correlation estimation that are similar to Table \ref{table:comparison} in Section \ref{sec:basic numerics}. Cheap Bootstrap gives rise to accurate coverage ($91\%-96\%$) in all considered cases, including when $B$ is as low as $1$. On the other hand, all baseline methods fail to generate two-sided intervals when $B=1$, and encounter significant under-coverage when $B=2$ to $5$ (e.g., as low as $28\%$ for $B=2$ and $58\%$ when $B=5$). When $B=10$, these baseline methods start to catch up, with Standard Error Bootstrap rising to a coverage level close to $90\%$. These observations coincide with our theory of Cheap Bootstrap presented in Sections \ref{sec:basic} and \ref{sec:theory}, and also that the baseline methods are all designed to work under a large $B$.

Regarding the interval width, its mean and standard deviation for Cheap Bootstrap are initially large at $B=1$, signifying a price on statistical efficiency when the computation budget is very small. These numbers drop quickly when $B$ increases from $1$ to $2$ and continue to drop at a decreasing rate as $B$ increases further. In contrast, all baseline methods have lower width means and standard deviations than our Cheap Bootstrap, with an increasing trend for the mean as $B$ increases while the standard deviation remains roughly constant for each method. When $B=10$, the Cheap Bootstrap interval has generally comparable width mean and standard deviation with the baselines, though still larger. Note that while the baseline methods generate shorter intervals, they have significant under-coverage in the considered range of $B$.

\begin{table}[!ht]
\caption{Interval performances with different bootstrap methods: Cheap, Basic, Percentile and Standard Error Bootstrap, for variance and correlation estimation, at nominal confidence level $95\%$ and with sample size $n=1000$.}
\centering
{\footnotesize
\begin{tabular}{cc|cc|cc|cc|cc}
&\multirow{7}{*}{$B$}&\multicolumn{2}{c}{Variance of}&\multicolumn{2}{|c}{Variance of}&\multicolumn{2}{|c}{Correlation of}&\multicolumn{2}{|c}{Correlation of}\\
&&\multicolumn{2}{c}{Folded normal}&\multicolumn{2}{|c}{Double exponential}&\multicolumn{2}{|c}{Bivariate normal}&\multicolumn{2}{|c}{Bivariate lognormal}\\
\cline{3-10}
   &  & Coverage & Width & Coverage & Width & Coverage & Width & Coverage & Width \\
  &  & (margin & mean & (margin & mean & (margin & mean & (margin & mean \\
  &  & of error) & (st. dev.) & of error) & (st. dev.) & of error) & (st. dev.) & of error) & (st. dev.) \\
   \hline
  & & & & & & & & & \\
Cheap & 1 & 0.95\ (0.01) & 0.38\ (0.29) & 0.94\ (0.01) & 2.84\ (2.27) & 0.93\ (0.02) & 0.47\ (0.37) & 0.95\ (0.01) & 1.03\ (0.83) \\
  Basic & 1 & NA & NA & NA & NA & NA & NA & NA & NA \\
  Per. & 1 & NA & NA & NA & NA & NA & NA & NA & NA \\
  S.E. & 1 & NA & NA & NA & NA & NA & NA & NA & NA \\
  & & & & & & & & & \\
    \hline
  & & & & & & & & & \\
  Cheap & 2 & 0.95\ (0.01) & 0.15\ (0.08) & 0.94\ (0.01) & 1.10\ (0.60) & 0.95\ (0.01) & 0.18\ (0.10) & 0.94\ (0.01) & 0.38\ (0.25) \\
  Basic & 2 & 0.30\ (0.03) & 0.02\ (0.02) & 0.31\ (0.03) & 0.16\ (0.13) & 0.32\ (0.03) & 0.03\ (0.02) & 0.28\ (0.03) & 0.06\ (0.05) \\
  Per. & 2 & 0.33\ (0.03) & 0.02\ (0.02) & 0.32\ (0.03) & 0.16\ (0.13) & 0.32\ (0.03) & 0.03\ (0.02) & 0.32\ (0.03) & 0.06\ (0.05) \\
  S.E. & 2 & 0.68\ (0.03) & 0.06\ (0.05) & 0.69\ (0.03) & 0.45\ (0.35) & 0.69\ (0.03) & 0.07\ (0.06) & 0.65\ (0.03) & 0.15\ (0.14) \\
  & & & & & & & & & \\
  \hline
   & & & & & & & & & \\
 Cheap & 3 & 0.95\ (0.01) & 0.11\ (0.05) & 0.94\ (0.02) & 0.81\ (0.37) & 0.95\ (0.01) & 0.14\ (0.06) & 0.94\ (0.01) & 0.29\ (0.15) \\
  Basic & 3 & 0.50\ (0.03) & 0.03\ (0.02) & 0.48\ (0.03) & 0.23\ (0.13) & 0.48\ (0.03) & 0.04\ (0.02) & 0.46\ (0.03) & 0.08\ (0.05) \\
  Per. & 3 & 0.50\ (0.03) & 0.03\ (0.02) & 0.48\ (0.03) & 0.23\ (0.13) & 0.52\ (0.03) & 0.04\ (0.02) & 0.46\ (0.03) & 0.08\ (0.05) \\
  S.E. & 3 & 0.81\ (0.02) & 0.07\ (0.04) & 0.78\ (0.03) & 0.48\ (0.27) & 0.81\ (0.02) & 0.08\ (0.04) & 0.80\ (0.02) & 0.17\ (0.11) \\
  & & & & & & & & & \\
  \hline
  & & & & & & & & & \\
  Cheap & 4 & 0.96\ (0.01) & 0.10\ (0.04) & 0.96\ (0.01) & 0.73\ (0.28) & 0.94\ (0.01) & 0.12\ (0.05) & 0.94\ (0.02) & 0.26\ (0.12) \\
  Basic & 4 & 0.62\ (0.03) & 0.04\ (0.02) & 0.62\ (0.03) & 0.29\ (0.13) & 0.61\ (0.03) & 0.05\ (0.02) & 0.54\ (0.03) & 0.10\ (0.05) \\
  Per. & 4 & 0.62\ (0.03) & 0.04\ (0.02) & 0.60\ (0.03) & 0.29\ (0.13) & 0.61\ (0.03) & 0.05\ (0.02) & 0.59\ (0.03) & 0.10\ (0.05) \\
  S.E. & 4 & 0.86\ (0.02) & 0.07\ (0.03) & 0.86\ (0.02) & 0.50\ (0.22) & 0.86\ (0.02) & 0.09\ (0.04) & 0.81\ (0.02) & 0.17\ (0.09) \\
  & & & & & & & & & \\
  \hline
  & & & & & & & & & \\
  Cheap & 5 & 0.95\ (0.01) & 0.10\ (0.03) & 0.95\ (0.01) & 0.68\ (0.24) & 0.94\ (0.01) & 0.12\ (0.04) & 0.91\ (0.02) & 0.25\ (0.12) \\
  Basic & 5 & 0.67\ (0.03) & 0.05\ (0.02) & 0.67\ (0.03) & 0.32\ (0.13) & 0.66\ (0.03) & 0.06\ (0.02) & 0.58\ (0.03) & 0.12\ (0.06) \\
  Per. & 5 & 0.66\ (0.03) & 0.05\ (0.02) & 0.65\ (0.03) & 0.32\ (0.13) & 0.65\ (0.03) & 0.06\ (0.02) & 0.61\ (0.03) & 0.12\ (0.06) \\
 S.E. & 5 & 0.87\ (0.02) & 0.07\ (0.03) & 0.86\ (0.02) & 0.51\ (0.20) & 0.86\ (0.02) & 0.09\ (0.03) & 0.83\ (0.02) & 0.19\ (0.10) \\
  & & & & & & & & & \\
  \hline
  & & & & & & & & & \\
  Cheap & 10 & 0.93\ (0.02) & 0.08\ (0.02) & 0.94\ (0.01) & 0.62\ (0.17) & 0.94\ (0.01) & 0.10\ (0.02) & 0.91\ (0.02) & 0.21\ (0.09) \\
  Basic & 10 & 0.79\ (0.03) & 0.06\ (0.02) & 0.80\ (0.02) & 0.43\ (0.13) & 0.83\ (0.02) & 0.07\ (0.02) & 0.73\ (0.03) & 0.15\ (0.06) \\
  Per. & 10 & 0.79\ (0.03) & 0.06\ (0.02) & 0.80\ (0.02) & 0.43\ (0.13) & 0.80\ (0.02) & 0.07\ (0.02) & 0.78\ (0.03) & 0.15\ (0.06) \\
  S.E. & 10 & 0.90\ (0.02) & 0.07\ (0.02) & 0.91\ (0.02) & 0.54\ (0.15) & 0.92\ (0.02) & 0.09\ (0.02) & 0.86\ (0.02) & 0.19\ (0.08) \\
 \end{tabular}
}
\label{table:var}
\end{table}

\subsection{Regression Problems}\label{sec:linear}
We apply our Cheap Bootstrap and compare with standard approaches on a linear regression. The example is adopted from \cite{sengupta2016subsampled} \S4 and \cite{kleiner2014scalable} \S4.
We fit a model $Y=\beta_1X_{1}+\cdots+\beta_dX_{d}+\epsilon$, where we set dimension $d=100$ and use data $(X_{1,i},\ldots,X_{d,i},Y_i)$ of size $n=10^5$ to fit the model and estimate the coefficients $\beta_j$'s. The ground truth is set as $X_j\sim t_3$ for $j=1,\ldots,d$, $\epsilon\sim N(0,10)$, and $\beta_j=1$ for $j=1,\ldots,d$.


In addition to full-size Cheap Bootstrap, we run the three variants of Cheap Subsampling, namely Cheap $m$-out-of-$n$ Bootstrap, Cheap Bag of Little Bootstraps (BLB), and Cheap Subsampled Double Bootstrap (SDB). Moreover, we compare each cheap bootstrap method with its standard quantile-based counterpart (i.e., the basic bootstrap described in Section \ref{sec:comparisons} and the subsample counterparts at the beginning of Section \ref{sec:subsampling}, with one fixed subsample in the Bag of Little Bootstrap case) to compute $95\%$ confidence intervals for the first coefficient $\beta_1$. For the subsampling methods, we use subsample size $n^{0.6}=1000$, which is also used in \cite{sengupta2016subsampled} and \cite{kleiner2014scalable}. We use in total $50$ resamples for each method.
We repeat the experiments $1000$ times, each time we regenerate a new synthetic data set. For the number of resamples starting from 1 to $50$, we compute the empirical coverage probability and the mean and standard deviation of the confidence interval width for the first coefficient.

\begin{figure*}[tb]
\vskip 0.2in
\begin{center}
\subfigure
{\includegraphics[width=.24\columnwidth,height=.15\textheight]{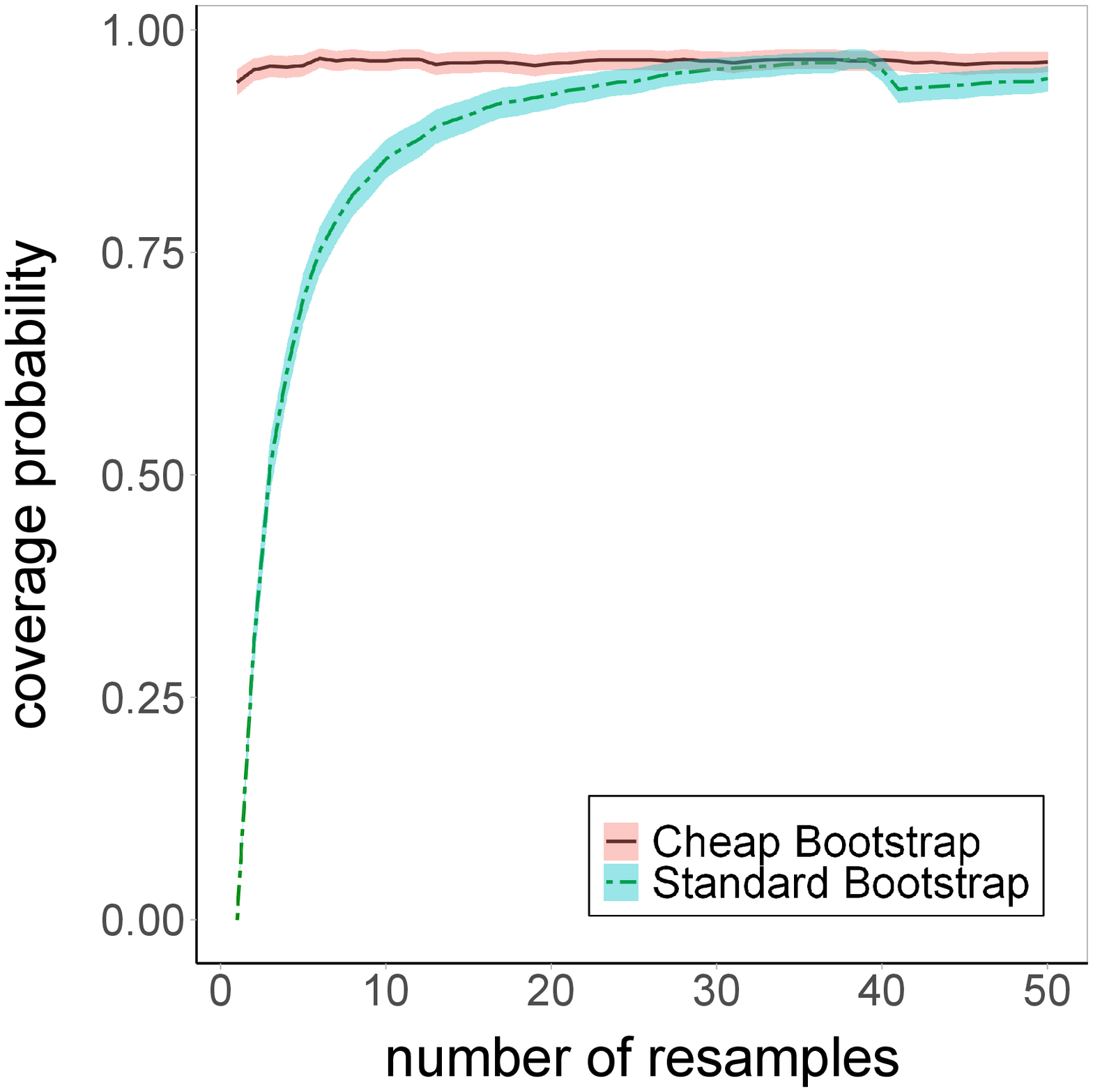}}
\subfigure
{\includegraphics[width=.24\columnwidth,height=.15\textheight]{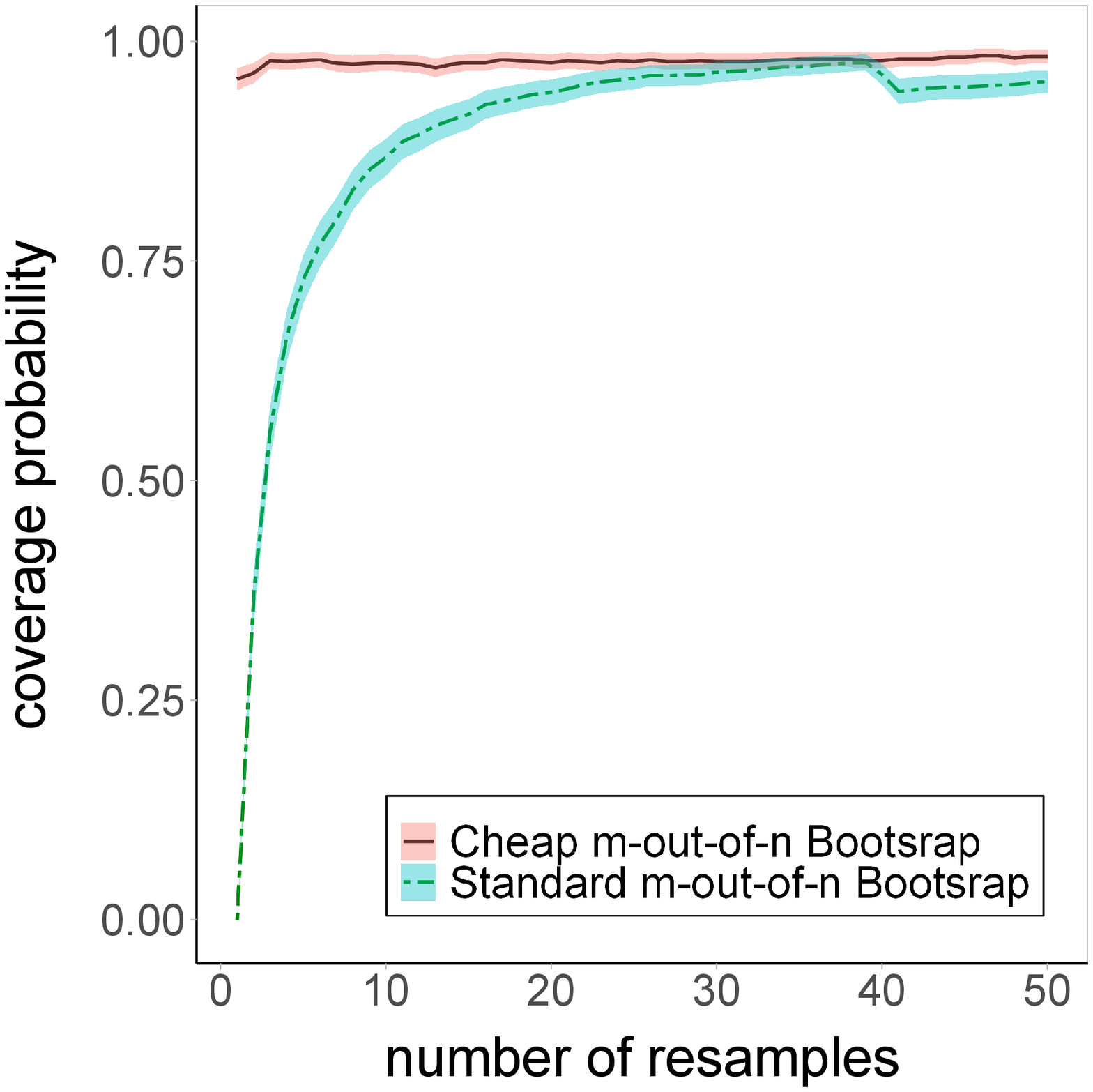}}
\subfigure
{\includegraphics[width=.24\columnwidth,height=.15\textheight]{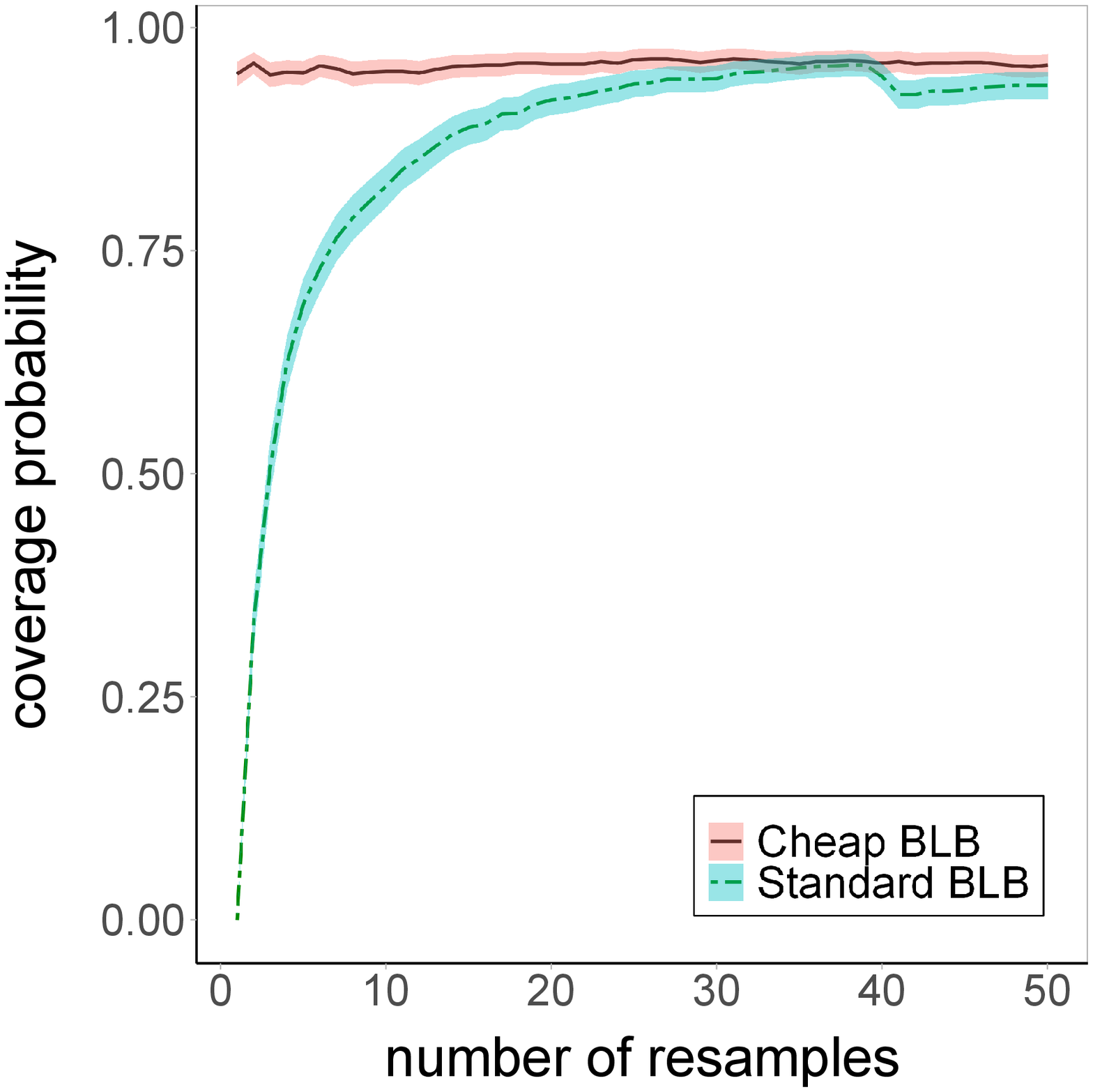}}
\subfigure
{\includegraphics[width=.24\columnwidth,height=.15\textheight]{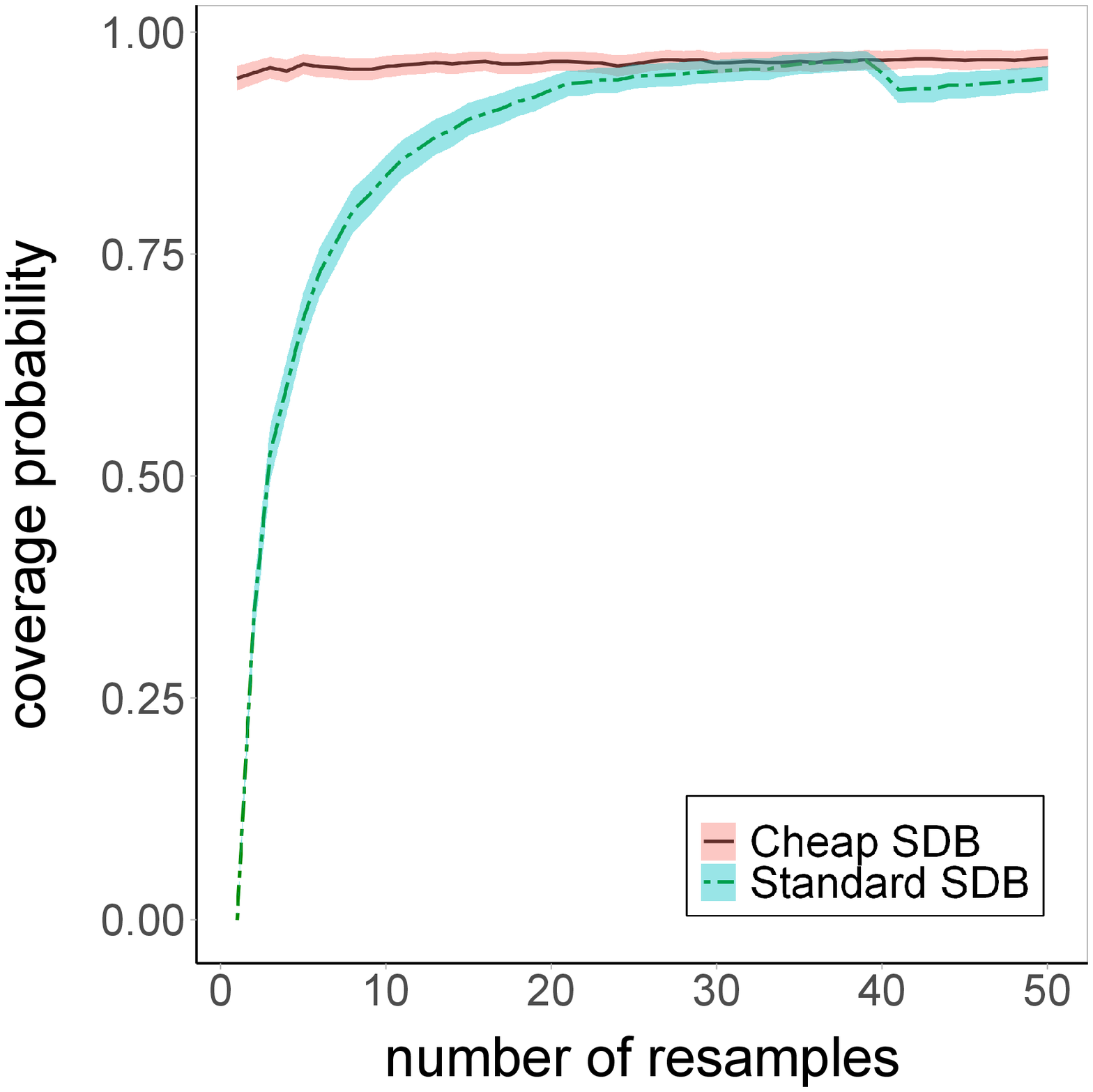} }
\caption{Confidence interval coverage probabilities of Standard versus Cheap Bootstrap methods in linear regression. Nominal confidence level $=95\%$ and sample size $n=10^5$. Shaded areas depict the associated confidence intervals of the coverage probability estimates from 1000 experimental repetitions.}
\label{fig:cov}
\end{center}
\vskip -0.2in
\end{figure*}
\begin{figure*}[tb]
\vskip 0.2in
\begin{center}
\subfigure
{\includegraphics[width=.24\columnwidth,height=.15\textheight]{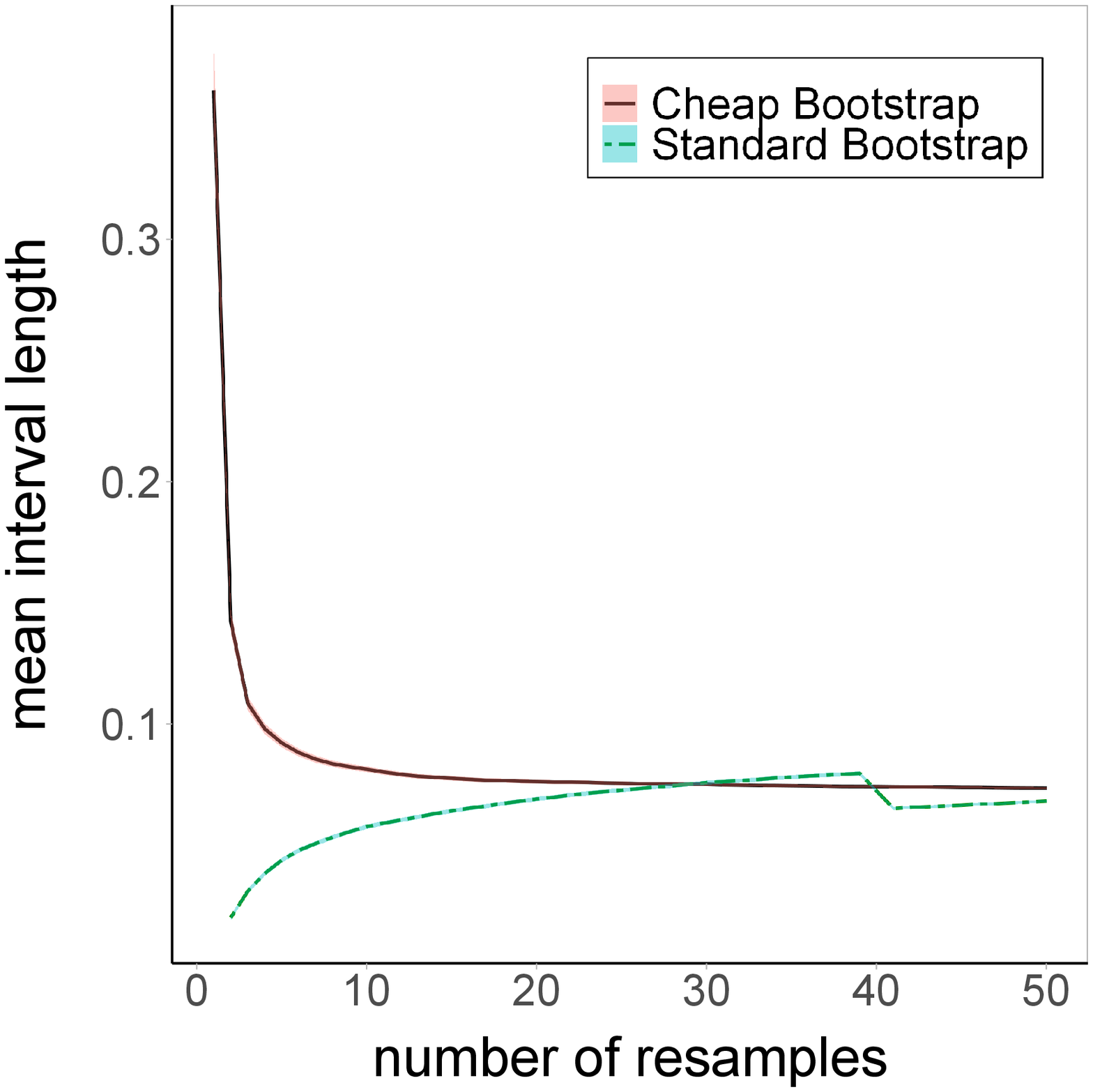} }
\subfigure
{\includegraphics[width=.24\columnwidth,height=.15\textheight]{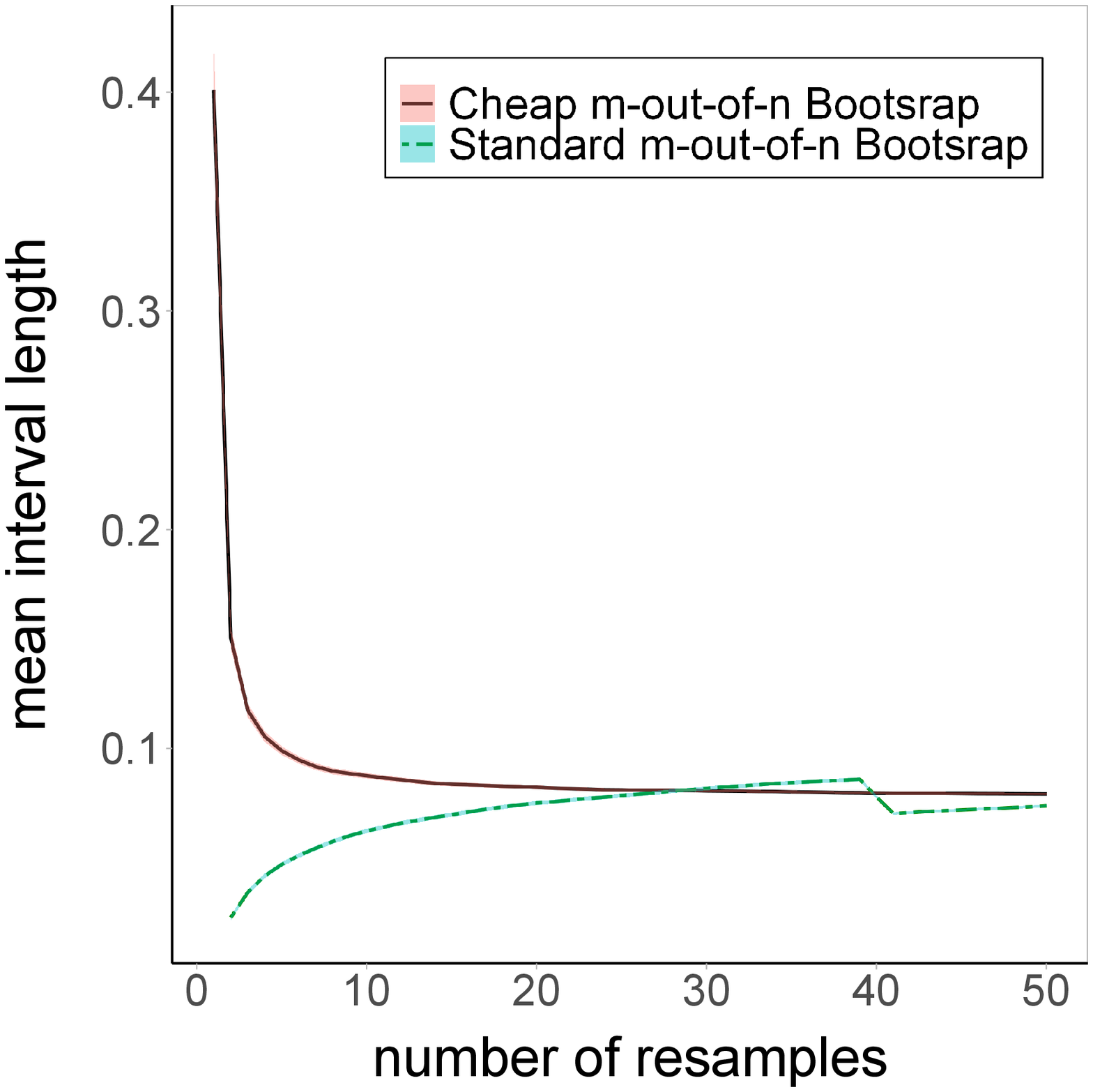}}
\subfigure
{\includegraphics[width=.24\columnwidth,height=.15\textheight]{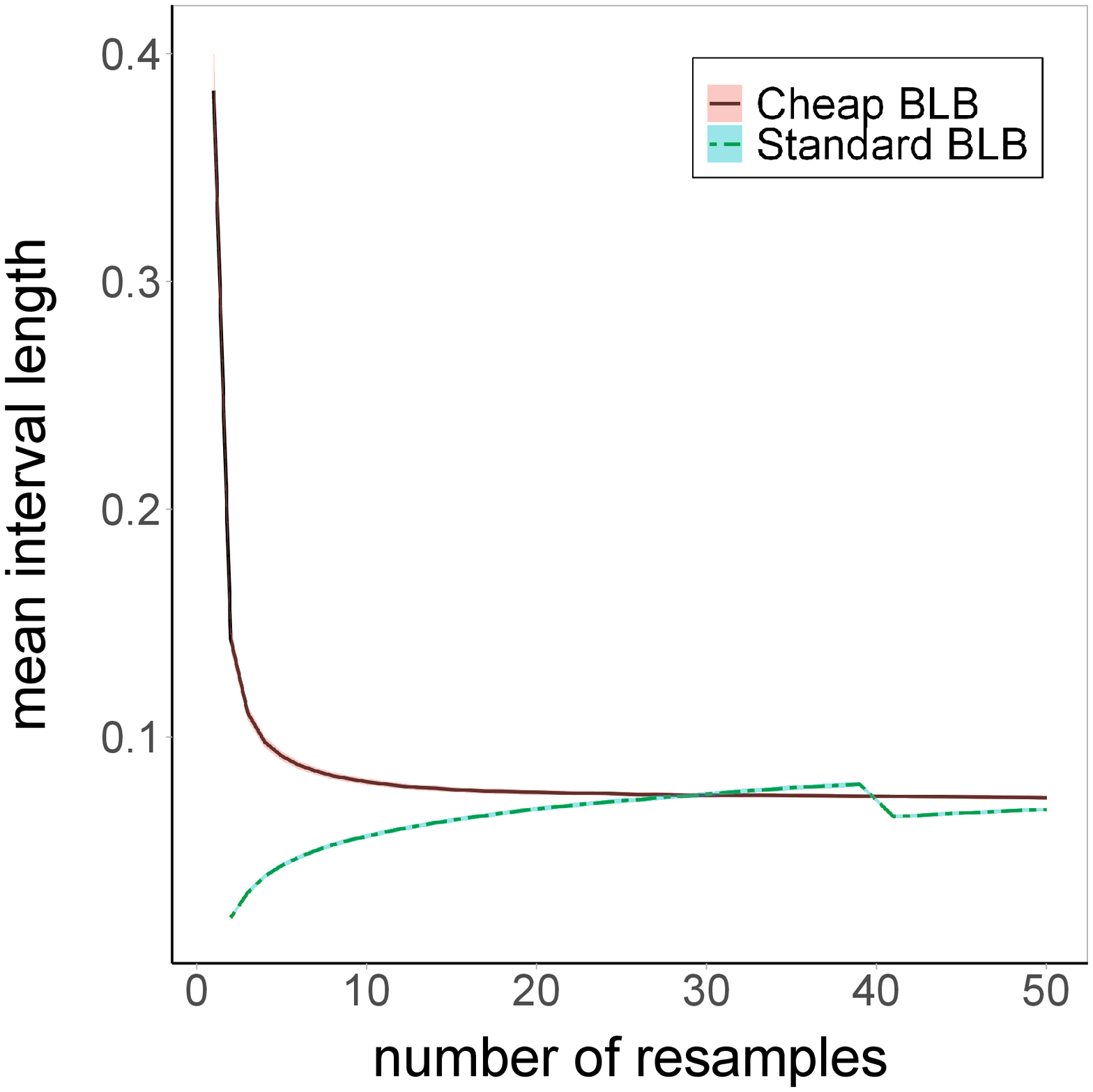}}
\subfigure
{\includegraphics[width=.24\columnwidth,height=.15\textheight]{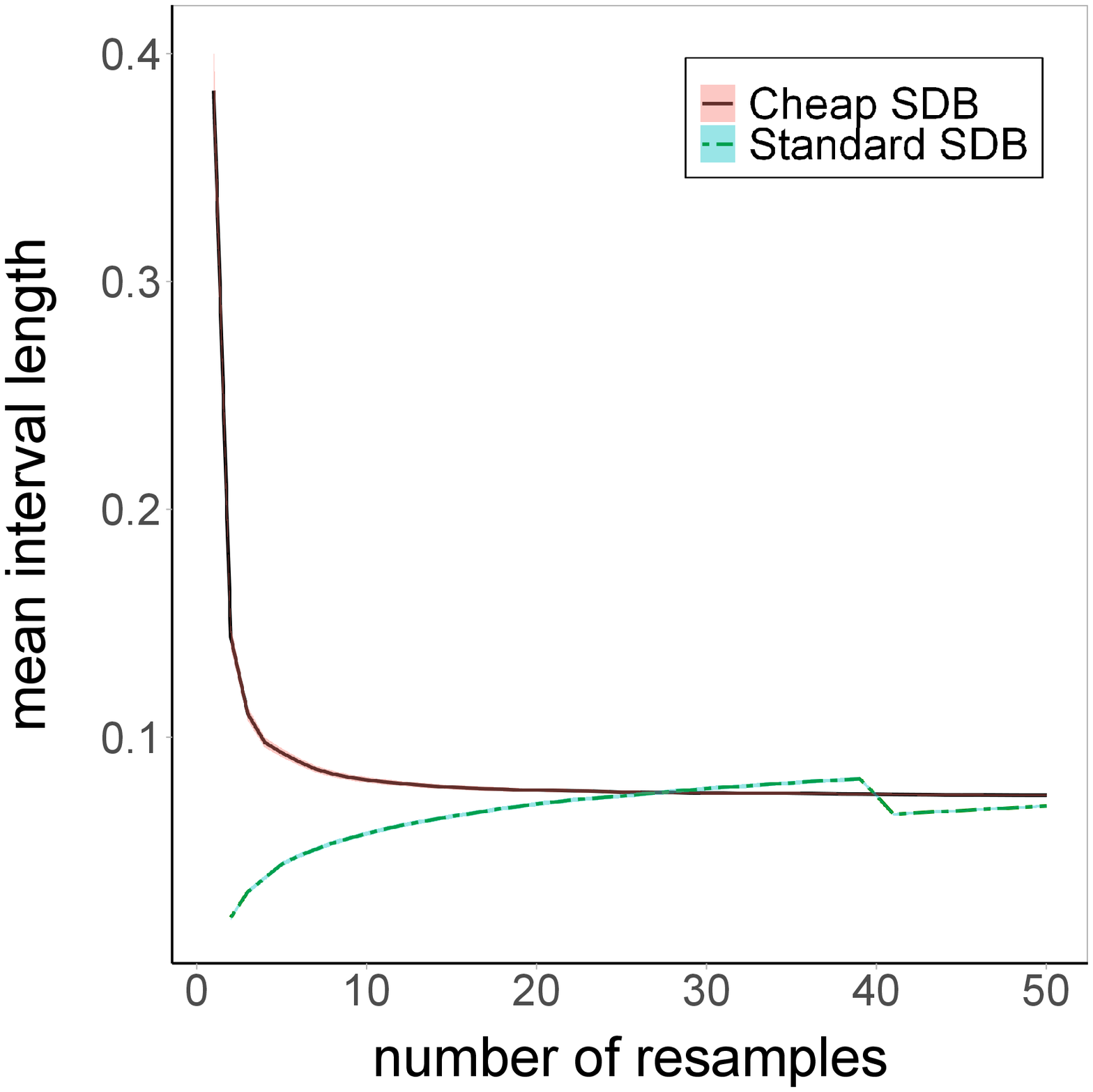} }
\caption{Mean confidence interval widths of Standard versus Cheap Bootstrap methods in linear regression. Nominal confidence level $=95\%$ and sample size $n=10^5$. Shaded areas depict the associated confidence intervals of the mean width estimates from 1000 experimental repetitions.}
\label{fig:len}
\end{center}
\vskip -0.2in
\end{figure*}
\begin{figure*}[tb]
\vskip 0.2in
\begin{center}
\subfigure
{\includegraphics[width=.24\columnwidth,height=.15\textheight]{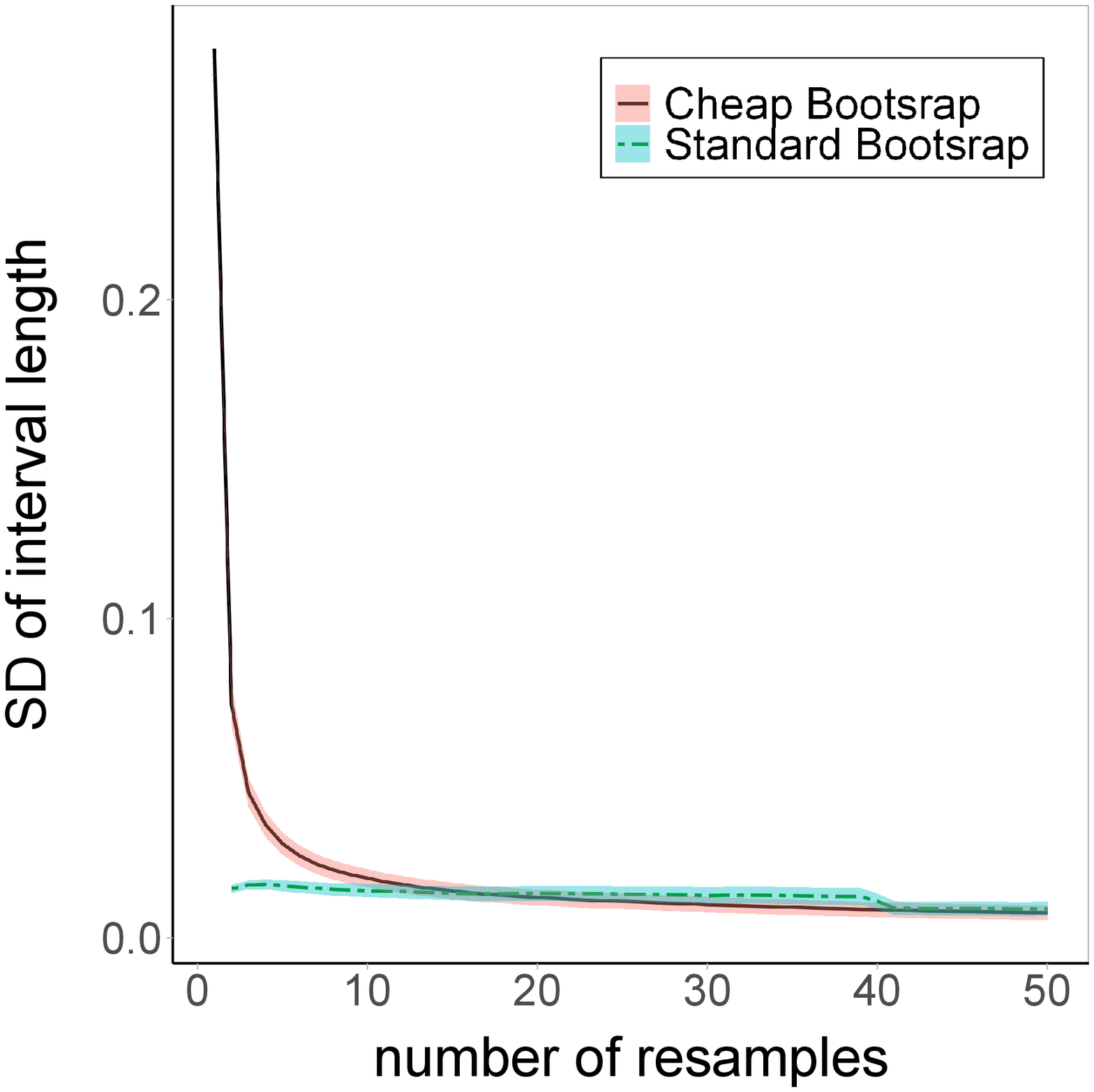}}
\subfigure
{\includegraphics[width=.24\columnwidth,height=.15\textheight]{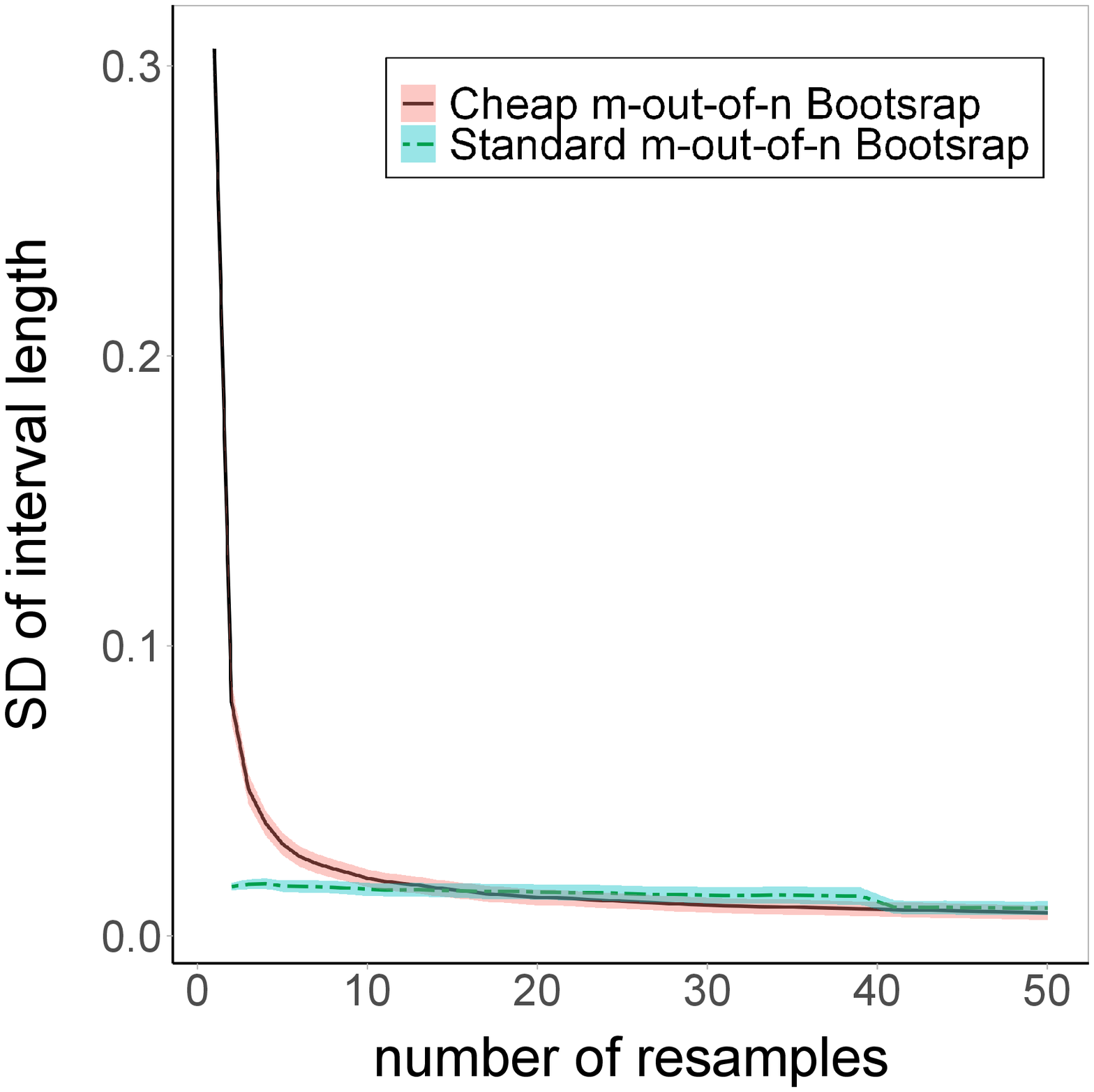}}
\subfigure
{\includegraphics[width=.24\columnwidth,height=.15\textheight]{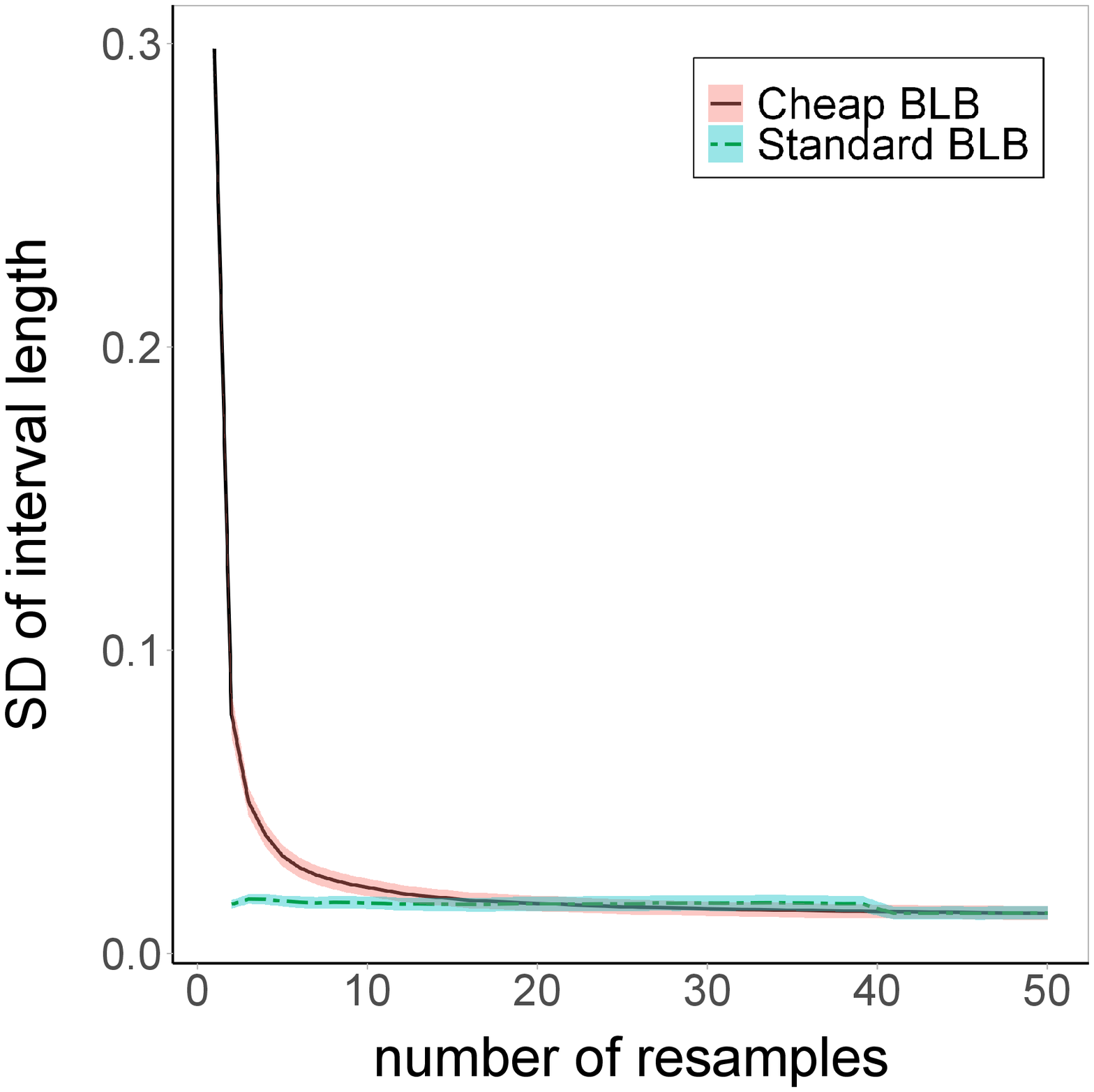}}
\subfigure
{\includegraphics[width=.24\columnwidth,height=.15\textheight]{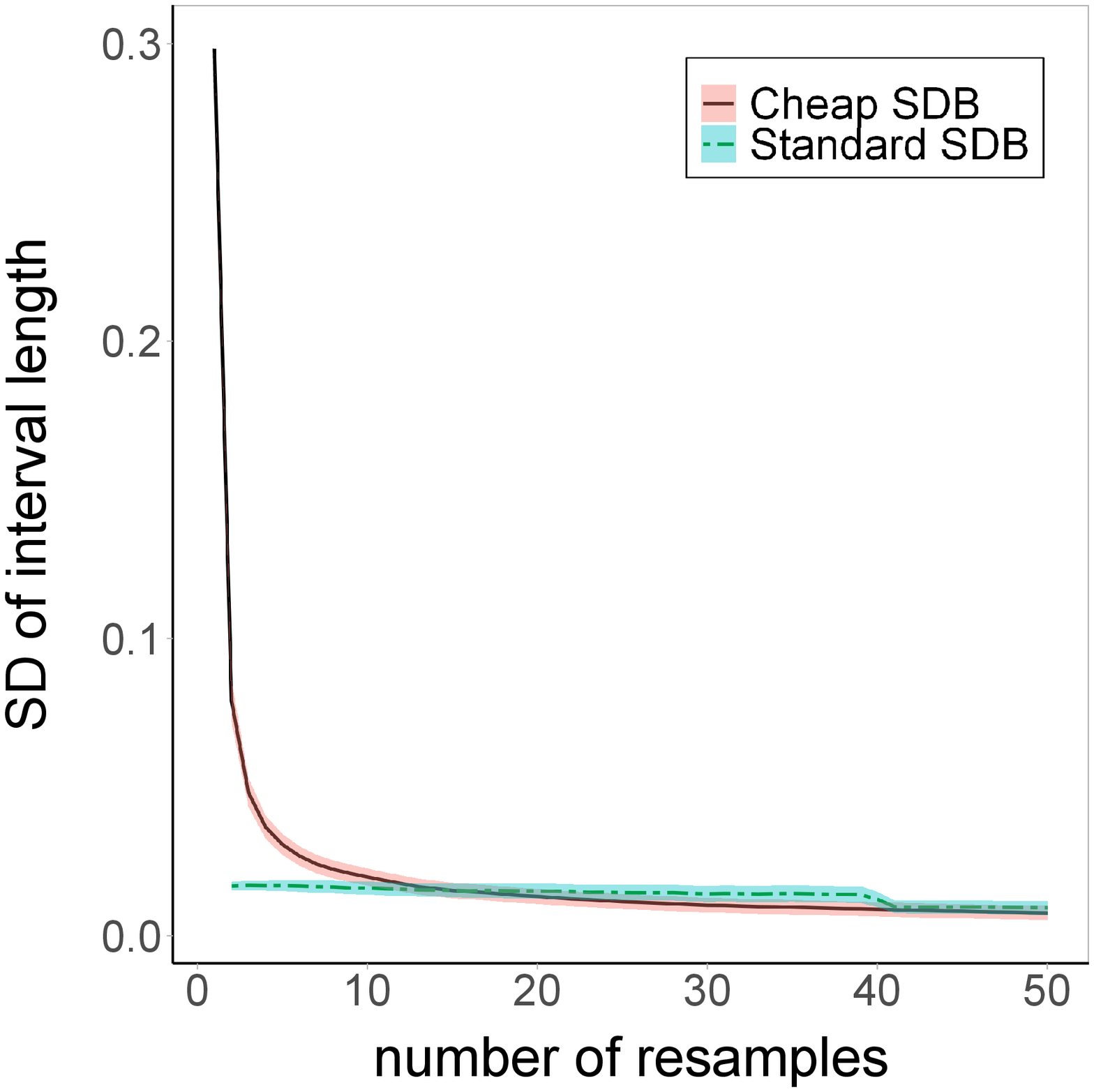} }
\caption{Standard deviations of confidence interval widths of Standard versus Cheap Bootstrap methods in linear regression.  Nominal confidence level $=95\%$ and sample size $n=10^5$. Shaded areas depict the associated confidence intervals of the standard deviation estimates from 1000 experimental repetitions.}
\label{fig:sd}
\end{center}
\vskip -0.2in
\end{figure*}
Figure \ref{fig:cov} shows the empirical coverage probability against the number of resamples from the $1000$ experimental repetitions. In the first graph, we see that Standard Bootstrap falls short of the nominal coverage, severely when $B$ is small and  gradually improves as $B$ increases. For example, the coverage probability at $B=5$ is $70\%$ and at $B=10$ is $86\%$. On the other hand, Cheap Bootstrap gives close to the nominal coverage starting from the first replication ($94\%$), and the coverage stays between $94\%$ and $97\%$ throughout. The comparison is similar for subsampling methods, where with small $B$ Standard $m$-out-of-$n$, Bag of Little Bootstrap and Subsampled Double Bootstrap all under-cover and gradually approach the nominal level as $B$ increases. For a more specific comparison, at $B=5$ for instance, Standard $m$-out-of-$n$, Bags of Little Bootstrap and Subsampled Double Bootstrap have under-coverages of $73\%$, $69\%$ and $68\%$ respectively, while the Cheap counterparts attain $98\%$, $95\%$ and $96\%$. We note that the ``sudden drop'' in the coverage probability for all Standard bootstrap methods at the $41$-st resample in Figure \ref{fig:cov} (as well as all the following figures) is due to the discontinuity in the empirical quantile, which is calculated by inverting the empirical distribution and at this number of resamples the inverse switches from outputting the largest or smallest resample estimate to the second largest or smallest.

Figure \ref{fig:len} shows the mean interval width against the number of resamples (note that the confidence intervals for the mean width estimates are very narrow, meaning that the estimates are accurate, which makes the shaded regions in the figure difficult to visualize). These plots show that the mean interval widths of Cheap bootstrap methods are initially large when only one replication is used, then decrease sharply when the number of resamples increases and then continue to decrease at a slower rate. Figure \ref{fig:sd} shows the standard deviation of the interval width against the number of resamples, which follow a similar trend as the mean in Figure \ref{fig:len}. More specifically, the interval width mean and standard deviation of Cheap Bootstrap are $0.36$ and $0.28$ at $B=1$, drop sharply to $0.11$ and $0.05$ at $B=3$, further to $0.09$ and $0.03$ at $B=5$, and $0.08$ and $0.02$ at $B=10$, with negligible reduction afterwards.
These trends match our theory regarding the interval width pattern in Section \ref{sec:tightness}. On the other hand, the interval width means of Standard methods are initially small and increase with the number of resamples, while the standard deviations are roughly constant. Note that at $B=1$ it is not possible to generate confidence intervals using Standard bootstrap methods, and the interval width mean and standard deviation are undefined. Moreover, even though the interval widths in Standard bootstrap methods appear lower generally, these intervals can substantially under-cover the truth when $B$ is small.

We present another example on logistic regression in Appendix \ref{sec:logistic} which shows similar experimental observations discussed above.

\subsection{Input Uncertainty Quantification in Simulation Modeling}\label{sec:simulation}
We present an example on the input uncertainty quantification problem in simulation modeling discussed in Section \ref{sec:double}. Our target quantity of interest is the expected average waiting time of the first 10 customers in a single-server queueing system, where the interarrival times and service time are i.i.d. with ground truth distributions being exponential with rates $1$ and $1.1$ respectively. The queue starts empty and the first customer immediately arrives. This system is amenable to discrete-event simulation and has been used commonly in existing works in input uncertainty quantification (e.g., \cite{barton2014quantifying,song2015quickly,zouaoui2004accounting}). Suppose we do not know the interarrival time distribution but instead have external data of size $n=100$. To compute a point estimate of the expected waiting time, we use the empirical distribution constructed from the 100 observations, denoted $\hat P_n$, as an approximation to the interarrrival time distribution that is used to drive simulation runs. We conduct $R_0=50$ unbiased simulation runs, each giving output $\hat\psi_r(\hat P_n)$, and then average the outputs of these runs to get $(1/R_0)\sum_{r=1}^{R_0}\hat\psi_r(\hat P_n)$. 

To obtain a $95\%$ confidence interval, we use Cheap Bootstrap centered at original estimate and centered at resample mean presented in Sections \ref{sec:cen} and \ref{sec:nc} respectively. More precisely, the first approach constructs the interval $\mathcal I_O$ using \eqref{CI general nested} with \eqref{S center} and \eqref{q center}, while the second approach constructs the interval $\mathcal I_M$ using \eqref{CI general nested} with \eqref{S nc} and \eqref{q nc}. In both approaches, we set the simulation size for each resample estimate to equal that for the original estimate, i.e., $R=R_0=50$. To construct $\mathcal I_O$, we need to approximate the critical value $q_{O,1-\alpha/2}$ using \eqref{grid search}, where we set $N=100,000$ and discretize $\theta$ over a grid $0.01,0.02,\ldots,100$. Then we increase $q$ from $t_{B,1-\alpha/2}-0.5$ and iteratively check whether 
\begin{equation}
\min_{\theta\geq0}\frac{1}{N}\sum_{j=1}^NI\left(\frac{\theta V_1^j+V_2^j}{\sqrt{\frac{\theta^2+\rho^2}{B}\left(Y^j+{V_3^j}^2\right)-2\sqrt{\frac{\theta^2+\rho^2}{B}}V_3^jV_2^j+{V_2^j}^2}}\leq q\right)\label{grid search1}
\end{equation}
is greater than $1-\alpha/2$; if not, we increase $q$ by a step size $0.01$, until we reach a $q$ such that \eqref{grid search1} passes $1-\alpha/2$. We detail the computed values of $q_{O,1-\alpha/2}$ in Appendix \ref{sec:q compute}.

Using the above, we construct $\mathcal I_O$ and $\mathcal I_M$ at $B$ ranging from $1$ to $10$. For each $B$, we repeat our experiments $1000$ times to record the empirical coverage and interval width mean and standard deviation. To calculate the empirical coverage, we run 1 million simulation runs under the true interarrival and service time distributions to obtain an accurate estimate of the ground truth value. Table \ref{table:queue} shows the performances of $\mathcal I_O$ and $\mathcal I_M$, and also the comparisons with the Basic and Percentile Bootstraps that treat $\psi_{n,R}^{**b},b=1,\ldots,B$ in \eqref{re estimate nested} as the resample estimates. We see that the coverage of $\mathcal I_O$ is very close to the nominal confidence level $95\%$ at all considered $B$, ranging in $95\%-96\%$. The coverage of $\mathcal I_M$ is also close, ranging in $93\%-94\%$. The interval width performances for both $\mathcal I_O$ and $\mathcal I_M$ behave similarly as previous examples, with initially large width at $B=1$ or $B=2$, falling sharply when $B$ increases by 1 or 2, and flattening out afterwards. The Basic and Percentile Bootstraps have coverages substantially deviated from the nominal $95\%$, though they improve as $B$ increases. The inferior coverages of these conventional methods can be caused by both the inadequate $B$ and also the ignorance of the simulation noise in their implementation.


\begin{table}[ht]
\caption{Interval performances using Cheap Bootstrap centered at original estimate $\mathcal I_O$, Cheap Bootstrap centered at resample mean $\mathcal I_M$, Basic Bootstrap and Percentile Bootstrap, at nominal confidence level $95\%$ for input uncertainty quantification in a queueing system simulation.}
\centering
{\footnotesize
\begin{tabular}{c|cc|cc|cc|cc}
\multirow{6}{*}{$B$}&\multicolumn{2}{c}{Cheap bootstrap centered}&\multicolumn{2}{|c}{Cheap bootstrap centered}&\multicolumn{2}{|c}{Basic bootstrap}&\multicolumn{2}{|c}{Percentile bootstrap}\\
&\multicolumn{2}{c}{at original estimate}&\multicolumn{2}{|c}{at resample mean}&\multicolumn{2}{|c}{}&\multicolumn{2}{|c}{}\\
   & Coverage & Width & Coverage & Width & Coverage & Width & Coverage & Width\\
   & (margin & mean & (margin & mean & (margin & mean & (margin & mean\\
    & of error) & (st. dev.) & of error) & (st. dev.) & of error) & (st. dev.) & of error) & (st. dev.)\\
   \hline
   1& 0.96\ (0.01) & 6.73\ (5.41) & NA & NA & NA & NA & NA & NA \\
2& 0.95\ (0.01) & 2.55\ (1.50) & 0.94\ (0.02) & 5.84\ (4.86) & 0.35\ (0.03)     &     0.33\ (0.26) &  0.27\ (0.03)& 0.33\ (0.26)\\
3& 0.95\ (0.01) & 1.97\ (0.99) & 0.94\ (0.01) & 2.26\ (1.30) & 0.52\ (0.03) &  0.50\ (0.29) & 0.41\ (0.03) & 0.50\ (0.29)\\
4& 0.95\ (0.01) & 1.74\ (0.79) & 0.94\ (0.01) & 1.73\ (0.84) & 0.64\ (0.03)& 0.59\ (0.29) & 0.48\ (0.03)& 0.59\ (0.29)\\
5& 0.95\ (0.01) & 1.64\ (0.69) & 0.93\ (0.02) & 1.54\ (0.68) & 0.71\ (0.03)& 0.66\ (0.29) & 0.54\ (0.03) & 0.66\ (0.29)\\
6& 0.95\ (0.01) & 1.58\ (0.63) & 0.93\ (0.02) & 1.45\ (0.59) &0.75\ (0.03)&  0.72\ (0.29)& 0.58\ (0.03) & 0.72\ (0.29)\\
7& 0.95\ (0.01) & 1.54\ (0.58) & 0.93\ (0.02) & 1.40\ (0.53) & 0.77\ (0.03)& 0.76\ (0.29) & 0.61\ (0.03)& 0.76\ (0.29)\\
8& 0.95\ (0.01) & 1.50\ (0.55) & 0.93\ (0.02) & 1.35\ (0.49) & 0.79\ (0.03)&  0.80\ (0.29) & 0.64\ (0.03)& 0.80\ (0.29)\\
9& 0.95\ (0.01) & 1.48\ (0.53) & 0.93\ (0.02) & 1.33\ (0.47) &0.81\ (0.02)& 0.82\ (0.29) & 0.65\ (0.03)& 0.82\ (0.29)\\
10& 0.95\ (0.01) & 1.46\ (0.51) & 0.93\ (0.02) & 1.31\ (0.44) & 0.83\ (0.02)&  0.85\ (0.29)& 0.67\ (0.03)& 0.85\ (0.29)
\end{tabular}
}
\label{table:queue}
\end{table}

Compared with existing works on input uncertainty quantification, we highlight that the computational load in Cheap Bootstrap is significantly lower. According to Table \ref{table:queue}, $B$ can be taken as, for instance, 3 or 4 in $\mathcal I_O$ to obtain a reasonable confidence interval, so that the total computation cost is $BR$ which is 200 or 250 when we use $R=50$ (including the cost to compute the original point estimate). In contrast, bootstrap approaches proposed in the literature suggest to use $B$ at least $50$ in numerical examples (e.g., \cite{cheng2004calculation,song2015quickly}) or linearly dependent on the parameter dimension (e.g., $10$ times the dimension when using the so-called metamodel-assisted bootstrap in the parametric case; \cite{xie2014bayesian}).

\subsection{Deep Ensemble Prediction}\label{sec:deep}
We consider deep ensemble prediction described in Section \ref{sec:double}. We build a prediction model $y=f(\mathbf x)$ from i.i.d. supervised data $(\mathbf X_i,Y_i),i=1,\ldots,n$ of size $n=200$. Here the feature vector $\mathbf x=(x(1),\ldots,x(d))$ is 16-dimensional and $y\in\mathbb R$. Suppose each dimension of $\mathbf X_i$ is generated from the uniform distribution on $[0,1]$, and the ground-truth model is $y=\sum_{j=1}^{16}x(j)+N(0,1)$. We build a deep ensemble by training $R_0=5$ base neural networks each using an independent random initialization, and average these networks to obtain a predictor. More specifically, each base neural network has two fully connected layers with $1024$ hidden neurons, using rectified linear unit as activation. It is trained using the squared loss with $L_2$-regularization on the neuron weights, and Adam for gradient descent. All the weights in each neural network are initialized as independent $N(0,1)$ variables.

Like in Section \ref{sec:simulation}, we use Cheap Bootstrap centered at original estimate $\mathcal I_O$ and centered at resample mean $\mathcal I_M$ to construct $95\%$ confidence intervals for the prediction output. Here, we consider a test point with value $0.5$ in all dimensions. We set $R=R_0$ in both approaches, and use Table \ref{table:nested} in Appendix \ref{sec:q compute} to calibrate the critical value $q_{O,1-\alpha/2}$ in $\mathcal I_O$. We repeat the experiments $100$ times to take down statistics on the generated confidence intervals. Figure \ref{fig:cen} depicts the box plots of the confidence interval widths using $B=1$ to $10$, for $\mathcal I_O$ and $\mathcal I_M$ respectively. We see that the interval widths are relatively large for $B=1$ (in the case of $\mathcal I_O$) and for $B=2$ (in the case of $\mathcal I_M$ which is only well-defined starting from $B=2$), but decrease fast when $B$ increases. When $B=3$ (in the case of $\mathcal I_O$) and $B=4$ (in the case of $\mathcal I_M$), the interval widths appear to more or less stabilize. 

\begin{figure*}[tb]
\vskip 0.2in
\begin{center}
\subfigure
{\includegraphics[width=.49\columnwidth]{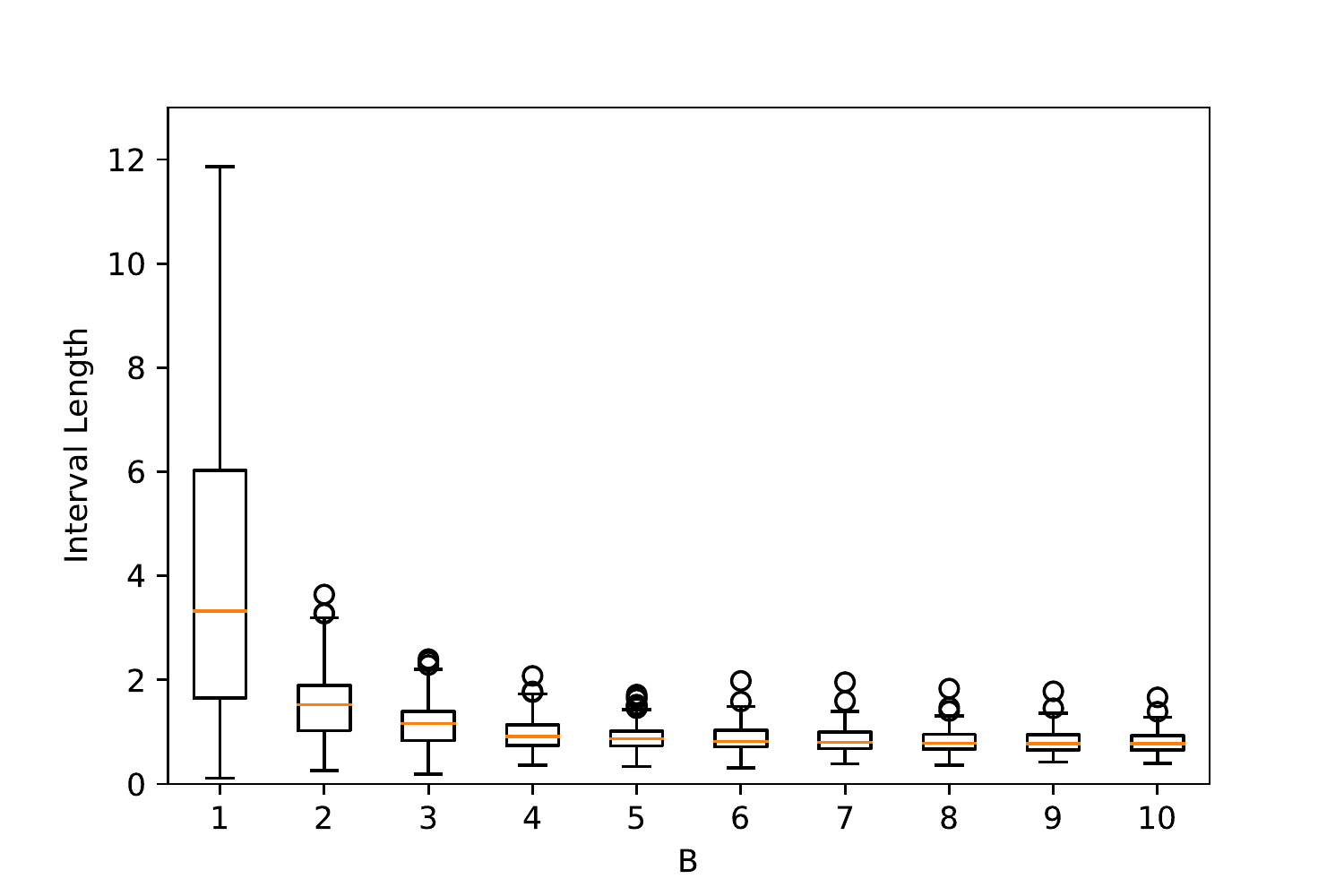} }
\subfigure
{\includegraphics[width=.49\columnwidth]{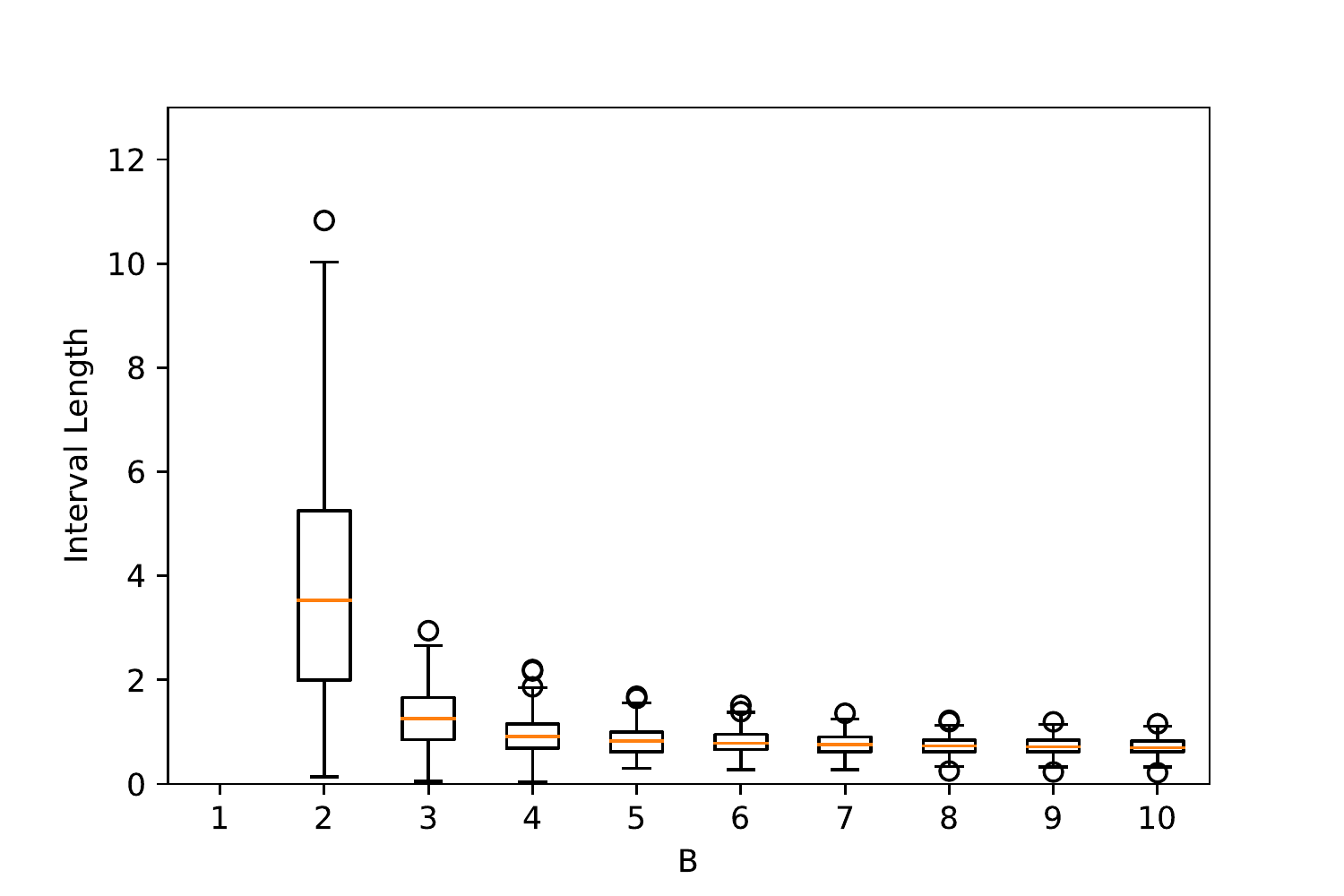}}
\caption{Box plots of $95\%$ confidence interval widths for deep ensemble prediction on synthetic data at test point that takes value $0.5$ in all dimensions, using Cheap Bootstrap centered at original estimate $\mathcal I_O$ (left graph) and Cheap Bootstrap centered at resample mean $\mathcal I_M$ (right graph).}
\label{fig:cen}
\end{center}
\vskip -0.2in
\end{figure*}

Next, we also train deep ensemble predictors for a real data set on Boston housing, available at https://www.kaggle.com/c/boston-housing. The data set has size $504$ and feature dimension $13$. Like in our synthetic data, we use $R_0=5$, and the same configurations of base neural networks and training methods. We split the data by $80\%$ to a training set and $20\%$ to a testing set, where only the training set is used to construct the predictor and also run our Cheap Bootstrap, with again $R=R_0=5$. Then we create the $95\%$ Cheap Bootstrap confidence intervals for the prediction values of the testing set. Figure \ref{fig:boston} shows the box plots of the widths of these intervals, namely $\mathcal I_O$ and $\mathcal I_M$, for $B$ ranging from $1$ to $10$. We see that the widths fall sharply when $B$ is small ($B=1,2$ in $\mathcal I_O$ and $B=2,3$ in $\mathcal I_M$) and stabilize quickly afterwards.

\begin{figure*}[tb]
\vskip 0.2in
\begin{center}
\subfigure
{\includegraphics[width=.49\columnwidth]{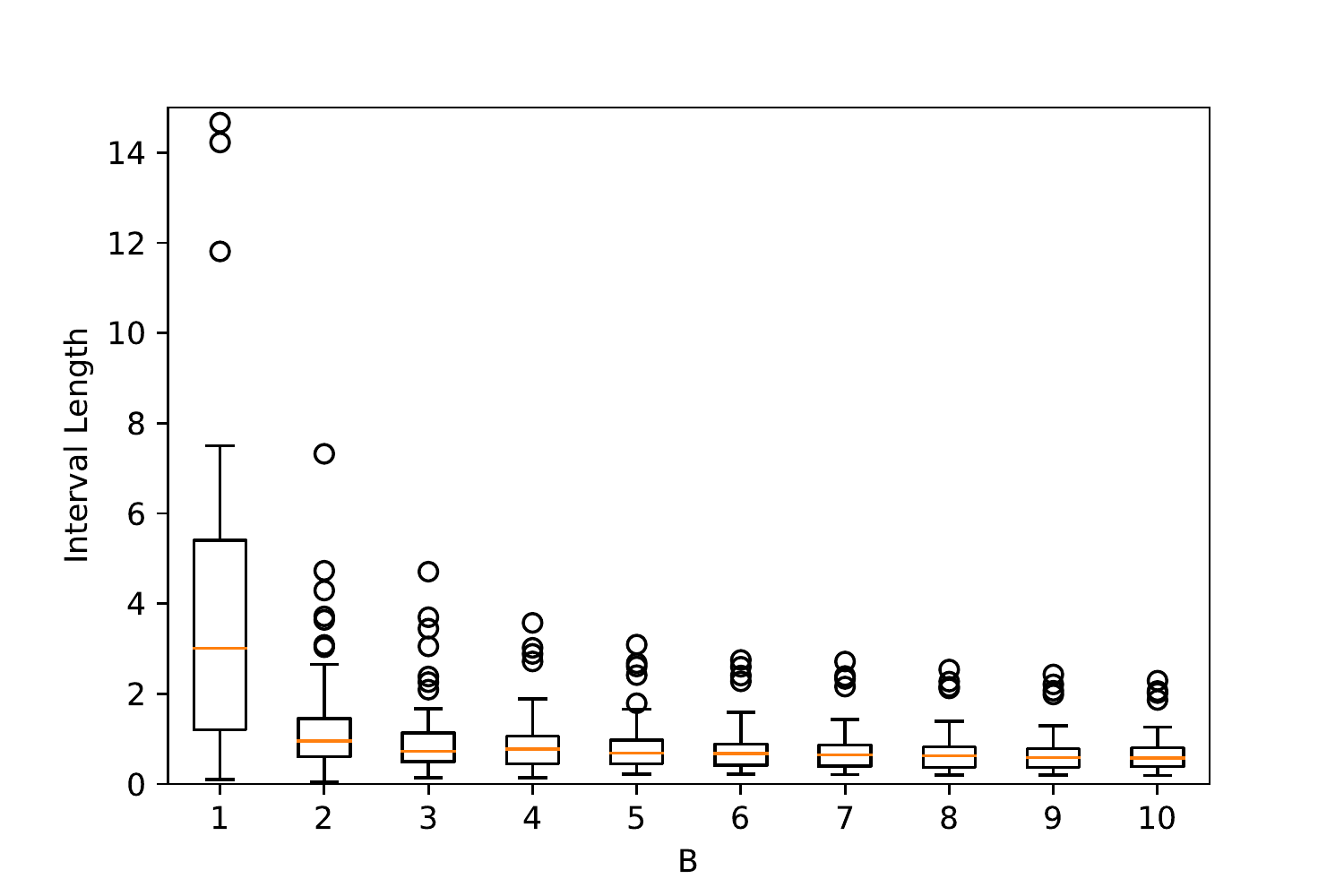} }
\subfigure
{\includegraphics[width=.49\columnwidth]{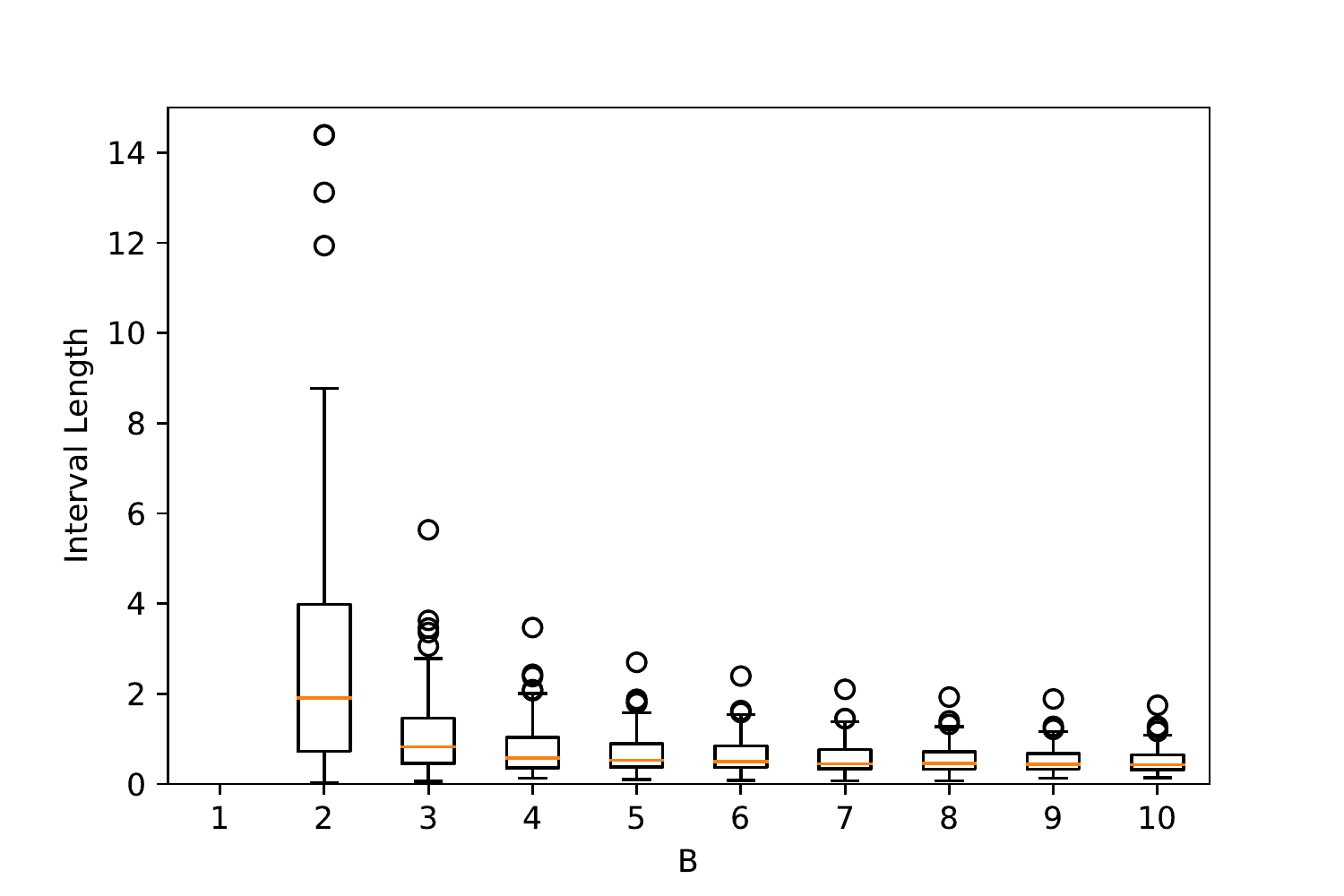}}
\caption{Box plots of $95\%$ confidence interval widths for deep ensemble prediction for the test points in the Boston housing data set, using Cheap Bootstrap centered at original estimate $\mathcal I_O$ (left graph) and Cheap Bootstrap centered at resample mean $\mathcal I_M$ (right graph).}
\label{fig:boston}
\end{center}
\vskip -0.2in
\end{figure*}

In Appendix \ref{sec:bagging}, we present an additional example on constructing confidence bounds for the optimality gaps of data-driven stochastic optimization problems. This problem, which possesses a nested sampling structure, is of interest to stochastic programming. There we would illustrate the performances of Cheap Bootstrap in constructing one-sided confidence bounds.

\section{Discussion}
Motivated by the computational demand in large-scale problems where repeated model refitting can be costly, in this paper we propose a Cheap Bootstrap method that can use very few resamples for bootstrap inference. A key element of our method is that it attains asymptotically exact coverage for any number of resamples, including as low as one. This is in contrast to conventional bootstrap approaches that require running many Monte Carlo replications. Our theory on the Cheap Bootstrap also differs from these approaches by exploiting the asymptotic independence between the original estimate and resample estimates, instead of a direct approximation of the sampling distribution via the reasample counterpart, the latter typically executed by running many Monte Carlo runs in order to obtain accurate summary statistics from the resamples. 


Besides the basic asymptotic coverage guarantee, we have studied several aspects of the Cheap Bootstrap. First is its higher-order coverage errors which match the conventional basic and percentile bootstraps. This error analysis is based on an Edgeworth expansion on a $t$-limit which, to our knowledge, has not been studied in the literature that has focused on normal limits. Second is the half-width behavior, where we show that the half widths of Cheap Bootstrap intervals are naturally wider for small number of resamples, and match the conventional approaches when resample size increases. Moreover, the decrease in the half width is sharp when the resample number is very small and flattens quickly, thus leading to a reasonable half-width performance with only few resamples. In addition, we have also investigated several generalizations of the Cheap Bootstrap, including its applications to nested sampling problems, subsampling procedures, and multivariate extensions. Our numerical results validate the performances of Cheap Bootstrap, especially its ability to conduct valid inference with an extremely small number of resamples.


This work is intended to lay the foundation of the proposed Cheap Bootstrap method. In principle, this method can be applied to many other problems than those shown in our numerical section. Essentially, for any problem where bootstrap interval is used and justified via the standard condition - conditional asymptotic normality of the resamples, one can adopt the Cheap Bootstrap in lieu of conventional bootstraps. Besides applying and testing the performances of Cheap Bootstrap across these other problems, we believe the following are also worth further investigation:
\\


\noindent\emph{Comparison with the infinitesimal jackknife and the jackknife: }While we have focused our comparisons primarily within the bootstrap family, the Cheap Bootstrap bears advantages when compared to non-bootstrap alternatives including the infinitesimal jackknife and the jackknife. These advantages inherit from the bootstrap approach in general, but strengthened further with the light resampling effort. The infinitesimal jackknife, or the delta method, uses a linear approximation to evaluate the standard error. This approach relies on influence function calculation that could be tedious analytically, and also computationally due to, for instance, the inversion of big matrices in $M$-estimation. The jackkknife, which relies on leave-one-out estimates, generally requires a number of model evaluation that is the same as the sample size. The Cheap Bootstrap thus provides potential significant computation savings compared to both methods. Note, however, that it is possible to combine the jackknife with batching to reduce computation load (see the next discussion point).

In problems facing nested sampling such as those in Section \ref{sec:double}, it is also possible to use non-bootstrap techniques such as the infinitesimal jackknife. In fact, the latter is particularly handy to derive for problems involving double resampling such as bagging predictors (e.g., \cite{wager2014confidence}). Nonetheless, it is open to our knowledge whether these alternatives can generally resolve the expensive nested sampling requirement faced by conventional bootstraps, especially in problems lacking unbiased estimators for the influence function. In these cases, the need to control the entangled noises coming from both data and computation could necessitate a large computation size, despite the apparent avoidance of nested sampling.
\\


\noindent\emph{Comparison with batching: } A technique prominently used in the simulation literature, but perhaps less common in statistics, is known as batch means or more generally standardized time series (\cite{glynn1990simulation,schmeiser1982batch,schruben1983confidence,glynn2018constructing}). This technique conducts inference by grouping data into batches and aggregating them via $t$-statistics. The batches can be disjoint or overlapping (\cite{meketon2007overlapping,song1995optimal}), similar to the blocks in subsampling (\cite{politis1994large}). While batch means uses $t$-statistics like our Cheap Bootstrap, it utilizes solely the original sample instead of resample, and is motivated to handle serially dependent observations in simulation outputs such as from Markov chain Monte Carlo (\cite{geyer1992practical,flegal2010batch,jones2006fixed}).

Batch means in principle can be used to construct confidence intervals with few model evaluations, by using a small number of batches or blocks. However, a potential downside of this approach is that in the disjoint-batches case, dividing a limited sample into batches will thin out the sample size per batch and deteriorate the accuracy of asymptotic approximation. In other words, there is a constrained tradeoff between the number of batches and sample size per batch that limits its accuracy. The Cheap Bootstrap, on the other hand, bypasses this constraint by allowing the use of any number of resamples. In fact, since the observations in different resamples overlap, the Cheap Bootstrap appears closer to overlapping batching. To this end, overlapping batching contains multiple tuning parameters to specify how the batches overlap, which could also affect the asymptotic distributions. In this regard, the Cheap Bootstrap appears easier to use as the only parameter needed is the resample budget, which is decided by the amount of computation resource.

One approach to obtain more in-depth theoretical comparisons between the Cheap Bootstrap and batching is to analyze the coefficients in the respective higher-order coverage expansions. This comprises an immediate future direction.
\\

\noindent\emph{High-dimensional problems: }As the main advantage of the Cheap Bootstrap relative to other approaches is its light computation, it would be revealing to analyze the error of the Cheap Bootstrap in high-dimensional problems where computation saving is of utmost importance. Regarding this, it is possible to obtain bounds for the coverage error of Cheap Bootstrap with explicit dependence on problem dimension, by following our analysis roadmap in Section \ref{sec:higher} but integrating with high-dimensional Berry-Esseen-type bounds for normal limits (\cite{chernozhukov2017central,fang2021high}).
\\

\noindent\emph{Higher-order coverage accuracy: }Although our investigation is orthogonal to the bootstrap literature on higher-order coverage error refinements, we speculate that our Cheap Bootstrap can be sharpened to exhibit second-order accurate intervals, by replacing the $t$-quantile with a bootstrap calibrated quantile. This approach is similar to the studentized bootstrap, but uses few outer resamples in a potential iterated bootstrap procedure. In other words, our computation effort could be larger than elementary bootstraps, as we need to run two layers of resampling, but less than a double bootstrap, as one of the layers consists of only few samples. As a related application, we may apply the Cheap Bootstrap to assess the error of the bootstrap itself, which is done conventionally via bootstrapping the bootstrap or alternately the jackknife-after-bootstrap (\cite{efron1992jackknife}).
\\

\noindent\emph{Serial dependence: }A common bootstrap approach to handle serial dependence is to use block sampling (e.g., \cite{davison1997bootstrap} \S8). The computation advantage of the Cheap Bootstrap likely continues to hold when using blocks. More specifically, instead of regenerating many series each concatenated from resampled blocks, the Cheap Bootstrap only regenerates few such series and aggregates them via a $t$-approximation. On the other hand, as mentioned above, batch means or subsampling also comprises viable approaches to handle serially dependent data, and the effectiveness of the Cheap Bootstrap compared with these approaches will need further investigation.




\section*{Acknowledgments}
I gratefully acknowledge support from the National Science Foundation under grants CAREER CMMI-1834710 and IIS-1849280.

\bibliography{bibliography}
\bibliographystyle{rss}


\newpage

\begin{appendix}

\section{Immediate Extensions from Section \ref{sec:theory}}\label{sec:extensions}

\subsection{Inference on Standard Error}\label{sec:SE}
We can conduct inference on the standard error of $\hat\psi_n$ using Cheap Bootstrap. Here by standard error we mean the standard deviation of $\hat\psi_n$, which is asymptotically $\sigma/\sqrt n$, and our inference target is $\sigma$. Using the same argument as in Section \ref{sec:exact}, we have that
\begin{equation}
\mathcal J=\left[\frac{\sqrt{Bn}S}{\sqrt{\chi^2_{1-\alpha/2,B}}},\frac{\sqrt{Bn}S}{\sqrt{\chi^2_{\alpha/2,B}}}\right]\label{CI SE}
\end{equation}
is an asymptotically exact $(1-\alpha)$-level confidence interval for $\sigma$, where $\chi^2_{\alpha/2,B}$ and $\chi^2_{1-\alpha/2,B}$ are the $\alpha/2$ and $(1-\alpha/2)$-quantiles of $\chi^2_B$. We summarize this as follows.
\begin{theorem}[Cheap Bootstrap interval for standard error]
Under Assumption \ref{bp}, we have, for any bootstrap replication size $B\geq1$, the interval $\mathcal J$ defined in \eqref{CI SE} is an asymptotically exact $(1-\alpha)$-level confidence interval for $\sigma$, i.e., $\mathbb P_n(\sigma\in\mathcal J)\to1-\alpha$ as the sample size $n\to\infty$, where $\mathbb P_n$ denotes the probability with respect to the data $X_1,\ldots,X_n$ and all randomness from the resampling.\label{thm:se}
\end{theorem}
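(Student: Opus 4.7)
The plan is to mimic the derivation of Theorem \ref{main}, but instead of forming a $t$-pivot that cancels $\sigma$, to form a $\chi^2$-pivot that isolates $\sigma$. The key input is again Proposition \ref{joint prop}, which gives the joint convergence $\sqrt{n}(\psi_n^{*b} - \hat\psi_n) \Rightarrow \sigma Z_b$ for $b=1,\ldots,B$ with $Z_1,\ldots,Z_B$ i.i.d.\ $N(0,1)$ (the $Z_0$ component is not needed here, since the target $\sigma$ is a scale parameter of the sampling distribution and we only need the resample estimates).

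First, I would rewrite the statistic $S^2$ in rescaled form:
\begin{equation*}
nS^2 \;=\; \frac{1}{B}\sum_{b=1}^B \bigl(\sqrt{n}(\psi_n^{*b} - \hat\psi_n)\bigr)^2.
\end{equation*}
By Proposition \ref{joint prop} and the continuous mapping theorem applied to the map $(y_1,\ldots,y_B)\mapsto \sum_b y_b^2 / B$, we get
\begin{equation*}
\frac{BnS^2}{\sigma^2} \;\Rightarrow\; \sum_{b=1}^B Z_b^2 \;\stackrel{d}{=}\; \chi^2_B.
\end{equation*}
This makes $BnS^2/\sigma^2$ an (asymptotically) pivotal statistic with a limiting $\chi^2_B$ distribution.

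Next I would invert this pivot. By the definition of the quantiles $\chi^2_{\alpha/2,B}$ and $\chi^2_{1-\alpha/2,B}$ and the fact that the limiting $\chi^2_B$ distribution has a continuous distribution function, weak convergence gives
\begin{equation*}
\mathbb{P}_n\!\left(\chi^2_{\alpha/2,B} \;\leq\; \frac{BnS^2}{\sigma^2} \;\leq\; \chi^2_{1-\alpha/2,B}\right) \;\to\; 1-\alpha.
\end{equation*}
Rearranging the inequalities for $\sigma^2>0$ and taking positive square roots (which preserve the inequalities since $S\geq 0$) yields
\begin{equation*}
\mathbb{P}_n\!\left(\frac{\sqrt{Bn}\,S}{\sqrt{\chi^2_{1-\alpha/2,B}}} \;\leq\; \sigma \;\leq\; \frac{\sqrt{Bn}\,S}{\sqrt{\chi^2_{\alpha/2,B}}}\right) \;\to\; 1-\alpha,
\end{equation*}
which is exactly $\mathbb{P}_n(\sigma\in\mathcal J)\to 1-\alpha$, completing the argument.

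I do not foresee a genuine obstacle here: the whole proof is essentially two applications of the continuous mapping theorem on top of Proposition \ref{joint prop}. The only minor subtlety is that the rearrangement step requires $\sigma>0$, which is already guaranteed by Assumption \ref{bp} (which assumes $\sigma^2>0$), and that on the event $S=0$ the interval $\mathcal J$ collapses to $\{0\}$; but since $\mathbb{P}_n(S=0)\to 0$ (as $nS^2$ converges in distribution to a continuous law with no atom at $0$), this event contributes nothing to the limiting coverage probability.
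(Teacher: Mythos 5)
Your proof is correct and follows essentially the same route as the paper: both derive the limiting $\chi^2_B$ law of the rescaled statistic $BnS^2/\sigma^2$ (equivalently $\sqrt{n}S\Rightarrow\sigma\sqrt{\chi^2_B/B}$) from Proposition \ref{joint prop} via the continuous mapping theorem and then invert the pivot at the $\chi^2_B$ quantiles. Your added remarks on continuity of the limit law and the negligibility of the event $S=0$ are fine details the paper leaves implicit.
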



\subsection{Multivariate Generalization}\label{sec:multi}
We present a multivariate generalization of Cheap Bootstrap. Consider now $\psi:=\psi(P)\in\mathbb R^d$. Our multivariate confidence region proceeds similarly as the univariate case, except we use Hotelling's $T^2$ instead of $t$. More specifically, we have point estimate $\hat\psi_n=\psi(\hat P_n)$, and we resample from $\{X_1,\ldots,X_n\}$ to obtain $\{X_1^{*b},\ldots,X_n^{*b}\}$ and evaluate the resample estimate $\psi_n^{*b}:=\psi(P_n^{*b})$. Our confidence region is
\begin{equation}
\mathcal R=\left\{\psi:(\hat\psi_n-\psi)^\top S^{-1}(\hat\psi_n-\psi)\leq T^2_{d,B,1-\alpha}\right\}\label{CS}
\end{equation}
where $S$ now denotes a $d\times d$ matrix given by
\begin{equation}
S=\frac{1}{B}\sum_{b=1}^B\left(\psi_n^{*b}-\hat\psi_n\right)\left(\psi_n^{*b}-\hat\psi_n\right)^\top\label{cov}
\end{equation}
and the critical value $T^2_{d,B,1-\alpha}$ is the $(1-\alpha)$-quantile of Hotelling's $T^2$ distribution with parameters $d$ and $B$.

We have the following asymptotic exact guarantee for $\mathcal R$. First we make the following multivariate analog of Assumption \ref{bp}:
\begin{assumption}
[Standard condition for multivariate bootstrap validity]
We have $\sqrt n(\hat\psi_n-\psi)\Rightarrow N(0,\Sigma)$ and $\sqrt n(\psi_n^*-\hat\psi_n)\Rightarrow N(0,\Sigma)$ conditional on the data $X_1,X_2,\ldots$ in probability  as $n\to\infty$, where $N(0,\Sigma)$ is a multivariate normal vector with mean $0\in\mathbb R^d$ and covariance matrix $\Sigma\in\mathbb R^{d\times d}$ that is positive definite.\label{bp multi}
\end{assumption}

Then we have:
\begin{theorem}[Asymptotic exactness of multivariate Cheap Bootstrap]
Under Assumption \ref{bp multi}, for any $B\geq d$, the region $\mathcal R$ in \eqref{CS} is an asymptotically exact $(1-\alpha)$-level confidence region for $\psi$, i.e.,
$\mathbb P_n(\psi\in\mathcal R)\to1-\alpha$ as $n\to\infty$, where $\mathbb P_n$ denotes the probability with respect to the data $X_1,\ldots,X_n$ and all randomness from the resampling.\label{main multi}
\end{theorem}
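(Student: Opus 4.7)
The plan is to follow the same roadmap as in the univariate case (Proposition \ref{joint prop} and Theorem \ref{main}), with Hotelling's $T^2$ playing the role that the student $t$ played before.

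First, I would establish a multivariate analog of Proposition \ref{joint prop}, namely
\begin{equation*}
\sqrt n\bigl(\hat\psi_n-\psi,\ \psi_n^{*1}-\hat\psi_n,\ \ldots,\ \psi_n^{*B}-\hat\psi_n\bigr)\Rightarrow(\mathbf Z_0,\mathbf Z_1,\ldots,\mathbf Z_B),
\end{equation*}
where $\mathbf Z_b$ are i.i.d.\ $N(0,\Sigma)$ vectors. The argument is essentially identical to the proof of Proposition \ref{joint prop}: one works with joint c.d.f.'s evaluated at $(\mathbf x_0,\ldots,\mathbf x_B)\in\mathbb R^{d(B+1)}$ (inequalities interpreted componentwise), conditions on $\hat P_n$ to decouple the resample estimates from the original estimate and from each other, and then uses Assumption \ref{bp multi} together with the bounded convergence theorem to send the conditional c.d.f.'s to $\Phi_\Sigma(\mathbf x_b)$ in probability. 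Since Cram\'er--Wold or joint-c.d.f.\ convergence characterizes weak convergence in $\mathbb R^{d(B+1)}$, this gives the desired joint limit.

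Next, I would consider the natural multivariate pivot
\begin{equation*}
T_n^2:=(\hat\psi_n-\psi)^\top S^{-1}(\hat\psi_n-\psi)=\bigl[\sqrt n(\hat\psi_n-\psi)\bigr]^\top(nS)^{-1}\bigl[\sqrt n(\hat\psi_n-\psi)\bigr],
\end{equation*}
and note that
\begin{equation*}
nS=\frac{1}{B}\sum_{b=1}^B\bigl[\sqrt n(\psi_n^{*b}-\hat\psi_n)\bigr]\bigl[\sqrt n(\psi_n^{*b}-\hat\psi_n)\bigr]^\top\ \Rightarrow\ \frac{1}{B}\sum_{b=1}^B\mathbf Z_b\mathbf Z_b^\top=:\frac{W}{B},
\end{equation*}
jointly with $\sqrt n(\hat\psi_n-\psi)\Rightarrow\mathbf Z_0$, by the previous step and the continuous mapping theorem. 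Here $W$ is a Wishart $W_d(B,\Sigma)$ matrix independent of $\mathbf Z_0$. The hypothesis $B\ge d$ is exactly what guarantees that $W$ is almost surely positive definite, so the map $(\mathbf z,A)\mapsto B\mathbf z^\top A^{-1}\mathbf z$ is continuous on a set of probability one under the limit, and a further application of the continuous mapping theorem yields
\begin{equation*}
T_n^2\ \Rightarrow\ B\mathbf Z_0^\top W^{-1}\mathbf Z_0.
\end{equation*}

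Finally, I would identify the limit with Hotelling's $T^2$: by the standard construction, if $\mathbf Z_0\sim N(0,\Sigma)$ and $W\sim W_d(B,\Sigma)$ are independent, then $B\mathbf Z_0^\top W^{-1}\mathbf Z_0\stackrel{d}{=}T^2_{d,B}$. Hence $\mathbb P_n(T_n^2\le T^2_{d,B,1-\alpha})\to 1-\alpha$, which is precisely $\mathbb P_n(\psi\in\mathcal R)\to 1-\alpha$.

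The main obstacle is the invertibility step. One must justify passing to $(nS)^{-1}$ under weak convergence, which fails when $W$ is singular with positive probability; this is exactly where $B\ge d$ enters and cannot be relaxed without modifying the pivot (e.g., using a pseudo-inverse or a lower-dimensional projection). A minor technical point is ensuring the joint weak convergence of $(\sqrt n(\hat\psi_n-\psi),nS)$ rather than just their marginals, but this follows from step one plus continuous mapping applied to the full vector $\sqrt n(\hat\psi_n-\psi,\psi_n^{*1}-\hat\psi_n,\ldots,\psi_n^{*B}-\hat\psi_n)$.
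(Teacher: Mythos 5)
Your proposal is correct and follows essentially the same route as the paper: a multivariate version of Proposition \ref{joint prop} followed by the continuous mapping theorem to identify the limit of the pivot $(\hat\psi_n-\psi)^\top S^{-1}(\hat\psi_n-\psi)$ as Hotelling's $T^2_{d,B}$. You are in fact slightly more careful than the paper's one-line continuous-mapping step, since you make explicit the Wishart structure of the limit of $nS$ and the role of $B\geq d$ in guaranteeing almost-sure invertibility of that limit.
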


Note that in Theorem \ref{main multi} we require $B$ to be at least $d$, the dimension of $\psi$. In the univariate case this reduces to $B=1$.

Finally, the following is a multivariate analog of Proposition \ref{HD} that uses Hadamard differentiability with non-degenerate derivative to ensure Assumption \ref{bp multi}:
\begin{proposition}[Sufficient conditions for multivariate bootstrap validity]
Consider $\hat P_n$ and $P_n^*$ as random elements that take values in $\ell^\infty(\mathcal F)$, where $\mathcal F$ is a Donsker class with finite envelope. Suppose $\psi:\ell^\infty(\mathcal F)\to\mathbb R^d$ is Hadamard differentiable at $P$ where the derivative $\psi_P'$ satisfies that the covariance matrix of $\psi_P'(\mathbb G_P)$ is positive definite, for a tight Gaussian process $\mathbb G_P$ on $\ell^\infty(\mathcal F)$ with mean 0 and covariance $Cov(\mathbb G_P(f_1),\mathbb G_P(f_2))=Cov_P(f_1(X),f_2(X))$. Then Assumption \ref{bp multi} holds under i.i.d. data.\label{HD multi}
\end{proposition}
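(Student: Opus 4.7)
The plan is to mirror the proof of Proposition \ref{HD} componentwise, relying on the functional delta method applied to a Donsker empirical process limit. The only genuinely new element compared to the univariate case is that we need the limiting Gaussian vector $\psi_P'(\mathbb{G}_P) \in \mathbb{R}^d$ to have a nondegenerate (positive definite) covariance, rather than simply a positive variance; this is exactly what the hypothesis provides.

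First, since $\mathcal{F}$ is a Donsker class with finite envelope, the classical empirical process CLT gives $\sqrt{n}(\hat P_n - P) \Rightarrow \mathbb{G}_P$ in $\ell^\infty(\mathcal{F})$, where $\mathbb{G}_P$ is the stated tight Gaussian process. Applying the functional delta method (van der Vaart, \S20/\S23) with the Hadamard differentiable map $\psi:\ell^\infty(\mathcal{F}) \to \mathbb{R}^d$ yields
\[
\sqrt{n}\bigl(\hat\psi_n - \psi(P)\bigr) = \sqrt{n}\bigl(\psi(\hat P_n) - \psi(P)\bigr) \;\Rightarrow\; \psi_P'(\mathbb{G}_P).
\]
Because $\psi_P'$ is a continuous linear map into $\mathbb{R}^d$ and $\mathbb{G}_P$ is Gaussian, the limit $\psi_P'(\mathbb{G}_P)$ is a centered $d$-variate Gaussian, with covariance matrix $\Sigma$ equal to the covariance of $\psi_P'(\mathbb{G}_P)$; positive definiteness of $\Sigma$ is given by hypothesis. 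This establishes the first convergence in Assumption \ref{bp multi}.

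Next, for the bootstrap leg, I would invoke the conditional multiplier/empirical bootstrap CLT for Donsker classes (e.g., van der Vaart and Wellner, Theorem 3.6.2, or van der Vaart \S23.2): under the same Donsker-with-finite-envelope condition,
\[
\sqrt{n}(P_n^* - \hat P_n) \;\Rightarrow\; \mathbb{G}_P \quad \text{in } \ell^\infty(\mathcal{F}),
\]
conditionally on $X_1,X_2,\ldots$ in (outer) probability. Combining this with the bootstrap version of the functional delta method (van der Vaart \S23.2.3) and the same Hadamard derivative $\psi_P'$ at $P$ yields
\[
\sqrt{n}(\psi_n^* - \hat\psi_n) \;\Rightarrow\; \psi_P'(\mathbb{G}_P)
\]
conditionally on the data in probability, with the same limit law $N(0,\Sigma)$ as above. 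This establishes the second convergence in Assumption \ref{bp multi}.

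The only subtle step is ensuring $\Sigma$ is positive definite and hence invertible (needed downstream so that $\mathcal{R}$ in \eqref{CS} is well posed and Hotelling's $T^2$ appears in the limit). That is directly the nondegeneracy hypothesis on $\psi_P'(\mathbb{G}_P)$; no further argument is required. The rest of the proof is a mechanical component-by-component specialization of the scalar-valued functional delta method machinery already invoked in Proposition \ref{HD}, applied to a vector-valued differentiable functional, so I would simply reference \cite{van2000asymptotic} \S23 as in the univariate case and record the positive definiteness of $\Sigma$ explicitly.
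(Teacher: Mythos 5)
Your proposal is correct and follows essentially the same route as the paper: the paper proves this proposition by citing its Theorem \ref{thm:delta empirical} (the empirical and bootstrap functional delta methods of \cite{van2000asymptotic} \S20, \S23, built on the Donsker empirical process CLT and the conditional bootstrap CLT) with general $d$, and notes that the limit $\psi_P'(\mathbb G_P)$ is Gaussian with the positive definite covariance supplied by hypothesis. Your write-up simply unpacks the same two delta-method applications explicitly, so there is nothing substantively different to compare.
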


\section{Proofs for Section \ref{sec:theory} and Appendix \ref{sec:extensions}}\label{sec:main proof}

\begin{proof}[Proof of Proposition~\ref{HD}]
Apply Theorem \ref{thm:delta empirical} (in Appendix \ref{sec:sufficient proof}) with $d=1$ and note that $\psi_P'(\mathbb G_P)$ is normal.
\end{proof}

\begin{proof}[Proof of Theorem \ref{main Edgeworth}]
For convenience throughout the proof we use $C$ to denote a positive constant that is not necessarily the same every time it appears. We first define
$$\hat A(\mathbf x)=\frac{g(\mathbf x)-g(\overline{\mathbf X})}{h(\overline{\mathbf X})}$$
and consider $\hat A(\overline{\mathbf X}^*)$, where $\overline{\mathbf X}^*=(1/n)\sum_{i=1}^n\mathbf X_i^*$ is the mean of a resample $\{\mathbf X_1^*,\ldots,\mathbf X_n^*\}$. We can view $\hat A(\overline{\mathbf X}^*)$ as the resample counterpart of $A(\overline{\mathbf X})$.

Under the function-of-mean model described in the theorem, the pivotal statistic \eqref{pivotal basic} can be written as
$$T=\frac{g(\overline{\mathbf X})-g(\bm\mu)}{\sqrt{\frac{1}{B}\sum_{b=1}^B(g(\overline{\mathbf X}^{*b})-g(\overline{\mathbf X}))^2}}=\frac{A_s(\overline{\mathbf X})}{\sqrt{\frac{1}{B}\sum_{b=1}^B\hat A(\overline{\mathbf X}^{*b})^2}}$$
For a two-sided interval, coverage is defined by the event
$$\left|\frac{g(\overline{\mathbf X})-g(\bm\mu)}{\sqrt{\frac{1}{B}\sum_{b=1}^B(g(\overline{\mathbf X}^{*b})-g(\overline{\mathbf X}))^2}}\right|\leq t_{B,1-\alpha/2}$$
or equivalently
$$\left|\frac{\sqrt nA_s(\overline{\mathbf X})}{\sqrt{\frac{1}{B}\sum_{b=1}^B(\sqrt n\hat A(\overline{\mathbf X}^{*b}))^2}}\right|\leq t_{B,1-\alpha/2}$$
Define $Q^*$ as the conditional distribution of $\sqrt n\hat A(\overline{\mathbf X}^{*b})$ given the data $\mathcal X_n=\{\mathbf X_1,\ldots,\mathbf X_n\}$. Note that we can write the coverage probability as
\begin{equation}
E\left[\int\cdots\int_{\left|\frac{\sqrt nA_s(\overline{\mathbf X})}{\sqrt{\frac{1}{B}\sum_{b=1}^Bz_b^2}}\right|\leq t_{B,1-\alpha/2}}dQ^*(z_B)\cdots dQ^*(z_1)\right]\label{complete}
\end{equation}
where the expectation $E$ is taken with respect to the data $\mathcal X_n$.


Consider a positive number $\lambda\geq3/2$. We first show that, with probability $1-O(n^{-\lambda})$,
\begin{eqnarray}
&&\int\cdots\int_{\left|\frac{\sqrt nA_s(\overline{\mathbf X})}{\sqrt{\frac{1}{B}\sum_{b=1}^Bz_b^2}}\right|\leq t_{B,1-\alpha/2}}dQ^*(z_B)\cdots dQ^*(z_1)\nonumber\\
&&=\int\cdots\int_{\left|\frac{\sqrt nA_s(\overline{\mathbf X})}{\sqrt{\frac{1}{B}\sum_{b=1}^Bz_b^2}}\right|\leq t_{B,1-\alpha/2}}d\hat Q^*(z_B)\cdots d\hat Q^*(z_1)+R\label{replacement}
\end{eqnarray}
where
\begin{equation}
\hat Q^*(x)=\Phi(x)+\sum_{\substack{j=1,\ldots,\nu\\j\text{\ even}}} n^{-j/2}\hat p_j(x)\phi(x)\label{interim Edgeworth4}
\end{equation}
is a random signed measure (constructed from the random polynomial $\hat p_j$), and $R$ satisfies $|R|\leq Cn^{-(\nu+1)/2}$ for some constant $C$. Here, the polynomials $\hat p_j$ are the ones defined in Theorem \ref{Edgeworth bootstrap} (in Appendix \ref{sec:background Edgeworth}). To this end, denote $\mathcal E$ as the event that
\begin{equation}
\sup_{-\infty<x<\infty}\left|P(\sqrt n\hat A(\overline{\mathbf X}^*)\leq x|\mathcal X_n)-\Phi(x)-\sum_{j=1}^\nu n^{-j/2}\hat p_j(x)\phi(x)\right|\leq Cn^{-(\nu+1)/2}\label{main poly1}
\end{equation}
\begin{equation}
\max_{1\leq j\leq\nu}\sup_{-\infty<x<\infty}(1+|x|)^{-(3j-1)}|\hat p_j(x)|\leq C\label{poly1}
\end{equation}
and
\begin{equation}
\max_{1\leq j\leq\nu}\sup_{-\infty<x<\infty}(1+|x|)^{-(3j-1)}|\hat p_j'(x)|\leq C\label{poly derivative1}
\end{equation}
hold simultaneously. By Theorem \ref{Edgeworth bootstrap}, $\mathcal E$ occurs with probability $1-O(n^{-\lambda})$. Now, conditional on the data $\mathcal X_n$ and for any $z_1,\ldots,z_{B-1}\in\mathbb R$,
\begin{equation}
\int_{\left|\frac{\sqrt nA_s(\overline{\mathbf X})}{\sqrt{\frac{1}{B}\sum_{b=1}^{B}z_b^2}}\right|\leq t_{B,1-\alpha/2}}dQ^*(z_B)\label{interim Edgeworth1}
\end{equation}
is expressible as $1-Q^*(q)+Q^*(-q)$ for some $q\in\mathbb R$. Thus, using the oddness and evenness property of $p_j$ in Theorem \ref{Edgeworth original}, we have, under $\mathcal E$,
\begin{equation}
\sup_{-\infty<z_1,\ldots,z_{B-1}<\infty}\left|\int_{\left|\frac{\sqrt nA_s(\overline{\mathbf X})}{\sqrt{\frac{1}{B}\sum_{b=1}^{B}z_b^2}}\right|\leq t_{B,1-\alpha/2}}dQ^*(z_B)-\int_{\left|\frac{\sqrt nA_s(\overline{\mathbf X})}{\sqrt{\frac{1}{B}\sum_{b=1}^{B}z_b^2}}\right|\leq t_{B,1-\alpha/2}}d\hat Q^*(z_B)\right|\leq Cn^{-(\nu+1)/2}\label{interim Edgeworth2}
\end{equation}
for some $C>0$ by \eqref{main poly1}. Thus, integrating \eqref{interim Edgeworth1} with respect to $dQ^*(z_{B-1})\cdots dQ^*(z_1)$ and using \eqref{interim Edgeworth2}, we get
\begin{eqnarray}
&&\int\cdots\int_{\left|\frac{\sqrt nA_s(\overline{\mathbf X})}{\sqrt{\frac{1}{B}\sum_{b=1}^Bz_b^2}}\right|\leq t_{B,1-\alpha/2}}dQ^*(z_B)\cdots dQ^*(z_1)\nonumber\\
&=&\int\cdots\int_{\left|\frac{\sqrt nA_s(\overline{\mathbf X})}{\sqrt{\frac{1}{B}\sum_{b=1}^Bz_b^2}}\right|\leq t_{B,1-\alpha/2}}d\hat Q^*(z_B)dQ^*(z_{B-1})\cdots dQ^*(z_1)+R_B\label{interim Edgeworth}
\end{eqnarray}
where $R_B$ satisfies $|R_B|\leq Cn^{-(\nu+1)/2}$. Iterating \eqref{interim Edgeworth}, this time considering
$$\int_{\left|\frac{\sqrt nA_s(\overline{\mathbf X})}{\sqrt{\frac{1}{B}\sum_{b=1}^{B}z_b^2}}\right|\leq t_{B,1-\alpha/2}}dQ^*(z_{B-1})$$
and using Fubini's theorem with the signed measure $\hat Q^*$, we have
\begin{eqnarray*}
&&\int\cdots\int_{\left|\frac{\sqrt nA_s(\overline{\mathbf X})}{\sqrt{\frac{1}{B}\sum_{b=1}^Bz_b^2}}\right|\leq t_{B,1-\alpha/2}}d\hat Q^*(z_B)dQ^*(z_{B-1})\cdots dQ^*(z_1)\\
&=&\int\cdots\int_{\left|\frac{\sqrt nA_s(\overline{\mathbf X})}{\sqrt{\frac{1}{B}\sum_{b=1}^Bz_b^2}}\right|\leq t_{B,1-\alpha/2}}d\hat Q^*(z_B)d\hat Q^*(z_{B-1})dQ^*(z_{B-2})\cdots dQ^*(z_1)+R_{B-1}
\end{eqnarray*}
where $R_{B-1}$ satisfies
$$|R_{B-1}|\leq Cn^{-(\nu+1)/2}\int\cdots\int|(\hat Q^*)'(z_B)|dz_BdQ^*(z_{B-2})\cdots dQ^*(z_1)=Cn^{-(\nu+1)/2}\int|(\hat Q^*)'(z_B)|dz_B\leq Cn^{-(\nu+1)/2}$$
by \eqref{poly1} and \eqref{poly derivative1} (where the last $C$ is a different constant from the previous one). Continuing in this fashion, we get
\begin{eqnarray*}
&&\int\cdots\int_{\left|\frac{\sqrt nA_s(\overline{\mathbf X})}{\sqrt{\frac{1}{B}\sum_{b=1}^Bz_b^2}}\right|\leq t_{B,1-\alpha/2}}dQ^*(z_B)\cdots dQ^*(z_1)\nonumber\\
&=&\int\cdots\int_{\left|\frac{\sqrt nA_s(\overline{\mathbf X})}{\sqrt{\frac{1}{B}\sum_{b=1}^Bz_b^2}}\right|\leq t_{B,1-\alpha/2}}d\hat Q^*(z_B)\cdots d\hat Q^*(z_1)+R_B+R_{B-1}+\cdots+R_1
\end{eqnarray*}
where each $R_b$ satisfies $|R_b|\leq Cn^{-(n+1)/2}$. This gives \eqref{replacement}.

Now consider \eqref{complete}, which we can write as
\begin{eqnarray}
&&E\left[\int\cdots\int_{\left|\frac{\sqrt nA_s(\overline{\mathbf X})}{\sqrt{\frac{1}{B}\sum_{b=1}^Bz_b^2}}\right|\leq t_{B,1-\alpha/2}}dQ^*(z_B)\cdots dQ^*(z_1)\right]\nonumber\\
&=&E\left[\int\cdots\int_{\left|\frac{\sqrt nA_s(\overline{\mathbf X})}{\sqrt{\frac{1}{B}\sum_{b=1}^Bz_b^2}}\right|\leq t_{B,1-\alpha/2}}dQ^*(z_B)\cdots dQ^*(z_1);\mathcal E\right]{}\notag\\
&&{}+E\left[\int\cdots\int_{\left|\frac{\sqrt nA_s(\overline{\mathbf X})}{\sqrt{\frac{1}{B}\sum_{b=1}^Bz_b^2}}\right|\leq t_{B,1-\alpha/2}}dQ^*(z_B)\cdots dQ^*(z_1);\mathcal E^c\right]\notag\\
&=&E\left[\int\cdots\int_{\left|\frac{\sqrt nA_s(\overline{\mathbf X})}{\sqrt{\frac{1}{B}\sum_{b=1}^Bz_b^2}}\right|\leq t_{B,1-\alpha/2}}dQ^*(z_B)\cdots dQ^*(z_1);\mathcal E\right]+O(n^{-\lambda}){}\notag\\
&&{}\text{\ \ since $\int\cdots\int_{\left|\frac{\sqrt nA_s(\overline{\mathbf X})}{\sqrt{(1/B)\sum_{b=1}^Bz_b^2}}\right|\leq t_{B,1-\alpha/2}}dQ^*(z_B)\cdots dQ^*(z_1)\leq1$}\notag\\
&=&E\left[\int\cdots\int_{\left|\frac{\sqrt nA_s(\overline{\mathbf X})}{\sqrt{\frac{1}{B}\sum_{b=1}^Bz_b^2}}\right|\leq t_{B,1-\alpha/2}}d\hat Q^*(z_B)\cdots d\hat Q^*(z_1);\mathcal E\right]+O(n^{-(\nu+1)/2})+O(n^{-\lambda}){}\label{interim Edgeworth3}\\
&&{}\text{\ \ by \eqref{replacement}}\notag
\end{eqnarray}
Using \eqref{interim Edgeworth4}, we write the first term in \eqref{interim Edgeworth3} as
\begin{eqnarray}
&&E\left[\int\cdots\int_{\left|\frac{\sqrt nA_s(\overline{\mathbf X})}{\sqrt{\frac{1}{B}\sum_{b=1}^Bz_b^2}}\right|\leq t_{B,1-\alpha/2}}d\hat Q^*(z_B)\cdots d\hat Q^*(z_1);\mathcal E\right]\notag\\
&=&E\Bigg[\int\cdots\int_{\left|\frac{\sqrt nA_s(\overline{\mathbf X})}{\sqrt{\frac{1}{B}\sum_{b=1}^Bz_b^2}}\right|\leq t_{B,1-\alpha/2}}d\left(\Phi(z_B)+\sum_{\substack{j=1,\ldots,\nu\\j\text{\ even}}} n^{-j/2}\hat p_j(z_B)\phi(z_B)\right)\cdots{}\notag\\
&&{}d\left(\Phi(z_1)+\sum_{\substack{j=1,\ldots,\nu\\j\text{\ even}}} n^{-j/2}\hat p_j(z_1)\phi(z_1)\right);\mathcal E\Bigg]\notag\\
&=&E\left[\int\cdots\int_{\left|\frac{\sqrt nA_s(\overline{\mathbf X})}{\sqrt{\frac{1}{B}\sum_{b=1}^Bz_b^2}}\right|\leq t_{B,1-\alpha/2}}d\Phi(z_B)\cdots d\Phi(z_1);\mathcal E\right]{}\notag\\
&&{}+\frac{B}{n}E\left[\int\cdots\int_{\left|\frac{\sqrt nA_s(\overline{\mathbf X})}{\sqrt{\frac{1}{B}\sum_{b=1}^Bz_b^2}}\right|\leq t_{B,1-\alpha/2}}d(\hat p_2(z_B)\phi(z_B))d\Phi(z_{B-1})\cdots d\Phi(z_1);\mathcal E\right]{}\notag\\
&&{}+O\left(\frac{1}{n^2}\right)\label{interim Edgeworth5}
\end{eqnarray}
where the last equality follows by expanding out the product for $\nu\geq2$, using the symmetry among $z_1,\ldots,z_B$ to get the second term, and using \eqref{poly1} and \eqref{poly derivative1} to get the last remainder term.

Next, we can write
\begin{eqnarray}
&&E\left[\int\cdots\int_{\left|\frac{\sqrt nA_s(\overline{\mathbf X})}{\sqrt{\frac{1}{B}\sum_{b=1}^Bz_b^2}}\right|\leq t_{B,1-\alpha/2}}d(\hat p_2(z_B)\phi(z_B))d\Phi(z_{B-1})\cdots d\Phi(z_1);\mathcal E\right]\notag\\
&=&E\left[\int\cdots\int_{\left|\frac{\sqrt nA_s(\overline{\mathbf X})}{\sqrt{\frac{1}{B}\sum_{b=1}^Bz_b^2}}\right|\leq t_{B,1-\alpha/2}}d(p_2(z_B)\phi(z_B))d\Phi(z_{B-1})\cdots d\Phi(z_1);\mathcal E\right]{}\notag\\
&&{}+E\left[\int\cdots\int_{\left|\frac{\sqrt nA_s(\overline{\mathbf X})}{\sqrt{\frac{1}{B}\sum_{b=1}^Bz_b^2}}\right|\leq t_{B,1-\alpha/2}}d((\hat p_2(z_B)-p_2(z_B))\phi(z_B))d\Phi(z_{B-1})\cdots d\Phi(z_1);\mathcal E\right]\notag\\
&=&E\left[\int\cdots\int_{\left|\frac{\sqrt nA_s(\overline{\mathbf X})}{\sqrt{\frac{1}{B}\sum_{b=1}^Bz_b^2}}\right|\leq t_{B,1-\alpha/2}}d(p_2(z_B)\phi(z_B))d\Phi(z_{B-1})\cdots d\Phi(z_1);\mathcal E\right]{}\notag\\
&&{}+E\left[\text{poly}(\hat\mu_{m_1,\ldots,m_d},m_1+\cdots+m_d\leq4)-\text{poly}(\mu_{m_1,\ldots,m_d},m_1+\cdots+m_d\leq4);\mathcal E\right]\label{interim Edgeworth6}
\end{eqnarray}
where $\text{poly}(\hat\mu_{m_1,\ldots,m_d},m_1+\cdots+m_d\leq4)$ and $\text{poly}(\mu_{m_1,\ldots,m_d},m_1+\cdots+m_d\leq4)$ denote the same polynomial of bounded degree in $\hat\mu_{m_1,\ldots,m_d}$ or $\mu_{m_1,\ldots,m_d}$ for $m_1+\cdots+m_d\leq4$, where the coefficients of the polynomial consist of linear combinations of terms in the form
$$\int\cdots\int_{\left|\frac{\sqrt nA_s(\overline{\mathbf X})}{\sqrt{\frac{1}{B}\sum_{b=1}^Bz_b^2}}\right|\leq t_{B,1-\alpha/2}}d(z_B^k\phi(z_B))d\Phi(z_{B-1})\cdots d\Phi(z_1)$$
for some positive integer $k\leq5$, which is absolutely bounded by 
$$\int\cdots\int_{\mathbb R^B}|(z_B^k\phi(z_B))'|dz_B\Phi(z_{B-1})\cdots d\Phi(z_1)\leq C$$
for some constant $C>0$ independent of $k\leq5$. Now, with $E\|\mathbf X\|^l<\infty$ for sufficiently large $l$, uniform integrability gives the convergence on the power moments of sample moments, i.e.,
$$E\left|\hat\mu_{m_1,\ldots,m_d}^k-\mu_{m_1,\ldots,m_d}^k\right|=O\left(\frac{1}{\sqrt n}\right)$$
where $m_1+\cdots+m_d\leq4$ and any $k\leq5$. So we have \begin{eqnarray*}
&&\left|E\left[\text{poly}(\hat\mu_{m_1,\ldots,m_d},m_1+\cdots+m_d\leq4)-\text{poly}(\mu_{m_1,\ldots,m_d},m_1+\cdots+m_d\leq4);\mathcal E\right]\right|\\
&\leq&CE\left[\sum_{m_1+\cdots+m_d\leq4}\left|\hat\mu_{m_1,\ldots,m_d}^k-\mu_{m_1,\ldots,m_d}^k\right|\right]\\
&=&O\left(\frac{1}{\sqrt n}\right)
\end{eqnarray*}
where $C$ in the inequality could be different from the previous $C$. 

On the other hand,
\begin{eqnarray*}
&&E\left[\int\cdots\int_{\left|\frac{\sqrt nA_s(\overline{\mathbf X})}{\sqrt{\frac{1}{B}\sum_{b=1}^Bz_b^2}}\right|\leq t_{B,1-\alpha/2}}d(p_2(z_B)\phi(z_B))d\Phi(z_{B-1})\cdots d\Phi(z_1);\mathcal E\right]\\
&=&E\left[\int\cdots\int_{\left|\frac{\sqrt nA_s(\overline{\mathbf X})}{\sqrt{\frac{1}{B}\sum_{b=1}^Bz_b^2}}\right|\leq t_{B,1-\alpha/2}}d(p_2(z_B)\phi(z_B))d\Phi(z_{B-1})\cdots d\Phi(z_1)\right]+O(n^{-\lambda})
\end{eqnarray*}
since
\begin{eqnarray*}
&&E\left[\int\cdots\int_{\left|\frac{\sqrt nA_s(\overline{\mathbf X})}{\sqrt{\frac{1}{B}\sum_{b=1}^Bz_b^2}}\right|\leq t_{B,1-\alpha/2}}d(p_2(z_B)\phi(z_B))d\Phi(z_{B-1})\cdots d\Phi(z_1);\mathcal E^c\right]\\
&\leq&E\left[\int\cdots\int |(p_2(z_B)\phi(z_B))'|dz_Bd\Phi(z_{B-1})\cdots d\Phi(z_1);\mathcal E^c\right]=O(n^{-\lambda})
\end{eqnarray*}
So \eqref{interim Edgeworth6} becomes
\begin{equation}
E\left[\int\cdots\int_{\left|\frac{\sqrt nA_s(\overline{\mathbf X})}{\sqrt{\frac{1}{B}\sum_{b=1}^Bz_b^2}}\right|\leq t_{B,1-\alpha/2}}d(p_2(z_B)\phi(z_B))d\Phi(z_{B-1})\cdots d\Phi(z_1)\right]+O(n^{-\lambda})+O\left(\frac{1}{\sqrt n}\right)\label{interim Edgeworth7}
\end{equation}
Similarly,
\begin{eqnarray}
&&E\left[\int\cdots\int_{\left|\frac{\sqrt nA_s(\overline{\mathbf X})}{\sqrt{\frac{1}{B}\sum_{b=1}^Bz_b^2}}\right|\leq t_{B,1-\alpha/2}}d\Phi(z_B)\cdots d\Phi(z_1);\mathcal E\right]\notag\\
&=&E\left[\int\cdots\int_{\left|\frac{\sqrt nA_s(\overline{\mathbf X})}{\sqrt{\frac{1}{B}\sum_{b=1}^Bz_b^2}}\right|\leq t_{B,1-\alpha/2}}d\Phi(z_B)\cdots d\Phi(z_1)\right]+O(n^{-\lambda})\label{interim Edgeworth8}
\end{eqnarray}

Combining \eqref{interim Edgeworth3}, \eqref{interim Edgeworth5}, \eqref{interim Edgeworth7} and \eqref{interim Edgeworth8}, we can write \eqref{complete} as
\begin{eqnarray}
&&E\left[\int\cdots\int_{\left|\frac{\sqrt nA_s(\overline{\mathbf X})}{\sqrt{\frac{1}{B}\sum_{b=1}^Bz_b^2}}\right|\leq t_{B,1-\alpha/2}}d\Phi(z_B)\cdots d\Phi(z_1)\right]+O(n^{-\lambda}){}\notag\\
&&{}+\frac{B}{n}\left\{E\left[\int\cdots\int_{\left|\frac{\sqrt nA_s(\overline{\mathbf X})}{\sqrt{\frac{1}{B}\sum_{b=1}^Bz_b^2}}\right|\leq t_{B,1-\alpha/2}}d(p_2(z_B)\phi(z_B))d\Phi(z_{B-1})\cdots d\Phi(z_1)\right]+O(n^{-\lambda})+O\left(\frac{1}{\sqrt n}\right)\right\}{}\notag\\
&&{}+O\left(\frac{1}{n^2}\right)+O(n^{-(\nu+1)/2})+O(n^{-\lambda})\notag\\
&=&E\left[\int\cdots\int_{\left|\frac{\sqrt nA_s(\overline{\mathbf X})}{\sqrt{\frac{1}{B}\sum_{b=1}^Bz_b^2}}\right|\leq t_{B,1-\alpha/2}}d\Phi(z_B)\cdots d\Phi(z_1)\right]{}\notag\\
&&{}+\frac{B}{n}E\left[\int\cdots\int_{\left|\frac{\sqrt nA_s(\overline{\mathbf X})}{\sqrt{\frac{1}{B}\sum_{b=1}^Bz_b^2}}\right|\leq t_{B,1-\alpha/2}}d(p_2(z_B)\phi(z_B))d\Phi(z_{B-1})\cdots d\Phi(z_1)\right]{}\notag\\
&&{}+O(n^{-\lambda})+O\left(\frac{1}{n^{3/2}}\right)+O(n^{-(\nu+1)/2})\notag\\
&=&E\left[\int\cdots\int_{\left|\frac{\sqrt nA_s(\overline{\mathbf X})}{\sqrt{\frac{1}{B}\sum_{b=1}^Bz_b^2}}\right|\leq t_{B,1-\alpha/2}}d\Phi(z_B)\cdots d\Phi(z_1)\right]{}\notag\\
&&{}+\frac{B}{n}E\left[\int\cdots\int_{\left|\frac{\sqrt nA_s(\overline{\mathbf X})}{\sqrt{\frac{1}{B}\sum_{b=1}^Bz_b^2}}\right|\leq t_{B,1-\alpha/2}}d(p_2(z_B)\phi(z_B))d\Phi(z_{B-1})\cdots d\Phi(z_1)\right]+O\left(\frac{1}{n^{3/2}}\right)\label{interim Edgeworth9}
\end{eqnarray}
when $\lambda\geq3/2$ and $\nu\geq2$ so that $(\nu+1)/2\geq3/2$.

Lastly, using Theorem \ref{Edgeworth original}, and a similar argument as before with Fubini's theorem and the observation that
$$P\left(\left|\frac{\sqrt nA_s(\overline{\mathbf X})}{\sqrt{\frac{1}{B}\sum_{b=1}^Bz_b^2}}\right|\leq t_{B,1-\alpha/2}\Bigg|z_1,\ldots,z_B\right)$$
is expressible as $P(\sqrt nA_s(\overline{\mathbf X})\leq q)-P(\sqrt nA_s(\overline{\mathbf X})\leq-q)$ for some $q\in\mathbb R$, we have
\begin{eqnarray}
&&E\left[\int\cdots\int_{\left|\frac{\sqrt nA_s(\overline{\mathbf X})}{\sqrt{\frac{1}{B}\sum_{b=1}^Bz_b^2}}\right|\leq t_{B,1-\alpha/2}}d\Phi(z_B)\cdots d\Phi(z_1)\right]\notag\\
&=&\int\cdots\int_{\left|\frac{z_0}{\sqrt{\frac{1}{B}\sum_{b=1}^Bz_b^2}}\right|\leq t_{B,1-\alpha/2}}d\Phi(z_B)\cdots d\Phi(z_1)d\Phi(z_0){}\notag\\
&&{}+\frac{1}{n}\int\cdots\int_{\left|\frac{z_0}{\sqrt{\frac{1}{B}\sum_{b=1}^Bz_b^2}}\right|\leq t_{B,1-\alpha/2}}d\Phi(z_B)\cdots d\Phi(z_1)d(q_2(z_0)\phi(z_0))+o\left(\frac{1}{n}\right)\label{interim Edgeworth10}
\end{eqnarray}
when $\nu\geq2$. Similarly,
\begin{eqnarray}
&&E\left[\int\cdots\int_{\left|\frac{\sqrt nA_s(\overline{\mathbf X})}{\sqrt{\frac{1}{B}\sum_{b=1}^Bz_b^2}}\right|\leq t_{B,1-\alpha/2}}d(p_2(z_B)\phi(z_B))d\Phi(z_{B-1})\cdots d\Phi(z_1)\right]\notag\\
&=&\int\cdots\int_{\left|\frac{z_0}{\sqrt{\frac{1}{B}\sum_{b=1}^Bz_b^2}}\right|\leq t_{B,1-\alpha/2}}d(p_2(z_B)\phi(z_B))d\Phi(z_{B-1})\cdots d\Phi(z_0)+O\left(\frac{1}{n}\right)\label{interim Edgeworth11}
\end{eqnarray}
Thus, using \eqref{interim Edgeworth10} and \eqref{interim Edgeworth11}, we can write \eqref{interim Edgeworth9} as
\begin{eqnarray}
&&\int\cdots\int_{\left|\frac{z_0}{\sqrt{\frac{1}{B}\sum_{b=1}^Bz_b^2}}\right|\leq t_{B,1-\alpha/2}}d\Phi(z_B)\cdots d\Phi(z_1)d\Phi(z_0){}\notag\\
&&{}+\frac{1}{n}\Bigg\{B\int\cdots\int_{\left|\frac{z_0}{\sqrt{\frac{1}{B}\sum_{b=1}^Bz_b^2}}\right|\leq t_{B,1-\alpha/2}}d(p_2(z_B)\phi(z_B))d\Phi(z_{B-1})\cdots d\Phi(z_0){}\notag\\
&&{}+\int\cdots\int_{\left|\frac{z_0}{\sqrt{\frac{1}{B}\sum_{b=1}^Bz_b^2}}\right|\leq t_{B,1-\alpha/2}}d\Phi(z_B)\cdots d\Phi(z_1)d(q_2(z_0)\phi(z_0))\Bigg\}+o\left(\frac{1}{n}\right)\label{interim Edgeworth12}
\end{eqnarray}
which gives the first part of the theorem. Note that when $\nu\geq3$, the remainder term in \eqref{interim Edgeworth10} is refined to $o(1/n^{3/2})$ and, as a result, the remainder term in \eqref{interim Edgeworth12} is refined to $O(1/n^{3/2})$.

The second part of the theorem follows analogously by replacing the event $\left|\frac{g(\overline{\mathbf X})-g(\bm\mu)}{\sqrt{\frac{1}{B}\sum_{b=1}^B(g(\overline{\mathbf X}^{*b})-g(\overline{\mathbf X}))^2}}\right|\leq t_{B,1-\alpha/2}$
with $\frac{g(\overline{\mathbf X})-g(\bm\mu)}{\sqrt{\frac{1}{B}\sum_{b=1}^B(g(\overline{\mathbf X}^{*b})-g(\overline{\mathbf X}))^2}}\leq t_{B,1-\alpha}$
or $\frac{g(\overline{\mathbf X})-g(\bm\mu)}{\sqrt{\frac{1}{B}\sum_{b=1}^B(g(\overline{\mathbf X}^{*b})-g(\overline{\mathbf X}))^2}}\geq t_{B,1-\alpha}$. Consider now $\lambda>1/2$. Because of the aforementioned change of the considered event, now \eqref{interim Edgeworth4} is replaced by
$$\hat Q^*(x)=\Phi(x)+\sum_{j=1}^\nu n^{-j/2}\hat p_j(x)\phi(x)$$
and \eqref{interim Edgeworth5} becomes, in the upper interval case,
\begin{eqnarray*}
&=&E\left[\int\cdots\int_{\frac{\sqrt nA_s(\overline{\mathbf X})}{\sqrt{\frac{1}{B}\sum_{b=1}^Bz_b^2}}\leq t_{B,1-\alpha}}d\Phi(z_B)\cdots d\Phi(z_1);\mathcal E\right]{}\notag\\
&&{}+\frac{B}{\sqrt n}E\left[\int\cdots\int_{\frac{\sqrt nA_s(\overline{\mathbf X})}{\sqrt{\frac{1}{B}\sum_{b=1}^Bz_b^2}}\leq t_{B,1-\alpha}}d(\hat p_1(z_B)\phi(z_B))d\Phi(z_{B-1})\cdots d\Phi(z_1);\mathcal E\right]+O\left(\frac{1}{n}\right)
\end{eqnarray*}
for $\nu\geq1$ giving a modified \eqref{interim Edgeworth9} as
\begin{eqnarray}
&&E\left[\int\cdots\int_{\frac{\sqrt nA_s(\overline{\mathbf X})}{\sqrt{\frac{1}{B}\sum_{b=1}^Bz_b^2}}\leq t_{B,1-\alpha}}d\Phi(z_B)\cdots d\Phi(z_1)\right]{}\notag\\
&&{}+\frac{B}{\sqrt n}E\left[\int\cdots\int_{\frac{\sqrt nA_s(\overline{\mathbf X})}{\sqrt{\frac{1}{B}\sum_{b=1}^Bz_b^2}}\leq t_{B,1-\alpha}}d(p_1(z_B)\phi(z_B))d\Phi(z_{B-1})\cdots d\Phi(z_1)\right]+o\left(\frac{1}{\sqrt n}\right)\label{interim remark}
\end{eqnarray}
when $\lambda>1/2$ and $\nu\geq1$. Moreover, \eqref{interim Edgeworth10} becomes
\begin{eqnarray}
&&\int\cdots\int_{\frac{z_0}{\sqrt{\frac{1}{B}\sum_{b=1}^Bz_b^2}}\leq t_{B,1-\alpha}}d\Phi(z_B)\cdots d\Phi(z_1)d\Phi(z_0){}\notag\\
&&{}+\frac{1}{\sqrt n}\int\cdots\int_{\frac{z_0}{\sqrt{\frac{1}{B}\sum_{b=1}^Bz_b^2}}\leq t_{B,1-\alpha}}d\Phi(z_B)\cdots d\Phi(z_1)d(q_1(z_0)\phi(z_0))+o\left(\frac{1}{\sqrt n}\right)\label{interim remark1}
\end{eqnarray}
and \eqref{interim Edgeworth11} becomes
$$\int\cdots\int_{\frac{z_0}{\sqrt{\frac{1}{B}\sum_{b=1}^Bz_b^2}}\leq t_{B,1-\alpha}}d(p_1(z_B)\phi(z_B))d\Phi(z_{B-1})\cdots d\Phi(z_0)+O\left(\frac{1}{\sqrt n}\right)$$
giving rise to a modified \eqref{interim Edgeworth12} as
\begin{eqnarray}
&&\int\cdots\int_{\frac{z_0}{\sqrt{\frac{1}{B}\sum_{b=1}^Bz_b^2}}\leq t_{B,1-\alpha}}d\Phi(z_B)\cdots d\Phi(z_1)d\Phi(z_0){}\notag\\
&&{}+\frac{1}{\sqrt n}\Bigg\{B\int\cdots\int_{\frac{z_0}{\sqrt{\frac{1}{B}\sum_{b=1}^Bz_b^2}}\leq t_{B,1-\alpha}}d(p_1(z_B)\phi(z_B))d\Phi(z_{B-1})\cdots d\Phi(z_0){}\notag\\
&&{}+\int\cdots\int_{\frac{z_0}{\sqrt{\frac{1}{B}\sum_{b=1}^Bz_b^2}}\leq t_{B,1-\alpha}}d\Phi(z_B)\cdots d\Phi(z_1)d(q_1(z_0)\phi(z_0))\Bigg\}+o\left(\frac{1}{\sqrt n}\right)\label{interim remark2}
\end{eqnarray}
which gives the upper interval case in second part of the theorem. The lower interval case follows analogously. Moreover, when $\lambda\geq1$ and $\nu\geq2$, the remainder term in \eqref{interim remark} is refined to $O(1/n)$, the remainder term in \eqref{interim remark1} is refined to $O(1/n)$ and, as a result, the remainder term in \eqref{interim remark2} is refined to $O(1/n)$.
\end{proof}

\begin{proof}[Proof of Theorem \ref{thm:se}]
Using the same argument as in \eqref{proof weak convergence} in the proof of Theorem \ref{main}, we have
$\sqrt nS\Rightarrow\sigma\sqrt{\frac{\chi^2_B}{B}}$ and hence
$$\mathbb P_n\left(\sigma\sqrt{\frac{\chi^2_{\alpha/2,B}}{B}}\leq\sqrt nS\leq\sigma\sqrt{\frac{\chi^2_{1-\alpha/2,B}}{B}}\right)\to1-\alpha$$
The conclusion then follows.
\end{proof}

\begin{proof}[Proof of Theorem \ref{main multi}]
A straightforward modification of Proposition \ref{joint prop} from univariate to multivariate $\sqrt n(\hat\psi_n-\psi)$ and $\sqrt n(\psi_n^{*b}-\hat\psi_n)$ gives the asymptotic
\begin{equation}
\sqrt n(\hat\psi_n-\psi,\ \psi_n^{*1}-\hat\psi_n,\ldots,\ \psi_n^{*B}-\hat\psi_n)\Rightarrow(Z_0,Z_{1},\ldots,Z_{B})\label{joint convergence}
\end{equation}
where $Z_0,Z_{1},\ldots,Z_{B}\in\mathbb R^d$ are i.i.d. $N(0,\Sigma)$. By the continuous mapping theorem, we have
$$(\hat\psi_n-\psi)^\top S^{-1}(\hat\psi_n-\psi)=(\sqrt n(\hat\psi_n-\psi))^\top(n S)^{-1}(\sqrt n(\hat\psi_n-\psi))\Rightarrow T^2_{d,B}$$
where $T^2_{d,B}$ denotes Hotelling's $T^2$ variable with parameters $d$ and $B$. Hence
$$\mathbb P_n(\psi\in\mathcal R)=\mathbb P_n\left((\hat\psi_n-\psi)^\top S^{-1}(\hat\psi_n-\psi)\leq T^2_{d,B,1-\alpha}\right)\to1-\alpha$$
which concludes the theorem.
\end{proof}

\begin{proof}[Proof of Proposition \ref{HD multi}]
This is the same as Proposition \ref{HD} except we now consider general $d$ when using Theorem \ref{thm:delta empirical} in Appendix \ref{sec:sufficient proof}.
\end{proof}

\section{Further Details for Section \ref{sec:double}}
\subsection{Proofs for the Beginning of Section \ref{sec:double}}
\begin{proof}[Proof of Proposition \ref{technical moment}]
We focus on $\tau^2(\cdot)$ as the argument for $\kappa_3(\cdot)$ is the same. Suppose $\tau^2(\cdot)$ is Hadamard differentiable and satisfies the assumptions in Proposition \ref{HD}. Then, by Proposition \ref{HD}, Assumption \ref{bp} holds for $\tau^2(\cdot)$ and we have $\sqrt n(\tau^2(\hat P_n)-\tau^2(P))$ weakly converges to a tight random variable. Hence $\tau^2(\hat P_n)\stackrel{p}{\to}\tau^2(P)$ by the Slutsky theorem. Moreover, by Proposition \ref{joint prop}, we also know $\sqrt n(\tau^2(P_n^*)-\tau^2(\hat P_n))$ weakly converges to a tight variable, and hence $\tau^2(P_n^*)-\tau^2(\hat P_n)\stackrel{p}{\to}0$ by the Slutsky theorem again. Thus, $\tau^2(P_n^*)-\tau^2(P)=(\tau^2(P_n^*)-\tau^2(\hat P_n))+(\tau^2(\hat P_n)-\tau^2(P))\stackrel{p}{\to}0$ once again by the Slutsky theorem which concludes the proposition.
\end{proof}


\begin{proof}[Proof of Theorem \ref{joint1}]
We divide the proof into two steps:

\noindent\underline{Step 1.}
We first show the convergence of the following joint distribution
\begin{eqnarray}
&&\Bigg(\frac{\sqrt n(\hat\psi_n-\psi)}{\sigma},\frac{\sqrt n(\psi_n^{*1}-\hat\psi_n)}{\sigma},\ldots,\frac{\sqrt n(\psi_n^{*B}-\hat\psi_n)}{\sigma},{}\notag\\
&&\frac{\sqrt{ R_0}(\hat{\hat\psi}_{n,R_0}-\hat\psi_n)}{\tau},\frac{\sqrt{R}(\psi_{n,R}^{**1}-\psi_n^{*1})}{\tau},\ldots,\frac{\sqrt{R}(\psi_{n,R}^{**B}-\psi_n^{*B})}{\tau}\Bigg)\notag\\
&\Rightarrow&(Z_0,Z_1,\ldots,Z_B,W_0,W_1,\ldots,W_B)\label{joint elementary}
\end{eqnarray}
where $Z_0,Z_1,\ldots,Z_B,W_0,W_1,\ldots,W_B\stackrel{i.i.d.}{\sim}N(0,1)$.

For convenience, denote
\begin{align*}
Z_0^n&=\frac{\sqrt n(\hat\psi_n-\psi)}{\sigma}\\
Z_b^n&=\frac{\sqrt n(\psi_n^{*b}-\hat\psi_n)}{\sigma},\ b=1,\ldots,B\\
W_0^n&=\frac{\sqrt{R_0}(\hat{\hat\psi}_{n,R_0}-\hat\psi_n)}{\tau}\\
W_b^n&=\frac{\sqrt{R}(\psi_{n,R}^{**b}-\psi_n^{*b})}{\tau},\ b=1,\ldots,B
\end{align*}
Also, denote $\hat\tau^n=\tau(\hat P_n)$ and $\hat\kappa_3^n=\kappa_3(\hat P_n)$.

Then, we have, for any fixed real constants $z_b,w_b$ for $b=0,\ldots,B$,
\begin{eqnarray}
&&\left|P\left(Z_b^n\leq z_b,W_b^n\leq w_b,b=0,\ldots,B\right)-\prod_{b=0}^B\Phi(z_b)\Phi(w_b)\right|\notag\\
&=&\left|E\left[I(Z_0^n\leq z_0,W_0^n\leq w_0)P\left(Z_b^n\leq z_b,W_b^n\leq w_b,b=1,\ldots,B\Big|\hat P_n,\xi_{R_0}\right)\right]-\prod_{b=0}^B\Phi(z_b)\Phi(w_b)\right|\notag\\
&&\text{\ \ \ \ \ \ \  \ \ \ \ \ \ where $\xi_{R_0}$ refers to all the computation randomness in generating $\hat{\hat\psi}_{n,R_0}$,}\notag\\
&&\text{\ \ \ \ \ \ \  \ \ \ \ \ \ by noting that $Z_0^n$ and $W_0^n$ are determined solely by $\hat P_n$ and $\xi_{R_0}$}\notag\\
&=&\Bigg|E\left[I(Z_0^n\leq z_0,W_0^n\leq w_0)P\left(Z_b^n\leq z_b,W_b^n\leq w_b,b=1,\ldots,B\Big|\hat P_n,\xi_{R_0}\right)\right]{}\notag\\
&&{}-P(Z_0^n\leq z_0,W_0^n\leq w_0)\prod_{b=1}^B\Phi(z_b)\Phi(w_b)+P(Z_0^n\leq z_0,W_0^n\leq w_0)\prod_{b=1}^B\Phi(z_b)\Phi(w_b)-\prod_{b=0}^B\Phi(z_b)\Phi(w_b)\Bigg|\notag\\
&\leq&E\left[\left|P\left(Z_b^n\leq z_b,W_b^n\leq w_b,b=1,\ldots,B\Big|\hat P,\xi_{R_0}\right)-\prod_{b=1}^B\Phi(z_b)\Phi(w_b)\right|;Z_0^n\leq z_0,\ W_0^n\leq w_0\right]{}\notag\\
&&{}+\left|P(Z_0^n\leq z_0,\ W_0^n\leq w_0)-\Phi(z_0)\Phi(w_0)\right|\prod_{b=1}^B\Phi(z_b)\Phi(w_b){}\notag\\
&&{}\text{\ \ \ \ \ \ \  \ \ \ \ \ \ by the triangle inequality}\notag\\
&\leq&E\left|P\left(Z_b^n\leq z_b,W_b^n\leq w_b,b=1,\ldots,B\Big|\hat P,\xi_{R_0}\right)-\prod_{b=1}^B\Phi(z_b)\Phi(w_b)\right|{}\notag\\
&&{}+\left|P(Z_0^n\leq z_0,\ W_0^n\leq w_0)-\Phi(z_0)\Phi(w_0)\right|\prod_{b=1}^B\Phi(z_b)\Phi(w_b)\label{interim}
\end{eqnarray}
We consider the two terms in \eqref{interim} one by one, and we consider the second term first. By conditioning on $\hat P_n$ and telescoping, we have
\begin{eqnarray}
&&\left|P(Z_0^n\leq z_0,\ W_0^n\leq w_0)-\Phi(z_0)\Phi(w_0)\right|\notag\\
&\leq&E|P(W_0^n\leq w_0|\hat P_n)-\Phi(w_0)|+|P(Z_0^n\leq z_0)-\Phi(z_0)|\Phi(w_0)\label{interim1}
\end{eqnarray}
The first term in \eqref{interim1} can be bounded from above by
$$P(|\hat\tau^n-\tau|>\delta\text{\ or\ }\hat\kappa_3^n>\kappa_3+\delta)+E[|P(W_0^n\leq w_0|\hat P_n)-\Phi(w_0)|;|\hat\tau^n-\tau|\leq\delta,\hat\kappa_3^n\leq\kappa_3+\delta]$$
for some small $\delta>0$, where the second term can be written as
\begin{eqnarray*}
&&E\left[\left|P\left(\frac{W_0^n\tau}{\hat\tau^n}\leq\frac{w_0\tau}{\hat\tau^n}\Bigg|\hat P_n\right)-\Phi\left(\frac{w_0\tau}{\hat\tau^n}\right)\right|+\left|\Phi\left(\frac{w_0\tau}{\hat\tau^n}\right)-\Phi(w_0)\right|;|\hat\tau^n-\tau|\leq\delta,\hat\kappa_3^n\leq\kappa_3+\delta\right]\\
&\leq&E\left[\frac{C_1\hat\kappa_3^n}{(\hat\tau^n)^3\sqrt{ R_0}}+\left|\Phi\left(\frac{w_0\tau}{\hat\tau^n}\right)-\Phi(w_0)\right|;|\hat\tau^n-\tau|\leq\delta,\hat\kappa_3^n\leq\kappa_3+\delta\right]
\end{eqnarray*}
for some constant $C_1>0$ by the Berry-Esseen theorem, which is further bounded from above by
$$E\left[\frac{C_1(\kappa_3+\delta)}{(\tau-\delta)^3\sqrt{ R_0}}+\frac{C_2\delta}{\tau-\delta};|\hat\tau^n-\tau|\leq\delta,\hat\kappa_3^n\leq\kappa_3+\delta\right]\leq\frac{C_1(\kappa_3+\delta)}{(\tau-\delta)^3\sqrt{ R_0}}+\frac{C_2\delta}{\tau-\delta}$$
for some constant $C_2>0$, which follows from applying the mean value theorem to the function $\Phi(w_0\tau/\cdot)$ and noting that the function $x\phi(x)$ is bounded over $x\in\mathbb R$. Hence, the first term in \eqref{interim1} is bounded from above by
\begin{equation}
P(|\hat\tau^n-\tau|>\delta\text{\ or\ }\hat\kappa_3^n>\kappa_3+\delta)+\frac{C_1(\kappa_3+\delta)}{(\tau-\delta)^3\sqrt{R_0}}+\frac{C_2\delta}{\tau-\delta}\label{interim revise}
\end{equation}
Since $\hat\tau^n\stackrel{p}{\to}\tau$ and $\hat\kappa_3^n\stackrel{p}{\to}\kappa_3$
in Assumption \ref{assumption:sim}, given arbitrary $\epsilon>0$, we can choose a small enough $\delta>0$, a large enough $n$ and a large enough $R_0$ such that \eqref{interim revise} is bounded above by $\epsilon$. Thus, the first term in \eqref{interim1} converges to 0 as $n,R_0\to\infty$. The second term in \eqref{interim1} converges to 0 as $n\to\infty$ by Assumption \ref{bp}. We therefore have the second term in \eqref{interim} go to 0 as $n,R_0\to\infty$.

We handle the first term in \eqref{interim} with a similar argument. We have
\begin{eqnarray}
&&E\left|P\left(Z_b^n\leq z_b,W_b^n\leq w_b,b=1,\ldots,B\Big|\hat P_n,\xi_{R_0}\right)-\prod_{b=1}^B\Phi(z_b)\Phi(w_b)\right|\notag\\
&=&E\Bigg|E\left[I(Z_b^n\leq z_b,b=1,\ldots,B)P\left(W_b^n\leq w_b,b=1,\ldots,B\Big|P_n^{*b},b=1,\ldots,B,\hat P_n,\xi_{R_0}\right)\Big|\hat P_n,\xi_{R_0}\right]{}\notag\\
&&{}-P\left(Z_b^n\leq z_b,b=1,\ldots,B\Big|\hat P_n,\xi_{R_0}\right)\prod_{b=1}^B\Phi(w_b)+P\left(Z_b^n\leq z_b,b=1,\ldots,B\Big|\hat P_n,\xi_{R_0}\right)\prod_{b=1}^B\Phi(w_b){}\notag\\
&&{}-\prod_{b=1}^B\Phi(z_b)\Phi(w_b)\Bigg|\notag\\
&&\text{\ \ \ \ \ \ \  \ \ \ \ \ \ by noting that $Z_b^n,b=1,\ldots,B$ are determined solely by $P_n^{*b},b=1,\ldots,B$ and $\hat P_n$}\notag\\
&\leq&E\left|P\left(W_b^n\leq w_b,b=1,\ldots,B\Big|P_n^{*b},b=1,\ldots,B,\hat P_n,\xi_{R_0}\right)-\prod_{b=1}^B\Phi(w_b)\right|\notag\\
&&{}+E\left|P\left(Z_b^n\leq z_b,b=1,\ldots,B\Big|\hat P_n\right)-\prod_{b=1}^B\Phi(z_b)\right|\prod_{b=1}^B\Phi(w_b)\notag\\
&&\text{\ \ \ \ \ \ \  \ \ \ \ \ \ by the triangle and Jensen inequalities}\notag\\
&=&E\left|\prod_{b=1}^BP\left(W_b^n\leq w_b\Big|P_n^{*b}\right)-\prod_{b=1}^B\Phi(w_b)\right|+E\left|P\left(Z_b^n\leq z_b,b=1,\ldots,B\Big|\hat P_n\right)-\prod_{b=1}^B\Phi(z_b)\right|\prod_{b=1}^B\Phi(w_b)\notag\\
&&\text{\ \ \ \ \ \ \  \ \ \ \ \ \ by the conditional independence of $W_b^n,b=1,\ldots,B$ given $P_n^{*b},b=1,\ldots,B$ and that,}{}\notag\\
&&{}\text{\ \ \ \ \ \ \  \ \ \ \ \ \ given $P_n^{*b}$, $W_b^n$ is independent of $P_n^{*k}$ for $k\neq b$ and $\hat P_n$ and $\xi_{R_0}$}\notag\\
&=&E\Bigg|\prod_{b=1}^BP\left(W_b^n\leq w_b\Big|P_n^{*b}\right)-\Phi(w_1)\prod_{b=2}^BP\left(W_b^n\leq w_b\Big|P_n^{*b}\right)+\Phi(w_1)\prod_{b=2}^BP\left(W_b^n\leq w_b\Big|P_n^{*b}\right){}\notag\\
&&{}-\Phi(w_1)\Phi(w_2)\prod_{b=3}^BP\left(W_b^n\leq w_b\Big|P_n^{*b}\right)+\Phi(w_1)\Phi(w_2)\prod_{b=3}^BP\left(W_b^n\leq w_b\Big|P_n^{*b}\right)-\cdots-\prod_{b=1}^B\Phi(w_b)\Bigg|{}\notag\\
&&{}+E\left|P\left(Z_b^n\leq z_b,b=1,\ldots,B\Big|\hat P_n\right)-\prod_{b=1}^B\Phi(z_b)\right|\prod_{b=1}^B\Phi(w_b)\notag\\
&\leq&E\left[\sum_{b=1}^B\left|P\left(W_b^n\leq w_b\Big|P_n^{*b}\right)-\Phi(w_b)\right|\right]+E\left|P\left(Z_b^n\leq z_b,b=1,\ldots,B\Big|\hat P_n\right)-\prod_{b=1}^B\Phi(z_b)\right|\prod_{b=1}^B\Phi(w_b)\notag
\\
&&\text{\ \ \ \ \ \ \  \ \ \ \ \ \ by the triangle inequality}\label{interim2}
\end{eqnarray}
Note that in the first term in \eqref{interim2}, each
$$E\left|P\left(W_b^n\leq w_b\big|P^{*b}\right)-\Phi(w_b)\right|$$
converges to 0 as $n,R\to\infty$ by the same argument as for the first term in \eqref{interim1}, except that we use the bootstrapped moments $\tau(P_n^*)\stackrel{p}{\to}\tau$ and $\kappa_3(P_n^*)\stackrel{p}{\to}\kappa_3$ in Assumption \ref{assumption:sim} instead of $\hat\tau^n\stackrel{p}{\to}\tau$ and $\hat\kappa_3^n\stackrel{p}{\to}\kappa_3$. The second term in \eqref{interim2} also goes to 0 as $n\to\infty$ by Assumption \ref{bp} and the dominated convergence theorem.

Therefore, \eqref{interim} goes to 0. This proves \eqref{joint elementary}.
\\

\noindent\underline{Step 2.}
We consider the following decompositions
$$\hat{\hat\psi}_{n,R_0}-\psi=(\hat{\hat\psi}_{n,R_0}-\hat\psi_n)+(\hat\psi_n-\psi)$$
and
$$\psi_{n,R}^{**b}-\hat{\hat\psi}_{n,R_0}=(\psi^{**b}_{n,R}-\psi_n^{*b})+(\psi_n^{*b}-\hat\psi_n)+(\hat\psi_n-\hat{\hat\psi}_{n,R_0})$$
Then apply the continuous mapping theorem to get
\begin{eqnarray*}
&&\sqrt n\left(\hat{\hat\psi}_{n,R_0}-\psi,\ \psi_{n,R}^{**1}-\hat{\hat\psi}_{n,R_0},\ldots,\ \psi_{n,R}^{**B}-\hat{\hat\psi}_{n,R_0}\right)\\
&=&\Bigg(\sqrt{\frac{n}{R_0}}\sqrt{R_0}(\hat{\hat\psi}_{n,R_0}-\hat\psi_n)+\sqrt n(\hat\psi_n-\psi),
\sqrt{\frac{n}{R}}\sqrt{R}(\psi_{n,R}^{**1}-\psi_n^{*1})+\sqrt n(\psi_n^{*1}-\hat\psi_n)-\sqrt{\frac{n}{R_0}}\sqrt{ R_0}(\hat{\hat\psi}_{n,R_0}-\hat\psi_n),{}\\
&&{}\ldots,\sqrt{\frac{n}{R}}\sqrt{R}(\psi_{n,R}^{**B}-\psi_n^{*B})+\sqrt n(\psi_n^{*B}-\hat\psi_n)-\sqrt{\frac{n}{R_0}}\sqrt{ R_0}(\hat{\hat\psi}_{n,R_0}-\hat\psi_n)\Bigg)\\
&\Rightarrow&\left(\frac{\tau}{\sqrt{p_0}}W_0+\sigma Z_0,\ \frac{\tau}{\sqrt p}W_1+\sigma Z_1-\frac{\tau}{\sqrt{p_0}}W_0,\ldots,\ \frac{\tau}{\sqrt p}W_B+\sigma Z_B-\frac{\tau}{\sqrt{p_0}}W_0\right)
\end{eqnarray*}
This concludes the theorem.
\end{proof}

\subsection{Proofs and Additional Discussions for Section \ref{sec:cen}}
We first prove Theorem \ref{thm1}:

\begin{proof}[Proof of Theorem \ref{thm1}]
Consider the pivotal statistic $T_{O}=(\hat{\hat\psi}_{n,R_0}-\psi)/S_{O}$. We argue that $q_{O,1-\alpha/2}$ defined in \eqref{q center} satisfies
$$\liminf_{n\to\infty}P(\psi\in\mathcal I_{O})=\liminf_{n\to\infty}P\left(\left|T_{O}\right|\leq q_{O,1-\alpha/2}\right)\geq1-\alpha$$
which would conclude that $\mathcal I_{O}$ is an asymptotically valid $(1-\alpha)$-level confidence interval.

To this end, by Theorem \ref{joint1} and the continuous mapping theorem, as $n\to\infty$, $T_{O}$ converges weakly to
\begin{equation}
\frac{\sigma Z_0+\frac{\tau}{\sqrt{ p_0}}W_0}{\sqrt{\frac{1}{B}\sum_{b=1}^B\left(\sigma Z_b+\frac{\tau}{\sqrt{p}}W_b-\frac{\tau}{\sqrt{p_0}}W_0\right)^2}}\label{pivotal}
\end{equation}
where $Z_0,Z_1,\ldots,Z_B,W_0,W_1,\ldots,W_b\stackrel{i.i.d.}{\sim}N(0,1)$. A direct inspection on the homogenity of the expression reveals that \eqref{pivotal} only depends on $\sigma$ and $\tau$ through their ratio.  Multiplying by a factor $\sqrt{p_0}/\tau$ on both the numerator and denominator, we rewrite \eqref{pivotal} as
\begin{equation}
\frac{\theta Z_0+W_0}{\sqrt{\frac{1}{B}\sum_{b=1}^B\left(\theta Z_b+\rho W_b-W_0\right)^2}}\label{interim3}
\end{equation}
where $\theta=\sigma\sqrt{p_0}/\tau$ and $\rho=\sqrt{p_0/p}$ as defined before. We will see momentarily that \eqref{interim3} follows the same distribution as \eqref{limiting expression}. For now, recall the distribution function of \eqref{limiting expression} is $F(\cdot;\theta,\rho)$, which has the unknown $\theta$. We have
\begin{eqnarray*}
\liminf_{n\to\infty}P(\psi\in\mathcal I_{O})&=&\liminf_{n\to\infty}P\left(|T_{O}|\leq q_{O,1-\alpha/2}\right)=F\left(q_{O,1-\alpha/2};\theta,\rho\right)-F\left(-q_{O,1-\alpha/2};\theta,\rho\right){}\\
&=&{}2F\left(q_{O,1-\alpha/2};\theta,\rho\right)-1\geq2\min_{\theta\geq0} F\left(q_{O,1-\alpha/2};\theta,\rho\right)-1=1-\alpha
\end{eqnarray*}
where the third equality follows from the symmetry of \eqref{limiting expression} or \eqref{interim3}.


Finally, we see that \eqref{interim3} and \eqref{limiting expression} follow the same distribution, since by expanding the sum of squares in \eqref{interim3} we have
$$\frac{\theta Z_0+W_0}{\sqrt{\frac{1}{B}\sum_{b=1}^B\left(\theta Z_b+\rho W_b-W_0\right)^2}}\\
\stackrel{d}{=}\frac{\theta Z_0+W_0}{\sqrt{(\theta^2+\rho^2)\left(\frac{1}{B}\sum_{b=1}^B(X_b-\bar X)^2+\bar X^2\right)-2\sqrt{\theta^2+\rho^2}\bar XW_0+W_0^2}}$$
where we have written $\theta Z_b+\rho W_b=\sqrt{\theta^2+\rho^2}X_b$ with $X_1,\ldots,X_B\stackrel{i.i.d.}{\sim}N(0,1)$ which are independent of $Z_0,W_0$. Noting that $\sum_{b=1}^B(X_b-\bar X)^2\sim\chi^2_{B-1}$, $\bar X\sim N(0,1/B)$, which are independent by the property of standard normals, we get \eqref{limiting expression}.
\end{proof}

To understand how the additional intricacy from the computation noise affects the interval half-width, we consider the scenario when $B$ grows. The asymptotic distribution of $T_{O}$ given by \eqref{limiting expression} becomes
\begin{equation}
\frac{\theta V_1+V_2}{\sqrt{\theta^2+\rho^2+V_2^2}}\label{interim4}
\end{equation}
with $V_1,V_2\stackrel{i.i.d.}{\sim}N(0,1)$, since $Y/B\to1$ and $V_3^2/B,V_3V_2/\sqrt B\to0$ a.s. in \eqref{limiting expression}.
Correspondingly, $S_{O}$ defined in \eqref{S center} is distributed approximately as
\begin{equation}
\sqrt{\frac{1}{n}\left(\left(\sigma^2+\frac{\tau^2}{p}\right)\frac{Y+V_3^2}{B}-2\frac{\tau}{\sqrt{p_0}}\sqrt{\sigma^2+\frac{\tau^2}{p}}\frac{V_3V_2}{\sqrt B}+\frac{\tau^2}{p_0}V_2^2\right)}\label{interim5}
\end{equation}
which, when $B$ is large, becomes
$$\sqrt{\frac{\sigma^2+\frac{\tau^2}{p}+\frac{\tau^2}{p_0}V_2^2}{n}}$$
So the half-width of $\mathcal I_O$ behaves like
\begin{equation}
\tilde q_{O,1-\alpha/2}\sqrt{\frac{\sigma^2+\frac{\tau^2}{p}+\frac{\tau^2}{p_0}V_2^2}{n}}=\tilde q_{O,1-\alpha/2}\sqrt{\frac{\sigma^2}{n}+\frac{\tau^2}{R}+\frac{\tau^2}{R_0}V_2^2}\label{half-width center}
\end{equation}
where $\tilde q_{O,1-\alpha/2}$ is the maximum $(1-\alpha/2)$-quantile of \eqref{interim4} over all possible $\theta\geq0$. Now, supposing we know the values of $\sigma$ and $\tau$, the normality confidence interval obtained from extracting the first component in the limit in \eqref{CLT nested} is
$$\left[\hat{\hat\psi}_{n,R_0}-z_{1-\alpha/2}\sqrt{\frac{\sigma^2}{n}+\frac{\tau^2}{R_0}},\ \hat{\hat\psi}_{n,R_0}+z_{1-\alpha/2}\sqrt{\frac{\sigma^2}{n}+\frac{\tau^2}{R_0}}\right]$$
thus with a half-width
\begin{equation}
z_{1-\alpha/2}\sqrt{\frac{\sigma^2}{n}+\frac{\tau^2}{R_0}}\label{HW standard nested}
\end{equation}
The standard error in this half-width has the notable interpretation of being a combination of the data variability $\sigma^2/n$ and computation variability $\tau^2/R_0$.
Suppose in \eqref{half-width center} we use $R=R_0$, so that a resample estimate exhibits the same variability as the original estimate. Comparing \eqref{half-width center} and \eqref{HW standard nested}, we see that \eqref{half-width center} has an additional contribution coming from $V_2$. If $V_2$ is not present, then $S_{O}$ becomes $\sqrt{\sigma^2/n+\tau^2/R_0}$ and $\tilde q_{O,1-\alpha/2}$ becomes the quantile of $V_1$ which is a standard normal variable, since in this case the maximum quantile over all $\theta\geq0$ is approached by choosing $\theta\to\infty$. In other words, when $V_2$ is not present, we recover the normality interval half-width when $B$ increases. Thus, the added variability from $V_2$ can be viewed as a price we pay to handle the additional computation noise without knowledge on $\sigma$ and $\tau$.

\subsection{Proofs and Additional Discussions for Section \ref{sec:nc}}

We first prove Theorem \ref{thm2}:

\begin{proof}[Proof of Theorem \ref{thm2}]
The proof follows the roadmap of that of Theorem \ref{thm1}. Consider the pivotal statistic $T_{M}=(\hat{\hat\psi}_{n,R_0}-\psi)/S_{M}$. We first argue that $q_{M,1-\alpha/2}$ satisfies
$$\liminf_{n\to\infty}P(\psi\in\mathcal I_{M})=\liminf_{n\to\infty}P\left(\left|T_{M}\right|\leq q_{M,1-\alpha/2}\right)\geq1-\alpha$$
which would conclude that $\mathcal I_{M}$ is an asymptotically valid $(1-\alpha)$-level confidence interval.

By Theorem \ref{joint1} and the continuous mapping theorem, as $n\to\infty$, $T_{M}$ converges weakly to
\begin{equation}
\frac{\sigma Z_0+\frac{\tau}{\sqrt{ p_0}}W_0}{\sqrt{\frac{1}{B-1}\sum_{b=1}^B\left(\left(\sigma Z_b+\frac{\tau}{\sqrt{p}}W_b\right)-\left(\sigma \bar Z+\frac{\tau}{\sqrt{p}}\bar W\right)\right)^2}}\stackrel{d}{=}
\frac{\sigma Z_0+\frac{\tau}{\sqrt{ p_0}}W_0}{\sqrt{\left(\sigma^2+\frac{\tau^2}{p}\right)\frac{Y}{B-1}}}\stackrel{d}{=}\sqrt{\frac{\sigma^2+\frac{\tau^2}{p_0}}{\sigma^2+\frac{\tau^2}{p}}}t_{B-1}\label{nc expression replicate}
\end{equation}
where $Z_0,Z_1,\ldots,Z_B,W_0,W_1,\ldots,W_B\stackrel{i.i.d.}{\sim}N(0,1)$, $\bar Z=(1/B)\sum_{b=1}^BZ_b$, $\bar W=(1/B)\sum_{b=1}^BW_b$, and $Y\sim\chi^2_{B-1}$ which is independent of $Z_0$ and $W_0$. The equalities in distribution use the standard properties of normals to obtain $\chi^2_{B-1}$ and $t_{B-1}$ distributions. By multiplying both the numerator and denominator of \eqref{nc expression replicate} by $\sqrt{p_0}/\tau$, we get
\begin{equation}
\sqrt{\frac{\theta^2+1}{\theta^2+\rho^2}}t_{B-1}\label{nc expression3 replicate}
\end{equation}
where $\theta=\sigma\sqrt{p_0}/\tau$ and $\rho=\sqrt{p_0/p}$ as defined before. Denote the distribution of \eqref{nc expression3 replicate} as $\tilde F(\cdot;\theta,\rho)$. We have
$$\min_{\theta\geq0}\tilde F(q;\theta,\rho)=P\left(\max\{\rho^{-1},1\}t_{B-1}\leq q\right)$$
for any $q\geq0$, by noting that the minimum is approached by setting $\theta\to\infty$ when $\rho\geq1$ and attained at $\theta=0$ when $\rho<1$. Thus setting $q_{M,1-\alpha/2}=\max\{\rho^{-1},1\}t_{B-1,1-\alpha/2}$ gives
\begin{eqnarray*}
\liminf_{n\to\infty}P(\psi\in\mathcal I_{M})&=&\liminf_{n\to\infty}P\left(|T_{M}|\leq q_{M,1-\alpha/2}\right)=\tilde F\left(q_{M,1-\alpha/2};\theta,\rho\right)-\tilde F\left(-q_{M,1-\alpha/2};\theta,\rho\right){}\\
&=&{}2\tilde F\left(q_{M,1-\alpha/2};\theta,\rho\right)-1\geq2\min_{\theta\geq0}\tilde F\left(q_{M,1-\alpha/2};\theta,\rho\right)-1=1-\alpha
\end{eqnarray*}
where the third equality follows from the symmetry of \eqref{nc expression3 replicate}.

Finally, note that when $\rho=1$ \eqref{nc expression3 replicate} becomes $t_{B-1}$ regardless of the value of $\theta$. Thus in this case $T_{M}$ is asymptotically $t_{B-1}$, and asymptotic exactness of $\mathcal I_M$ holds.

\end{proof}

Note that, in contrast to \eqref{pivotal} where $W_0$ appears both in the numerator and denominator, in \eqref{nc expression replicate} $W_0$ only appears in the former and thus the numerator and denominator there are independent. This simplifies the asymptotic distribution to be used in $\mathcal I_M$.

We discuss the half-width efficiency of $\mathcal I_M$ and contrast with $\mathcal I_O$. First, unlike $\mathcal I_O$, note that it is possible to have asymptotic exactness for $\mathcal I_{M}$ as $n\to\infty$, in particular when we use the same computation size in the resample estimate and the original point estimate, which is a natural configuration. Moreover, from \eqref{q nc} and \eqref{nc expression replicate}, we see that as $n$ increases, the half-width of $\mathcal I_{M}$ behaves approximately as
$$\max\{\rho^{-1},1\}t_{B-1,1-\alpha/2}\sqrt{\left(\frac{\sigma^2}{n}+\frac{\tau^2}{R}\right)\frac{\chi^2_{B-1}}{B-1}}$$
When $B$ increases, this becomes
$$\max\{\rho^{-1},1\}z_{1-\alpha/2}\sqrt{\frac{\sigma^2}{n}+\frac{\tau^2}{R}}$$
so that when $\rho=1$ we get
$$z_{1-\alpha/2}\sqrt{\frac{\sigma^2}{n}+\frac{\tau^2}{R}}$$
which is the half-width of the normality confidence interval (when $R$ is set to equal $R_0$). This conformance shows the superiority of $\mathcal I_{M}$ over $\mathcal I_{O}$ when $B$ is large. Nonetheless, when $B$ is small or when the computation sizes of the resample and original estimates are different, $\mathcal I_{O}$ could possibly outperform $\mathcal I_{M}$.

Lastly, both $\mathcal I_O$ and $\mathcal I_M$ have natural one-sided analogs, where we replace $q_{O,1-\alpha/2}$ and $q_{M,1-\alpha/2}$ by $q_{O,1-\alpha}$ and $q_{M,1-\alpha}$ in one of the interval limits (and with the other side unbounded).

\section{Proofs for Section \ref{sec:subsampling}}
\begin{proof}[Proof of Theorem~\ref{main variant}]
In each of the three subsampling variants, we show that an analog of Assumption \ref{bp} holds and hence we can use the same roadmap as the proofs of Proposition \ref{joint prop} and Theorem \ref{main} to conclude our result. Note that, under the assumptions in Proposition \ref{HD}, we have immediately that $\sqrt n(\hat\psi_n-\psi)\Rightarrow N(0,\sigma^2)$ for some $\sigma^2>0$ by the functional delta method (Theorem \ref{EP delta} in Appendix \ref{sec:sufficient proof}). Now, under the additional assumption that $\mathcal F_\delta$ is measurable for every $\delta>0$, we have the following:
\\

\noindent\underline{Cheap $m$-out-of-$n$ Bootstrap:} Denote $\psi_s^{*}$ as a subsample estimate. We invoke Theorem \ref{vary size} (in Appendix \ref{sec:subsampling proofs}) to conclude that $\sqrt s(\psi_{s}^{*}-\hat\psi_n)\Rightarrow N(0,\sigma^2)$ given $X_1,X_2,\ldots$ in probability as $n\to\infty$, for any $s$ dependent on $n$ such that $s\leq n$ and $s\to\infty$. Following the same argument in the proof of Proposition \ref{joint prop}, except we replace $\sqrt n(\psi_n^{*b}-\hat\psi_n)$ by $\sqrt s(\psi_{s}^{*b}-\hat\psi_n)$, we obtain
$$\left(\sqrt n(\hat\psi_n-\psi),\sqrt s(\psi_{s}^{*1}-\hat\psi_n),\ldots,\sqrt s(\psi_{s}^{*B}-\hat\psi_n)\right)\Rightarrow(\sigma Z_0,\sigma Z_1,\ldots,\sigma Z_B)$$
where $Z_0,Z_1,\ldots,Z_B$ are i.i.d. $N(0,1)$. Therefore,
$$\frac{\hat\psi_n-\psi}{\sqrt{\frac{s}{n}}S}\Rightarrow\frac{Z_0}{\sqrt{\frac{1}{B}\sum_{b=1}^BZ_b^2}}$$
by the continuous mapping theorem. Then, following the proof of Theorem \ref{main} to note that the right hand side above is a $t_B$-variable, we get
$$P\left(-t_{1-\alpha/2,B}\leq\frac{\hat\psi_n-\psi}{\sqrt{\frac{s}{n}}S}\leq t_{1-\alpha/2,B}\right)\to1-\alpha$$
from which we conclude the result.
\\

\noindent\underline{Cheap Bag of Little Bootstraps:} Recall $\psi_s^*$ is the fixed subsample estimate, and denote $\psi_n^{**}$ as a second-layer resample estimate. Note that the subsample encoded by $P_s^*$, which is obtained by sampling without replacement from the data encoded by $\hat P_n$, is distributed i.i.d. from $P$. Thus we can involve Theorem \ref{vary size} to conclude $\sqrt n(\psi_{n}^{**}-\psi_{s}^*)\Rightarrow N(0,\sigma^2)$ in probability, given $X_1,X_2,\ldots,X_n$ and the subsampling randomness, as $n,s\to\infty$ for any $s$ dependent on $n$ such that $s\leq n$. The rest is identical to the proofs of Proposition \ref{joint prop} and Theorem \ref{main}, except we replace $\psi_n^{*b}-\hat\psi_n$ by $\psi_{n}^{**b}-\psi_{s}^*$ and, instead of conditioning on $\hat P_n$ in the series of inequalities in the proof of Proposition \ref{joint prop}, we condition on both $\hat P_n$ and the randomness in the subsampling that obtains $\psi_s^*$.
\\

\noindent\underline{Cheap Subsampled Double Bootstrap:} Denote $\psi_s^*$ as a first-layer subsample estimate and $\psi_n^{**}$ as the derived second-layer resample estimate. We invoke Theorem \ref{thm SDB} (in Appendix \ref{sec:subsampling proofs}) to conclude that $\sqrt n(\psi_{n}^{**}-\psi_{s}^*)\Rightarrow N(0,\sigma^2)$ given $X_1,X_2,\ldots$ in probability as $n\to\infty$, for any $s$ dependent on $n$ such that $s\leq n$ and $s\to\infty$. The rest is identical to the proofs of Proposition \ref{joint prop} and Theorem \ref{main}, except we replace $\psi_n^{*b}-\hat\psi_n$ by $\psi_{n}^{**b}-\psi_{s}^{*b}$.
\end{proof}

\section{Additional Numerical Results}\label{sec:add numerics}
\subsection{Logistic Regression}\label{sec:logistic}
We present another example on a logistic regression model $Y\sim Bernoulli(p)$ where $p=1/(1+\exp(-(\beta_1X_{1}+\cdots+\beta_dX_{d})))$. We set $d=10$ and use data $(X_{1,i},\ldots,X_{d,i},Y_i)$ of size $10^5$ to fit the model and estimate the coefficients $\beta_j$'s. The ground truth is set as $X_j\sim t_3$ and the coefficients $(\beta_1,\ldots,\beta_{10})=(1.9, 1.7, 1.3, 1.8, 1.1, 1.2, 1.9, 2.2, 1.5, 2.0)$, a set of numbers arbitrarily chosen from the interval $[1,3]$. Our setup is similar to the linear regression example in Section \ref{sec:linear}, where we test all methods to compute $95\%$ confidence intervals on the first coefficient $\beta_1$, and use subsample size $n^{0.6}=1000$ for $m$-out-of-$n$ Bootstrap, Bag of Little Bootstraps and Subsampled Double Bootstrap. We again use $50$ resamples in total for each method to depict the trend.

Figures \ref{fig:cov log}, \ref{fig:len log} and \ref{fig:sd log} show the coverage probabilities, mean interval widths, and standard deviations of interval widths respectively for Standard and Cheap bootstrap methods. The comparisons are largely similar to the linear regression example. The Cheap bootstrap methods all attain close to the target $95\%$ coverage at $B=1$, with $96\%$,  $96\%$, $97\%$ and $97\%$ for Cheap Bootstrap, $m$-out-of-$n$, Bag of Little Bootstrap and Subsampled Double Bootstrap respectively. On the other hand, the Standard bootstrap methods all fail at $B=1$ and require much larger $B$ to approach the target coverage. Both the means and standard deviations of interval widths for Cheap bootstrap methods decrease sharply from $B=1$ (e.g., mean $0.38$ and standard deviation $0.28$ for Cheap Bootstrap) to $2$ (mean $0.14$ and standard deviation $0.07$), and continue to drop further at $B=3$ (mean $0.11$ and standard deviation $0.04$) and beyond at a continuously slower rate. On the other hand, the mean interval widths of Standard bootstrap methods are initially small and exhibit increasing trends, whereas the standard deviations appear roughly constant against $B$. Thus, similar to the linear regression example, here Cheap bootstrap methods again consistently attain accurate coverage regardless of $B$ and their interval widths drop fast to levels comparable to large $B$. On the other hand, Standard methods under-cover when $B$ is small and converge to the nominal level at much slower rates.

\begin{figure*}[tb]
\vskip 0.2in
\begin{center}
\subfigure
{\includegraphics[width=.24\columnwidth,height=.15\textheight]{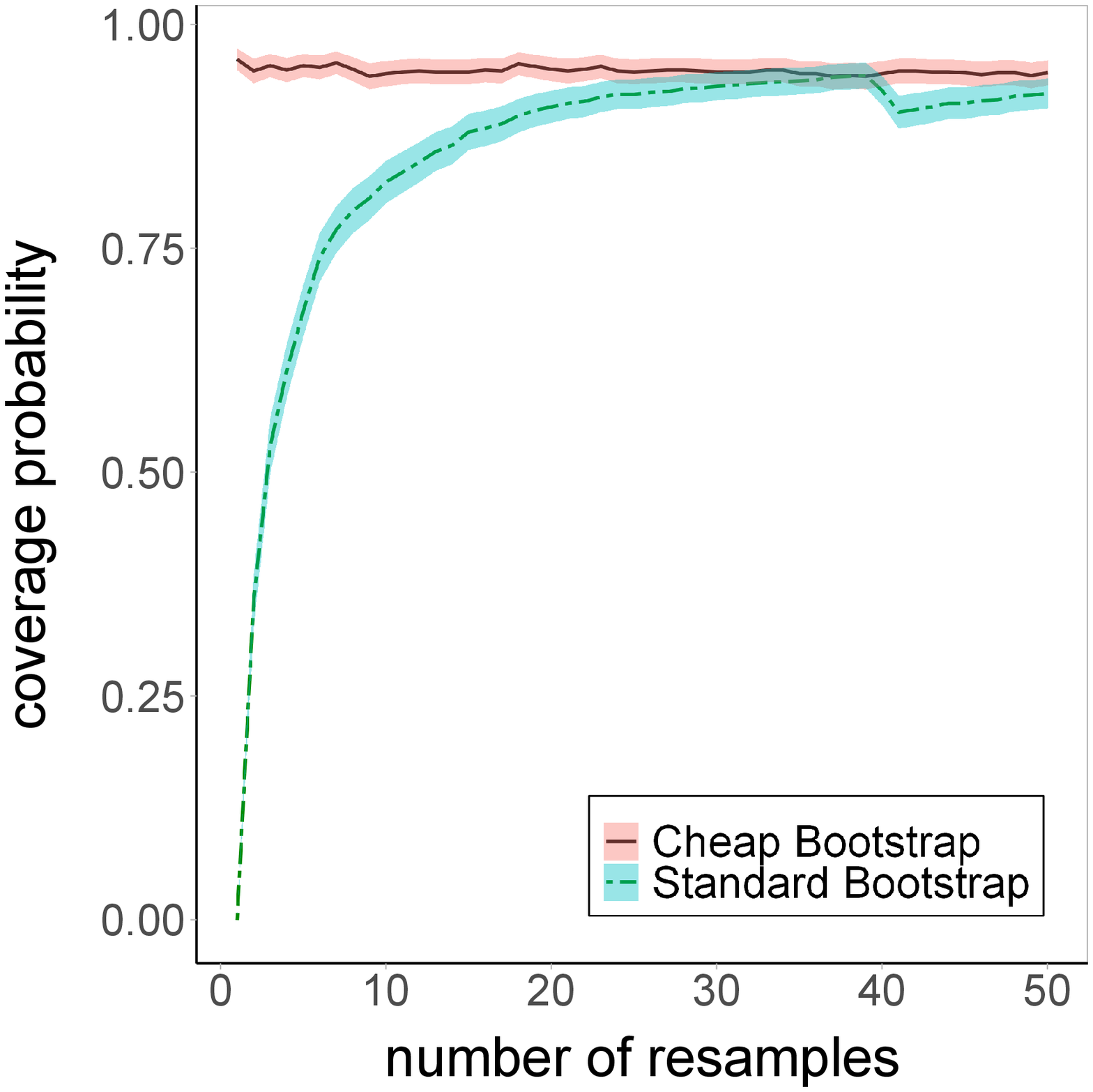} }
\subfigure
{\includegraphics[width=.24\columnwidth,height=.15\textheight]{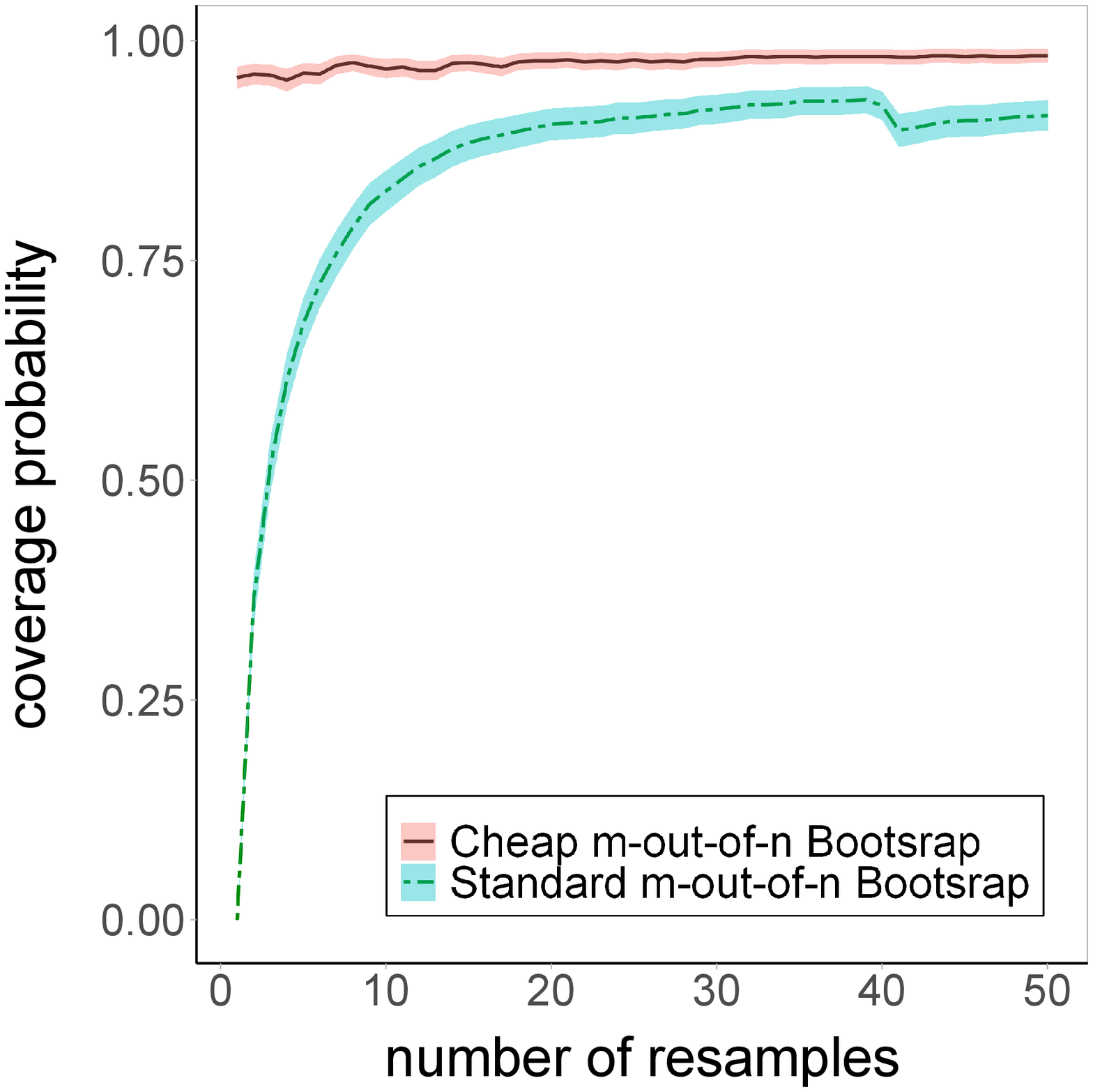}}
\subfigure
{\includegraphics[width=.24\columnwidth,height=.15\textheight]{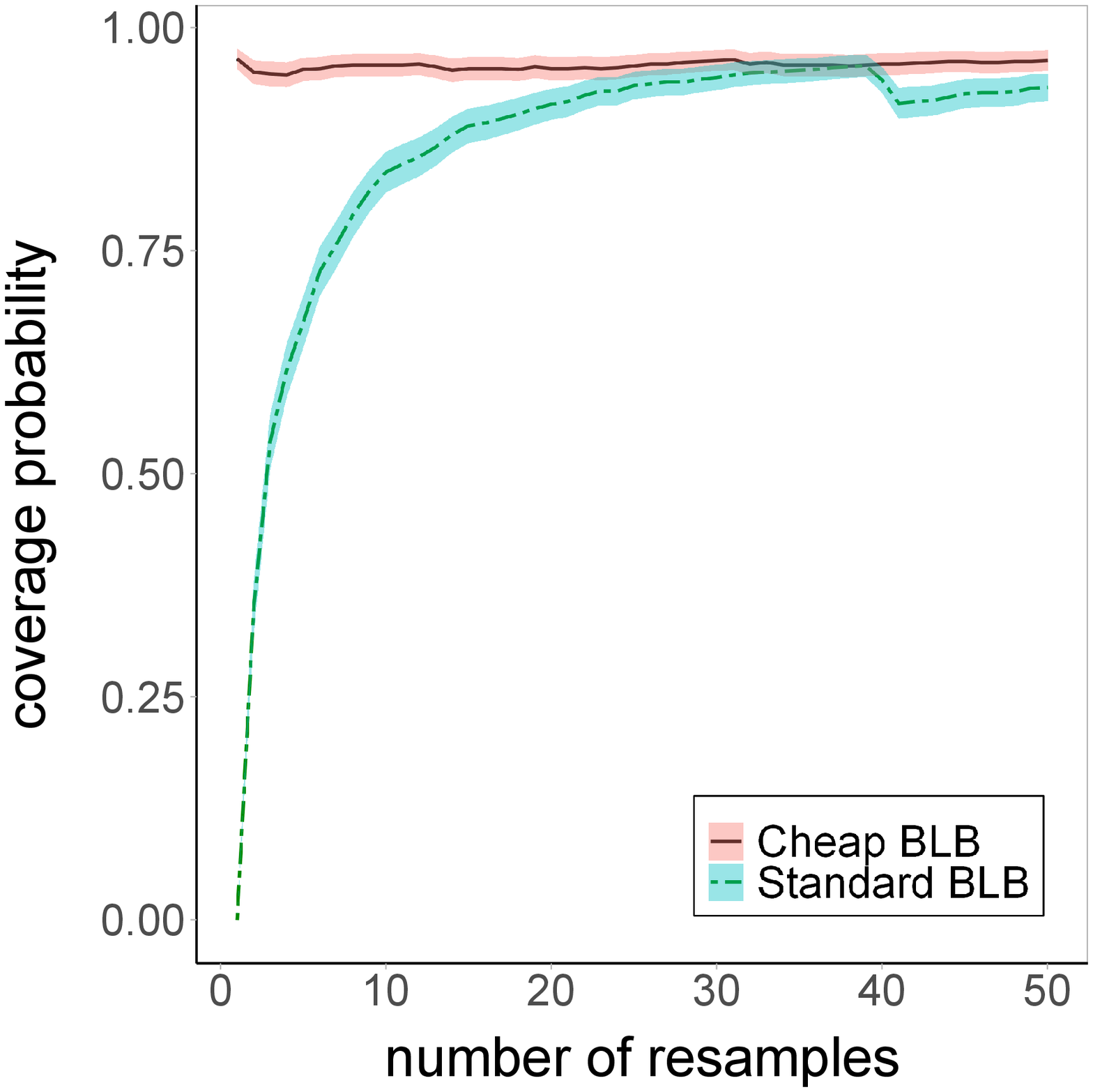}}
\subfigure
{\includegraphics[width=.24\columnwidth,height=.15\textheight]{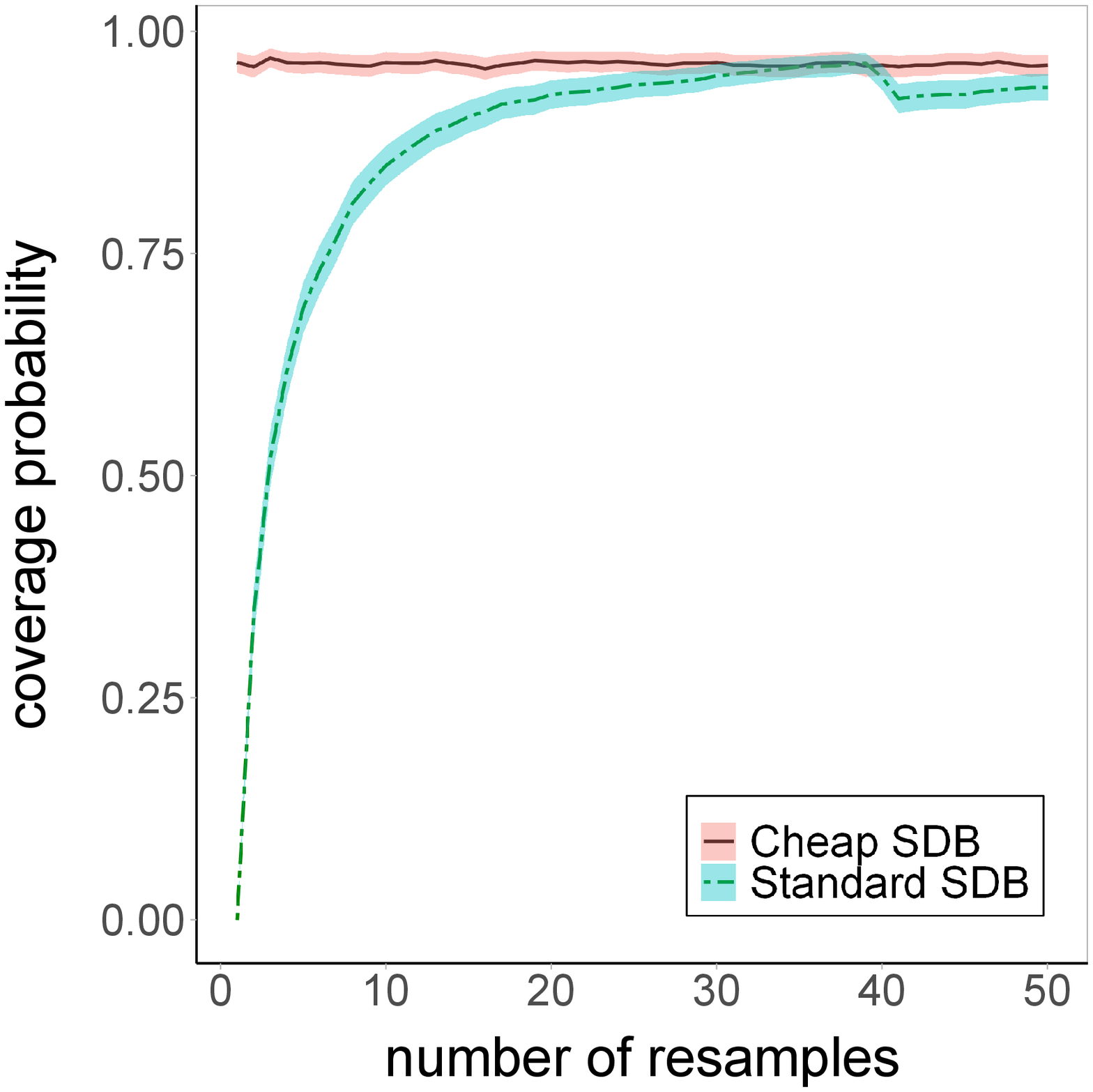} }
\caption{Confidence interval coverage probabilities of Standard versus Cheap Bootstrap methods in logistic regression. Nominal confidence level $=95\%$ and sample size $n=10^5$. Shaded areas depict the associated confidence intervals of the coverage probability estimates from 1000 experimental repetitions.}
\label{fig:cov log}
\end{center}
\vskip -0.2in
\end{figure*}

\begin{figure*}[tb]
\vskip 0.2in
\begin{center}
\subfigure
{\includegraphics[width=.24\columnwidth,height=.15\textheight]{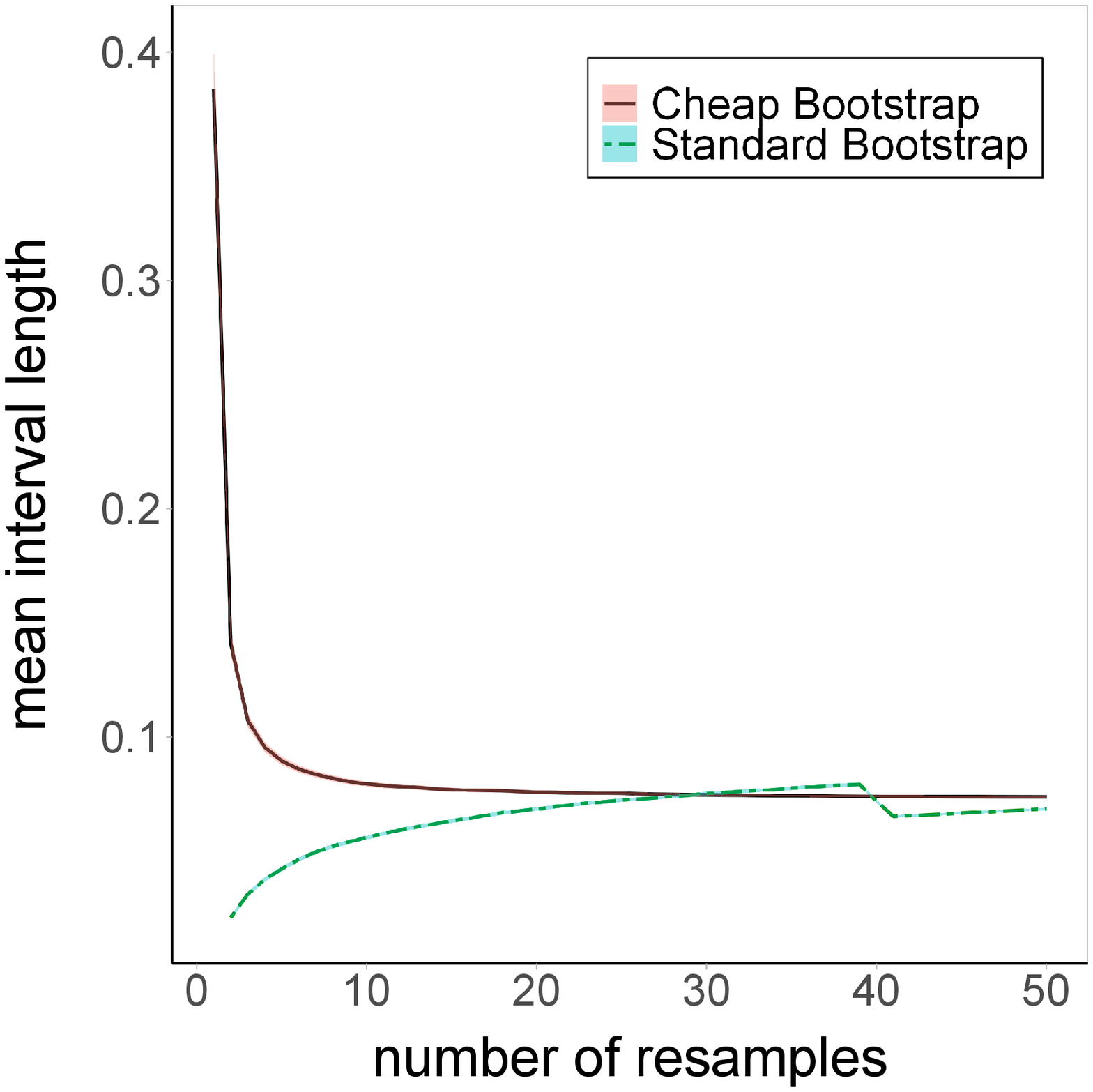}}
\subfigure
{\includegraphics[width=.24\columnwidth,height=.15\textheight]{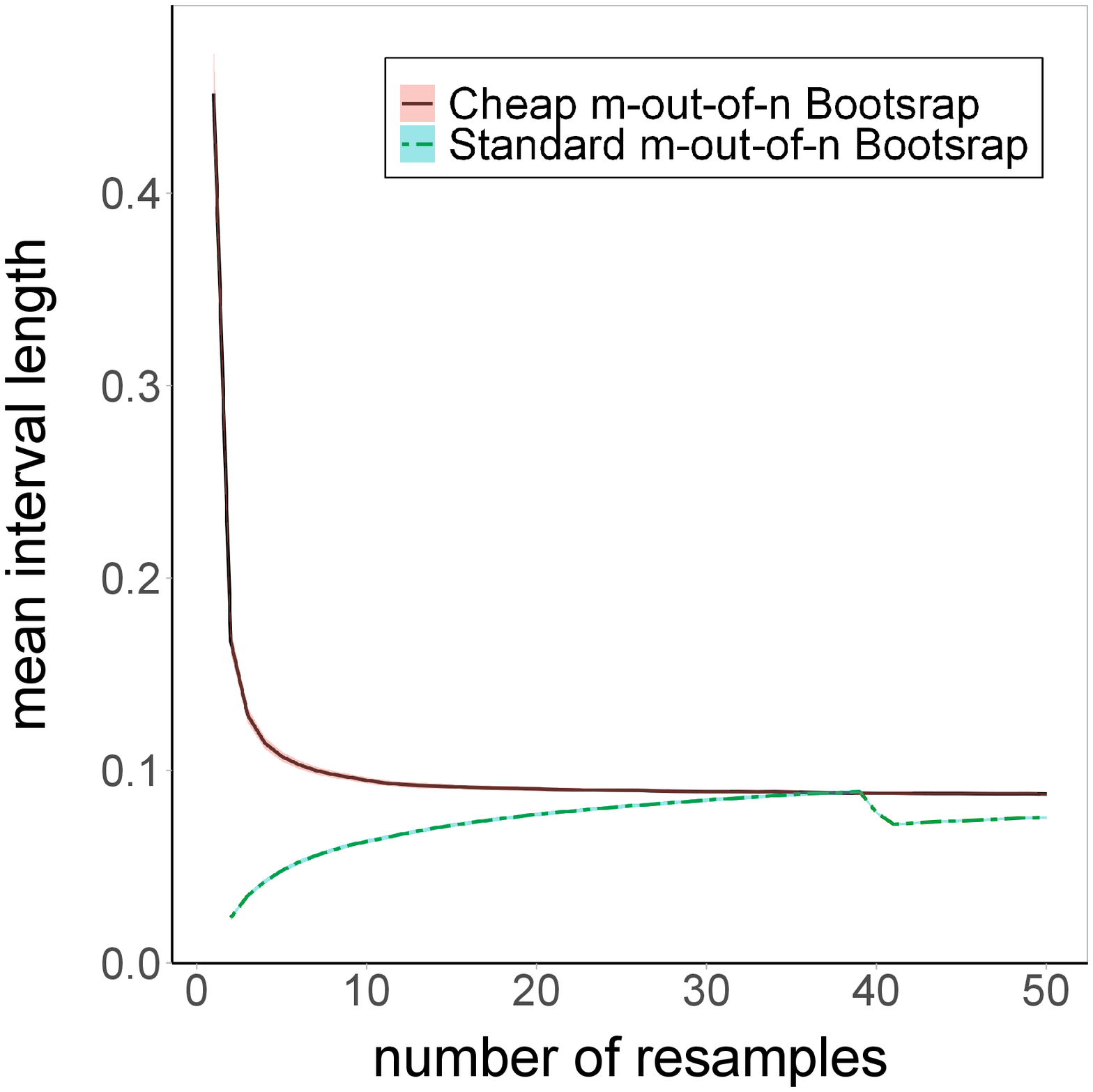}}
\subfigure
{\includegraphics[width=.24\columnwidth,height=.15\textheight]{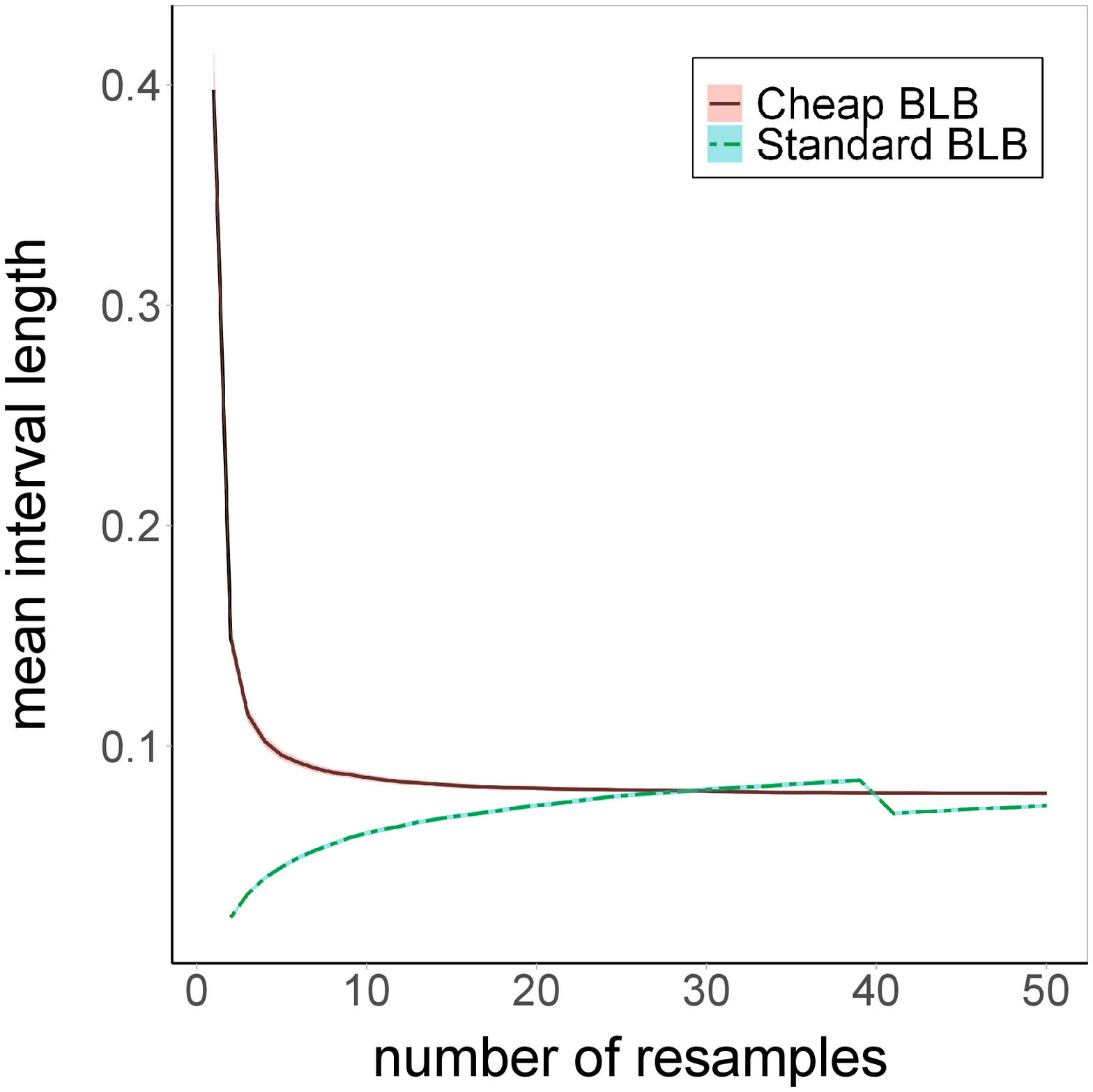}}
\subfigure
{\includegraphics[width=.24\columnwidth,height=.15\textheight]{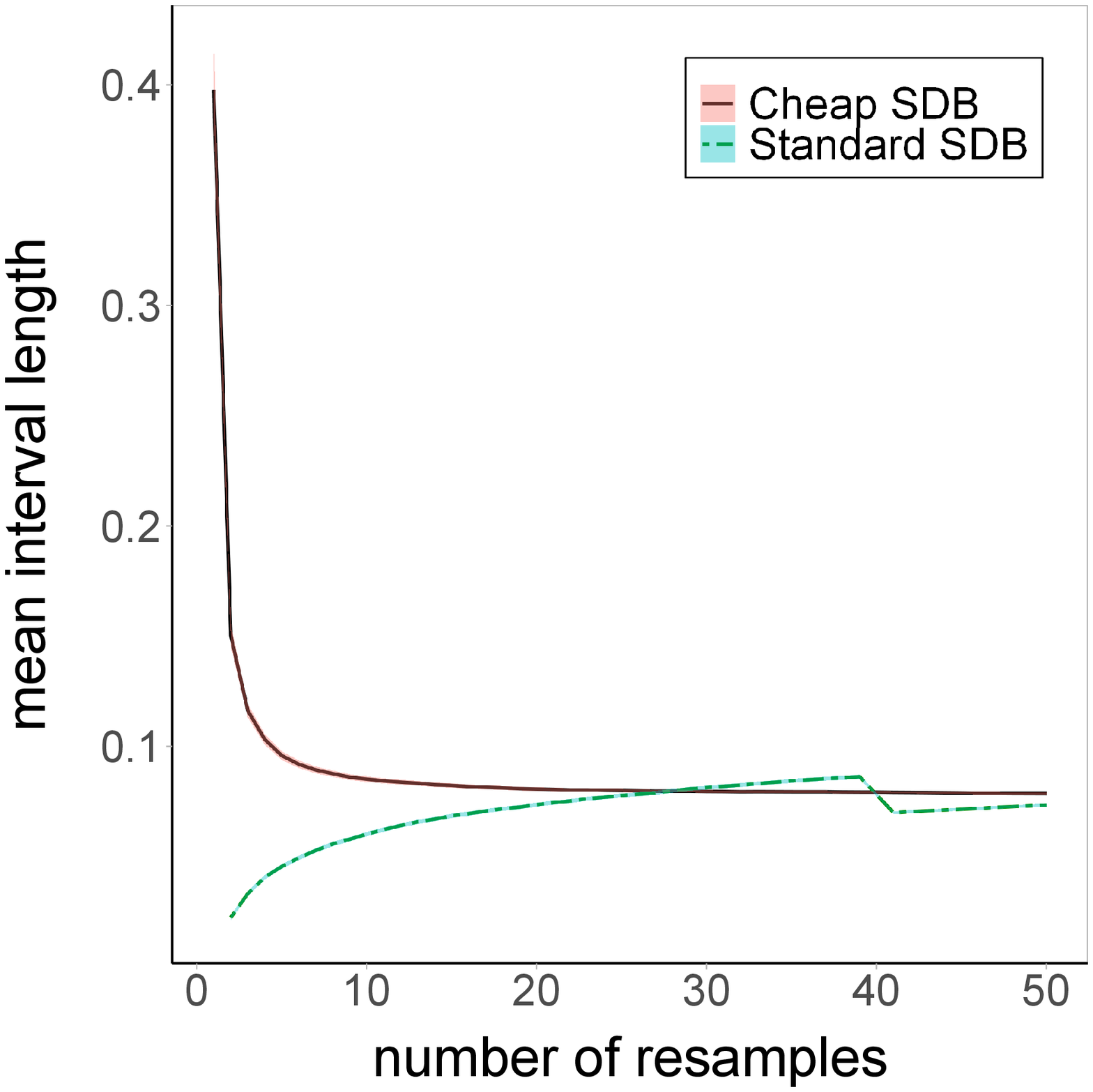} }
\caption{Mean confidence interval widths of Standard versus Cheap Bootstrap methods in logistic regression. Nominal confidence level $=95\%$ and sample size $n=10^5$. Shaded areas depict the associated confidence intervals of the mean width estimates from 1000 experimental repetitions.}
\label{fig:len log}
\end{center}
\vskip -0.2in
\end{figure*}

\begin{figure*}[tb]
\vskip 0.2in
\begin{center}
\subfigure
{\includegraphics[width=.24\columnwidth,height=.15\textheight]{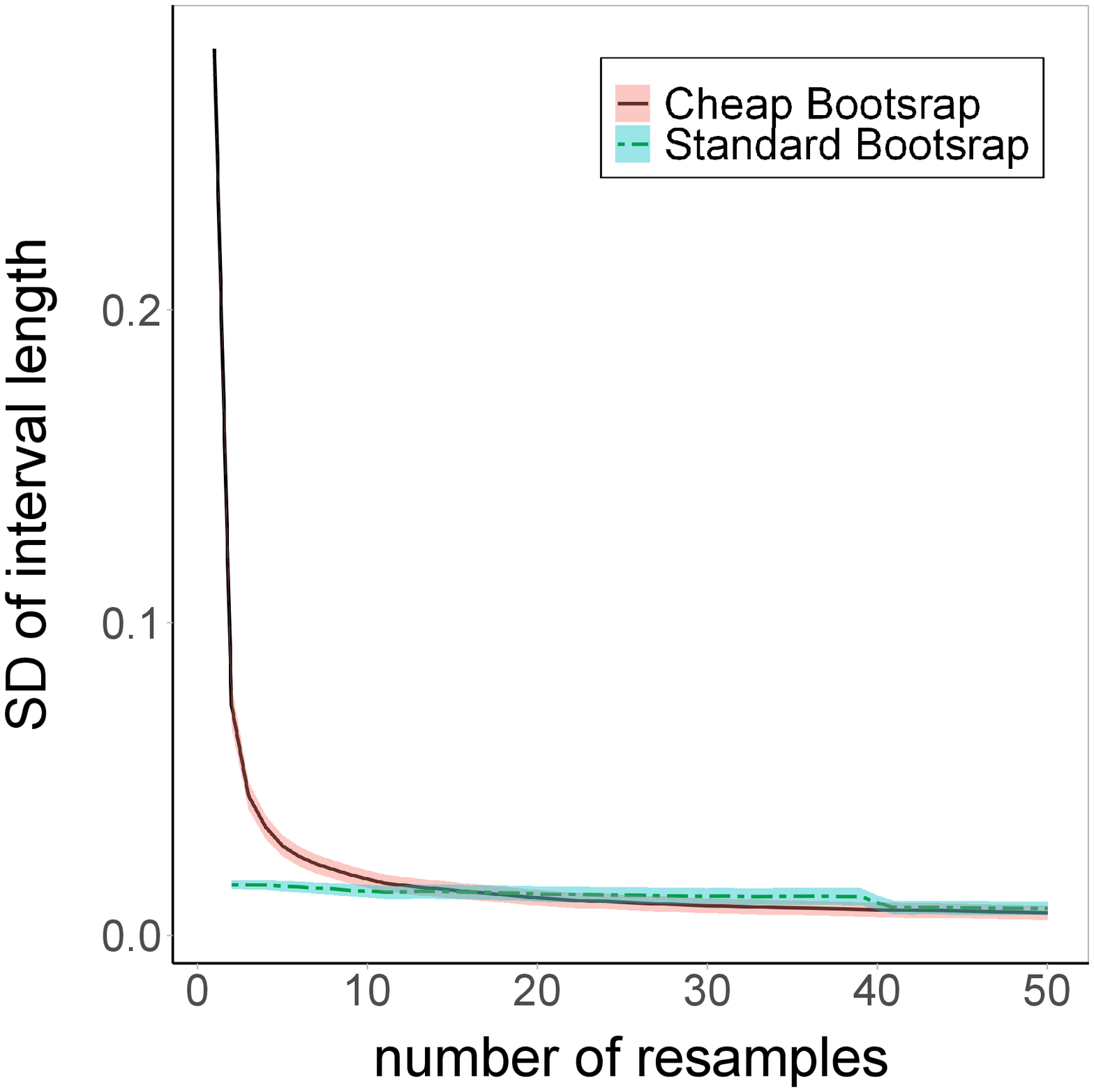} }
\subfigure
{\includegraphics[width=.24\columnwidth,height=.15\textheight]{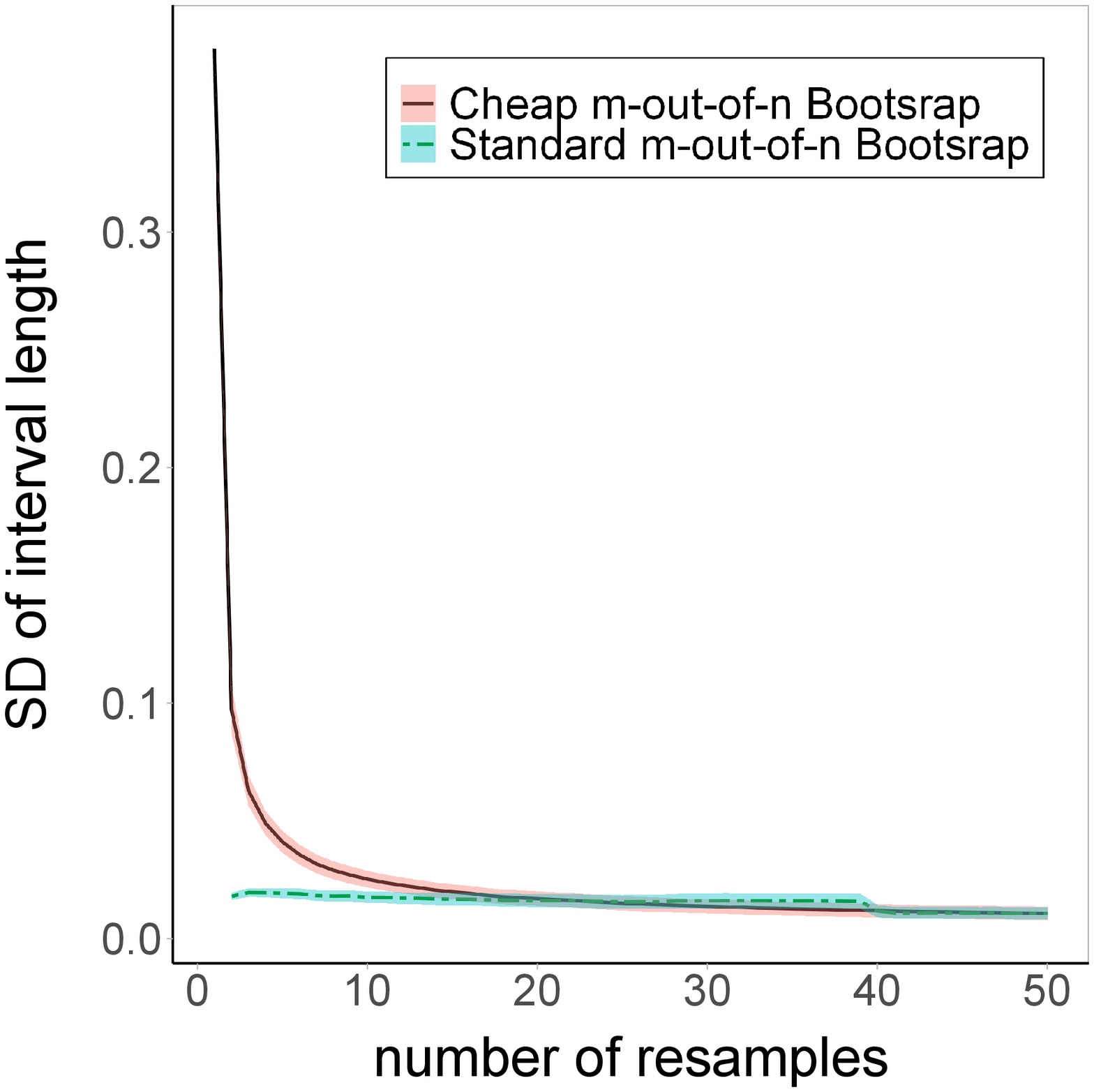}}
\subfigure
{\includegraphics[width=.24\columnwidth,height=.15\textheight]{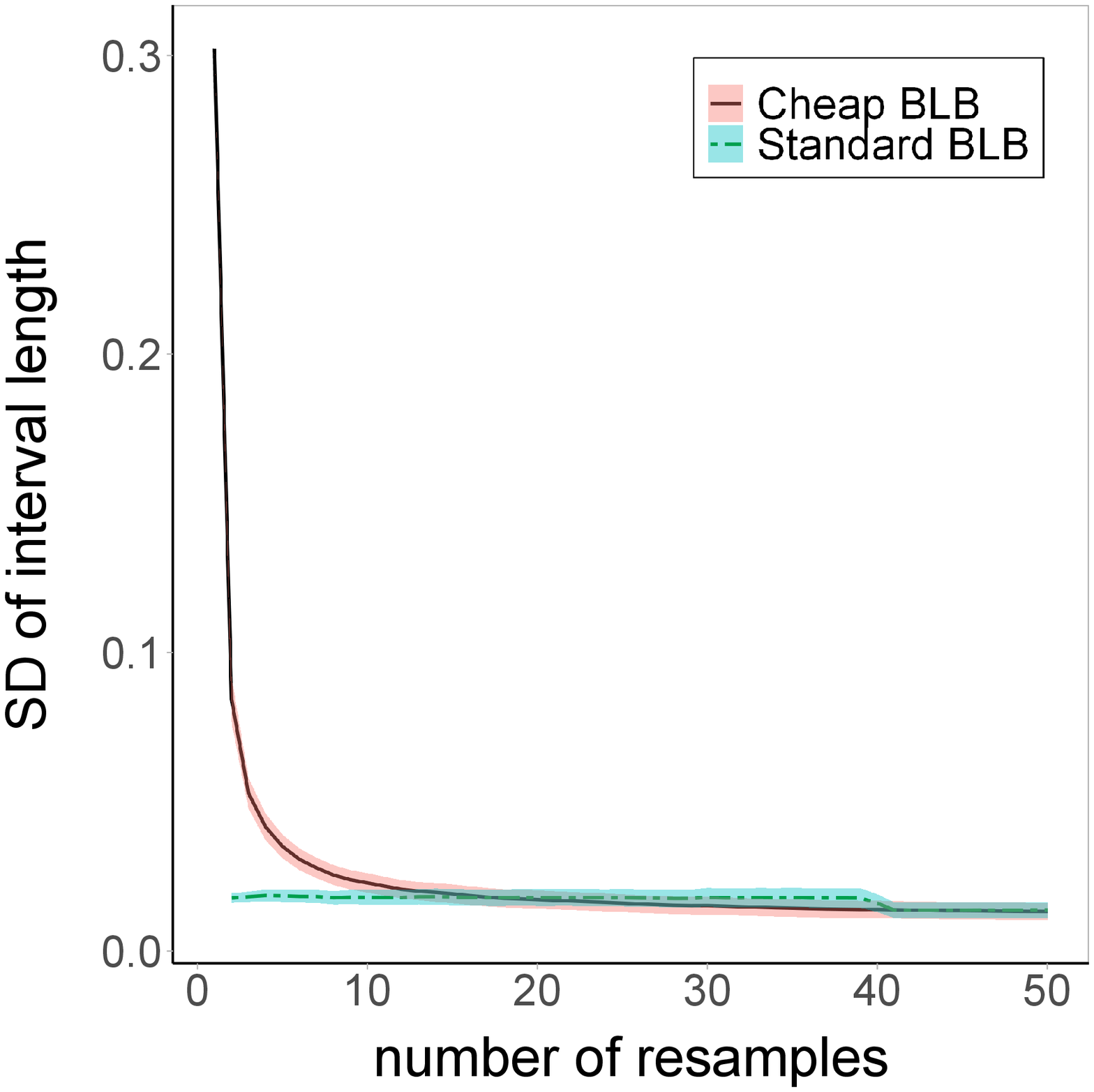}}
\subfigure
{\includegraphics[width=.24\columnwidth,height=.15\textheight]{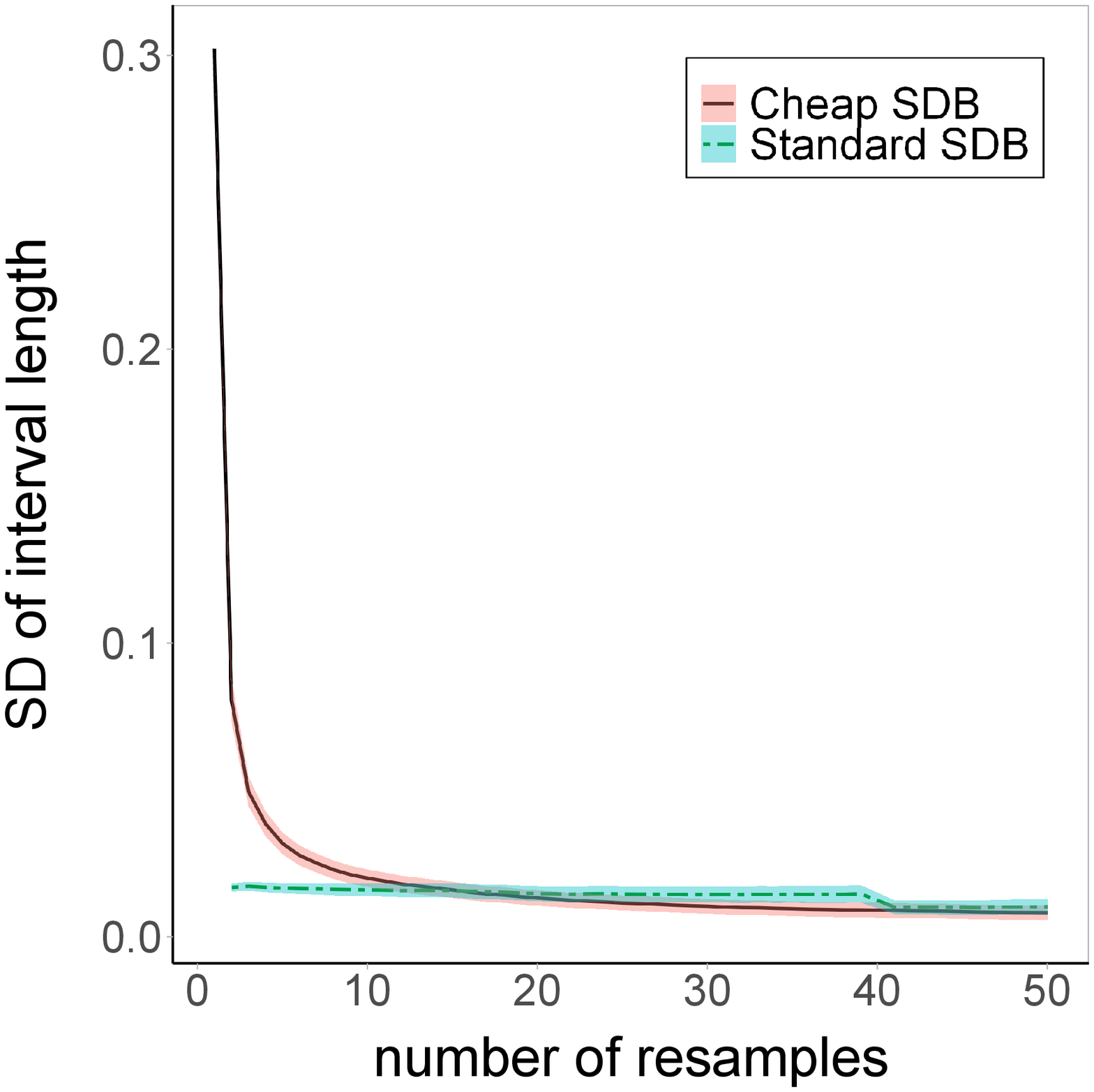} }
\caption{Standard deviations of confidence interval widths of Standard versus Cheap Bootstrap methods in logistic regression. Nominal confidence level $=95\%$ and sample size $n=10^5$. Shaded areas depict the associated confidence intervals of the standard deviation estimates from 1000 experimental repetitions.}
\label{fig:sd log}
\end{center}
\vskip -0.2in
\end{figure*}

\subsection{Critical Values for Cheap Bootstrap in Nested Sampling Problems}\label{sec:q compute}
Table \ref{table:nested} shows the approximated $q_{O,1-\alpha/2}$ for $B$ ranging from $1$ to $20$ and $\alpha=0.05$ (and also $\alpha=0.1$ for later use in the example in Appendix \ref{sec:bagging}) when $R_0=R$, i.e., $\rho=1$, where we also show $q_{M,1-\alpha/2}=t_{B-1,1-\alpha/2}$ for comparison. Note that when $B=1$, only $q_{O,1-\alpha/2}$ is well-defined but not $q_{M,1-\alpha/2}$. We also see that while $q_{O,1-\alpha/2}$ is smaller than $q_{M,1-\alpha/2}$ when $B$ is small, it appears that they are very similar as $B$ reaches $20$.

\begin{table}[!ht]
\caption{Values of $q_{O,1-\alpha/2}$ and $q_{M,1-\alpha/2}$ when $R_0=R$, at $\alpha=0.05$ and $\alpha=0.1$.}
\centering
\begin{tabular}{c|cc|cc}
\multirow{2}{*}{$B$}&\multicolumn{2}{|c}{$\alpha=0.05$}&\multicolumn{2}{|c}{$\alpha=0.1$}\\
 &$q_{O,1-\alpha/2}$&$q_{M,1-\alpha/2}$&$q_{O,1-\alpha/2}$&$q_{M,1-\alpha/2}$\\
   \hline
 1 &   12.75 & NA&6.32&NA\\
  2&     4.32 &  12.71&2.92&6.31\\
  3 &    3.19 &   4.30&2.36&2.92\\
  4 &   2.78  &  3.18&2.14&2.35\\
  5 &   2.57 &   2.78&2.02&2.13\\
  6 &    2.45  &  2.57&1.95&2.02\\
  7 &    2.37 &   2.45&1.90&1.94\\
  8 &    2.31 &   2.36&1.86&1.89\\
  9 &    2.27 &   2.31&1.84&1.86\\
 10 &    2.23 &   2.26&1.82&1.83\\
 11 &    2.21 &   2.23&1.80&1.81\\
 12 &    2.19 &   2.20&1.79&1.80\\
 13 &    2.16 &   2.18&1.78&1.78\\
 14 &    2.15 &   2.16&1.77&1.77\\
 15 &    2.14 &   2.14&1.76&1.76\\
 16 &    2.12 &   2.13&1.75&1.75\\
 17 &    2.11 &   2.12&1.74&1.75\\
 18 &    2.11 &   2.11&1.74&1.74\\
 19 &    2.10  &   2.10&1.73&1.73\\
 20 &    2.09  &   2.09&1.73&1.73
\end{tabular}
\label{table:nested}
\end{table}

\subsection{Bagging Estimation for Bounding Optimality Gap}\label{sec:bagging}
We apply the Cheap Bootstrap on a bagging method to construct upper confidence bounds for data-driven stochastic optimization problems. More specifically, consider an expected-value optimization problem $\min_{\theta\in\Theta}\{H(\theta):=E[h(\theta,X)]\}$ where the distribution $P$ governing $X$ in the expectation $E[\cdot]$ is unknown and is only informed from data. The cost function $h(\cdot,\cdot)$ is known or evaluatable, and $\Theta$ denotes the feasible region for the decision variable $\theta$. Such an optimization formulation appears broadly in multiple disciplines such as revenue management, portfolio selection, among others (e.g., \cite{shapiro2021lectures,birge2011introduction}).

Suppose we have a given solution, say $\hat\theta$ (which is presumably obtained from a data-driven procedure such as solving an empirical optimization or sample average approximation \cite{shapiro2021lectures}). To assess the quality of this given solution, we can construct an upper confidence bound for its optimality gap $G=H(\hat\theta)-H^*$, where $H^*=\min_{\theta\in\Theta}H(\theta)$ denotes the unknown true optimal value. To this end, a general upper bound for $G$ is given by
$E[\tilde H(X_1,X_2,\ldots,X_k)]$
where $X_1,X_2,\ldots,X_k\stackrel{i.i.d.}{\sim}P$ and $E[\cdot]$ denotes the corresponding expectation. The quantity
$$\tilde H(X_1,X_2,\ldots,X_k)=\max_{\theta\in\Theta}\frac{1}{k}\sum_{i=1}^k(h(\hat\theta,X_i)-h(\theta,X_i))$$
is a sample average approximation on cost function $h(\hat\theta,X)-h(\theta,X)$, which is equivalent to the difference between an empirical estimate of $E[h(\hat\theta,X)]$ and a sample average approximation on $h(\theta,X)$ itself. That $E[\tilde H(X_1,X_2,\ldots,X_k)]$ is an upper bound for $G$ can be argued via the Jensen inequality and is a well-established optimistic bound in stochastic optimization (e.g., \cite{mak1999monte}, \cite{glasserman2004monte} \S8).

To implement $E[\tilde H(X_1,X_2,\ldots,X_k)]$ using data, \cite{lam2018assessing,lam2018bounding} suggest to use bagging. Namely, given data $X_1,\ldots,X_n$, where $n$ denotes the sample size (that could be different from $k$), we repeatedly resample data set $\{X_1^{*b},\ldots,X_k^{*b}\}$ and solve a sample average approximation $\tilde H(X_1^{*b},X_2^{*b},\ldots,X_k^{*b})$, for $b=1,\ldots,B$. Then we output their average
$$\frac{1}{B}\sum_{b=1}^B\tilde H(X_1^{*b},X_2^{*b},\ldots,X_k^{*b})$$
to give a point estimate for $E[\tilde H(X_1,X_2,\ldots,X_k)]$. Cast in our framework in Section \ref{sec:double}, we view $\psi(P)=E[\tilde H(X_1,X_2,\ldots,X_k)]$ where $X_1,\ldots,X_k\stackrel{i.i.d.}{\sim}P$ and each noisy computation run as $\hat\psi_r(Q)=\tilde H(X_1^*,X_2^*,\ldots,X_k^*)$ where $X_1^*,,\ldots,X_k^*\stackrel{i.i.d.}{\sim}Q$. Thus we can use the Cheap Bootstrap centered at original estimate and centered at resample mean in Section \ref{sec:double} to construct valid confidence bounds. Note that each computation run here involves solving a sample average approximation problem which could be costly.

To test the performance of our Cheap Bootstrap, we consider more specifically the following expected-value optimization problem 
$$\begin{array}{ll}
\min_\theta&E[X^\top\theta]\\
\text{subject to}&A\theta\leq b\\
&\theta_i\in\{0,1\}\text{\ for\ }i=1,2,\ldots,10
\end{array}$$
This is a $10$-dimensional stochastic integer program, with binary decision variables $\theta_i,i=1,\ldots,10$. The true distribution of $X\in\mathbb R^{10}$ is $N(\mu,\Sigma)$, where $\mu=(-1,-7/9,-5/9,\ldots,7/9,1)^\top$ and $\Sigma$ is an arbitrarily generated covariance matrix. We set $b=(-1,2)^\top$ and
\begin{equation*}
   A=\left[\begin{matrix} 
      -1 &-1&-1&-1&-1&-1&-1&-1&-1&-1  \\
      0&0 &0&0&0&0&1&1&1&1 \\
   \end{matrix}\right].
\end{equation*}
This example is also used in \cite{lam2018bounding}. We use the data size $n=100$, and the size in sample average approximation $k=50$. Throughout our set of experiments, we set $\hat\theta=(1,1,0,1,0,0,0,0,0,0)^\top$ which is obtained by solving a sample average approximation from an independent data set of size $30$. We construct our one-sided Cheap Bootstrap intervals using $q_{O,1-\alpha}$ and $q_{M,1-\alpha}$ at $\alpha=0.05$ (which corresponds to the case $\alpha=0.1$ in Table \ref{table:nested}), for $B$ ranging from $1$ to $10$. We repeat our experiment $1000$ times to obtain summary statistics on our generated bounds, including the one-sided empirical coverage, i.e., the proportion of experiments where our upper confidence bound is at least the ground true value, and the mean and standard deviation of the upper confidence bound. 

Table \ref{table:bagging} shows the results. We see that the empirical coverages range from $96\%$ to $98\%$ in all considered $B$ for Cheap Bootstrap centered at original estimate and centered at resample mean (note that the latter is defined only starting from $B=2$). The coverages are higher than the nominal value $95\%$ likely because the true value targeted by the confidence bound, namely $E[\tilde H(X_1,X_2,\ldots,X_k)]$, is itself an upper bound on the true optimality gap $G$. The trends of the generated confidence bounds appear consistent with our previous examples. For centered at original estimate, the mean falls quickly from $2.83$ at $B=1$ to $1.91$ at $B=2$, further to $1.72$ at $B=3$ and much more steadily to $1.58$ at $B=10$, while the standard deviation falls from $1.88$ at $B=1$ to $0.91$ at $B=2$, to $0.73$ at $B=3$, and steadily to $0.62$ at $B=10$. Similarly, for centered at resample mean, the mean falls quickly from $2.79$ at $B=2$ to $1.84$ at $B=3$, and then steadily to $1.56$ at $B=10$, while the standard deviation falls from $1.71$ at $B=2$ to $0.81$ at $B=3$ and steadily to $0.62$ at $B=10$. The performance of centered at original estimate appears better than centered at resample mean for small $B$ but their performances are similar as $B$ reaches close to $10$.

\begin{table}[ht]
\caption{Performances of upper confidence bounds using Cheap Bootstrap centered at original estimate and centered at resample mean, at nominal confidence level $95\%$ for bagging estimation of optimality gap.}
\centering
\begin{tabular}{c|cc|cc}
\multirow{4}{*}{$B$}&\multicolumn{2}{c}{Centered at original estimate}&\multicolumn{2}{|c}{Centered at resample mean}\\
   &  Empirical coverage & Bound mean & Empirical coverage & Bound mean \\
  &  (margin of error) & (st. dev.) & (margin of error) & (st. dev.) \\
   \hline
   1& 0.96 \ \ (0.01) & 2.83\ \ (1.88) & NA & NA \\
2& 0.97 \ \ (0.01) & 1.91\ \ (0.91) & 0.98 \ \ (0.01) & 2.79\ \ (1.71) \\
3& 0.97 \ \ (0.01) & 1.72\ \ (0.73) & 0.98 \ \ (0.01) & 1.84\ \ (0.81) \\
4& 0.96 \ \ (0.01) & 1.67\ \ (0.69) & 0.97 \ \ (0.01) & 1.72\ \ (0.72) \\
5& 0.97 \ \ (0.01) & 1.64\ \ (0.67) & 0.97 \ \ (0.01) & 1.66\ \ (0.68) \\
6& 0.97 \ \ (0.01) & 1.61\ \ (0.65) & 0.96 \ \ (0.01) & 1.62\ \ (0.66) \\
7& 0.97 \ \ (0.01) & 1.60\ \ (0.65) & 0.97 \ \ (0.01) & 1.60\ \ (0.65) \\
8& 0.97 \ \ (0.01) & 1.58\ \ (0.63) & 0.97 \ \ (0.01) & 1.58\ \ (0.63) \\
9& 0.97 \ \ (0.01) & 1.58\ \ (0.63) & 0.97 \ \ (0.01) & 1.57\ \ (0.63) \\
10& 0.97 \ \ (0.01) & 1.58\ \ (0.62) & 0.97 \ \ (0.01) & 1.56\ \ (0.62)
\end{tabular}
\label{table:bagging}
\end{table}

\section{Useful Technical Backgrounds}
For self-containedness purpose, we provide some existing results needed for our technical developments.
\subsection{Hadamard Differentiability}\label{sec:HD}
We give some details on Hadamard differentiability used in Proposition \ref{HD} and Theorem \ref{main variant}. Consider a functional $\psi(\cdot):\mathcal P\to\mathbb R^d$, where $\mathcal P$ denotes the space of probability distribution on the domain $\mathcal X$. We call $\psi(\cdot)$ Hadamard differentiable at $P$ with derivative $\psi_P'(H)$, tangential to some subset $\mathcal Q$ of $\mathcal P$, if there exists a continuous, linear map $\psi_P':\mathcal Q\to\mathbb R^d$ such that
$$\left\|\frac{\psi(P+tH_t)-\psi(P)}{t}-\psi_P'(H)\right\|\to0$$
as $t\searrow0$ for every sequence $H_t$ such that $P+tH_t\in\mathcal P$  for any small $t>0$ and converging to $H\in\mathcal Q$ (\cite{van2000asymptotic} \S20.2).

\subsection{Bootstrap Empirical Processes}\label{sec:sufficient proof}
We say that a sequence of random elements $G_n$ in a normed space $\mathbb D$, with norm denoted $\|\cdot\|$, converges in distribution to a tight limit $G$ in $\mathbb D$ if
\begin{equation}
\sup_{h\in BL_1(\mathbb D)}|E^*h(G_n)-Eh(G)|\to0\label{useful interim}
\end{equation}
where $BL_1(\mathbb D)$ is the set of all functions $h:\mathbb D\to[-1,1]$ that are uniformly Lipschitz, i.e., $|h(z_1)-h(z_2)|\leq \|z_1-z_2\|$ for every pair $z_1,z_2\in\mathbb D$, and $E^*[\cdot]$ denotes the outer expectation.

For a class of functions $\mathcal F$ from $\mathcal X$ to $\mathbb R$, define
$$\ell^\infty(\mathcal F):=\left\{z:\|z\|_{\mathcal F}:=\sup_{f\in\mathcal F}|z(f)|<\infty\right\}$$
where $z$ is a map from $\mathcal F$ to $\mathbb R$. Consider the empirical process $\mathbb G_n=\sqrt n(\hat P_n-P)$ as a random element that takes value in $\ell^\infty(\mathcal F)$, and consider the bootstrap empirical process $\mathbb G_n^*=\sqrt n(P_n^*-\hat P_n)$. Supposing $\mathcal F$ is Donsker with a finite envelope function, it is well established that $\mathbb G_n\Rightarrow\mathbb G_P$ in $\ell^\infty(\mathcal F)$ where $\mathbb G_P$ is a tight Gaussian process (with mean 0 and covariance $Cov(\mathbb G_P(f_1),\mathbb G_P(f_2))=Cov_P(f_1(X),f_2(X))$ where $Cov_P$ denotes the covariance taken with respect to $P$). Denote $E_M[\cdot]$ as the expectation conditional on $X_1,\ldots,X_n$. We recall the following result for the bootstrap empirical process.

\begin{theorem}[Adapted from \cite{van2000asymptotic} Theorem 23.7]
For every Donsker class $\mathcal F$ of measurable functions with a finite envelope function,
$$\sup_{h\in BL_1(\ell^\infty(\mathcal F))}|E_Mh(\mathbb G_n^*)-Eh(\mathbb G_P)|\stackrel{p}{\to}0$$
Furthermore, the sequence $\mathbb G_n^*$ is asymptotically measurable.\label{BEP}
\end{theorem}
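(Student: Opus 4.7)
The plan is to reproduce, in the style of the Giné--Zinn bootstrap CLT, a two-part argument: a conditional finite-dimensional central limit theorem combined with conditional asymptotic equicontinuity on $\ell^\infty(\mathcal F)$. Throughout, $E_M$ denotes expectation over the resampling weights, and convergence in probability is taken over the data.

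Concretely, I would begin by writing
\[
\mathbb G_n^* \;=\; \frac{1}{\sqrt n}\sum_{i=1}^n (M_{ni}-1)\,\delta_{X_i},
\]
with $(M_{n1},\ldots,M_{nn})\sim \mathrm{Multinomial}(n;1/n,\ldots,1/n)$. First I would Poissonize: replace the exchangeable multinomial weights by independent $N_{ni}\sim\mathrm{Poisson}(1)$, showing via a standard coupling (conditioning on $\sum N_{ni}=n$) that the supremum-norm difference over $\mathcal F$ is $o_p(1)$. Next, for any finite collection $f_1,\ldots,f_k\in\mathcal F$, I would apply the Lindeberg--L\'evy CLT to the i.i.d.\ summands $(N_{ni}-1)(f_j(X_i))_{j=1}^k$ conditional on the data, verifying the Lindeberg condition via the strong law applied to $P f_j^2$ and the square-integrability implied by the finite envelope; this yields the correct Gaussian covariance $\operatorname{Cov}_P(f_j(X),f_k(X))$ in probability.

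The harder half is conditional asymptotic equicontinuity on $\mathcal F_\delta=\{f-g:f,g\in\mathcal F,\ \rho_P(f-g)<\delta\}$. Here I would exploit the fact that, conditional on the data, the centered weighted sum has the same distributional structure as a multiplier (or symmetrized) empirical process with i.i.d.\ centered multipliers of mean zero and unit variance. Using the standard multiplier inequality (cf.\ \cite{van2000asymptotic} Lemma 2.9.1 in the weak-convergence notation) one bounds the outer expectation of $\|\mathbb G_n^*\|_{\mathcal F_\delta}$ by a constant multiple of $E^*\|\mathbb G_n\|_{\mathcal F_\delta}$ plus a vanishing term. Since $\mathcal F$ is $P$-Donsker, the process $\mathbb G_n$ is asymptotically equicontinuous on $(\mathcal F,\rho_P)$, so letting $\delta\downarrow 0$ yields the required conditional equicontinuity in probability. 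Combining the two pieces through the usual finite-dimensional-plus-tightness characterization of weak convergence in $\ell^\infty(\mathcal F)$ (and then appealing to the portmanteau/$BL_1$ characterization) delivers exactly
\[
\sup_{h\in BL_1(\ell^\infty(\mathcal F))}\bigl|E_Mh(\mathbb G_n^*)-Eh(\mathbb G_P)\bigr|\;\stackrel{p}{\to}\;0.
\]
Asymptotic measurability of $\mathbb G_n^*$ as a map into $\ell^\infty(\mathcal F)$ would be obtained by approximating $\mathbb G_n^*$ uniformly in $\mathcal F$ by its projections onto finite $\rho_P$-nets (measurable by construction) and using the Donsker-class total boundedness of $\mathcal F$ in $\rho_P$.

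The main obstacle is the equicontinuity step: one must handle the joint randomness in $X_1,\ldots,X_n$ and the resampling weights while preserving non-measurability control. Carrying out the multiplier inequality rigorously requires symmetrization and a careful desymmetrization via centering, and in the non-measurable setting one must phrase every bound with outer expectations. Once equicontinuity is done, translating conditional weak convergence on a non-separable space into the $BL_1$-uniform bound over $h$ is routine via the portmanteau theorem, and measurability of the approximating finite-dimensional projections is immediate since they are Borel functions of $(X_1,\ldots,X_n)$ together with the multinomial weights.
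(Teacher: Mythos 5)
This statement is not proved in the paper at all: it appears in the ``Useful Technical Backgrounds'' appendix and is quoted, with attribution, from \cite{van2000asymptotic} Theorem 23.7 (the Gin\'e--Zinn bootstrap central limit theorem), so there is no in-paper argument to compare against. Your sketch is a faithful outline of the standard proof of that cited result --- Poissonization of the multinomial weights, a conditional Lindeberg finite-dimensional CLT, conditional asymptotic equicontinuity via the multiplier inequality, and the $BL_1$/portmanteau characterization of conditional weak convergence in $\ell^\infty(\mathcal F)$ --- essentially as carried out in \cite{wellner2013weak} \S3.6 and \S2.9. Two points in your outline compress real work: the de-Poissonization step is more delicate than ``conditioning on $\sum_i N_{ni}=n$'' (Gin\'e--Zinn control the discrepancy between the multinomial and Poissonized processes with a separate maximal-inequality argument), and passing from the unconditional multiplier inequality to the \emph{conditional} (given the data, in probability) equicontinuity statement requires the conditional multiplier CLT, which uses the finiteness of $\|\xi\|_{2,1}$ for the centered Poisson multipliers. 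Neither is a wrong turn --- both are standard and available in the cited sources --- so the proposal is a correct reconstruction of the proof the paper delegates to the literature.
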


Next we recall two theorems:

\begin{theorem}[Adapted from \cite{van2000asymptotic} Theorem 20.8]
Let $\mathbb D$ be a normed space and $\phi:\mathbb D_\phi\subset\mathbb D\to\mathbb R^d$ be Hadamard differentiable at $\theta$ tangential to some subspace $\mathbb D_0$. Let $\hat\theta_n$ be random maps with values in $\mathbb D_\phi$ such that $\sqrt n(\hat\theta_n-\theta)\Rightarrow\mathbb T$ where $\mathbb T$ takes values in $\mathbb D_0$. Then $\sqrt n(\phi(\hat\theta_n)-\phi(\theta))\Rightarrow\phi_\theta'(\mathbb T)$.\label{EP delta}
\end{theorem}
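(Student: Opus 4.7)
The plan is to prove the functional delta method via the standard almost-sure-representation route, reducing weak convergence to pointwise application of the Hadamard derivative. Since the statement is exactly the classical functional delta method (here for $\mathbb{R}^d$-valued functionals, which sidesteps needing continuous mapping on non-separable range spaces), I will follow van der Vaart's treatment.

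\textbf{Step 1 (almost sure representation).} Given $\sqrt{n}(\hat\theta_n-\theta)\Rightarrow\mathbb T$ with $\mathbb T$ tight and supported in $\mathbb D_0$, I would invoke a Skorokhod / Dudley–Wichura-type almost sure representation on a common probability space to produce copies $\tilde\theta_n$ (with the same laws as $\hat\theta_n$) and $\tilde{\mathbb T}\stackrel{d}{=}\mathbb T$ such that $\sqrt{n}(\tilde\theta_n-\theta)\to\tilde{\mathbb T}$ almost surely (in the norm of $\mathbb D$, or in outer probability if non-separability demands it). The tightness of $\mathbb T$ and its concentration on $\mathbb D_0$ justify this construction.

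\textbf{Step 2 (applying Hadamard differentiability).} Set $t_n=1/\sqrt{n}\searrow 0$ and $H_n=\sqrt{n}(\tilde\theta_n-\theta)$, so $H_n\to\tilde{\mathbb T}\in\mathbb D_0$ almost surely. By construction $\theta+t_n H_n=\tilde\theta_n\in\mathbb D_\phi$. The definition of Hadamard differentiability tangential to $\mathbb D_0$ (Appendix~\ref{sec:HD}) then yields, pointwise on the event of almost sure convergence,
\[
\frac{\phi(\theta+t_n H_n)-\phi(\theta)}{t_n}\;=\;\sqrt{n}\bigl(\phi(\tilde\theta_n)-\phi(\theta)\bigr)\;\longrightarrow\;\phi_\theta'(\tilde{\mathbb T}).
\]
Hence $\sqrt{n}(\phi(\tilde\theta_n)-\phi(\theta))\to\phi_\theta'(\tilde{\mathbb T})$ almost surely in $\mathbb R^d$.

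\textbf{Step 3 (transfer back to distribution).} Almost sure convergence implies convergence in distribution, so $\sqrt{n}(\phi(\tilde\theta_n)-\phi(\theta))\Rightarrow\phi_\theta'(\tilde{\mathbb T})$. Because $\tilde\theta_n\stackrel{d}{=}\hat\theta_n$ and $\tilde{\mathbb T}\stackrel{d}{=}\mathbb T$, and because $\phi_\theta':\mathbb D_0\to\mathbb R^d$ is continuous and linear (so $\phi_\theta'(\tilde{\mathbb T})\stackrel{d}{=}\phi_\theta'(\mathbb T)$), the conclusion $\sqrt{n}(\phi(\hat\theta_n)-\phi(\theta))\Rightarrow\phi_\theta'(\mathbb T)$ follows.

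\textbf{Main obstacle.} The only subtlety lies in Step 1: in the applications of interest, $\mathbb D=\ell^\infty(\mathcal F)$ is typically non-separable, so the classical Skorokhod representation does not directly apply and one must use the outer-almost-sure version (Dudley's representation for weak convergence of maps that are not Borel measurable). The tightness of $\mathbb T$ and the fact that it concentrates on the (separable) support within $\mathbb D_0$ are what make this representation available; once the representation is in hand, Steps 2 and 3 are routine. Since the theorem's conclusion lives in $\mathbb R^d$, no further non-separability issue arises on the limit side.
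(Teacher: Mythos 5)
The paper does not actually prove Theorem \ref{EP delta}: it is imported as a background result, adapted verbatim from \cite{van2000asymptotic} Theorem 20.8, and used only as an ingredient in Proposition \ref{HD}, Theorem \ref{thm:delta empirical} and Theorem \ref{main variant}. Your argument is a correct, standard proof of the functional delta method, but it follows a different route from the cited source: van der Vaart proves Theorem 20.8 by applying the \emph{extended continuous mapping theorem} to the maps $g_n(h)=\sqrt n\bigl(\phi(\theta+h/\sqrt n)-\phi(\theta)\bigr)$ on $\mathbb D_n=\{h:\theta+h/\sqrt n\in\mathbb D_\phi\}$, for which Hadamard differentiability tangential to $\mathbb D_0$ supplies exactly the required hypothesis $g_n(h_n)\to\phi_\theta'(h)$ whenever $h_n\to h\in\mathbb D_0$ with $h_n\in\mathbb D_n$. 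Your almost-sure-representation route is the classical alternative; the two are essentially equivalent in content (the extended continuous mapping theorem is itself typically proved via the Dudley--Wichura representation), and both need the same hypotheses you flag in your ``main obstacle'' paragraph: the limit $\mathbb T$ must be tight/separable for the representation (or the extended CMT) to apply, and one must work with outer expectations since $\hat\theta_n$ need not be Borel measurable in a non-separable space such as $\ell^\infty(\mathcal F)$. The one step worth making explicit in your Step 3 is that the transfer back from $\tilde\theta_n$ to $\hat\theta_n$ relies on the representation preserving \emph{outer} expectations of bounded functionals, i.e.\ $E^*f\bigl(\sqrt n(\phi(\tilde\theta_n)-\phi(\theta))\bigr)=E^*f\bigl(\sqrt n(\phi(\hat\theta_n)-\phi(\theta))\bigr)$, rather than ordinary equality in law of measurable images; with that noted, the argument is complete, and the restriction to an $\mathbb R^d$-valued $\phi$ indeed spares you any measurability issue on the limit side.
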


\begin{theorem}[Adapted from \cite{van2000asymptotic} Theorem 23.9]
Let $\mathbb D$ be a normed space and $\phi:\mathbb D_\phi\subset\mathbb D\to\mathbb R^d$ be Hadamard differentiable at $\theta$ tangential to some subspace $\mathbb D_0$. Let $\hat\theta_n$ and $\theta_n^*$ be random maps with values in $\mathbb D_\phi$ such that $\sqrt n(\hat\theta_n-\theta)\Rightarrow\mathbb T$ and $\sup_{h\in BL_1(\mathbb D)}|E_Mh(\sqrt n(\theta_n^*-\hat\theta_n))-Eh(\mathbb T)|\stackrel{p}{\to}0$, in which $\sqrt n(\theta_n^*-\hat\theta_n)$ is asymptotically measurable and $\mathbb T$ is tight and takes values in $\mathbb D_0$. Then $\sqrt n(\phi(\theta_n^*)-\phi(\hat\theta_n))\Rightarrow\phi_\theta'(\mathbb T)$ conditionally given $X_1,X_2,\ldots$ in probability.\label{BEP delta}
\end{theorem}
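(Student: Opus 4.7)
The plan is to reduce the conclusion to two ingredients: (i) a uniform-on-compacts strengthening of Hadamard differentiability, namely that $\sqrt n(\phi(\theta+h/\sqrt n)-\phi(\theta))\to\phi_\theta'(h)$ uniformly over $h$ in compact subsets of $\mathbb D_0$; and (ii) the hypothesized $BL_1$-conditional convergence $\sqrt n(\theta_n^*-\hat\theta_n)\Rightarrow\mathbb T$. Together with $\hat\theta_n\to\theta$ in probability (which follows from $\sqrt n(\hat\theta_n-\theta)=O_P(1)$), these will let us transport $\sqrt n(\theta_n^*-\hat\theta_n)$ through $\phi$ and land at $\phi_\theta'(\mathbb T)$.

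First I would prove the uniform-on-compacts lemma: for any compact $K\subset\mathbb D_0$, $\lim_{t\downarrow 0}\sup_{h\in K}|(\phi(\theta+th)-\phi(\theta))/t-\phi_\theta'(h)|=0$. This follows from Hadamard differentiability by a standard subsequence-and-contradiction argument: if the supremum failed to vanish along some $t_n\downarrow 0$, we could find $h_n\in K$ with $|(\phi(\theta+t_nh_n)-\phi(\theta))/t_n-\phi_\theta'(h_n)|>\varepsilon$; compactness gives $h_{n_k}\to h\in K$, and Hadamard differentiability together with continuity of $\phi_\theta'$ then force both $(\phi(\theta+t_{n_k}h_{n_k})-\phi(\theta))/t_{n_k}\to\phi_\theta'(h)$ and $\phi_\theta'(h_{n_k})\to\phi_\theta'(h)$, a contradiction.

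Next, set $Y_n:=\sqrt n(\theta_n^*-\hat\theta_n)$ and $\tilde h_n:=\sqrt n(\hat\theta_n-\theta)$, and decompose
$$\sqrt n(\phi(\theta_n^*)-\phi(\hat\theta_n))=\sqrt n\bigl(\phi(\theta+(Y_n+\tilde h_n)/\sqrt n)-\phi(\theta)\bigr)-\sqrt n\bigl(\phi(\theta+\tilde h_n/\sqrt n)-\phi(\theta)\bigr).$$
Since $\tilde h_n$ is tight (it converges weakly to a tight limit) and $Y_n$ is conditionally tight (from the $BL_1$ hypothesis and tightness of $\mathbb T$), for any $\varepsilon>0$ both directions lie in a fixed compact $K\subset\mathbb D_0$ except on an event of probability at most $\varepsilon+o_P(1)$. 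On this good event the uniform-on-compacts lemma makes each of the two differences above equal to $\phi_\theta'$ evaluated at the corresponding direction, up to a uniform $o(1)$ error. Subtracting and using linearity of $\phi_\theta'$ cancels the $\phi_\theta'(\tilde h_n)$ contributions and yields $\sqrt n(\phi(\theta_n^*)-\phi(\hat\theta_n))=\phi_\theta'(Y_n)+o_P(1)$ in a $BL_1$-compatible sense. A continuous mapping argument applied to the continuous linear map $\phi_\theta'$ then delivers $\phi_\theta'(Y_n)\Rightarrow\phi_\theta'(\mathbb T)$ conditionally in the $BL_1$ metric, completing the proof.

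The main obstacle, I expect, is making the $o_P(1)$ remainder in the linearization uniform enough to survive the outer bootstrap expectation $E_M$: pointwise (in $\omega$) control from the uniform-on-compacts lemma is not automatically the same as control after integrating against the \emph{random} conditional law of $Y_n$ given $X_1,\ldots,X_n$. The asymptotic measurability hypothesis on $\sqrt n(\theta_n^*-\hat\theta_n)$ combined with tightness of $\mathbb T$ (used to truncate $Y_n$ to a fixed compact at the cost of an $\varepsilon$-sized error) and the uniform bound $\|h\|_\infty\le 1$, $\mathrm{Lip}(h)\le 1$ for $h\in BL_1$ are precisely the ingredients that make this integration step work; bookkeeping the two layers of randomness (data and resampling) in the right order is the bulk of the remaining technical work.
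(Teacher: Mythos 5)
The paper does not actually prove this statement: it is imported verbatim (``Adapted from van der Vaart (2000), Theorem 23.9'') as a background tool, with the proof living in van der Vaart's book (and in van der Vaart and Wellner, Theorem 3.9.11). So there is no in-paper argument to compare against. Your outline reproduces the standard textbook proof of that external result --- a uniform-on-compacts strengthening of Hadamard differentiability, the decomposition of $\sqrt n(\phi(\theta_n^*)-\phi(\hat\theta_n))$ into two difference quotients anchored at $\theta$, tightness to localize, linearity of $\phi_\theta'$ to cancel the $\sqrt n(\hat\theta_n-\theta)$ contributions, and a conditional continuous-mapping step in the bounded-Lipschitz metric --- and you correctly identify the residual difficulty as pushing the $o_P(1)$ remainder through the outer conditional expectation $E_M$. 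The strategy is the right one.

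One step is stated too strongly and would fail as written. You assert that, by tightness, ``both directions lie in a fixed compact $K\subset\mathbb D_0$ except on an event of probability at most $\varepsilon+o_P(1)$.'' Weak convergence to a tight limit only gives $\limsup_n P^*\bigl(\tilde h_n\notin K^\delta\bigr)\leq\varepsilon$ for the open $\delta$-enlargement $K^\delta$ of a compact $K$; the portmanteau inequality for the closed set $K$ itself points the wrong way, and in the intended application ($\mathbb D=\ell^\infty(\mathcal F)$, with $\mathbb D_0$ possibly the uniformly continuous paths) the finite-$n$ directions are step-function-like objects that need not belong to $\mathbb D_0$ at all. Consequently your lemma must be upgraded to the ``extended'' form: for every $\varepsilon>0$ and compact $K\subset\mathbb D_0$ there exist $\delta>0$ and $n_0$ such that $\left|g_n(h)-\phi_\theta'(k)\right|<\varepsilon$ whenever $n\geq n_0$, $k\in K$, $\|h-k\|<\delta$ and $\theta+h/\sqrt n\in\mathbb D_\phi$, where $g_n(h)=\sqrt n\bigl(\phi(\theta+h/\sqrt n)-\phi(\theta)\bigr)$. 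With that version the cancellation still goes through --- the two nearby anchor points in $K$ differ by something within $2\delta$ of $Y_n$, and $\phi_\theta'$ is linear and continuous on $\mathbb D_0$ --- and the rest is the $E_M$/measurability bookkeeping you already flag. So: right route, matching the source's proof, but the localization must be to a $\delta$-neighborhood of a compact rather than to the compact itself.
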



In general, we say $\sqrt n(\theta_n^*-\hat\theta_n)$ weakly converges to $\mathbb T$ conditionally given $X_1,X_2,\ldots$ in probability if the condition $\sup_{h\in BL_1(\mathbb D)}|E_Mh(\sqrt n(\theta_n^*-\hat\theta_n))-Eh(\mathbb T)|\stackrel{p}{\to}0$ in Theorem \ref{BEP delta} holds (\cite{van2000asymptotic} equation (23.8)). By \cite{kosorok2007introduction} Lemma 10.11, in the case $\mathbb D=\mathbb R^d$, this condition implies $P(\sqrt n(\theta_n^*-\hat\theta_n)\leq x|\hat P_n)\stackrel{p}{\to}F(x)$ for all $x\in\mathbb R^d$
if the distribution function $F(\cdot)$ of $\mathbb T$ is continuous.

The following theorem, which is an immediate consequence of the above results, is used to justify Propositions \ref{HD} and \ref{HD multi}.

\begin{theorem}[Delta method for empirical bootstrap]
Consider $\hat P_n$ and $P_n^*$ as random elements that take values in $\ell^\infty(\mathcal F)$, where $\mathcal F$ is a Donsker class with a finite envelope. Suppose $\phi:\ell^\infty(\mathcal F)\to\mathbb R^d$ is Hadamard differentiable at $P$ (tangential to $\ell^\infty(\mathcal F)$). Then $\sqrt n(\phi(\hat P_n)-\phi(P))\Rightarrow\phi_P'(\mathbb G_P)$, and also $\sqrt n(\phi(P_n^*)-\phi(\hat P_n))\Rightarrow\phi_P'(\mathbb G_P)$ given $X_1,X_2,\ldots$ in probability.\label{thm:delta empirical}
\end{theorem}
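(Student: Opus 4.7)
The plan is to derive both convergences by straightforwardly assembling the four ingredients already imported from the empirical process literature in this appendix: Donsker's theorem for the empirical process $\mathbb{G}_n=\sqrt{n}(\hat P_n-P)$, Theorem~\ref{BEP} for the conditional convergence of the bootstrap empirical process $\mathbb{G}_n^*=\sqrt{n}(P_n^*-\hat P_n)$, Theorem~\ref{EP delta} for the ordinary functional delta method, and Theorem~\ref{BEP delta} for its bootstrap analog. The role of the theorem is essentially to package these pieces into a statement phrased directly in terms of $\phi(\hat P_n)$ and $\phi(P_n^*)$, so the bulk of the argument is verification of hypotheses rather than new analysis.

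For the first assertion, I would take $\mathbb{D}=\mathbb{D}_0=\ell^\infty(\mathcal{F})$, $\mathbb{D}_\phi$ the convex subset of signed measures on which $\phi$ is defined, $\theta=P$, and $\hat\theta_n=\hat P_n$. Because $\mathcal{F}$ is $P$-Donsker with finite envelope, Donsker's theorem gives $\sqrt{n}(\hat P_n-P)\Rightarrow \mathbb{G}_P$ in $\ell^\infty(\mathcal{F})$, where $\mathbb{G}_P$ is tight and supported in $\ell^\infty(\mathcal{F})$. The Hadamard differentiability of $\phi$ at $P$ tangential to $\ell^\infty(\mathcal{F})$ is exactly the hypothesis required by Theorem~\ref{EP delta}, which then delivers $\sqrt{n}(\phi(\hat P_n)-\phi(P))\Rightarrow \phi_P'(\mathbb{G}_P)$.

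For the second assertion I would invoke Theorem~\ref{BEP delta} with the same $\mathbb{D}$, $\mathbb{D}_0$, $\theta$, and $\hat\theta_n$, and $\theta_n^*=P_n^*$. The unconditional convergence $\sqrt{n}(\hat P_n-P)\Rightarrow \mathbb{G}_P$ is again Donsker's theorem; the conditional weak convergence hypothesis
$\sup_{h\in BL_1(\ell^\infty(\mathcal{F}))}|E_M h(\sqrt{n}(P_n^*-\hat P_n))-Eh(\mathbb{G}_P)|\stackrel{p}{\to}0$
together with the asymptotic measurability of $\sqrt{n}(P_n^*-\hat P_n)$ is precisely the content of Theorem~\ref{BEP}. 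Tightness of $\mathbb{G}_P$ and its taking values in $\ell^\infty(\mathcal{F})$ follow from its being a tight Gaussian process on that space. Theorem~\ref{BEP delta} then yields $\sqrt{n}(\phi(P_n^*)-\phi(\hat P_n))\Rightarrow \phi_P'(\mathbb{G}_P)$ conditionally given $X_1,X_2,\ldots$ in probability, completing the proof.

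There is no real obstacle here beyond the bookkeeping of matching spaces and tangent spaces. The only point that merits a brief comment is that the Hadamard differentiability is stated tangentially to the whole of $\ell^\infty(\mathcal{F})$, which is exactly the space in which $\mathbb{G}_P$ concentrates, so both delta methods apply without any further restriction of the derivative. Consequently the proof reduces to two short lines that cite Donsker's theorem combined with Theorem~\ref{EP delta}, and Theorem~\ref{BEP} combined with Theorem~\ref{BEP delta}.
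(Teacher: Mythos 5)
Your proposal is correct and follows essentially the same route as the paper's own proof: the first claim is the functional delta method (Theorem \ref{EP delta}) applied with $\hat\theta_n=\hat P_n$ and $\mathbb D=\ell^\infty(\mathcal F)$, and the second follows because Theorem \ref{BEP} supplies exactly the conditional convergence and asymptotic measurability hypotheses required by Theorem \ref{BEP delta}. Your additional verification of the tangent-space bookkeeping is a harmless elaboration of the same argument.
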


\begin{proof}[Proof of Theorem \ref{thm:delta empirical}]
Setting $\hat\theta_n=\hat P_n$ and $\mathbb D=\ell^\infty(\mathcal F)$, Theorem \ref{EP delta} implies $\sqrt n(\phi(\hat P_n)-\phi(P))\Rightarrow\phi_P'(\mathbb G_P)$. Moreover, Theorem \ref{BEP} gives the conditions needed in Theorem \ref{BEP delta} to conclude that $\sqrt n(\phi(P_n^*)-\phi(\hat P_n))\Rightarrow\phi_P'(\mathbb G_P)$ given $X_1,X_2,\ldots$ in probability.
\end{proof}



\subsection{Edgeworth Expansions}\label{sec:background Edgeworth}
We have the following higher-order expansion of coverage probability for a general function-of-mean model:
\begin{theorem}[\cite{hall2013bootstrap} Theorem 2.2]
Assume that for an integer $\nu\geq1$, the function $\tilde A$ has $\nu+2$ continuous derivatives in a neighborhood of $\bm\mu$, that $\tilde A(\bm\mu)=0$, that $E\|\mathbf X\|^{\nu+2}<\infty$, that the characteristic function $\chi$ of $\mathbf X$ satisfies Cramer's condition $\limsup_{\|\mathbf t\|\to\infty}|\chi(\mathbf t)|<1$, and that the asymptotic variance of $\sqrt n\tilde A(\overline{\mathbf X})$ equals 1. Then
$$P(\sqrt n\tilde A(\overline{\mathbf X})\leq x)=\Phi(x)+\sum_{j=1}^\nu n^{-j/2}\pi_j(x)\phi(x)+o(n^{-\nu/2})$$
uniformly in $x$, where $\pi_j$ is a polynomial of degree $3j-1$, odd for even $j$ and even for odd $j$, with coefficients depending on moments of $\mathbf X$ up to order $j+2$ polynomially and also $A$.\label{Edgeworth original}
\end{theorem}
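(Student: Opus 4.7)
Because this is a standard smooth-function-of-means Edgeworth expansion under the Cramer-regular regime, the plan is to reduce it to a multivariate Edgeworth expansion for the sample mean and then translate through a Taylor expansion of $\tilde A$. I would begin with the Bhattacharya--Ranga Rao expansion for $\sqrt n(\overline{\mathbf X}-\bm\mu)$: under $E\|\mathbf X\|^{\nu+2}<\infty$ and Cramer's condition on the characteristic function of $\mathbf X$, this gives, uniformly over convex Borel sets, a density expansion of the form $\phi_\Sigma(\mathbf z)\bigl(1+\sum_{j=1}^{\nu}n^{-j/2}P_j(\mathbf z)\bigr)$ with error $o(n^{-\nu/2})$, where the polynomials $P_j$ are determined by the joint cumulants of $\mathbf X$.

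Next, since $\tilde A$ is $C^{\nu+2}$ in a neighborhood of $\bm\mu$ and $\tilde A(\bm\mu)=0$, I would Taylor expand $\tilde A$ to order $\nu+2$ and substitute to represent $\sqrt n\tilde A(\overline{\mathbf X})$ as a polynomial of degree $\nu+2$ in the centered sample moments of $\mathbf X$ divided by appropriate powers of $\sqrt n$, plus a remainder that is negligible on a high-probability set. From this representation one reads off cumulant expansions $\kappa_{j,n}=n^{-(j-2)/2}(k_{j,1}+k_{j,2}/n+\cdots)$ in which each $k_{j,\ell}$ is a polynomial of total degree at most $j+2$ in the moments of $\mathbf X$, depending on $\tilde A$ only through its derivatives at $\bm\mu$. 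The asymptotic-variance normalization forces $k_{2,1}=1$.

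Formal inversion of these cumulants via the Hermite/Cornish--Fisher calculus then yields
$$P(\sqrt n\tilde A(\overline{\mathbf X})\leq x)=\Phi(x)+\sum_{j=1}^{\nu}n^{-j/2}\pi_j(x)\phi(x)+o(n^{-\nu/2}),$$
in which $\pi_j$ is a linear combination of Hermite polynomials $H_k$ of indices $k\leq 3j-1$ and of parity opposite to $j$. The degree count and parity claims fall out of the structure of these inversion formulas, and the explicit cases $\nu=1,2$ match the $p_1,p_2$ listed after Theorem \ref{main Edgeworth}.

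The main obstacle is promoting this formal series to a rigorous uniform-in-$x$ bound. The standard route is a smoothing inequality: write the difference between $P(\sqrt n\tilde A(\overline{\mathbf X})\leq x)$ and its candidate expansion as a Fourier integral against the characteristic function of $\sqrt n\tilde A(\overline{\mathbf X})$, split at a frequency $\|\mathbf t\|\asymp n^{1/2}$, approximate the low-frequency contribution by the cumulant expansion from the second step, and control the high-frequency contribution using Cramer's condition together with a truncation of $\mathbf X_i$ (replace $\mathbf X_i$ by $\mathbf X_i\,\mathbf{1}\{\|\mathbf X_i\|\leq c n^{\gamma}\}$ for a suitable $\gamma$ so that the tail of the nonlinear Taylor remainder is absorbed while Cramer's condition is preserved for the truncated variable). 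This Bhattacharya--Ghosh argument, adapted by Hall to the smooth-function model, is where the bulk of the technical work sits.
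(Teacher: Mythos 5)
This statement is not proved in the paper at all: it is imported verbatim (as the label ``Adapted from Hall, Theorem 2.2'' indicates) from Hall's monograph on the bootstrap and Edgeworth expansion, and is quoted in the appendix purely as background for the proof of Theorem \ref{main Edgeworth}. So there is no in-paper argument to compare against; the relevant comparison is with the proof in the cited source, and your outline is an accurate reconstruction of exactly that argument --- the Bhattacharya--Ghosh treatment of the smooth-function-of-means model: a multivariate Edgeworth (signed-measure) expansion for $\sqrt n(\overline{\mathbf X}-\bm\mu)$ under Cram\'er's condition, a Taylor expansion of $\tilde A$ to transfer it to the nonlinear statistic, cumulant expansions $\kappa_{j,n}=n^{-(j-2)/2}(k_{j,1}+k_{j,2}/n+\cdots)$, formal Hermite inversion giving the degree-$(3j-1)$ and parity structure of $\pi_j$, and a smoothing-inequality plus truncation argument to make the expansion rigorous and uniform in $x$.

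Two caveats. First, your proposal is a plan rather than a proof: you correctly identify that the entire analytic difficulty lives in the last step (controlling the characteristic function of the nonlinear statistic at moderate and high frequencies, and showing the truncation does not disturb either the cumulant expansion or Cram\'er's condition), but you do not carry it out, so as a standalone argument it is incomplete --- which is acceptable here only because the result is a classical cited theorem. Second, a minor imprecision: the Bhattacharya--Ranga Rao result gives an asymptotic expansion for probabilities of (convex) Borel sets as an integral against a formal Edgeworth signed measure; calling it a ``density expansion'' presupposes absolute continuity of $\overline{\mathbf X}$, which is not assumed and not needed. Neither point affects the correctness of the route you describe.
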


In Theorem \ref{Edgeworth original}, $\tilde A$ is generally defined and the condition $\tilde A(\bm\mu)=0$ is satisfied by $A$ and $A_s$ defined in \eqref{A def} and \eqref{A def2}. When $\tilde A=A$, we denote its $\pi_j$ as $p_j$. When $\tilde A=A_s$, we denote its $\pi_j$ as $q_j$.

Now denote $\hat\pi_j(\cdot)$ as $\pi_j(\cdot)$ but with all moments in its coefficients replaced by the sample moments, i.e., denoting $\mathbf X=(X(1),\ldots,X(d))$, the moment
$$\mu_{m_1,\ldots,m_d}=E[{X(1)}^{m_1}\cdots{X(d)}^{m_d}]$$
is replaced by
$$\hat\mu_{m_1,\ldots,m_d}=\frac{1}{n}\sum_{i=1}^n{X_i(1)}^{m_1}\cdots{X_i(d)}^{m_d}$$
for sample $\mathbf X_i=(X_i(1),\ldots,X_i(d)),i=1,\ldots,n$. Specifically, in the case of $A$, we denote $\hat p_j$ as the $p_j$ with moments replaced by sample moments. Moreover, we also define
$$\hat A(\mathbf x)=\frac{g(\mathbf x)-g(\overline{\mathbf X})}{h(\overline{\mathbf X})}$$
Then, $\hat A(\overline{\mathbf X}^*)$, where $\overline{\mathbf X}^*=(1/n)\sum_{i=1}^n\mathbf X_i^*$ for a resample $\mathbf X_i^*,i=1,\ldots,n$, is the resample counterpart of $A(\overline{\mathbf X})$. We have the following expansion and bounds for the resample counterpart:

\begin{theorem}[Adapted from \cite{hall2013bootstrap} Theorem 5.1]
Let $\lambda>0$ be given, and let $l=l(\lambda)$ be a sufficiently large positive number. Assume that $g$ and $h$ each have $\nu+3$ bounded derivatives in a neighborhood of $\bm\mu$, that $E\|\mathbf X\|^l<\infty$, and that the characteristic function $\chi$ of $\mathbf X$ satisfies Cramer's condition $\limsup_{\|\mathbf t\|\to\infty}|\chi(\mathbf t)|<1$. Then there exists a constant $C>0$ such that
$$P\left(\sup_{-\infty<x<\infty}\left|P(\sqrt n\hat A(\overline{\mathbf X}^*)\leq x|\mathcal X_n)-\Phi(x)-\sum_{j=1}^\nu n^{-j/2}\hat p_j(x)\phi(x)\right|>Cn^{-(\nu+1)/2}\right)=O(n^{-\lambda})$$
$$P\left(\max_{1\leq j\leq\nu}\sup_{-\infty<x<\infty}(1+|x|)^{-(3j-1)}|\hat p_j(x)|>C\right)=O(n^{-\lambda})$$
and
\begin{equation}
P\left(\max_{1\leq j\leq\nu}\sup_{-\infty<x<\infty}(1+|x|)^{-(3j-1)}|\hat p_j'(x)|>C\right)=O(n^{-\lambda})\label{poly derivative}
\end{equation}
where $\mathcal X_n=\{\mathbf X_1,\ldots,\mathbf X_n\}$ denotes the data.
\label{Edgeworth bootstrap}
\end{theorem}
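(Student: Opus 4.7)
The strategy is to derive the conclusion as a conditional version of the classical Edgeworth expansion in Theorem \ref{Edgeworth original}, applied to the bootstrap empirical distribution $\hat P_n$ in place of $P$, while carefully tracking how the polynomial coefficients depend on the sample moments. Conditional on $\mathcal X_n$, the resample $\mathbf X_1^*,\ldots,\mathbf X_n^*$ is i.i.d.\ from $\hat P_n$ with mean $\overline{\mathbf X}$, so $\sqrt n\hat A(\overline{\mathbf X}^*)$ is a centered, smooth function of a sample mean that should admit an Edgeworth expansion whose polynomial coefficients are obtained by the \emph{same} algebraic recipe that builds $p_j$ from moments of $P$, now using the sample moments $\hat\mu_{m_1,\ldots,m_d}$ of $\hat P_n$. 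These are exactly the polynomials $\hat p_j$.

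The plan proceeds in three steps. First, define a high-probability ``good event'' $\mathcal E_n$ on which (a) the sample moments up to a suitably high order are within some $n^{-\gamma}$ of the population moments, (b) $\overline{\mathbf X}$ lies in a fixed neighborhood of $\bm\mu$ where $g,h$ have the required $\nu+3$ bounded derivatives, and (c) a quantitative bootstrap analog of Cramer's condition holds for the empirical characteristic function $\hat\chi(\mathbf t)=n^{-1}\sum_i e^{i\mathbf t^\top\mathbf X_i}$. Under $E\|\mathbf X\|^l<\infty$ for $l$ chosen large relative to $\lambda$, Markov's inequality and standard empirical-process bounds give $P(\mathcal E_n^c)=O(n^{-\lambda})$. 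Second, on $\mathcal E_n$ I would apply a Bhattacharya--Rao type multivariate Edgeworth expansion to the i.i.d.\ bootstrap mean $\sqrt n(\overline{\mathbf X}^*-\overline{\mathbf X})$, and then Taylor-expand $\hat A$ around $\overline{\mathbf X}$ to transport the expansion to $\sqrt n\hat A(\overline{\mathbf X}^*)$; grouping terms by powers of $n^{-1/2}$ produces precisely the random polynomials $\hat p_j$ together with a remainder that is $O(n^{-(\nu+1)/2})$ uniformly in $x$. Third, since $\hat p_j(x)$ and $\hat p_j'(x)$ are polynomials in $x$ of degree $3j-1$ and $3j-2$ whose coefficients are fixed polynomial expressions in the sample moments up to order $j+2$, boundedness of those sample moments on $\mathcal E_n$ immediately yields the uniform bounds $(1+|x|)^{-(3j-1)}|\hat p_j(x)|\leq C$ and its derivative analog, with the exceptional probabilities absorbed into the $O(n^{-\lambda})$ term.

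The main obstacle is the bootstrap Cramer condition in step (c). Because $\hat P_n$ is purely atomic, the resample mean $\overline{\mathbf X}^*$ has a lattice component and the textbook Edgeworth machinery does not apply verbatim. The standard remedy is to show that, under Cramer's condition on $\mathbf X$ and sufficient moments, $|\hat\chi(\mathbf t)|$ is bounded away from $1$ uniformly over $\|\mathbf t\|$ in an appropriate range with probability $1-O(n^{-\lambda})$; this provides enough smoothing in the $n$-fold convolution underlying the conditional distribution of $\overline{\mathbf X}^*$ for the expansion to proceed. This quantitative smoothing estimate is the technical heart of the argument and is exactly what allows transplantation of the classical expansion from $P$ to $\hat P_n$ while maintaining uniformity in $x$ and the $O(n^{-(\nu+1)/2})$ remainder on $\mathcal E_n$; the remaining steps are a mechanical adaptation of the deterministic Edgeworth proof.
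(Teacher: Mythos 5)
The paper does not actually prove this theorem: it is imported from \cite{hall2013bootstrap}, Theorem 5.1, and the only argument the paper supplies is the one-sentence remark that the extra conclusion \eqref{poly derivative} is immediate because Hall's proof applies to any polynomial of degree at most $3j-1$, hence also to $\hat p_j'$. Your outline is consistent with, and essentially reconstructs, the structure of Hall's own proof: the high-probability good event controlling sample moments via Markov's inequality under $E\|\mathbf X\|^l<\infty$ with $l=l(\lambda)$ sufficiently large, the conditional Bhattacharya--Rao expansion for the bootstrap mean transported through a Taylor expansion of $\hat A$ about $\overline{\mathbf X}$, and the identification of the quantitative empirical Cramer/smoothing estimate as the technical heart needed to cope with the atomicity of $\hat P_n$. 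Your step-three justification of the bounds on $\hat p_j$ and $\hat p_j'$ --- fixed polynomial expressions in sample moments, bounded on the good event --- coincides with the paper's stated reason for adding \eqref{poly derivative}. The only caveat is that your sketch asserts rather than establishes the empirical smoothing estimate, which is exactly the nontrivial ingredient; since the paper itself treats the entire theorem as a citation, this is a faithful account of the proof route but would not stand alone as a complete proof.
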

Theorem \ref{Edgeworth bootstrap} is slightly strengthened from \cite{hall2013bootstrap} in that we put \eqref{poly derivative} as an additional conclusion. The proof in \cite{hall2013bootstrap} works for any polynomial $p_j$ of degree $3j-1$ and thus also $p_j'$, and the additional implication is immediate (and useful for our proof of Theorem \ref{main Edgeworth}).

\subsection{Subsampling}\label{sec:subsampling proofs}
Consider $\mathbb G_{n,k}^*=\sqrt k(P_k^*-\hat P_n)$ where $P_k^*$ is the bootstrap empirical distribution constructed using $k$ resampled values from $X_1,\ldots,X_n$ by sampling with replacement. Let $\mathcal F_\delta=\{f-g:f,g\in\mathcal F,\ \rho_P(f-g)<\delta\}$, where $\rho_P(f-g):=(Var_P(f(X)-g(X)))^{1/2}$ is the canonical metric. The following result is useful for analyzing Cheap $m$-out-of-$n$ Bootstrap and Cheap Bag of Little Bootstraps:
\begin{theorem}[Adapted from \cite{wellner2013weak} Theorem 3.6.3]
Let $\mathcal F$ be a Donsker class of measurable functions such that $\mathcal F_\delta$ is measurable for every $\delta>0$. Then
$$\sup_{h\in BL_1(\ell^\infty(\mathcal F))}|E_Mh(\mathbb G_{n,k_n}^*)-Eh(\mathbb G_P)|\stackrel{p}{\to}0$$
as $n\to\infty$, for any sequence $k_n\to\infty$, where $E_M[\cdot]$ denotes the expectation conditional on the data $X_1,X_2,\ldots,X_n$.\label{vary size}
\end{theorem}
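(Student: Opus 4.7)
The plan is to deduce the bounded-Lipschitz statement from the two canonical ingredients of a functional CLT in $\ell^\infty(\mathcal F)$: (i) conditional weak convergence of the finite-dimensional distributions of $\mathbb G_{n,k_n}^*$ to those of $\mathbb G_P$, and (ii) asymptotic conditional $\rho_P$-equicontinuity of $\mathbb G_{n,k_n}^*$. The measurability hypothesis on $\mathcal F_\delta$ is exactly what is needed to keep the relevant suprema measurable at every stage and to identify inner and outer expectations, so that the standard Prokhorov-type metrization of conditional weak convergence on a Polish subspace of $\ell^\infty(\mathcal F)$ carrying $\mathbb G_P$ (as in van der Vaart--Wellner Theorem 2.9.6 or Kosorok Theorem 2.2.4) turns (i) and (ii) into the stated bounded-Lipschitz convergence in probability.

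For (i), I would fix $f_1,\dots,f_m\in\mathcal F$ and condition on the data. The variables $X_1^*,\dots,X_{k_n}^*$ are then i.i.d.\ from $\hat P_n$, so $\mathbb G_{n,k_n}^*(f_j)=k_n^{-1/2}\sum_{i=1}^{k_n}(f_j(X_i^*)-\hat P_n f_j)$ is a centered triangular-array sum, and a conditional Lindeberg CLT applies. Its conditional covariances $\hat P_n f_jf_l-\hat P_n f_j\,\hat P_n f_l$ converge in probability to $\mathrm{Cov}_P(f_j(X),f_l(X))$ because $\mathcal F$ is $P$-Glivenko--Cantelli (a consequence of being Donsker with finite envelope), while the Lindeberg condition follows from square-integrability of the envelope.

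The hard part is (ii). My plan is to write the bootstrap in its multinomial form,
$$\mathbb G_{n,k_n}^*(f)=\frac{1}{\sqrt{k_n}}\sum_{i=1}^{n}(M_{n,i}-k_n/n)\,f(X_i),\qquad (M_{n,1},\dots,M_{n,n})\sim\mathrm{Mult}(k_n;1/n,\dots,1/n),$$
and Poissonize by replacing the multinomial counts with independent $N_i\sim\mathrm{Poisson}(k_n/n)$; the multinomial law is the conditional law on $\{\sum_i N_i=k_n\}$, and when $k_n\to\infty$ the de-Poissonization factor $P(\sum_i N_i=k_n)^{-1}$ grows only polynomially, so tail bounds on the Poissonized process transfer to the bootstrap with negligible loss. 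For the Poissonized multiplier process $k_n^{-1/2}\sum_i(N_i-k_n/n)f(X_i)$ I would apply Rademacher symmetrization of the centered Poisson weights and invoke the maximal inequality underlying the Donsker property of $\mathcal F$ to conclude
$$E_M\sup_{\substack{f,g\in\mathcal F\\ \rho_P(f-g)<\delta}}\bigl|\mathbb G_{n,k_n}^*(f-g)\bigr|\;\xrightarrow[n\to\infty]{P}\;\varepsilon(\delta),\qquad \varepsilon(\delta)\downarrow 0 \text{ as } \delta\downarrow 0.$$
The key point that makes this work for arbitrary $k_n\to\infty$ is that the Poisson weights have variance exactly $k_n/n$, which combined with the $\sqrt{k_n}$ normalization and $n$ observations reproduces the ordinary empirical-process scaling so that the entropy integral from the Donsker hypothesis controls the fluctuations.

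The main obstacle lies in this Poissonization/rescaling argument in the regime $k_n/n\not\to 1$ (the conclusion must hold for any $k_n\to\infty$, including $k_n/n\to 0$ or $k_n/n\to\infty$): the classical $k_n=n$ Giné--Zinn multiplier bound must be carried through with a mismatched weight variance, and a consistent comparison of the symmetrized Poisson weights to Rademacher weights is required. This is precisely the refinement worked out by Præstgaard--Wellner that yields the cited Wellner (2013) Theorem 3.6.3. Once (ii) is in hand, combining it with (i) and the Polish-subspace metrization of conditional weak convergence produces the bounded-Lipschitz statement asserted in the theorem.
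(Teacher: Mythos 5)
The paper does not actually prove this statement: Theorem \ref{vary size} sits in the appendix of technical background material and is imported, as announced in its header, from van der Vaart and Wellner's Theorem 3.6.3; it is then used as a black box in the proof of Theorem \ref{main variant}. There is therefore no internal proof to compare yours against. Your outline is a faithful reconstruction of how the cited result is established in the literature: finite-dimensional conditional convergence via a triangular-array Lindeberg CLT under $\hat P_n$, plus asymptotic conditional equicontinuity obtained by passing from multinomial to Poisson weights and invoking the multiplier inequality attached to the Donsker hypothesis, with the observation that weights of variance $k_n/n$ against the $\sqrt{k_n}$ normalization reproduce the empirical-process scaling for any $k_n\to\infty$. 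Two caveats. First, your proposal is not self-contained: the decisive step --- the multiplier inequality and the comparison of symmetrized Poisson weights to Rademacher weights, uniformly over the regime of $k_n/n$ --- is itself deferred to Pr\ae stgaard--Wellner, so in substance you are doing what the paper does, namely citing the reference for the hard part. Second, a minor imprecision: to get convergence of the conditional covariances $\hat P_n(f_jf_l)-\hat P_nf_j\,\hat P_nf_l$ you do not need, and the Donsker property alone does not supply, a Glivenko--Cantelli statement for the class of products; the ordinary strong law applied to the finitely many fixed functions $f_jf_l$ suffices, using $Pf^2<\infty$ for each $f\in\mathcal F$. Neither caveat undermines the outline, but as written it is a structured gloss on the citation rather than an independent proof.
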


The following is useful for analyzing Cheap Subsampled Double Bootstrap:
\begin{theorem}[Adapted from \cite{sengupta2016subsampled} Theorem 1]
Let $\mathcal F$ be a Donsker class of measurable functions such that $\mathcal F_\delta$ is measurable for every $\delta>0$. Then the Subsampled Double Bootstrap process defined by
 $\mathbb G_{SDB,n,s}^*=\sqrt n(P_n^{**}-P_s^*)$ satisfies
 $$\sup_{h\in BL_1(\ell^\infty(\mathcal F))}|E_Mh(\mathbb G_{SDB,n,s}^*)-Eh(\mathbb G_P)|\stackrel{p}{\to}0$$
as $\min(n,s)\to\infty$, where $E_M[\cdot]$ is with respect to the randomness of both the first and second-layer resampling in Subsampled Double Bootstrap, conditional on the data $X_1,X_2,\ldots,X_n$. \label{thm SDB}
\end{theorem}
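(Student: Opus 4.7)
The plan is to reduce Theorem~\ref{thm SDB} to the one-level bootstrap empirical process result, Theorem~\ref{vary size}, by conditioning on the first-level subsample $\mathbf X_s^*$. The key observation is that because $X_1,\ldots,X_n$ are i.i.d.\ from $P$, any size-$s$ subset selected without replacement from $\{X_1,\ldots,X_n\}$ is itself $s$ i.i.d.\ draws from $P$ (a standard consequence of exchangeability of an i.i.d.\ sample). Thus, conditional on the first-level subsample $\mathbf X_s^*$, the second-level resample giving $P_n^{**}$ is exactly the standard with-replacement bootstrap built from $s$ i.i.d.\ observations, using resample size $n$.

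First, I would set up this conditioning carefully: write $E_M = E_{\mathbf X_s^*\mid X_{1:n}}\,E_{\text{inner}\mid \mathbf X_s^*,X_{1:n}}$, where the inner expectation is over the $n$ with-replacement draws from $\mathbf X_s^*$ and the outer is over the first-level subsampling. Then $\sqrt n (P_n^{**}-P_s^*)$ is, conditional on $\mathbf X_s^*$, the object denoted $\mathbb G^*_{s,k_s}$ in Theorem~\ref{vary size} but with the roles ``sample size $=s$'' and ``resample size $k_s=n$''. By that theorem, applied to the measurable Donsker class $\mathcal F$, one has
\[
\sup_{h\in BL_1(\ell^\infty(\mathcal F))}\bigl|E_{\text{inner}}\,h(\sqrt n(P_n^{**}-P_s^*))-Eh(\mathbb G_P)\bigr|\stackrel{p}{\to}0
\]
as $s\to\infty$ (with $n\to\infty$), where the ``in probability'' is with respect to the law of $\mathbf X_s^*$, viewed as $s$ i.i.d.\ $P$-draws.

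Second, I would bootstrap (pun intended) this conditional-on-subsample statement up to the conditional-on-original-data statement the theorem requires. By the triangle inequality and the tower property,
\[
\bigl|E_M h(\mathbb G^*_{SDB,n,s})-Eh(\mathbb G_P)\bigr|\le E_{\mathbf X_s^*\mid X_{1:n}}\bigl|E_{\text{inner}}h-Eh(\mathbb G_P)\bigr|,
\]
and after taking $\sup_{h\in BL_1}$ on both sides one obtains a bound on the quantity of interest that is dominated by the first-step supremum, averaged over the subsampling randomness. Since $h$ is uniformly bounded by $1$, the inner supremum is bounded by $2$, so the bounded convergence theorem (applied inside the subsample-averaging) converts the in-probability convergence of step one into in-probability convergence of $\sup_{h}|E_M h-Eh(\mathbb G_P)|$ with respect to the law of $X_1,\ldots,X_n$. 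Asymptotic measurability of $\mathbb G^*_{SDB,n,s}$ is inherited from the one-level case via the same two-layer conditioning argument and the hypothesis that $\mathcal F_\delta$ is measurable for every $\delta>0$.

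The main obstacle I anticipate is purely technical: correctly threading the ``in probability'' mode of convergence through the two layers of randomization while maintaining uniformity over $h\in BL_1$, and verifying the joint measurability needed to apply Fubini/tower-type manipulations in the outer probability setting. The substantive probabilistic content, namely the conditional Gaussian limit of the with-replacement bootstrap process regardless of whether the resample size matches the source sample size, is already packaged inside Theorem~\ref{vary size}; once the exchangeability remark above legitimizes applying it to $\mathbf X_s^*$, everything else reduces to an iterated-expectation/bounded-convergence bookkeeping exercise.
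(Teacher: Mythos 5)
The paper does not actually prove this statement: Theorem~\ref{thm SDB} is imported without proof from \cite{sengupta2016subsampled} in the background appendix, so there is no in-paper argument to compare against. Your reduction is nonetheless sound and is essentially the argument underlying the cited result. The two load-bearing observations are both correct: (i) a size-$s$ subset drawn uniformly without replacement from an i.i.d.\ sample is itself an i.i.d.\ sample of size $s$ from $P$ (the index selection is independent of the values), so conditional on $\mathbf X_s^*$ the process $\sqrt n(P_n^{**}-P_s^*)$ is exactly the object covered by Theorem~\ref{vary size} with source size $s$ and resample size $n$, both of which diverge under $\min(n,s)\to\infty$; and (ii) the tower property bounds $\sup_{h}|E_M h-Eh(\mathbb G_P)|$ by the conditional expectation, given $X_1,\ldots,X_n$, of the one-level discrepancy $D_s:=\sup_h|E_{\text{inner}}h-Eh(\mathbb G_P)|$. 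The only place your write-up is compressed is the final transfer step: from $D_s\le 2$ and $D_s\stackrel{p}{\to}0$ you get $E[D_s]\to0$ by bounded convergence, and then $E[D_s\mid X_1,\ldots,X_n]\stackrel{p}{\to}0$ follows from Markov's inequality applied to this nonnegative random variable; naming that last step would make the argument fully explicit. The measurability caveats you flag are exactly what the hypothesis on $\mathcal F_\delta$ is there to handle.
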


\end{appendix}

\end{document}